\theoremstyle{plain}
\newtheorem{thm}{Theorem}[section]
\newtheorem{lem}[thm]{Lemma}
\newtheorem{prop}[thm]{Proposition}
\theoremstyle{definition}
\newtheorem{assum}[thm]{Assumption}
\newtheorem{defn}[thm]{Definition}
\newtheorem{ex}[thm]{Example}
\theoremstyle{remark}
\newtheorem{rem}[thm]{Remark}
\newtheorem{rems}[thm]{Remarks}
\newcommand{\unit}{\mathbb I}
\newcommand{\Ad}{\mathop{\mathrm{Ad}}\nolimits}
\newcommand{\Ind}{\mathop{\mathrm{Ind}}\nolimits}
\newcommand{\ind}{\mathop{\mathrm{ind}}\nolimits}
\newcommand{\cct}{\bbC^2}
\def\proofof[#1]{\trivlist \item[\hskip \labelsep{\bf Proof of #1.\ }]}
\newcommand{\caA}{{\mathcal A}}
\newcommand{\caB}{{\mathcal B}}
\newcommand{\caC}{{\mathcal C}}
\newcommand{\caD}{{\mathcal D}}
\newcommand{\caF}{{\mathcal F}}
\newcommand{\caG}{{\mathcal G}}
\newcommand{\caH}{{\mathcal H}}
\newcommand{\caK}{{\mathcal K}}
\newcommand{\caL}{{\mathcal L}}
\newcommand{\caM}{{\mathcal M}}
\newcommand{\caN}{{\mathcal N}}
\newcommand{\caO}{{\mathcal O}}
\newcommand{\caP}{{\mathcal P}}
\newcommand{\caR}{{\mathcal R}}
\newcommand{\caS}{{\mathcal S}}
\newcommand{\caV}{{\mathcal V}}
\newcommand{\bbC}{{\mathbb C}}
\newcommand{\bbE}{{\mathbb E}}
\newcommand{\bbN}{{\mathbb N}}
\newcommand{\bbP}{{\mathbb P}}
\newcommand{\bbR}{{\mathbb R}}
\newcommand{\bbT}{{\mathbb T}}
\newcommand{\bbZ}{{\mathbb Z}}
\newcommand{\lv}{\left \vert}
\newcommand{\rv}{\right \vert}
\newcommand{\lV}{\left \Vert}
\newcommand{\rV}{\right \Vert}
\newcommand{\lmk}{\left (}
\newcommand{\rmk}{\right )}
\newcommand{\al}{{\mathcal A}}
\newcommand{\id}{\mathop{\mathrm{id}}\nolimits}
\newcommand{\Tr}{\mathop{\mathrm{Tr}}\nolimits}
\newcommand{\Mat}{\mathop{\mathrm{M}}\nolimits}
\newcommand{\Clf}{\mathop{\mathfrak{C}}\nolimits}
\newcommand{\spa}{\mathop{\mathrm{span}}\nolimits}
\newcommand{\rank}{\mathop{\mathrm{rank}}\nolimits}
\newcommand{\nan}{\mathbb{N}}
\newcommand{\braket}[2]{\langle#1,#2\rangle}
\def\proofof[#1]{\trivlist \item[\hskip \labelsep{\bf Proof of #1.\ }]}
\newcommand{\bk}{\caB(\caK)}
\newcommand{\bh}{\caB(\caH)}
\newcommand{\op}{\mathfrak h}
\newcommand{\Oo}{{\Omega_\omega}}
\newcommand{\Ho}{{\caH_\omega}}
\newcommand{\po}{{\pi_{\omega}}}
\newcommand{\diam}[1]{\text{diam}(#1)}
\newcommand{\ut}{\unit_{\bbC^2}}
\newcommand{\C}{\ensuremath{\mathbb{C}}}
\newcommand{\hox}{\,\hat{\otimes}\,} 
\newcommand{\Z}{\ensuremath{\mathbb{Z}}}
\newcommand{\Hom}{\ensuremath{\mathrm{Hom}}}
\newcommand{\frakp}{{\mathfrak p}}
\newcommand{\mpp}{\mathfrak{p}}
\newcommand{\mqq}{\mathfrak{q}}
\def\calK{\mathcal{K}}
\def\calB{\mathcal{B}}
\def\calH{\mathcal{H}}
\def\calF{\mathcal{F}}
\def\calN{\mathcal{N}}
\def\calP{\mathcal{P}}
\def\calM{\mathcal{M}}
\newcommand{\one}{{\unit}}
\newcommand{\ol}{\overline}
\DeclareMathOperator{\Aut}{Aut}
\newcommand{\A}{\mathcal{A}}
\newcommand{\rst}[1]{\ensuremath{{\mathbin\upharpoonright}%
\raise-.5ex\hbox{$#1$}}}
\numberwithin{equation}{section}
\begin{document}
\title{The classification of symmetry protected topological phases of one-dimensional fermion systems}
\author{C. Bourne}
\address{WPI-AIMR, Tohoku University,
2-1-1 Katahira, Aoba-ku, Sendai, 980-8577 \emph{and}
RIKEN iTHEMS, Wako, Saitama 351-0198, Japan}
\email{chris.bourne@tohoku.ac.jp}
\author{Y. Ogata}
\address{Graduate School of Mathematical Sciences, The University of Tokyo, Komaba, Tokyo, 153-8914, Japan}
\email{yoshiko@ms.u-tokyo.ac.jp}

\begin{abstract}
We introduce an index for symmetry protected topological (SPT) 
phases of infinite fermionic chains
with an on-site symmetry
given by a finite group $G$.
This index takes values in
$\mathbb{Z}_2 \times H^1(G,\mathbb{Z}_2) \times H^2(G, U(1)_{\mathfrak{p}})$ with a 
generalized Wall group law under stacking.
We show that this index 
is an invariant of the classification of SPT phases.
When the ground state is translation invariant and
has reduced density matrices with uniformly bounded 
rank on finite intervals, 
we derive a fermionic matrix product representative of this state 
with on-site symmetry.
\end{abstract}
\maketitle
\section{Introduction}\label{intro}

The notion of symmetry protected topological (SPT) phases was introduced by Gu and Wen \cite{GuWen2009}.
We consider the set of all Hamiltonians with a prescribed symmetry 
and which have a unique gapped ground state in the bulk. 
Two Hamiltonians in this set are equivalent if
there is a smooth path within the set connecting them.
By this equivalence relation, we may classify the Hamiltonians 
in this family.
The equivalence class of a Hamiltonian with only on-site interactions is regarded as 
a trivial phase. If a phase is non-trivial, it is called a SPT phase 
 (see also Remark \ref{rk:terminology}).


A basic question is how to show that a given Hamiltonian belongs 
to a SPT phase.
A mathematically natural approach for this problem is to define an invariant of the classification. 
This approach has been studied in the physics literature using 
matrix product states (MPS)~\cite{po, po2, GuWen2009, ChenGuWEn2011, Perez-Garcia2008}.
MPS is a powerful framework introduced in \cite{Fannes:1992vq}, after the discovery of the famous 
AKLT model~\cite{Affleck:1988vr}.
Hastings showed that MPS approximates unique gapped ground states of 
quantum spin chains well \cite{area}. 
However, we can not comprehensively study invariants of the path-connected components of the space of unique 
gapped ground states via MPS only. 
Firstly, MPS are {\it translationally invariant} systems and we would like to define an invariant that 
does not require this assumption. Furthermore, an approximation of a gapped ground state by MPS may 
not be compatible with the path connected components and so is insufficient to define an index in general.
If the index is not defined for {\it all} unique gapped ground states, there is no way to discuss if it is actually 
an invariant or not.

In \cite{OgataTRI,OgataRF,OgataSPT}, an index for SPT phases 
with on-site finite group symmetry and
global reflection symmetry was defined for infinite
quantum spin chains in a fully general setting. 
In these papers, it was proven that the index is actually 
an invariant of the classification of SPT phases.
An important observation for stability of the index is the 
factorization property of the automorphic equivalence.
The key ingredient for the definition of the index is the split property of unique 
gapped ground states, proven by Matsui~\cite{Matsui13}.
The index introduced in~\cite{OgataTRI,OgataRF} generalizes 
the indices introduced for MPS in \cite{po,po2,GuWen2009,ChenGuWEn2011,Perez-Garcia2008},
where a MPS emerges naturally from a translation invariant 
split state whose  reduced density matrix has 
uniformly bounded rank on finite intervals~\cite{BJKW, Matsui13}.

In this paper, we are interested in the analogous problem for fermionic chains with  
on-site finite group symmetries. 
Fermionic SPT phases for finite systems in one dimension have already been extensively studied in the 
physics literature~\cite{fk, FK2, BWHV, KT, KapustinfMPS, TurzilloYou}. 
In contrast to quantum spin chains, for parity-symmetric gapped ground 
states without additional symmetries, there are two distinct phases.
A $\bbZ_2$-index to distinguish these phases in infinite systems 
was introduced in \cite{BSB} and independently in~\cite{Matsui20}.
It was outlined in \cite{BSB} that this $\mathbb{Z}_2$-index is an invariant 
of the classification of unique parity-invariant gapped ground state phases using 
techniques from~\cite{OgataTRI} and~\cite{NSY18}.
The aim of this paper is to extend the analysis of fermionic gapped ground states 
to the case with an on-site symmetry. 
Namely, a classification of one-dimensional fermionic SPT phases.

\subsection{Setting and outline} \label{subsec:SettingOutline}
We assume the reader has some familiarity with the basics of operator algebras 
and their application to quantum statistical mechanical systems, 
see~\cite{BR1, BR2}.
Throughout this paper, we fix $d\in\bbN$. 
Let $\op:=l^2(\bbZ)\otimes \bbC^d$ and $\al$ be the CAR-algebra over $\op$, i.e. the 
universal $C^*$-algebra generated by the identity and $\{a(f)\}_{f\in\op}$ such that 
$f\mapsto a(f)$ is anti-linear and 
\begin{align}
  &\{a(f_1),a(f_2)\} = 0,   &&\{a(f_1),a(f_2)^*\} = \langle f_1,f_2\rangle.
\end{align}
For each subset $X$ of $\bbZ$, we set $\op_X:=l^2(X)\otimes\bbC^d$, and denote by
$\al_X$ the CAR-algebra over $\op_X$.
We naturally regard $\al_X$ as a subalgebra of $\al$.
We also use the notation $\al_R:=\al_{\bbZ_{\ge 0}}$ and $\al_L:=\al_{\bbZ_{<0}}$.
We denote the set of all finite subsets in ${\bbZ}$ by ${\mathfrak S}_{\bbZ}$ and 
set $\al_{\rm loc}:=\bigcup_{X\in {\mathfrak S}_{\bbZ}}\al_X$.
Given a Hilbert space $\mathfrak{K}$, the fermionic Fock space of anti-symmetric tensors 
is denoted by $\calF(\mathfrak K)$.
For a unitary/anti-unitary operator $U$ on $\bbC^d$,
we denote  the second quantization of $U$ on the Fock space $\calF(\C^d)$ by $\mathfrak{\Gamma}(U)$.

By the universality of the CAR-algebra,
for any unitary/anti-unitary $w$ on $\op$, we may define
a linear/anti-linear automorphism $\beta_w$ on
$\al$ such  that $\beta_w(a(f))=a(wf)$,
$f\in\op$.
In particular, for $w=-\unit$, we obtain the parity operator
 $\Theta:=\beta_{-\unit}$.
 For each $X\in {\mathfrak S}_{\bbZ}$,
 $\al_X$ is $\Theta$-invariant.
 We denote the restriction $\Theta\vert_{\al_X}$ by $\Theta_X$.
For $\sigma=0,1$, set of elements $A$ in $\al$ with $\Theta(A)=(-1)^\sigma A$
is denoted by $\al^{(\sigma)}$. Elements in $\al^{(0)}$ are said to be even
and elements in $\al^{(1)}$ are said to be odd.

In this paper, we consider an on-site symmetry given by a finite group $G$. 
We let $\Mat_d$ denote the algebra of $d\times d$ matrices with complex entries 
and consider a projective unitary/anti-unitary representation of $G$ on $\bbC^d$ 
relative to a group homomorphism $\mpp:G\to\bbZ_2$.\footnote{Throughout this 
paper, we use the presentation of $\bbZ_2$ as the additive group $\{0,1\}$.}
That is, there is a projective representation 
$U$ of $G$ on $\bbC^d$ such that 
$U_g$ is unitary if $\mpp(g)=0$ and anti-unitary if ${\mpp(g)}=1$. Because $U$ is 
projective, there is a $2$-cocycle $\upsilon:G\times G\to U(1)$ such that 
$U_g U_h = \upsilon(g,h)U_{gh}$ and 
for all $f,g,h\in G$
\begin{align}
  &\upsilon(e,g)=1=\upsilon(g,e), 
  &&\frac{\ol{\upsilon(g,h)}^{\frakp(f)} \upsilon(f,gh)}{\upsilon(f,g)\upsilon(fg,h)} = 1,
\end{align}
where $\ol{z}^{\frakp(f)} = z$ if $\frakp(f)=0$ and $\ol{z}^{\frakp(f)} = \ol{z}$ if $\frakp(f)=1$. 
For a fixed homomorphism $\mpp$, equivalence classes of such $2$-cocycles give rise to the 
cohomology group $H^2(G, U(1)_\mpp)$.

For a fixed projective unitary/anti-unitary representation $U$ of $G$ on $\bbC^d$ relative 
to $\mpp:G\to \bbZ_2$, we 
can extend this representation to an on-site representation $\bigoplus_\mathbb{Z}U$ on 
$l^2(\mathbb{Z}) \otimes \bbC^d$. We therefore  can define the 
linear/anti-linear automorphism $\alpha$ on $\al$, where
\begin{align}
\alpha_g:=\beta_{\lmk \bigoplus_{\bbZ} U_g \rmk},\quad g\in G.
\end{align}
If ${\mpp(g)}=0$, then
$\alpha_g$ is an automorphism on $\al$ and if ${\mpp(g)}=1$, then
$\alpha_g$ is an anti-linear automorphism on $\al$.
Note that $\alpha$ satisfies
\begin{align}
\alpha_g\circ\Theta=\Theta\circ\alpha_g, \qquad
\alpha_g(\al_X)=\al_X, \quad g\in G, \quad X\in  {\mathfrak S}_{\bbZ}.
\end{align}
For each $g\in G$ and a state $\varphi$ on $\al$, we define a state $\varphi_g$
by $\varphi_g(A)=\varphi\circ\alpha_g(A)$, $A\in\al$ if ${\mpp(g)}=0$,
and by $\varphi_g(A)=\varphi\circ\alpha_g(A^*)$, $A\in\al$ if ${\mpp(g)}=1$.
We say $\varphi$ is $\alpha$-invariant if $\varphi_g=\varphi$ for any $g\in G$.

In the latter half of the paper we also consider space translations
$\beta_{S_x}$, $x\in\bbZ$. Here 
the unitary $S_x$ is given by
$S_x=s_x\otimes \unit_{\bbC^d}$ with $s_x$ the shift by $x$ on $l^2(\bbZ)$.

Throughout this paper,
 for a state $\varphi$ on $\al_{X}$ (with $X$ a subset of $\bbZ$),
  $(\caH_\varphi,\pi_\varphi,\Omega_\varphi)$
 denotes a GNS triple of $\varphi$. When $\varphi$ is $\Theta_{X}$-invariant,
 then $\hat \Gamma_\varphi$ denotes the self-adjoint unitary on $\caH_\varphi$
 defined by 
 $\hat\Gamma_\varphi\pi_\varphi(A)\Omega_\varphi=\pi_\varphi\circ\Theta_X(A)\Omega_\varphi$ for $A\in\al_X$.
If $\varphi$ is $\alpha$-invariant, then
we denote by $\hat\alpha_\varphi$ 
the extension of $\alpha\vert_{\al_X}$ to $\pi_{\varphi}(\al_X)''$.

The mathematical model of a one-dimensional fermionic system is fully specified by 
the interaction $\Phi$.
An interaction is a map $\Phi$ from 
${\mathfrak S}_{\bbZ}$ into ${\caA}_{\rm loc}$ such
that $\Phi(X) \in {\caA}_{X}$ 
and $\Phi(X) = \Phi(X)^*$
for $X \in {\mathfrak S}_{\bbZ}$. 
When we have  $\Theta(\Phi(X))=\Phi(X)$
for all $X\in  {\mathfrak S}_{\bbZ}$,
$\Phi$ is said to be even.
We say $\Phi$ is $\alpha$-invariant if
we have $\alpha_g(\Phi(X))=\Phi(X)$
for all $X\in  {\mathfrak S}_{\bbZ}$ and $g\in G$.
An interaction $\Phi$ is translation invariant if
$\Phi(X+x)=\beta_{S_x}(\Phi(X))$, 
for all $x\in{\mathbb Z}$ and $X\in  {\mathfrak S}_{\bbZ}$.
Furthermore, an interaction $\Phi$
is finite range if there exists an $m\in {\mathbb N}$ such that
$\Phi(X)=0$ for any $X$ with diameter larger than $m$.
We denote by $\caB_{f}^{e}$, 
the set of all finite range even interactions $\Phi$ which satisfy
\begin{align}\label{fi}
 \sup_{X\in {\mathfrak S}_{\bbZ}}
\lV
\Phi\lmk X\rmk
\rV<\infty.
\end{align}


For an interaction $\Phi$ and a finite set $\Lambda\in{\mathfrak S}_{\bbZ}$, we define the local Hamiltonian on $\Lambda$ by
\begin{equation}\label{GenHamiltonian}
H_{\Lambda,\Phi}:=\sum_{X\subset{\Lambda}}\Phi(X).
\end{equation}
The dynamics given by this local Hamiltonian is denoted by
\begin{align}\label{taulamdef}
\tau_t^{\Phi,\Lambda}\lmk A\rmk:= e^{itH_{\Lambda,\Phi}} Ae^{-itH_{\Lambda,\Phi}},\quad
t\in\bbR,\quad A\in\al.
\end{align}
If $\Phi$ belongs to $\caB_{f}^e$,
the limit
\begin{align}\label{taudef}
\tau_t^{\Phi}\lmk A\rmk=\lim_{\Lambda\to\bbZ}
\tau_t^{\Phi,\Lambda}\lmk A\rmk
\end{align}
exists for each $A\in \caA$ and $t\in{\mathbb R}$, 
and defines a strongly continuous one parameter group of automorphisms $\tau^\Phi$ on $\caA$ 
(see Appendix \ref{flr}). 
We denote the generator of $\tau^{\Phi}$ by $\delta_{\Phi}$.

For $\Phi\in\caB_f^{e}$, a state $\varphi$ on $\caA$ is called a \mbox{$\tau^{\Phi}$-ground} state
if the inequality
$
-i\,\varphi(A^*{\delta_{\Phi}}(A))\ge 0
$
holds
for any element $A$ in the domain $\caD({\delta_{\Phi}})$ of ${\delta_\Phi}$.
If $\varphi$ is a $\tau^\Phi$-ground state with GNS triple $(\caH_\varphi,\pi_\varphi,\Omega_\varphi)$, 
then there exists a unique positive operator $H_{\varphi,\Phi}$ on $\caH_\varphi$ such that
$e^{itH_{\varphi,\Phi}}\pi_\varphi(A)\Omega_\varphi=\pi_\varphi(\tau_t^\Phi(A))\Omega_\varphi$,
for all $A\in\caA$ and $t\in\mathbb R$.
We call this $H_{\varphi,\Phi}$ the bulk Hamiltonian associated with $\varphi$.
Note that $\Omega_\varphi$ is an eigenvector of $H_{\varphi,\Phi}$ with eigenvalue $0$.
The following definition clarifies what we mean by a model with a unique gapped ground state.
\begin{defn}
We say that a model with an interaction $\Phi\in\caB_f^{e}$ has a unique gapped ground state if 
(i)~the $\tau^\Phi$-ground state, which we denote as $\varphi$, is unique, and 
(ii)~there exists a $\gamma>0$ such that
$\sigma(H_{\varphi,\Phi})\setminus\{0\}\subset [\gamma,\infty)$, where  $\sigma(H_{\varphi,\Phi})$ is the spectrum of $H_{\varphi,\Phi}$.
\end{defn}
Note that the uniqueness of $\varphi$ implies that 0 is a non-degenerate eigenvalue of $H_{\varphi,\Phi}$.

If $\varphi$ is a \mbox{$\tau^{\Phi}$-ground} state of $\alpha$-invariant 
and $\Theta$-invariant 
interaction $\Phi\in {\caB}_f^e$, 
then $\varphi \circ \Theta$ and $\varphi_g$ is also a
\mbox{$\tau^{\Phi}$-ground} state for each $g\in G$.
In particular, if $\varphi$ is a unique \mbox{$\tau^{\Phi}$-ground} state, it is pure, 
$\Theta$-invariant and  $\alpha$-invariant.
We denote by $\caG_{f}^{e,\alpha}$ 
the set of all $\alpha$-invariant interactions $\Phi\in\caB_f^{e}$
with a unique gapped ground state.

Now the classification problem of SPT phases is
the classification of $\caG_{f}^{e,\alpha}$ with respect to the following equivalence relation:
$\Phi_0, \Phi_1\in \caG_{f}^{e,\alpha}$ are equivalent if
there is a smooth path in $\caG_{f}^{e,\alpha}$ connecting them.
(See Section \ref{stability} for a more precise definition.)

We now outline the main results of the paper. 
In Section \ref{indsec}, we introduce an index for $\Theta$-invariant 
and $\alpha$-invariant gapped ground states in a fully general setting. 
This index  takes value in
$\bbZ_2\times H^1(G,\bbZ_2)\times H^2(G, U(1)_{\mpp})$,
which is analogous to the indices introduced in~\cite{KT} in the context of spin-TQFT 
and~\cite{BWHV,KapustinfMPS,TurzilloYou} for the fermionic MPS setting. When $G$ is trivial, 
the index is $\bbZ_2$-valued and recovers the index studied in~\cite{BSB, Matsui20}.
The key ingredient for the definition is again 
the split property of unique gapped ground states for fermionic systems proven 
recently in \cite{Matsui20}.
 In Section \ref{stability}, we show that our defined index is 
 an invariant of the classification, i.e., it is stable along the smooth path in
 $\caG_{f}^{e,\alpha}$.

Because our index takes values in a group, it suggests that one may compose fermionic 
SPT phases to obtain a new phase with index determined from the original systems. 
In the physics literature, this is achieved by stacking of systems, see~\cite{FK2, TurzilloYou} 
for example. Mathematically this operation corresponds to a (graded) tensor product of 
ground states. In Section \ref{stacksec}, we show that our index is indeed closed under this 
tensor product operation. However, despite the notation, the group operation on 
$\bbZ_2\times H^1(G,\bbZ_2)\times H^2(G, U(1)_{\mpp})$ is \emph{not} the direct sum, but 
rather a twisted product that follows a generalized Wall group law, cf.~\cite{Wall}. 
As an example, we consider the case of an anti-linear $\bbZ_2$-action (e.g. an on-site time-reversal 
symmetry) and show that our index takes values in $\bbZ_8$. This recovers the $\bbZ_8$-classification 
of time-reversal symmetric one-dimensional fermionic SPT phases noted in~\cite{fk,FK2} and 
extends this classification to infinite systems.

In Sections \ref{transsec} and \ref{fmpssec}, we consider the 
unique ground state of a translation invariant 
$\Phi\in \caG_{f}^{e,\alpha}$.
For quantum spin systems, it is known that a representation of 
Cuntz algebra emerges from translation invariant pure split states \cite{BJKW,Matsui3}. 
The generators of this Cuntz algebra representation 
give an operator product representation of the state and also implement the space translation.
We find an analogous object for fermionic systems in Section \ref{transsec}. 
Because odd elements with disjoint support anti-commute in the CAR-algebra, 
the operator product representation and space translation 
is more complicated than the quantum spin chain setting.
The results of Section \ref{transsec} are then applied to the study of 
fermionic MPS in Section \ref{fmpssec}. When the rank of the reduced
 density matrices of the infinite volume ground state is uniformly bounded, 
 we show that the ground state has a presentation as a fermionic MPS with 
 on-site symmetry. 
 We then show that our index agrees with and therefore extends the indices 
 defined for fermionic MPS with an on-site symmetry in~\cite{BJKW, KapustinfMPS, TurzilloYou}.
 
Basic properties of graded von Neumann algebras are reviewed 
in Appendix \ref{sec:GradedvN}. In Appendix \ref{flr}, we adapt the Lieb--Robinson bound 
to the setting of lattice fermions (see also~\cite{BSP, NSY17}).


\begin{rem}[A note on terminology] \label{rk:terminology}
For the sake of clarity, let us more carefully specify the characterization of a SPT phase used in this paper. 
Given a $G$-symmetric unique gapped ground state, a SPT phase is often defined to be 
an equivalence class of ground states which can be connected to a ground state from an on-site 
interaction, but which can {not} be connected $G$-equivariantly. 
In this paper, we define a $\Z_2 \times H^1(G, \Z_2)\times H^2(G, U(1)_\mpp)$-valued invariant for {\it any} 
unique gapped ground state of a one-dimensional fermionic interaction and do {\it not} assume 
the ground state can be connected to a ground state from an on-site interaction without symmetry. 
\end{rem}


\section{The index of fermionic SPT phases}\label{indsec}

\subsection{Graded von Neumann algebras and dynamical systems} \label{Subsec:graded_prelims}
In order to introduce the index, we first need to introduce 
type I central balanced graded $W^*$-$(G,\mpp)$-dynamical systems.  
Further details on graded von Neumann algebras can be found in 
Appendix \ref{sec:GradedvN}.

\begin{defn}
A graded von Neumann algebra is a pair $(\caM,\theta)$ with 
$\caM$ a von Neumann algebra $\theta$ an involutive automorphism on $\caM$, 
$\theta^2 = \mathrm{Id}$.
 If $\caM \subset \caB(\caH)$ and there is a self-adjoint unitary
$\Gamma$ on $\caH$ such that $\mathrm{Ad}_{\Gamma}|_\caM = \theta$, then
we call $(\caM,\theta)$ a spatially graded von Neumann algebra acting with
grading operator $\Gamma$. 
If $\theta$ is the identity automorphism, then 
we say that $(\caM, \theta)$ is trivially graded.
\end{defn}
We say that a graded von Neumann algebra $(\caM,\theta)$ is 
 of type $\lambda$, $\lambda \in \{\mathrm{I, II, III}\}$, if $\caM$ is type $\lambda$.

Given a graded von Neumann algebra $(\caM,\theta)$,
 $\caM$ is a direct sum of two self-adjoint
 $\sigma\mathrm{\hbox{-}weak}$-closed linear subspaces as $\caM=\caM^{(0)}\oplus \caM^{(1)}$,
 where
\begin{align}
\caM^{(\sigma)}:=
\left\{
x\in\caM\mid 
\theta (x)=(-1)^\sigma x
\right\},\quad x\in\caM, \,\, \sigma \in \{0,1\}.
\end{align}
An element of $\caM^{(\sigma)}$ is said to be homogeneous of degree $\sigma$,
or even/odd for $\sigma=0$/$\sigma=1$, respectively.
For a homogeneous $x\in\caM$, its degree is denoted by $\partial x$.
For graded von Neumann algebras $(\caM_1,\theta_1)$, $(\caM_2,\theta_2)$,
a homomorphism $\phi: \caM_1\to\caM_2$ is a graded homomorphism if
$\phi\big(\caM_1^{(\sigma)}\big) \subset \caM_2^{(\sigma)}$ for $\sigma=0,1$. 

\begin{defn}
Let $(\caM, \theta)$ be a graded von Neumann algebra.
We say $(\caM,\theta)$
 is balanced if $\caM$ contains an odd self-adjoint unitary. 
If $Z(\caM)\cap \caM^{(0)}=\bbC\unit$ for
 the center $Z(\caM)$ of $\caM$, we say $(\caM, \theta)$ is central.
\end{defn}

We now consider dynamics on graded von Neumann algebras via a 
linear/anti-linear group action.
\begin{defn}
Let $G$ be a finite group and $\mpp: G\to\bbZ_2$ be a group homomorphism.
A graded $W^*$-$(G,\mpp)$-dynamical system $(\caM,\theta,{\hat\alpha})$ is 
a graded von Neumann algebra $(\caM, \theta)$ with
an action ${\hat\alpha}$ of $G$ on $\caM$
such that
${\hat\alpha}_g$ is a linear automorphism if ${\mpp(g)}=0$ and
${\hat\alpha}_g$ is an anti-linear automorphism if ${\mpp(g)}=1$, 
satisfying ${\hat\alpha}_g\circ\theta=\theta\circ {\hat\alpha}_g$.
\end{defn}

%

We consider some key examples that will play an important role in defining our index. 
We fix a group homomorphism $\mpp:G \to \bbZ_2$ and consider projective 
unitary/anti-unitary representations $V$ of $G$ relative to $\mpp$ (see 
Section  \ref{subsec:SettingOutline} for the definition).

\begin{ex}[$\caR_{0,\caK},\Ad_{\Gamma_\caK},\Ad_{V_g}$]\label{ichi}
Let $\caK$ be a Hilbert space and set $\Gamma_{\caK}:=\unit_{\caK}\otimes \sigma_z$,
a self-adjoint unitary on $\caK\otimes \bbC^2$.\footnote{In this article we use the following notation of Pauli matrices
\begin{align*}
\sigma_x:=\begin{pmatrix}
0&1\\1&0
\end{pmatrix}, \qquad
\sigma_y:=
\begin{pmatrix}
0&-i\\i&0
\end{pmatrix}, \qquad
\sigma_z:=
\begin{pmatrix}
1&0\\0&-1
\end{pmatrix}.
\end{align*}}
We set $\caR_{0,\caK}:=\caB(\caK)\otimes \Mat_2$ and so 
$(\caR_{0,\caK}, \Ad_{\Gamma_\caK})$ is a spatially graded von Neumann algebra
acting on  $\caK\otimes \bbC^2$ with grading operator $\Gamma_\caK$.
Let 
$V$ be a projective unitary/anti-unitary  representation of $G$ on $\caK\otimes \bbC^2$
relative to $\mpp$. 
We also assume that there is a homomorphism  $\mqq:G\to \bbZ_2$ 
such that $\Ad_{V_g}(\Gamma_\caK)=(-1)^{{\mqq(g)}}\Gamma_\caK$.
We then obtain 
a graded $W^*$-$(G,\mpp)$-dynamical system $(\caR_{0,\caK},\Ad_{\Gamma_\caK},\Ad_{V_g})$.
\end{ex}
We denote the set of all $W^*$-$(G,\mpp)$-dynamical systems of the form in 
Example \ref{ichi} by $\caS_0$.

\begin{ex}[$\caR_{1,\caK},\Ad_{\Gamma_\caK},\Ad_{V_g}$]\label{ni}
Let $\caK$ be a Hilbert space and set $\Gamma_\caK:=\unit_{\caK}\otimes \sigma_z$. 
Let $\Clf$ be the subalgebra of $\Mat_2$ generated by $\sigma_x$ and set 
$\caR_{1,\caK}:=\caB(\caK)\otimes \Clf$.\footnote{We may regard $\Clf$ as 
Clifford algebra $\bbC l_1$ generated by $e_1:=\sigma_x$.}
Then $(\caR_{1,\caK}, \Ad_{\Gamma_\caK})$ is a spatially graded von Neumann algebra
acting on  $\caK\otimes \bbC^2$ with grading operator $\Gamma_\caK$.
Let 
$V$ be a projective unitary/anti-unitary representation of $G$ 
relative to $\mpp$
 such that
$\Ad_{V_g}(\unit_{\caK}\otimes \sigma_x)= (-1)^{{\mqq(g)}}\lmk \unit_\caK\otimes \sigma_x\rmk$ 
and
$\Ad_{V_g}(\Gamma_\caK)=(-1)^{{\mqq(g)}}\Gamma_\caK$
for $\mqq:G\to \bbZ_2$ a group homomorphism.
These assumptions imply that $\Ad_{V_g}\lmk \caR_{1,\caK}\rmk=\caR_{1,\caK}$ 
and so $(\caR_{1,\caK},\Ad_{\Gamma_\caK},\Ad_{V_g})$
is a graded $W^*$-$(G,\mpp)$-dynamical system.
\end{ex}
We denote the set of all $W^*$-$(G,\mpp)$-dynamical systems of the form of 
Example \ref{ni} by $\caS_1$.
Given a $W^*$-$(G,\mpp)$-dynamical systems in $\caS_1$, we can construct 
a projective representation of $G$ on $\caK$ from the projective 
representation on $\caK \otimes \bbC^2$.

We first establish some notation. Let $C$ be the complex conjugation on $\cct$ 
with respect to the standard basis. Given two group homomorphisms 
$\mqq_1, \, \mqq_2 \in \Hom(G,\bbZ_2) \cong H^1(G, \bbZ_2)$, we can define a 
group $2$-cocyle, 
\begin{align}\label{enn}
\epsilon({\mqq}_1,{\mqq}_2)(g,h)
=(-1)^{{\mqq_1(g)}{\mqq}_2(h)},\quad g,h\in G.
\end{align}
\begin{rem}\label{homo}
Note that $ [\epsilon ({\mqq_1},\mqq_2)] = [\epsilon ({\mqq_2},\mqq_1)] \in H^2(G,U(1)_{\mqq_1})$.
\end{rem}

\begin{lem}\label{zv}
For $(\caR_{1,\caK},\Ad_{\Gamma_\caK},\Ad_{V_g})\in\caS_1$, there is a
unique projective unitary/anti-unitary representation $V^{(0)}$ of $G$ on $\caK$ 
relative to $\mpp$ such that
$V_g=V_g^{(0)}\otimes C^{{\mpp(g)}}\sigma_y^{\mqq(g)}$. 
If $[\tilde\upsilon]$ and $[\upsilon]$ are the second cohomology classes  
associated to $V$ and $V^{(0)}$ respectively, then 
$[\tilde{\upsilon}] = [\upsilon \, \epsilon(\mqq,\mpp)] \in H^2(G, U(1)_\mpp)$.
\end{lem}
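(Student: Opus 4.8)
The plan is to descend to the two--dimensional Clifford factor, run a relative--commutant argument, and then read off the cocycle by a short Pauli--matrix computation. First set $w_g:=C^{\mpp(g)}\sigma_y^{\mqq(g)}$, acting on $\bbC^2$; this is a unitary when $\mpp(g)=0$ and an anti-unitary when $\mpp(g)=1$, so $w_g$ and $V_g$ have the same linearity type. From $\Ad_{\sigma_y}(\sigma_x)=-\sigma_x$, $\Ad_{\sigma_y}(\sigma_z)=-\sigma_z$ and $\Ad_C(\sigma_x)=\sigma_x$, $\Ad_C(\sigma_z)=\sigma_z$ one computes $\Ad_{w_g}(\sigma_x)=(-1)^{\mqq(g)}\sigma_x$ and $\Ad_{w_g}(\sigma_z)=(-1)^{\mqq(g)}\sigma_z$. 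Since $\sigma_x,\sigma_z$ generate $\Mat_2$ and the hypotheses say $\Ad_{V_g}$ fixes $\unit_\caK\otimes\sigma_x$ and $\Gamma_\caK=\unit_\caK\otimes\sigma_z$ up to the sign $(-1)^{\mqq(g)}$, the map $\Ad_{V_g}$ preserves $\unit_\caK\otimes\Mat_2$, and under the identification $\unit_\caK\otimes\Mat_2\cong\Mat_2$ its restriction is an (anti-)automorphism of $\Mat_2$ of linearity type $\mpp(g)$ agreeing with $\Ad_{w_g}$ on the generators; hence $\Ad_{V_g}|_{\unit_\caK\otimes\Mat_2}=\Ad_{w_g}$.

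For existence and uniqueness of the decomposition I would fix an auxiliary anti-unitary $J_0$ on $\caK$ (conjugation in some orthonormal basis) and put $u_g:=J_0^{\mpp(g)}\otimes w_g$, with the convention $J_0^0=\unit_\caK$. Both tensor factors have linearity type $\mpp(g)$, so $u_g$ is a genuine (anti-)unitary on $\caK\otimes\bbC^2$, and the auxiliary $J_0^{\mpp(g)}$ cancels in the adjoint action, giving $\Ad_{u_g}|_{\unit_\caK\otimes\Mat_2}=\Ad_{w_g}=\Ad_{V_g}|_{\unit_\caK\otimes\Mat_2}$. Therefore $W_g:=V_gu_g^{-1}$ is a linear unitary acting trivially on $\unit_\caK\otimes\Mat_2$, so it lies in the relative commutant $(\unit_\caK\otimes\Mat_2)'=\caB(\caK)\otimes\unit_{\bbC^2}$; writing $W_g=u_g'\otimes\unit_{\bbC^2}$ and setting $V_g^{(0)}:=u_g'J_0^{\mpp(g)}$, which has linearity type $\mpp(g)$, we get $V_g=W_gu_g=V_g^{(0)}\otimes w_g=V_g^{(0)}\otimes C^{\mpp(g)}\sigma_y^{\mqq(g)}$. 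Uniqueness is immediate: if $V_g^{(0)}\otimes w_g=\widetilde V_g^{(0)}\otimes w_g$, evaluating on $x\otimes y$ with $w_gy\neq 0$ forces $V_g^{(0)}=\widetilde V_g^{(0)}$, and $V_e=\unit$, $w_e=\unit_{\bbC^2}$ give $V_e^{(0)}=\unit_\caK$; together with the cocycle identity below this shows $V^{(0)}$ is a projective unitary/anti-unitary representation relative to $\mpp$.

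Lastly, to identify the second cohomology class I would combine the two expressions for $V_gV_h$. From $(A_1\otimes B_1)(A_2\otimes B_2)=A_1A_2\otimes B_1B_2$ one has $V_gV_h=(V_g^{(0)}V_h^{(0)})\otimes(w_gw_h)$, while $V_gV_h=\tilde\upsilon(g,h)V_{gh}=\tilde\upsilon(g,h)(V_{gh}^{(0)}\otimes w_{gh})$. Using $C\sigma_y=-\sigma_y C$, $C^2=\sigma_y^2=\unit$ and that $\mpp,\mqq$ are homomorphisms,
\begin{align*}
 w_gw_h&=C^{\mpp(g)}\sigma_y^{\mqq(g)}C^{\mpp(h)}\sigma_y^{\mqq(h)}\\
 &=(-1)^{\mqq(g)\mpp(h)}C^{\mpp(gh)}\sigma_y^{\mqq(gh)}=\epsilon(\mqq,\mpp)(g,h)\,w_{gh}.
\end{align*}
Moving the real scalar $\epsilon(\mqq,\mpp)(g,h)$ and the phase $\tilde\upsilon(g,h)$ onto the first tensor factor and cancelling the common factor $w_{gh}$ yields $V_g^{(0)}V_h^{(0)}=\tilde\upsilon(g,h)\,\epsilon(\mqq,\mpp)(g,h)\,V_{gh}^{(0)}$, so $\upsilon:=\tilde\upsilon\cdot\epsilon(\mqq,\mpp)$ is a $2$-cocycle representing the class of $V^{(0)}$. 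As $\epsilon(\mqq,\mpp)$ is $\{\pm1\}$-valued we have $\epsilon(\mqq,\mpp)^2=1$, hence $[\tilde\upsilon]=[\upsilon\,\epsilon(\mqq,\mpp)]$ in $H^2(G,U(1)_\mpp)$.

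The one point that will need genuine care --- and the only place where the fermionic nature of the problem enters --- is the linear/anti-linear bookkeeping: a tensor product $A\otimes B$ of operators is well defined over $\bbC$ only when $A$ and $B$ have the same linearity type, which is precisely why the auxiliary anti-unitary $J_0$ on $\caK$ must be inserted when $\mpp(g)=1$, and why one must check that each scalar pushed across a tensor factor is harmless (real in the case of $\epsilon$, a scalar multiple of a single operator in the case of $\tilde\upsilon$). Everything else is the standard relative--commutant argument together with the Pauli--matrix identities above.
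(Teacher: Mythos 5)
Your proof is correct, and it factorizes the implementer from the \emph{other} tensor factor compared to the paper. The paper first applies Wigner's theorem to the restriction of $\Ad_{V_g}$ to $\caB(\caK)\otimes\bbC\unit_{\bbC^2}$ (which $\Ad_{V_g}$ preserves because it commutes with $\Ad_{\Gamma_\caK}$), obtaining $\tilde V_g^{(0)}$ on $\caK$; it then observes that $V_g^*\bigl(\tilde V_g^{(0)}\otimes C^{\mpp(g)}\sigma_y^{\mqq(g)}\bigr)$ is a linear unitary commuting with all of $\caB(\caK)\otimes\Mat_2$, hence a scalar $c(g)\in\bbT$, and absorbs $c(g)$ into $V_g^{(0)}$. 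You instead compute the restriction of $\Ad_{V_g}$ to the complementary factor $\unit_\caK\otimes\Mat_2$ directly from the Pauli relations, identify it with $\Ad_{w_g}$, and then ``divide'' by an explicit implementer $J_0^{\mpp(g)}\otimes w_g$ to land in the commutant $\caB(\caK)\otimes\bbC\unit_{\bbC^2}$. Both arguments hinge on the same factorization of $\caB(\caK\otimes\bbC^2)$ into commuting type I factors; the paper invokes Wigner on the infinite-dimensional factor, while your computation is on the $2\times 2$ factor, which is more elementary and hands you $w_g$ explicitly instead of introducing an undetermined $\tilde V_g^{(0)}$. Your explicit use of the auxiliary conjugation $J_0$ to keep the tensor product of anti-linear operators well defined makes the linear/anti-linear bookkeeping clearer than the paper's, which leaves this implicit. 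The final cocycle computation $w_g w_h = (-1)^{\mqq(g)\mpp(h)}w_{gh}$ is identical to the paper's one-line observation $\sigma_y^{\mqq(g)}C^{\mpp(h)}=(-1)^{\mqq(g)\mpp(h)}C^{\mpp(h)}\sigma_y^{\mqq(g)}$.
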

\begin{proof}
Because $\Ad_{V_g} \circ \mathrm{Ad}_{\Gamma_\caK} = \mathrm{Ad}_{\Gamma_\caK}\circ \Ad_{V_g}$, 
we have
$\Ad_{V_g}(\caB(\caK)\otimes\bbC\unit_{\bbC^2})
=
\caB(\caK)\otimes\bbC\unit_{\bbC^2}
$.
Therefore, $\Ad_{V_g}$ induces a linear/anti-linear $*$-automorphism
on $\bk$. Applying Wigner's Theorem, there is a unitary/anti-unitary
$\tilde V_g^{(0)}$ on $\caK$ such that 
\begin{align}
\Ad_{V_g}\lmk x\otimes \unit_{\cct}\rmk
=\Ad_{\tilde V_g^{(0)}} (x)\otimes \unit_{\bbC^2},\quad x\in\bk.
\end{align}
It is clear that $\tilde V^{(0)}$ gives a unitary/anti-unitary projective representation 
relative to $\mpp$.
Note that $V_g^*\big(\tilde V_g^{(0)}\otimes C^{{\mpp(g)}}\sigma_y^{{\mqq(g)}}\big)$
is a unitary which commutes with $\bk\otimes\bbC\unit_{\cct}$, 
$\unit_\caK\otimes \sigma_x$, $\unit_\caK\otimes \sigma_z$ and therefore 
commutes with $\bk\otimes\Mat_2$.
Therefore, there is a $c(g)\in \bbT$ such that
$V_g=c(g)\lmk\tilde V_g^{(0)}\otimes C^{{\mpp(g)}}\sigma_y^{{\mqq(g)}}\rmk$.
Setting $V_g^{(0)}:=c(g)\tilde V_g^{(0)}$, we obtain
$V_g=V_g^{(0)}\otimes C^{{\mpp(g)}}\sigma_y^{\mqq(g)}$.
Clearly $V^{(0)}$ satisfies the required conditions.
Because $\sigma_y^{{\mqq(g)}} C^{{\mpp(h)}}=(-1)^{{\mqq(g)}{\mpp(h)}} C^{{\mpp(h)}}\sigma_y^{{\mqq(g)}}$, 
we obtain the last statement.
\end{proof}
We introduce the following equivalence relation on graded $W^*$-$(G,\mpp)$-dynamical systems.
\begin{defn}
Let $G$ be a finite group and $\mpp: G\to\bbZ_2$ be a group homomorphism.
We say that two graded $W^*$-$(G,\mpp)$-dynamical systems
$(\caM_1, \theta_1, {\hat\alpha}^{(1)})$,
$(\caM_2, \theta_2, {\hat\alpha}^{(2)})$
are equivalent and write $(\caM_1, \theta_1, {\hat\alpha}^{(1)})\sim(\caM_2, \theta_2, {\hat\alpha}^{(2)})$
if
there is a $*$-isomorphism $\iota: \caM_1\to\caM_2$
such that 
\begin{align}
&\iota\circ{\hat\alpha}_g^{(1)}={\hat\alpha}_g^{(2)}\circ\iota,\quad g\in G\label{eone}\\
&\iota\circ\theta_1=\theta_2\circ\iota\label{etwo}.
\end{align}
\end{defn}
Clearly, this is an equivalence relation.

Using equivalence of $W^*$-$(G,\mpp)$-dynamical systems, we can reduce all 
type I balanced central graded $W^*$-$(G,\mpp)$-dynamical systems to the case of either
Example \ref{ichi} or {\ref{ni}}.
\begin{prop}\label{casebycase}
Let $(\caM,\theta,{\hat\alpha})$ be a graded $W^*$-$(G,\mpp)$-dynamical systems
with $(\caM,\theta)$ balanced, central and type I.
Then there is a $\kappa\in\bbZ_2$ and $(\caR_{\kappa, \caK},\Ad_{\Gamma_\caK},\Ad_{V_g})\in\caS_{\kappa}$
such that
$(\caM,\theta,{\hat\alpha})\sim (\caR_{\kappa, \caK},\Ad_{\Gamma_\caK},\Ad_{V_g})$.
\end{prop}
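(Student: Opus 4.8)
The plan is to reduce the problem in two stages: first decompose $\caM$ using the type I structure, then analyze the grading and the group action. Since $(\caM, \theta)$ is balanced, $\caM$ contains an odd self-adjoint unitary $u$; conjugating by $u$ swaps the two graded components, so $\caM^{(0)}$ and $\caM^{(1)}$ have the same "size." Since $(\caM,\theta)$ is central, $Z(\caM)\cap\caM^{(0)}=\bbC\unit$. The center $Z(\caM)$ is $\theta$-invariant, so $\theta$ acts on it; if $Z(\caM)$ were larger than $\bbC\unit$ it would have to contain a nontrivial odd element, but an odd element of the center would, together with $u$, force a contradiction with boundedness/selfadjointness — more precisely, the standard argument shows $Z(\caM) = \bbC\unit$ or $Z(\caM)$ is spanned by $\unit$ and an odd self-adjoint unitary $c$ with $c^2=\unit$. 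These are exactly the two cases $\kappa = 0$ (center trivial, $\caM$ a factor) and $\kappa=1$ (center two-dimensional, spanned by $\unit$ and an odd unitary). I would first establish this dichotomy carefully using graded von Neumann algebra facts from Appendix~\ref{sec:GradedvN}.

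Next, in the $\kappa=0$ case: $\caM$ is a type I factor, so $\caM \cong \caB(\caK')$ for some Hilbert space $\caK'$. Using the odd self-adjoint unitary $u$, one writes $\caK' \cong \caK\otimes\bbC^2$ with $\theta = \Ad_{\unit_\caK\otimes\sigma_z}$ and $u$ corresponding to (a conjugate of) $\unit_\caK\otimes\sigma_x$; this identifies $\caM$ with $\caR_{0,\caK} = \caB(\caK)\otimes\Mat_2$ as a spatially graded algebra. For the group action, each $\hat\alpha_g$ is a linear/anti-linear $*$-automorphism of a type I factor, hence by Wigner's theorem implemented by a unitary/anti-unitary $V_g$ on $\caK\otimes\bbC^2$, unique up to phase, giving a projective representation relative to $\mpp$. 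The compatibility $\hat\alpha_g\circ\theta=\theta\circ\hat\alpha_g$ forces $\Ad_{V_g}(\Gamma_\caK) = \pm\Gamma_\caK$, and the sign defines the homomorphism $\mqq$. This exhibits $(\caM,\theta,\hat\alpha) \sim (\caR_{0,\caK}, \Ad_{\Gamma_\caK}, \Ad_{V_g}) \in \caS_0$.

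In the $\kappa=1$ case: $\caM$ has two-dimensional center spanned by $\unit$ and an odd self-adjoint unitary $c$. The even part $\caM^{(0)}$ decomposes over the spectral projections $\tfrac12(\unit\pm c)$ of $c$ into two pieces, which are exchanged by $\Ad_u$ for the odd self-adjoint unitary $u$ (note $u$ anticommutes with $c$ since $cu$ is even). Hence $\caM^{(0)} \cong \caN\oplus\caN$ for a type I factor $\caN\cong\caB(\caK)$, and the whole algebra is generated by $\caB(\caK)$ together with the two anticommuting odd self-adjoint unitaries $c$ and $u$; since $cu$ is a central even unitary and the center is just $\bbC\unit\oplus\bbC c$, we may arrange $cu = \pm i(\unit_\caK\otimes\sigma_z)$ type relations, identifying $\caM$ with $\caB(\caK)\otimes\Clf$ sitting inside $\caB(\caK)\otimes\Mat_2$ with $c \leftrightarrow \unit_\caK\otimes\sigma_x$ and $\Gamma_\caK = \unit_\caK\otimes\sigma_z$ as the grading operator. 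Then Wigner's theorem again gives the implementing projective representation $V$ on $\caK\otimes\bbC^2$; the compatibility conditions force $\Ad_{V_g}(\unit_\caK\otimes\sigma_x) = (-1)^{\mqq(g)}(\unit_\caK\otimes\sigma_x)$ and $\Ad_{V_g}(\Gamma_\caK) = (-1)^{\mqq(g)}\Gamma_\caK$ with the \emph{same} homomorphism $\mqq$ (since $\hat\alpha_g$ must fix or negate the central element $c$, and commute with $\theta$), yielding an element of $\caS_1$.

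The main obstacle is the bookkeeping in the $\kappa=1$ case: ensuring the two odd self-adjoint unitaries $c$ (central) and $u$ (implementing the swap) can be simultaneously normalized so that the algebra they generate over $\caB(\caK)$ is exactly $\caB(\caK)\otimes\Clf$ with the correct grading, and then checking that the implementing unitaries $V_g$ from Wigner's theorem automatically satisfy \emph{both} sign conditions with a single $\mqq$ rather than two independent homomorphisms. This requires using that $\hat\alpha_g$ preserves the center (hence acts on $\{c, -c\}$) and commutes with $\theta$; once these are combined the two signs must coincide. The phase ambiguity in $V_g$ from Wigner's theorem is harmless since we only need $\Ad_{V_g}$, and the cocycle identities for the projective representation follow automatically from $\hat\alpha$ being a genuine group action.
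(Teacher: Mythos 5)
Your $\kappa=0$ analysis is essentially sound and follows the same logic as the paper, though organized a little differently: the paper first establishes that $\caM^{(0)}$ is \emph{not} a factor when $\caM$ is (via the odd unitary $U$), then builds the grading operator $\Gamma$ explicitly as an element of $Z(\caM^{(0)})$; you instead jump directly to $\theta=\Ad_\Gamma$ from innerness of automorphisms of $\caB(\caK')$, and use $u$ to identify the two eigenspaces. Both routes work; the paper's emphasis on $Z(\caM^{(0)})=\bbC z+\bbC\unit$ pays off later when proving well-definedness of the index.

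The $\kappa=1$ case contains a genuine structural error. You claim that the odd self-adjoint unitary $u$ anticommutes with the odd central element $c$, "since $cu$ is even." This is a non-sequitur: the product of any two odd elements is even regardless of whether they commute or anticommute, and in fact since $c$ is \emph{central} in $\caM$, every element of $\caM$, including $u$, commutes with $c$. There is no odd element of $\caM$ that anticommutes with $c$. Consequently, your attempt to generate $\caM$ over $\caB(\caK)$ by a pair of anticommuting odd unitaries $c,u$ (a Clifford-algebra picture) describes the wrong algebra: $\caR_{1,\caK}=\caB(\caK)\otimes\Clf$ has $\Clf$ only two-dimensional ($\bbC\unit+\bbC\sigma_x$), not all of $\Mat_2$.

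You also have the factor/non-factor dichotomy reversed. In the $\kappa=1$ case it is $\caM^{(0)}$ that is a factor (the paper shows $Z(\caM^{(0)})=\bbC\unit$ using that $b\in Z(\caM)\cap\caM^{(1)}$), while as an ungraded algebra $\caM\cong\caM^{(0)}\oplus\caM^{(0)}$ via the non-homogeneous projections $\tfrac12(\unit\pm c)$; writing "$\caM^{(0)}\cong\caN\oplus\caN$" conflates these. The paper's construction is much simpler than yours: $\caM^{(0)}\cong\caB(\caK)$ via some $\iota_0$, $\caM=\caM^{(0)}\oplus\caM^{(0)}b$, and one sets $\iota(x+yb):=\iota_0(x)\otimes\unit+\iota_0(y)\otimes\sigma_x$. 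No second odd unitary is needed.

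Finally, in the $\kappa=1$ group action you invoke Wigner's theorem to implement $\hat\alpha_g$ by a unitary/anti-unitary on $\caK\otimes\cct$, but $\caR_{1,\caK}$ is a proper subalgebra of $\caB(\caK\otimes\cct)$, so automorphisms of it are not directly spatially implemented this way. The paper instead applies Wigner to the restriction of $\hat\alpha_g$ to $\caM^{(0)}\cong\caB(\caK)$ to get $V_g^{(0)}$ on $\caK$, reads off $\mqq(g)$ from $\hat\alpha_g(b)=(-1)^{\mqq(g)}b$, and then \emph{defines} $V_g:=V_g^{(0)}\otimes C^{\mpp(g)}\sigma_y^{\mqq(g)}$ to make the two sign conditions hold with the same $\mqq$. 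That the signs coincide is arranged by this construction, not automatic from abstract compatibility with $\Gamma_\caK$ (which is not even an element of $\caR_{1,\caK}$).
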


\begin{proof}
Because $(\caM,\theta)$ is central, by Lemma \ref{jh} 
either $\caM$ is a factor 
or $Z(\caM)$
has an odd self-adjoint unitary $b\in Z(\caM)\cap \caM^{(1)}$ such that
\begin{align}\label{bb}
Z(\caM)\cap \caM^{(1)}=\bbC b.
\end{align}
We set $\kappa=0$ for the former case, and $\kappa=1$ for the latter case.

\noindent{\bf (Case: $\kappa=0$)}
Suppose $\caM$ is a type I factor. Because $(\caM,\theta)$ is balanced,
there is an odd self-adjoint unitary $U\in \caM^{(1)}$.

We claim that $\caM^{(0)}$ is not a factor.
If $\caM^{(0)}$ is a factor, by Lemma \ref{gradei}, it is of type I. 
Note then that $\Ad_U\vert_{\caM^{(0)}}$ is an automorphism on the type I factor $\caM^{(0)}$.
By Wigner's Theorem, there is a unitary $u\in \caM^{(0)}$ such that
$\Ad_U(x)=\Ad_u(x)$, $x\in \caM^{(0)}$.
Therefore, $u^*U\in \lmk {\caM^{(0)}}\rmk'$. At the same time, $u^*U$ commutes with $U$ because
$\Ad_{U}(u^*)=\Ad_u(u^*)=u^*$ for $u\in \caM^{(0)}$.
Hence $u^*U\in \caM'\cap \caM=\bbC\unit$. This is a contradiction because 
$u^*U$ is non-zero and odd.
Hence we conclude that $\caM^{(0)}$ is not a factor.

Therefore, there is a projection $z$ in $Z(\caM^{(0)})$ which is not
$0$ nor $\unit$.
For such a projection, we have $z+\Ad_{U}(z)\in\caM\cap \lmk {\caM^{(0)}}\rmk'\cap\{U\}'=Z(\caM)=\bbC \unit$, 
which then implies that $z+\Ad_{U}(z)=\unit$.
(We note
that for orthogonal projections 
$p,q$ satisfying $p+q=t\unit$ with $t\in\bbR$, either $p+q=\unit$ or $p=0,\,\unit$ holds,
by considering the spectrum of $p=t\unit-q$.)

We claim $Z(\caM^{(0)})=\bbC z+\bbC\unit$. 
Now, for any projection $s$ in $Z(\caM^{(0)})$, 
$zs$ is a projection in $Z(\caM^{(0)})$. Therefore either $zs=0$ or $zs+\Ad_{U}(zs)=\unit$.
The latter is possible only if $zs=z$ because $z+\Ad_{U}(z)=\unit$. 
Similarly, we have $(\unit-z)s=0$ or $(\unit-z)s=\unit-z$.
Hence we have $Z(\caM^{(0)})=\bbC z+\bbC\unit$, proving the claim.

Combining this with $\Ad_U(z)=\unit-z$, 
$\caM^{(0)}$ is a direct sum of two same-type factors $\caM^{(0)}z$ and $\caM^{(0)}(\unit-z)$.
Applying Lemma \ref{gradei}, we see that $\caM^{(0)}$ is of type I, and 
$\caM^{(0)}z$ and $\caM^{(0)}(\unit-z)$ are type I factors.

Set $\Gamma:=z-(\unit-z)$.
Note that $\Ad_\Gamma$ and $\theta$ are identity on $\caM^{(0)}$.
We also have $\Ad_U(\Gamma)=(\unit-z)-z=-\Gamma$, hence $\Ad_\Gamma(U)=-U=\theta(U)$.
Therefore, we get
\begin{align}\label{tgr}
\theta(x)=\Ad_\Gamma(x),\quad x\in \caM.
\end{align}

Next we claim that there is a Hilbert space $\caK$ and a $*$-isomorphism 
$\iota:\caM\to \caB(\caK)\otimes \Mat_2$
such that
\begin{align}\label{clclcl}
\iota\circ \theta=\Ad_{\Gamma_\caK}\circ \iota,\quad\text{and}\quad
\iota(\Gamma)=\unit_\caK\otimes \sigma_z=:\Gamma_\caK.
\end{align}
As $\caM$ is a type I factor, there is a Hilbert space $\hat \caK$ and 
a $*$-isomorphism $\hat\iota:\caM\to\caB(\hat\caK)$.
Let $\hat\iota(\Gamma)=Q_0-Q_1$ be the spectral decomposition of
a self-adjoint unitary $\hat\iota(\Gamma)$, with orthogonal projections $Q_0,Q_1$,
corresponding to eigenvalues $1,-1$.
As we have $\Ad_{\hat\iota(\Gamma)}\circ \hat\iota(x)=\hat\iota\circ\Ad_\Gamma(x)=\hat\iota\circ\theta(x)$
for $x\in\caM$ by \eqref{tgr}, we have
$
\hat\iota(\caM^{(0)})=\caB(Q_0\hat\caK)\oplus\caB(Q_1\hat\caK)$.
Because $\Ad_\Gamma(U)=-U$, we have
$\Ad_{\hat\iota(U)}\lmk\hat \iota(\Gamma)\rmk=-\hat\iota(\Gamma)$.
From the spectral decomposition,
we then have $\Ad_{\hat\iota(U)}(Q_0)=Q_1$ and $\Ad_{\hat\iota(U)}(Q_1)=Q_0$.
We therefore see that $v:=Q_0 \hat\iota (U)Q_1$ is a unitary from $Q_1\hat\caK$ onto
$Q_0\hat\caK$.
We set $\caK:=Q_0\hat\caK$ and define a unitary $W:\hat \caK\to \caK\otimes \bbC^2$ by
\begin{align}
W\begin{pmatrix}
\xi_0\\\xi_1
\end{pmatrix}
=\xi_0\otimes e_0+v\xi_1\otimes e_1,\quad\xi_0\in Q_0\hat \caK,\quad\xi_1\in Q_1\hat\caK.
\end{align}
Here $\{e_0,e_1\}$ is the standard basis of $\bbC^2$.
Note that $\Ad_W\circ\hat\iota(\Gamma)=\unit_\caK\otimes\sigma_z=\Gamma_{\caK}$.
Then $\iota:=\Ad_W\circ\hat\iota: \caM\to \caB(\caK)\otimes \Mat_2$
is a $*$-isomorphism satisfying \eqref{clclcl}, proving the claim.

Next we consider the action of $G$.
Because $Z(\caM^{(0)})=\bbC z+\bbC(\unit-z)$, 
$\Gamma=z-(\unit-z)$ and $-\Gamma=-z+(\unit-z)$ are the only self-adjoint unitaries
in $Z(\caM^{(0)})\setminus \bbC\unit$.
As ${\hat\alpha}_g$ preserves $\caM^{(0)}$, 
 ${\hat\alpha}_g(\Gamma)$ is a self-adjoint unitary in $Z(\caM^{(0)})\setminus \bbC\unit$ 
 and so 
${\hat\alpha}_g(\Gamma)=(-1)^{{\mqq(g)}}\Gamma$
for ${\mqq(g)}=0$ or ${\mqq(g)}=1$.
Clearly, ${\mqq}:G\to\bbZ_2$ is a group homomorphism.

Because $\iota\circ{\hat\alpha}_g\circ \iota^{-1}$ is a linear/anti-linear automorphism on
 $\caB(\caK)\otimes\Mat_2$, by Wigner's Theorem 
there is a projective representation $V$ satisfying 
\begin{align}\label{vdef}
\Ad_{V_g} (x)=\iota\circ{\hat\alpha}_g\circ \iota^{-1}(x),\quad x\in \caB(\caK)\otimes \Mat_2,\quad
g\in G,
\end{align}
and where $V_g$ is unitary/anti-unitary depending on $\mpp(g)$.
Because ${\hat\alpha}_g(\Gamma)=(-1)^{{\mqq(g)}}\Gamma$, we have
\begin{align}
\Ad_{V_g}\lmk\Gamma_\caK\rmk
=(-1)^{{\mqq(g)}} \Gamma_\caK,\quad g\in G.\label{vpar}
\end{align}
Hence we obtain
$
(\caR_{0,\caK},\Ad_{\Gamma_\caK},\Ad_{V_g})\in\caS_0
$.
By \eqref{clclcl} and \eqref{vdef}, we also have
$(\caM,\theta,{\hat\alpha})\sim (\caR_{0,\caK},\Ad_{\Gamma_\caK},\Ad_{V_g})$.

\vspace{0.1cm}

\noindent{\bf (Case: $\kappa=1$)}
Suppose that $\caM$ has a self-adjoint unitary $b\in Z(\caM)\cap \caM^{(1)}$ satisfying \eqref{bb}.
Set $P_\pm :=\frac{1\pm b}{2}$, where $P_\pm$ are orthogonal projections in $Z(\caM)$ such that
$P_++P_-=\unit$.
By \eqref{bb}, $Z(M)=\bbC b +\bbC\unit=\bbC P_++\bbC P_-$.
As $\caM$ is type I,
$\caM$ is a direct sum of the
type I factors, $\caM P_+$ and $\caM P_-$.

We claim that $\caM^{(0)}$ is a type I factor.
For any $x\in Z\lmk \caM^{(0)}\rmk$,
we have $x\in \caM^{(0)}\cap \lmk \caM^{(0)}\rmk'\cap\{b \}'
=\caM^{(0)}\cap\caM'=Z(\caM)\cap\caM^{(0)}=\bbC\unit$,
because $b$ is a self-adjoint unitary in
$Z(\caM)\cap \caM^{(1)}$.
Hence $Z(\caM^{(0)})=\bbC\unit$ and by Lemma \ref{gradei}, $\caM^{(0)}$ is a type I factor.

Next we claim that 
there is a Hilbert space $\caK$ and a $*$-isomorphism 
$\iota:\caM\to \caB(\caK)\otimes \Clf$
such that
\begin{align}\label{clc}
\iota\circ\theta=\Ad_{\Gamma_\caK}\circ\iota,
\quad
\iota(b)=\unit_\caK\otimes \sigma_x,
\end{align}
for $\Gamma_\caK=\unit_\caK\otimes \sigma_z$.
(Recall Example \ref{ni} for $\Clf$.)
Because
$\caM^{(0)}$ is a type I factor, there is a Hilbert space $\caK$ and a $*$-isomorphism
$\iota_0: \caM^{(0)}\to \caB(\caK)$.
As $\caM=\caM^{(0)}\oplus \caM^{(0)}b$, 
we may define a linear map $\iota: \caM\to \bk\otimes \Clf$ by
\begin{align}
\iota(x+yb):=\iota_0(x)\otimes \unit +\iota_0(y)\otimes \sigma_x,\quad
x,y\in\caM^{(0)}.
\end{align}
It can be easily checked that $\iota$ is a $*$-isomorphism satisfying \eqref{clc}.

Now we consider the group action.
Because $Z(\caM)\cap \caM^{(1)}=\bbC b$, 
$b$ and $-b$ are the only self-adjoint unitaries
in $Z(\caM)\cap\caM^{(1)}$.
As ${\hat\alpha}_g$ commutes with the grading automorphism,
${\hat\alpha}_g(b)$ is a self-adjoint unitary
in $Z(\caM)\cap\caM^{(1)}$.
Therefore, 
${\hat\alpha}_g(b)=(-1)^{{\mqq(g)}}b$
with ${\mqq}:G\to\bbZ_2$ a group homomorphism.

Because ${\hat\alpha}_g(\caM^{(0)})=\caM^{(0)}$ and $\iota(\caM^{(0)})=\bk\otimes\bbC\unit$
by \eqref{clc},
$\iota\circ{\hat\alpha}_g\circ\iota^{-1} $ induces
 a linear/anti-linear automorphism on  $\bk$ which is 
 implemented by a unitary/antiunitary 
 $V_g^{(0)}$ on $\caK$ by Wigner's Theorem. That is,
 \begin{align}
 \iota\circ{\hat\alpha}_g\circ\iota^{-1}(a\otimes \unit_{\cct})
 =\Ad_{V_g^{(0)}}( a)\otimes \unit_{\bbC^2},\quad
 a\in \bk,\quad g\in G.
 \end{align}
with $V^{(0)}$ a projective unitary/anti-unitary representation of $G$ on $\caK$ 
relative to $\mpp$.
 Set
 $ V_g:= V_g^{(0)}\otimes  C^{{\mpp(g)}}\sigma_y^{{\mqq(g)}}$,
 with the complex conjugation $C$ on $\bbC^2$
 with respect to the standard basis.
 Clearly $V$ is also a projective unitary/anti-unitary representation of $G$ on $\caK\otimes\bbC^2$ 
 relative to $\mpp$.
 We then have
 \begin{align}
\Ad_{V_g} (a\otimes \ut)
&=\iota\circ{\hat\alpha}_g\circ \iota^{-1}(a\otimes\ut),\quad a\in \caB(\caK),\\
 \Ad_{V_g}(\unit_\caK\otimes \sigma_x)
 &=(-1)^{{\mqq(g)}}(\unit_\caK\otimes \sigma_x)
 =\iota\circ{\hat\alpha}_g (b)
 =\iota\circ{\hat\alpha}_g\circ \iota^{-1}(\unit_\caK\otimes \sigma_x).
 \end{align} 
 Combining these identities, we obtain
 \begin{align}\label{vvve}
 \Ad_{V_g}\circ\iota(x)=\iota\circ{\hat\alpha}_g(x),\quad x\in\caM. 
 \end{align}
 We also have
 \begin{align}
 \Ad_{V_g}(\Gamma_\caK)
 =(-1)^{{\mqq(g)}}\Gamma_\caK.
 \end{align}
Hence we obtain $(\caR_{1,\caK},\Ad_{\Gamma_\caK},\Ad_{V_g})\in\caS_1$ 
such that 
$(\caM,\theta,{\hat\alpha})\sim (\caR_{1,\caK},\Ad_{\Gamma_\caK},\Ad_{V_g})$.
\end{proof}
\begin{defn}\label{index}
Let $(\caM,\theta,{\hat\alpha})$ be a graded $W^*$-$(G,\mpp)$-dynamical system 
with $(\caM,\theta)$ balanced, central and type I.
By Proposition \ref{casebycase}, 
there is a $\kappa\in\bbZ_2$ and $(\caR_{\kappa, \caK},\Ad_{\Gamma_\caK},\Ad_{V_g})\in\caS_{\kappa}$
such that
$(\caM,\theta,{\hat\alpha})\sim (\caR_{\kappa, \caK},\Ad_{\Gamma_\caK},\Ad_{V_g})$.
Let ${\mqq}:G\to \bbZ_2$ be a group homomorphism such that
$\Ad_{V_g}(\Gamma_\caK)=(-1)^{{\mqq(g)}}\Gamma_\caK$ and 
$[\upsilon]$ the second cohomology class associated to the projective representation $V_g$ if
$\kappa=0$, and $V_g^{(0)}$ (from Lemma \ref{zv}) if $\kappa=1$.
We define an index of $(\caM,\theta,{\hat\alpha})$ by
\begin{align}
\Ind(\caM,\theta,{\hat\alpha}):=(\kappa, {\mqq}, [\upsilon])\in\bbZ_2\times H^1(G,\bbZ_2)\times H^2(G, U(1)_{\mpp}).
\end{align}
\end{defn}

\begin{lem} \label{lem:index_indep_of_R_iso}
The quantity $\Ind(\caM,\theta,{\hat\alpha})$ is well-defined and independent of the choice of 
 $(\caR_{\kappa, \caK},\Ad_{\Gamma_\caK},\Ad_{V_g})\in\caS_{\kappa}$ 
 such that $(\caM,\theta,{\hat\alpha})\sim (\caR_{\kappa, \caK},\Ad_{\Gamma_\caK},\Ad_{V_g})$.
\end{lem}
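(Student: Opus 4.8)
The plan is to show that the data $(\kappa, \mqq, [\upsilon])$ extracted in Definition \ref{index} depends only on the equivalence class of $(\caM,\theta,\hat\alpha)$, not on the particular representative in $\caS_\kappa$ chosen via Proposition \ref{casebycase}. Since equivalence of $W^*$-$(G,\mpp)$-dynamical systems is transitive, it suffices to prove: if $(\caR_{\kappa,\caK},\Ad_{\Gamma_\caK},\Ad_{V_g})\sim(\caR_{\kappa',\caK'},\Ad_{\Gamma_{\caK'}},\Ad_{V_g'})$ with both in the relevant $\caS$, then $\kappa=\kappa'$, the two homomorphisms $\mqq,\mqq'$ coincide, and the two cohomology classes agree.

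**First I would** settle $\kappa = \kappa'$. The invariant $\kappa$ distinguishes the two cases of Lemma \ref{jh}: $\kappa=0$ iff $\caM$ is a factor and $\kappa=1$ iff $Z(\caM)\cap\caM^{(1)}\neq 0$. This is manifestly preserved by any $*$-isomorphism $\iota$ intertwining the gradings (condition \eqref{etwo}), since such an $\iota$ carries $Z(\caM_1)$ onto $Z(\caM_2)$ and $\caM_1^{(\sigma)}$ onto $\caM_2^{(\sigma)}$. Concretely $Z(\caR_{0,\caK})=\bbC\unit$ is trivially graded whereas $Z(\caR_{1,\caK})=\bbC\unit\oplus\bbC(\unit_\caK\otimes\sigma_x)$ has a nonzero odd part, so $\caR_{0,\caK}\not\sim\caR_{1,\caK}$, forcing $\kappa=\kappa'$.

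**Next** I would handle $\mqq$. Given an equivalence $\iota$ between two systems in $\caS_\kappa$, the grading operator is intrinsic up to the relevant central data: for $\kappa=0$, $\Gamma_\caK=z-(\unit-z)$ is determined (up to sign) as the unique self-adjoint unitary in $Z(\caM^{(0)})\setminus\bbC\unit$, which $\iota$ must preserve up to sign; for $\kappa=1$, the relevant odd central unitary $b=\unit_\caK\otimes\sigma_x$ is determined up to sign by $Z(\caM)\cap\caM^{(1)}=\bbC b$, and $\Gamma_\caK$ is recovered from the grading $\theta$. In either case $\iota$ intertwines the group actions (condition \eqref{eone}), so $\Ad_{V_g}(\Gamma_\caK)=(-1)^{\mqq(g)}\Gamma_\caK$ transports to the corresponding identity for $V'$, giving $\mqq=\mqq'$ (the sign ambiguity in $\Gamma$ does not affect the $\pm1$ eigenvalue relation). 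For $\kappa=1$ one uses additionally that $\iota$ sends $b$ to $\pm b'$ and $\hat\alpha_g(b)=(-1)^{\mqq(g)}b$ is likewise transported.

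**The main obstacle**, and the step I would spend the most care on, is the cohomology class $[\upsilon]$. The issue is that the projective representation $V$ (or $V^{(0)}$) is only determined up to (i) a unitary/anti-unitary that intertwines $\iota$, and (ii) a $\bbT$-valued phase ambiguity inherent in Wigner's theorem. Ambiguity (ii) only changes the cocycle by a coboundary, so it does not affect $[\upsilon]$. For (i): given the equivalence $\iota$, one has $\Ad_{V'_g}\circ\iota = \iota\circ\Ad_{V_g}$ on $\caR_{\kappa,\caK}$ (after composing with the isomorphisms from Proposition \ref{casebycase}); writing $\iota=\Ad_W$ for a unitary/anti-unitary $W$ (possible since $\iota$ is a spatial $*$-isomorphism between type I factors, resp. using Lemma \ref{zv} in the $\kappa=1$ case), one finds $W^*V'_gW$ and $V_g$ implement the same automorphism, hence differ by a phase, so their $2$-cocycles agree in $H^2(G,U(1)_\mpp)$. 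In the $\kappa=1$ case, I must additionally check that passing from $V$ to $V^{(0)}$ via Lemma \ref{zv} is compatible: the decomposition $V_g = V_g^{(0)}\otimes C^{\mpp(g)}\sigma_y^{\mqq(g)}$ is canonical once $\mqq$ and the $\bbT$-phase are fixed, and Lemma \ref{zv} already records how $[\tilde\upsilon]$ and $[\upsilon]$ are related by the fixed correction term $\epsilon(\mqq,\mpp)$; since $\mqq$ is an invariant (established above), this correction is the same for both representatives, so $[\upsilon]=[\upsilon']$ follows from $[\tilde\upsilon]=[\tilde\upsilon']$. I would present this last verification carefully since it is where the "twisted" nature of the invariant interacts with the well-definedness claim.
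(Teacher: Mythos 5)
Your argument for $\kappa$, $\mqq$, and for $[\upsilon]$ in the case $\kappa=0$ matches the paper's proof. In particular, the key observation that the grading operator (resp.~the odd central unitary) is pinned down up to sign by the graded-central structure, so that the homomorphism $\mqq$ is transported, is exactly what the paper does. For $\kappa=0$, restricting the composed isomorphism $\iota_2\circ\iota_1^{-1}$ between the type~I factors $\caB(\caK_i\otimes\bbC^2)$ and implementing it by a unitary $W$ gives $[\upsilon_1]=[\upsilon_2]$ as you say. (Minor slip: $\iota_2\circ\iota_1^{-1}$ is a linear $*$-isomorphism, so $W$ is a unitary, not a unitary/anti-unitary; the anti-linearity lives in the $V_g$'s themselves.)

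The proposed detour through $[\tilde\upsilon]$ in the $\kappa=1$ case, however, has a genuine gap. Here $\caR_{1,\caK}$ is \emph{not} a factor: its commutant inside $\caB(\caK\otimes\cct)$ is $\bbC\unit+\bbC(\unit_\caK\otimes\sigma_x)$, which is two-dimensional. If $\tilde W:\caK_1\otimes\cct\to\caK_2\otimes\cct$ implements $\iota_2\circ\iota_1^{-1}$, then $u_g:=\tilde W V_g^{(1)}\tilde W^*\bigl(V_g^{(2)}\bigr)^*$ is a unitary lying in this commutant, and unitaries in $\bbC\unit+\bbC(\unit_\caK\otimes\sigma_x)$ need not be scalar multiples of $\unit$. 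So the two projective representations on $\caK_i\otimes\cct$ need not differ by a $\bbT$-valued coboundary, and $[\tilde\upsilon_1]=[\tilde\upsilon_2]$ does \emph{not} follow from Wigner uniqueness the way it does for $\kappa=0$. The paper avoids this by restricting $\iota_2\circ\iota_1^{-1}$ to the even subalgebra $\caB(\caK_1)\otimes\bbC\unit_{\cct}$, which \emph{is} a type~I factor with scalar commutant, obtaining a unitary $W:\caK_1\to\caK_2$ intertwining $\Ad_{(V_g^{(1)})^{(0)}}$ and $\Ad_{(V_g^{(2)})^{(0)}}$, and concluding $[\upsilon_1]=[\upsilon_2]$ directly. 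You gesture at this restriction (``using Lemma~\ref{zv} in the $\kappa=1$ case''), which is the right move; but the final sentence, which attempts to deduce $[\upsilon_1]=[\upsilon_2]$ from an unproved $[\tilde\upsilon_1]=[\tilde\upsilon_2]$ together with $\mqq_1=\mqq_2$, should be deleted in favour of the direct restriction argument.
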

\begin{proof}
Suppose that both 
$(\caR_{\kappa_1,\caK_1},\Ad_{\Gamma_{\caK_1}},\Ad_{V_g^{(1)}})\in\caS_{\kappa_1}$ 
and
$(\caR_{\kappa_2, \caK_2},\Ad_{\Gamma_{\caK_2}},\Ad_{V_g^{(2)}})\in\caS_{\kappa_2}$
are equivalent to $(\caM,\theta,{\hat\alpha})$, via
$*$-isomorphisms $\iota_i: \caM\to \caR_{\kappa_i, \caK_i}$, $i=1,2$, respectively.
Then $\iota_2\circ\iota_1^{-1}: \caR_{\kappa_1,\caK_1}\to \caR_{\kappa_2, \caK_2}$
is a $*$-isomorphism such that for all $g\in G$,
\begin{align}\label{iiin}
&\iota_2\circ\iota_1^{-1}\circ\Ad_{V_g^{(1)}}=\Ad_{V_g^{(2)}}\circ \iota_2\circ\iota_1^{-1}, 
&&\iota_2\circ\iota_1^{-1}\circ\Ad_{\Gamma_{\caK_1}}=
\Ad_{\Gamma_{\caK_2}}\circ \iota_2\circ\iota_1^{-1}.
\end{align}
Let $(\kappa_i, {\mqq}_i, [\upsilon_i])$ be indices obtained from 
$(\caR_{\kappa_i,\caK_i},\Ad_{\Gamma_{\caK_i}},\Ad_{V_g^{(i)}})$, for $i=1,2$.
Because of the $*$-isomorphism $\iota_2\circ\iota_1^{-1}$,
we clearly have $\kappa_1=\kappa_2$.
If  $\kappa_1=\kappa_2=0$, then both of $\iota_i^{-1}(\unit_{\caK_i}\otimes\sigma_z)$, $i=1,2$,
are self-adjoint unitaries in $Z(\caM^{(0)})\setminus\bbC\unit$. From the proof of Proposition \ref{casebycase}, 
this means that $\iota_2\circ\iota_1^{-1}(\unit_{\caK_1}\otimes\sigma_z)=\pm 
(\unit_{\caK_2}\otimes\sigma_z)$.
Hence we get
\begin{align}
&(-1)^{{\mqq_1(g)}}\iota_2\circ\iota_1^{-1}(\unit_{\caK_1}\otimes\sigma_z)
=\iota_2\circ\iota_1^{-1}\circ\Ad_{V_g^{(1)}}(\unit_{\caK_1}\otimes\sigma_z)
=\Ad_{V_g^{(2)}}\circ \iota_2\circ\iota_1^{-1}(\unit_{\caK_1}\otimes\sigma_z)\nonumber\\
&\hspace{1.5cm} =\pm \Ad_{V_g^{(2)}}(\unit_{\caK_2}\otimes\sigma_z)
=\pm (-1)^{{\mqq}_2(g)}
(\unit_{\caK_2}\otimes\sigma_z)
=(-1)^{{\mqq}_2(g)}\iota_2\circ\iota_1^{-1}(\unit_{\caK_1}\otimes\sigma_z)
\end{align}
We therefore obtain that ${\mqq_1(g)}={\mqq}_2(g)$.
When $\kappa_1=\kappa_2=1$, an analogous argument for 
$\iota_i^{-1}(\unit_{\caK_i}\otimes\sigma_x)\in Z(\caM)\cap\caM^{(1)}$, $i=1,2$ 
implies ${\mqq_1(g)}={\mqq}_2(g)$.

If  $\kappa_1=\kappa_2=0$, the $*$-isomorphism 
$\iota_2\circ\iota_1^{-1}:\caB(\caK_1)\otimes\Mat_2\to\caB(\caK_2)\otimes\Mat_2$ is implemented by
a unitary $W:\caK_1\otimes \bbC^2\to\caK_2\otimes \bbC^2$.
Hence we see  from \eqref{iiin} that $\Ad_{WV_g^{(1)}W^*}(x)=\Ad_{V_g^{(2)}}(x)$
for all  $x\in \caB(\caK_1)\otimes\Mat_2$.
This means that $[\upsilon_1]=[\upsilon_2]$.
If  $\kappa_1=\kappa_2=1$, 
the restriction of the $*$-isomorphism
$\iota_2\circ\iota_1^{-1}$ onto 
${\caB(\caK_1)}$ induces a $*$-isomorphism from
$\caB(\caK_1)$ to $\caB(\caK_2)$.
Therefore, there is 
a unitary $W:\caK_1\to\caK_2$
such that 
$
\iota_2\circ\iota_1^{-1}(x\otimes \unit)
=\Ad_W(x)\otimes \unit$,
for all $ x\in\caB(\caK_1)$.
Therefore,  from \eqref{iiin} we have 
 $\Ad_{W(V_g^{(1)})^{(0)}W^*}(x)=\Ad_{(V_g^{(2)})^{(0)}}(x)$
 for all $x\in\caB(\caK_1)$.
This means that $[\upsilon_1]=[\upsilon_2]$.
\end{proof}
Proposition \ref{casebycase}, Lemma \ref{lem:index_indep_of_R_iso} and the fact 
that equivalence of $W^*$-$(G,\mpp)$-dynamical systems is an equivalence relation 
gives us the following.
\begin{prop}\label{wsta}
Let
$(\caM_1,\theta_1,{\hat\alpha}_1)$, $(\caM_2,\theta_2,{\hat\alpha}_2)$
be graded $W^*$-$(G,\mpp)$-dynamical systems of 
balanced, central and type I graded von Neumann algebras.
If $(\caM_1,\theta_1,{\hat\alpha}_1)\sim(\caM_2,\theta_2,{\hat\alpha}_2)$, then
$\Ind(\caM_1,\theta_1,{\hat\alpha}_1)=\Ind(\caM_2,\theta_2,{\hat\alpha}_2)$.
\end{prop}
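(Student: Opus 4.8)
The plan is to deduce this formally from Proposition \ref{casebycase}, Lemma \ref{lem:index_indep_of_R_iso}, and transitivity of the relation $\sim$. First, for each $i=1,2$, since $(\caM_i,\theta_i)$ is balanced, central and type I, Proposition \ref{casebycase} supplies a $\kappa_i\in\bbZ_2$ and a system $(\caR_{\kappa_i,\caK_i},\Ad_{\Gamma_{\caK_i}},\Ad_{V_g^{(i)}})\in\caS_{\kappa_i}$ with $(\caM_i,\theta_i,{\hat\alpha}_i)\sim(\caR_{\kappa_i,\caK_i},\Ad_{\Gamma_{\caK_i}},\Ad_{V_g^{(i)}})$. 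By Definition \ref{index}, $\Ind(\caM_i,\theta_i,{\hat\alpha}_i)$ is the triple $(\kappa_i,{\mqq}_i,[\upsilon_i])$ read off from this representative, and by Lemma \ref{lem:index_indep_of_R_iso} this triple is independent of the choice of representative in $\caS_{\kappa_i}$.

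Next I would record that each $\caR_{\kappa_i,\caK_i}$ is itself a graded von Neumann algebra of the kind to which Lemma \ref{lem:index_indep_of_R_iso} applies: for $\kappa_i=0$ it is $\caB(\caK_i)\otimes\Mat_2$, a type I factor, hence central, and it contains the odd self-adjoint unitary $\unit_{\caK_i}\otimes\sigma_x$, hence is balanced; for $\kappa_i=1$ it is $\caB(\caK_i)\otimes\Clf\cong\caB(\caK_i)\oplus\caB(\caK_i)$, of type I, with center $\bbC\unit+\bbC(\unit_{\caK_i}\otimes\sigma_x)$ whose even part is $\bbC\unit$ (since $\Ad_{\Gamma_{\caK_i}}(\unit_{\caK_i}\otimes\sigma_x)=-\unit_{\caK_i}\otimes\sigma_x$), hence central, and again balanced via $\unit_{\caK_i}\otimes\sigma_x$. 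Applying Lemma \ref{lem:index_indep_of_R_iso} to $\caM=\caR_{\kappa_1,\caK_1}$ with the identity $*$-isomorphism as the equivalence to $(\caR_{\kappa_1,\caK_1},\Ad_{\Gamma_{\caK_1}},\Ad_{V_g^{(1)}})\in\caS_{\kappa_1}$, we get $\Ind(\caR_{\kappa_1,\caK_1},\Ad_{\Gamma_{\caK_1}},\Ad_{V_g^{(1)}})=(\kappa_1,{\mqq}_1,[\upsilon_1])$, and similarly for $i=2$.

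Finally, since $\sim$ is an equivalence relation, the hypothesis $(\caM_1,\theta_1,{\hat\alpha}_1)\sim(\caM_2,\theta_2,{\hat\alpha}_2)$ together with the two equivalences from the first step yields
\[
(\caR_{\kappa_1,\caK_1},\Ad_{\Gamma_{\caK_1}},\Ad_{V_g^{(1)}})\sim(\caR_{\kappa_2,\caK_2},\Ad_{\Gamma_{\caK_2}},\Ad_{V_g^{(2)}}).
\]
Applying Lemma \ref{lem:index_indep_of_R_iso} to $\caM=\caR_{\kappa_1,\caK_1}$ once more, now using $(\caR_{\kappa_2,\caK_2},\Ad_{\Gamma_{\caK_2}},\Ad_{V_g^{(2)}})\in\caS_{\kappa_2}$ as the chosen representative, gives $(\kappa_1,{\mqq}_1,[\upsilon_1])=(\kappa_2,{\mqq}_2,[\upsilon_2])$, i.e. $\Ind(\caM_1,\theta_1,{\hat\alpha}_1)=\Ind(\caM_2,\theta_2,{\hat\alpha}_2)$. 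There is no genuine obstacle: the substantive work has been carried out in Proposition \ref{casebycase} and in Lemma \ref{lem:index_indep_of_R_iso}, and the only point requiring a moment's care is that the standard models $\caR_{\kappa,\caK}$ themselves meet the balanced/central/type I hypotheses, as checked above; alternatively one may avoid invoking the lemma on $\caR_{\kappa_1,\caK_1}$ and instead rerun its proof verbatim for the composite $*$-isomorphism $\iota_2\circ(\,\cdot\,)\circ\iota_1^{-1}\colon\caR_{\kappa_1,\caK_1}\to\caR_{\kappa_2,\caK_2}$.
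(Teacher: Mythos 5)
Your proof is correct and takes essentially the same route the paper intends: Proposition \ref{casebycase} to produce standard representatives, transitivity to chain equivalences, and Lemma \ref{lem:index_indep_of_R_iso} for well-definedness. The one small detour is that you apply Lemma \ref{lem:index_indep_of_R_iso} with $\caM=\caR_{\kappa_1,\caK_1}$ (forcing you to re-verify that $\caR_{\kappa_1,\caK_1}$ is balanced, central and type I), whereas it is slightly more economical to take $\caM=\caM_1$, which is already hypothesized to satisfy those conditions, with the two representatives $(\caR_{\kappa_1,\caK_1},\Ad_{\Gamma_{\caK_1}},\Ad_{V^{(1)}_g})$ and $(\caR_{\kappa_2,\caK_2},\Ad_{\Gamma_{\caK_2}},\Ad_{V^{(2)}_g})$, the latter via the chain $\caM_1\sim\caM_2\sim\caR_{\kappa_2,\caK_2}$.
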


\subsection{The index for pure split states}
We now define an index to fermionic SPT phases. 
For each $\Theta$-invariant and $\alpha$-invariant state $\caA$,
$(\pi_{\varphi}(\al_R)'', \Ad_{\hat \Gamma_\varphi}, \hat\alpha_\varphi)$
is a graded $W^*$-$(G,\mpp)$-dynamical system.

We first review the split property and recent results of Matsui~\cite{Matsui20} 
that relate the split property to unique gapped ground states of the CAR-algebra. 
Given a state $\varphi$ on $\al$, $\varphi\vert_{\al_R}$ denotes the restriction 
of $\varphi$ to $\al_R$ and $\pi_{\varphi\vert_{\al_R}}$ is the GNS 
representation of $\al_R$ from this restricted state.

\begin{defn}
Let $\varphi$ be 
a pure $\Theta$-invariant state on $\al$.
We say that $\varphi$ satisfies the split property if  
$\pi_{\varphi\vert_{\al_R}}(\al_R)''$ is a type I von Neumann algebra.
\end{defn}

If $\varphi$ is a pure $\Theta$-invariant state satisfying the split property, then there is an 
approximate statistical independence between the half-infinite restrictions $\varphi\vert_{\al_R}$ and 
$\varphi\vert_{\al_L}$. It is shown in~\cite{Matsui13} that pure states whose entanglement 
entropy is \emph{uniformly} bounded on finite regions satisfy the split property. Hence, the split property of 
pure states is closely related to the area law of entanglement entropy in one-dimensional systems.
See~\cite{OT, OTT} for further applications of the split property to Lieb--Schulz--Mattis-type theorems 
in the setting of quantum spin chains.

Recall the notation $\caB_f^e$ which denotes the set of all 
finite-range even interactions that satisfy the bound \eqref{fi}. 
Similarly, $\caG_{f}^{e,\alpha}$ denotes 
the set of all $\alpha$-invariant interactions $\Phi\in\caB_f^{e}$,
with a unique gapped ground state.

\begin{thm}[\cite{Matsui20}]
Let $\varphi$ be a unique gapped $\tau^\Phi$-ground state of an 
interaction $\Phi\in{\caB}_f^e$.
Then $\varphi$ satisfies the split property.
\end{thm}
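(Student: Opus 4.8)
The plan is to establish the split property --- type I\hbox{-}ness of $\pi_{\varphi\vert_{\al_R}}(\al_R)''$ --- in three stages. First, that the spectral gap of the bulk Hamiltonian $H_{\varphi,\Phi}$ forces exponential decay of correlations for $\varphi$; second, that in one spatial dimension exponential clustering together with the gap yields an area law, i.e.\ a bound on the von Neumann entropy of the restrictions $\varphi\vert_{\al_{[0,n]}}$ that is uniform in $n$; and third, that such a uniform entropy bound implies the split property. At every stage one must track the graded structure of the CAR algebra, since $\al_L$ and $\al_R$ do not commute but only commute up to parity; this is the feature that distinguishes the argument from the quantum spin chain case.

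For the first stage I would use the fermionic Lieb--Robinson bound of Appendix \ref{flr}, which controls commutators of even observables and anticommutators of odd observables under $\tau^\Phi$. Writing $\varphi(AB)-\varphi(A)\varphi(B)=\langle\Omega_\varphi,\pi_\varphi(A)E^\perp\pi_\varphi(B)\Omega_\varphi\rangle$ with $E^\perp=\unit-\ket{\Omega_\varphi}\bra{\Omega_\varphi}$, one represents $E^\perp$ by an integral $\tfrac{1}{2\pi}\int g(t)\,e^{itH_{\varphi,\Phi}}\,dt$ with $g$ chosen so that its Fourier transform vanishes on a neighbourhood of $0$ (possible precisely because of the gap $[\gamma,\infty)$), and then replaces $\pi_\varphi(\tau_t^\Phi(B))$ by an approximant localized near $\supp B$ via the Lieb--Robinson estimate; this is the Hastings--Koma / Nachtergaele--Sims argument and gives $|\varphi(AB)-\varphi(A)\varphi(B)|\le C\nn{A}\nn{B}e^{-\mathrm{dist}(X,Y)/\xi}$ for local $A\in\al_X$, $B\in\al_Y$. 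For mixed parity the correlator vanishes outright by $\Theta$-invariance of $\varphi$, and the odd--odd case is treated by the same resolvent argument using the anticommutator form of the bound.

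For the second stage one needs the fermionic analogue of Hastings' one-dimensional area law. One route is to transfer the problem to a spin chain by the Jordan--Wigner transformation, which on each finite interval $[0,n]$ is a $*$-isomorphism of $\al_{[0,n]}$ onto a quantum spin algebra carrying a gapped local Hamiltonian with the same gap and, by the first stage, the same exponential clustering for even observables; since the entanglement entropy across a cut is invariant under this isomorphism, the spin-chain area law (Arad--Kitaev--Landau--Vazirani, Brand\~{a}o--Horodecki) then yields $S(\varphi\vert_{\al_{[0,n]}})\le c$ with $c$ independent of $n$. For the third stage one argues as in \cite{Matsui13}: because $\varphi$ is even its finite-interval reduced density matrices behave like ordinary ones, and the uniform bound on their entropy produces normal quasi-product approximants to $\varphi$ on $\al_L\vee\al_R$ sharp enough to interpolate a type I factor $\caN$ with $\pi_{\varphi\vert_{\al_L}}(\al_L)''\subseteq\caN\subseteq(\pi_{\varphi\vert_{\al_R}}(\al_R)'')'$; purity of $\varphi$ then forces $\pi_{\varphi\vert_{\al_R}}(\al_R)''$ to be type I. To handle the non-commutativity of $\al_L$ and $\al_R$ one first runs this construction for the even subalgebra $\al_R^{(0)}$, which does commute with $\al_L$, and then extends to $\al_R$ by adjoining an odd self-adjoint unitary supported near the origin; this at most doubles the algebra and preserves type I\hbox{-}ness, using the structure theory of graded von Neumann algebras from Appendix \ref{sec:GradedvN}.

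I expect the third stage, combined with the graded bookkeeping, to be the main obstacle: the implication ``uniform entropy bound $\Rightarrow$ half-line von Neumann algebra is type I'' is already the substantive content of \cite{Matsui13}, and reproving it when $\al_L$ and $\al_R$ interact through the grading requires either carefully localizing everything to even subalgebras and controlling the odd extension, or rerunning the whole construction in the category of graded von Neumann algebras. A secondary difficulty is making the Jordan--Wigner reduction in the second stage fully rigorous, since the string operators it introduces can destroy strict locality of the transformed interaction; it may be cleaner to adapt the proof of the one-dimensional area law directly to lattice fermions via the Lieb--Robinson bound of Appendix \ref{flr} rather than passing through spins.
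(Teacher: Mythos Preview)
The paper does not contain a proof of this statement: the theorem is quoted verbatim from \cite{Matsui20} and used as a black box. There is therefore nothing in the present paper to compare your proposal against.

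That said, your three-stage outline is broadly the correct strategy and matches the architecture of Matsui's argument in \cite{Matsui20}: exponential clustering from the gap via the fermionic Lieb--Robinson bound, then a one-dimensional area law, then the implication ``uniform entropy bound $\Rightarrow$ type I'' from \cite{Matsui13}. Your identification of the graded bookkeeping as the main technical novelty is accurate; Matsui handles this by working with the even subalgebra (which genuinely commutes with the complementary half-line) and then controlling the passage to the full CAR algebra. Your concern about the Jordan--Wigner string operators is also well placed: Matsui does not reduce to a spin chain but adapts the entropy and approximation arguments directly in the CAR setting, which is closer to your alternative suggestion at the end.

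One point to tighten: in the third stage you describe constructing an interpolating type I factor $\caN$ between the left and right half-chain algebras. In the fermionic setting this is not quite the right formulation, since $\pi_{\varphi}(\al_L)''$ and $\pi_{\varphi}(\al_R)''$ are not each other's commutants but graded commutants. The cleaner route (and the one Matsui takes) is to show directly that $\pi_{\varphi\vert_{\al_R}}(\al_R)''$ is approximately finite-dimensional in a quantitative sense forced by the entropy bound, and then invoke the structure theory of hyperfinite von Neumann algebras. Your proposed workaround via the even subalgebra followed by adjoining a single odd unitary is viable and essentially equivalent.
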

To apply Matsui's result to graded $W^*$-$(G,\mpp)$-dynamical systems, we must first 
relate the split ground state of an interaction $\Phi\in {\caG}_f^{e,\alpha}$ to 
balanced and central graded type I von Neumann algebras.
To show this, we first note
the following.
\begin{lem}\label{hs}
Let $\varphi$ be a $\Theta$-invariant pure state on $\al$. Then
\begin{enumerate}
    \item[(i)] $Z(\pi_\varphi(\al_R)'')\cap \lmk \pi_\varphi(\al_R)''\rmk^{(0)}=\bbC\unit$, 
    \item[(ii)] The representation $\pi_{\varphi\vert_{\al_R}}$ and 
    $(\pi_\varphi)\vert_{\al_R}$, the restriction of $\pi_{\varphi}$ to $\al_R$, are quasi-equivalent.
\end{enumerate}
\end{lem}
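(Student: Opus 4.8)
The plan is to exploit the purity of $\varphi$ together with the $\Theta$-invariance, and to work inside the quasi-local structure of the CAR-algebra. For part (i), the key observation is that $\al_R$ and $\al_L$ are each other's \emph{graded} commutants: an even element of $\al_L$ commutes with all of $\al_R$, while an odd element of $\al_L$ graded-commutes with $\al_R$. Since $\varphi$ is pure, $\pi_\varphi(\al)'' = \caB(\caH_\varphi)$, so $\pi_\varphi(\al_R)' \cap \pi_\varphi(\al_R)'' = \bbC\unit$ would be immediate \emph{if} $\al_R$ and $\al_L$ were honest commutants; the subtlety is the grading. First I would show that $\pi_\varphi(\al_L)'' \subset \pi_\varphi(\al_R)'$ \emph{on the even part}: more precisely, any element of $Z(\pi_\varphi(\al_R)'')$ lies in $\pi_\varphi(\al_R)'' \cap \pi_\varphi(\al_R)'$, and the latter algebra is generated (in a graded sense) by limits of even and odd elements of $\al_L$. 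An element $x \in Z(\pi_\varphi(\al_R)'')\cap (\pi_\varphi(\al_R)'')^{(0)}$ is even and central in $\pi_\varphi(\al_R)''$; I want to show it is also central in $\pi_\varphi(\al_L)''$ and hence in all of $\caB(\caH_\varphi)$, forcing $x \in \bbC\unit$. The cleanest route: note that $x$ commutes with every even element of $\pi_\varphi(\al_R)''$ and with every even element of $\pi_\varphi(\al_L)''$ (the latter because even elements of $\al_L$ commute with $\al_R$, hence lie in $\pi_\varphi(\al_R)'$, hence commute with the central element $x$); similarly $x$, being even, commutes with odd elements of $\al_L$ up to the sign $(-1)^{\partial x \cdot 1} = 1$, so actually $x$ commutes with all of $\pi_\varphi(\al_L)''$. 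Since $\al_R \vee \al_L$ generates $\al$ and $\varphi$ is pure, $x$ is central in $\caB(\caH_\varphi) = \pi_\varphi(\al)''$, so $x \in \bbC\unit$. The point requiring care is the passage from elements of $\al_L$ to weak closures and the handling of the odd part via the grading operator $\hat\Gamma_\varphi$; I would phrase this using the fact that $\Ad_{\hat\Gamma_\varphi}$ restricts to $\theta$ on both $\pi_\varphi(\al_R)''$ and $\pi_\varphi(\al_L)''$ and that odd elements of $\al_L$ are of the form $u \cdot (\text{even})$ for a fixed odd self-adjoint unitary $u \in \al_L$.

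For part (ii), I would use the standard criterion that two representations of a $C^*$-algebra are quasi-equivalent iff the identity map extends to a normal $*$-isomorphism of the generated von Neumann algebras, equivalently iff neither has a subrepresentation disjoint from the other. Since $\pi_\varphi|_{\al_R}$ and $\pi_{\varphi|_{\al_R}}$ both have $\varphi|_{\al_R}$ in their folium, and $\pi_{\varphi|_{\al_R}}$ is the \emph{cyclic} GNS representation of that state, it suffices to show that $\pi_\varphi|_{\al_R}$ is quasi-equivalent to its cyclic subrepresentation generated by $\Omega_\varphi$. This is where part (i) enters: by a standard argument (see e.g. the quasi-equivalence criterion via central supports), the restriction $\pi_\varphi|_{\al_R}$ is quasi-equivalent to $\pi_{\varphi|_{\al_R}}$ precisely when the central support of the projection onto $\overline{\pi_\varphi(\al_R)\Omega_\varphi}$ in $\pi_\varphi(\al_R)''$ is $\unit$. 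That central support is a projection in $Z(\pi_\varphi(\al_R)'')$; I must show it is even and hence, by (i), equal to $\unit$. Evenness follows because $\hat\Gamma_\varphi$ fixes $\Omega_\varphi$ (as $\varphi$ is $\Theta$-invariant) and normalizes $\pi_\varphi(\al_R)$, hence fixes the cyclic projection, hence fixes its central support, i.e.\ the central support is $\Theta$-invariant, i.e.\ even; then (i) gives it is a scalar projection, necessarily $\unit$ since it dominates a nonzero projection.

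The main obstacle is part (i), specifically the rigorous handling of the \emph{graded} (rather than ordinary) commutation relation between $\al_R$ and $\al_L$. One cannot simply say $\pi_\varphi(\al_L)'' \subset \pi_\varphi(\al_R)'$; the correct statement mixes the grading operator $\hat\Gamma_\varphi$ in, and one must be careful that the "twisted" inclusion still forces centrality in $\caB(\caH_\varphi)$. A convenient device to sidestep the twisting is the Klein transformation / twisted-commutant trick: one can show that $\pi_\varphi(\al_R)'$ is generated by $\pi_\varphi(\al_L)^{(0)}$ together with $\hat\Gamma_\varphi \cdot \pi_\varphi(\al_L)^{(1)}$ (or an equivalent untwisting), and since an even central $x$ commutes with $\hat\Gamma_\varphi$ as well, it commutes with this whole generating set. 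I expect this to be the only genuinely delicate point; the rest is bookkeeping with purity, GNS uniqueness, and central supports.
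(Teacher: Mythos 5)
Your argument for part (i) is essentially the paper's: the paper gives it in one line as the inclusion $Z(\pi_\varphi(\al_R)'')\cap \lmk \pi_\varphi(\al_R)''\rmk^{(0)}\subset \pi_\varphi(\al_L)'\cap\pi_\varphi(\al_R)'=\pi_\varphi(\al)'=\bbC\unit$, using that an even element of $\pi_\varphi(\al_R)''$ lies in $\pi_\varphi(\al_R^{(0)})''$ and hence commutes with all of $\pi_\varphi(\al_L)$. Your concern about having to ``untwist'' $\pi_\varphi(\al_R)'$ with a Klein transformation is misplaced here: you never need the full graded commutant, only the elementary fact that $\al_R^{(0)}$ honestly commutes with $\al_L$, which is exactly the observation you eventually land on after the ``cleanest route'' sentence. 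The surrounding discussion of what $\pi_\varphi(\al_R)'$ is generated by is a red herring and could be dropped.

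For part (ii) you take a genuinely different, and arguably cleaner, route. The paper constructs the $*$-homomorphism $\tau: \pi_\varphi(\al_R)''\to(\pi_\varphi(\al_R)p)''$, $\tau(x)=xp$ with $p$ the projection onto $\overline{\pi_\varphi(\al_R)\Omega_\varphi}$, notes that surjectivity is standard, and proves injectivity by a case split via Lemma~\ref{jh}: if $\pi_\varphi(\al_R)''$ is a factor this is immediate; otherwise $Z(\pi_\varphi(\al_R)'')=\bbC\unit+\bbC b$ with $b=P_+-P_-$ odd, the grading swaps $P_\pm$, and $\tau(P_\pm)=0$ would force $P_\mp\Omega_\varphi=\hat\Gamma_\varphi P_\pm\hat\Gamma_\varphi\Omega_\varphi=0$ and hence $\Omega_\varphi=0$. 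You instead invoke the central-support criterion: $\pi_\varphi|_{\al_R}$ is quasi-equivalent to its cyclic subrepresentation on $p\caH_\varphi$ iff the central support $c(p)$ equals $\unit$; since $\hat\Gamma_\varphi\Omega_\varphi=\Omega_\varphi$ and $\Ad_{\hat\Gamma_\varphi}$ normalizes $\pi_\varphi(\al_R)$ one gets $\Ad_{\hat\Gamma_\varphi}(p)=p$, hence $\Ad_{\hat\Gamma_\varphi}(c(p))=c(p)$, i.e.\ $c(p)$ is even, and (i) forces $c(p)\in\bbC\unit$, so $c(p)=\unit$. The two proofs contain the same essential ingredient --- the grading swaps $P_\pm$ while fixing $\Omega_\varphi$ --- but your version packages it via evenness of $c(p)$ and bypasses both the explicit map $\tau$ and the factor/non-factor case analysis. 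One small imprecision: $p$ lives in $\pi_\varphi(\al_R)'$, not in $\pi_\varphi(\al_R)''$, so its central support is computed in $\pi_\varphi(\al_R)'$; of course $Z(\pi_\varphi(\al_R)')=Z(\pi_\varphi(\al_R)'')$ so this does not affect the argument, but the phrasing should be fixed.
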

\begin{proof}
(i) We have that
\begin{align}
Z(\pi_\varphi(\al_R)'')\cap \lmk \pi_\varphi(\al_R)''\rmk^{(0)}
\subset
\pi_\varphi(\al_L)'\cap\pi_\varphi(\al_R)'=\pi_\varphi(\al)'=\bbC\unit,
\end{align}
where the last equality is because $\varphi$ is pure.

\vspace{0.1cm}

\noindent (ii)
Let ${\hat \Gamma_\varphi}$ be a self-adjoint unitary 
on $\caH_\varphi$ given by 
${\hat \Gamma_\varphi}\pi_\varphi(A)\Omega_\varphi=\pi_\varphi\circ\Theta(A)\Omega_\varphi$,
$A\in\al$.
Let $p$ denote the the orthogonal projection onto $\overline{\pi_\varphi(\al_R)\Omega_\varphi}$. 
Then 
$(p\caH_\varphi, \pi_\varphi(\cdot )\vert_{\al_R}p,\Omega_\varphi)$ is a GNS triple of $\varphi\vert_{\al_R}$.
To show (ii), it suffices to show that
$\tau: \pi_\varphi(\al_R)''\to (\pi_\varphi(\al_R)p)''$ defined by 
$\tau(x)=xp$ is a $*$-isomorphism.
It is standard to see that $\tau$ is a surjective $*$-homomorphism.
To see that $\tau$ is injective, note that
from (i) and Lemma \ref{jh}, 
either $\pi_\varphi(\al_R)''$ is factor or $Z(\pi_\varphi(\al_R)'')=\bbC\unit+\bbC b$
with
some self-adjoint unitary $b\in Z(\pi_\varphi(\al_R)'')\cap (\pi_\varphi(\al_R)'')^{(1)}$.
For the former case, $\tau$ is clearly injective.
For the latter case, let $b=P_+-P_-$ be the spectral decomposition.
Because $b$ is odd, we have $\Ad_{\hat \Gamma_\varphi}(P_\pm)=P_{\mp}$.
If $\tau$ is not injective,
the kernel of $\tau$ is either $\pi_\varphi(\al_R)''P_+$ or $\pi_\varphi(\al_R)''P_-$.
If $\tau(P_+)=0$, then we have
$P_+\Omega_\varphi$=0. 
We then have
\begin{align}
P_-\Omega_\varphi={\hat \Gamma_\varphi} P_+{\hat \Gamma_\varphi}\Omega_\varphi={\hat \Gamma_\varphi} P_+\Omega_\varphi=0.
\end{align}
Hence we obtain $\Omega_\varphi=(P_++P_-)\Omega_\varphi=0$, which is a contradiction.
Similarly, we have $\tau(P_-)\neq 0$.
Therefore, $\tau$ is injective.
\end{proof}


\begin{lem}\label{deru}
Let $\varphi$ be a split pure  $\Theta$-invariant and $\alpha$-invariant state on $\al$.
Then $\pi_{\varphi}(\al_R)''$ is balanced, central with respect to 
the grading given by $\hat\Gamma_\varphi$,
and it is type I. 
The triple 
$(\pi_{\varphi}(\al_R)'', \Ad_{\hat \Gamma_\varphi}, \hat\alpha_\varphi)$
is a graded $W^*$-$(G,\mpp)$-dynamical system.
\end{lem}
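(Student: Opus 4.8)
The plan is to check the three structural conditions — central, balanced, type I — in turn, and then to read the dynamical-system axioms off the behaviour of $\Theta$ and $\alpha$ recorded in Section \ref{subsec:SettingOutline}. As a preliminary observation, the defining relation for $\hat\Gamma_\varphi$ together with cyclicity of $\Omega_\varphi$ gives $\Ad_{\hat\Gamma_\varphi}\circ\pi_\varphi=\pi_\varphi\circ\Theta$ on $\al$; since $\al_R$ is $\Theta$-invariant, $\hat\Gamma_\varphi$ normalizes $\caM:=\pi_\varphi(\al_R)''$, so $\theta:=\Ad_{\hat\Gamma_\varphi}\vert_\caM$ is an involutive automorphism and $(\caM,\theta)$ is a spatially graded von Neumann algebra with grading operator $\hat\Gamma_\varphi$. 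Centrality, $Z(\caM)\cap\caM^{(0)}=\bbC\unit$, is then exactly Lemma \ref{hs}(i).

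For balancedness I would exhibit an odd self-adjoint unitary already inside $\al_R$: pick a unit vector $f\in\op_{\bbZ_{\ge 0}}$ and set $u:=a(f)+a(f)^*$. The CAR relations give $u^*=u$ and $u^2=\{a(f),a(f)^*\}=\langle f,f\rangle\,\unit=\unit$, while $\Theta(u)=-u$; hence $\pi_\varphi(u)$ is an odd self-adjoint unitary in $\caM$. For type I I would use the split hypothesis: by definition $\pi_{\varphi\vert_{\al_R}}(\al_R)''$ is type I, and by Lemma \ref{hs}(ii) the representations $\pi_{\varphi\vert_{\al_R}}$ and $(\pi_\varphi)\vert_{\al_R}$ are quasi-equivalent, so the induced $*$-isomorphism carries $\pi_{\varphi\vert_{\al_R}}(\al_R)''$ onto $\caM$; as type I is a $*$-isomorphism invariant, $\caM$ is type I.

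Finally, for the dynamical-system structure: $\theta$ is an involutive automorphism of $\caM$ by the preliminary observation. Since $\varphi$ is $\alpha$-invariant and $\alpha_g(\al_R)=\al_R$ for all $g\in G$, the restriction $\alpha_g\vert_{\al_R}$ extends to the linear/anti-linear automorphism $\hat\alpha_{\varphi,g}$ of $\caM$ (this is precisely the map $\hat\alpha_\varphi$ named in Section \ref{subsec:SettingOutline}, obtained by conjugating with the GNS (anti-)unitary implementing $\alpha_g$ and fixing $\Omega_\varphi$), and $\hat\alpha_{\varphi,g}$ is a linear automorphism when $\mpp(g)=0$ and anti-linear when $\mpp(g)=1$. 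The relation $\alpha_g\circ\Theta=\Theta\circ\alpha_g$ on $\al$ lifts to $\hat\alpha_{\varphi,g}\circ\theta=\theta\circ\hat\alpha_{\varphi,g}$, which is the remaining axiom, so $(\caM,\theta,\hat\alpha_\varphi)$ is a graded $W^*$-$(G,\mpp)$-dynamical system. Essentially all the content is already packaged in Lemma \ref{hs} (purity and centrality in part (i), and the quasi-equivalence behind the type I claim in part (ii)); the one point meriting a little care is that the GNS operators implementing the $\alpha_g$ genuinely restrict to $\caM$, which is immediate from $\alpha_g(\al_R)=\al_R$.
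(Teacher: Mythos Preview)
Your proof is correct and follows essentially the same route as the paper: centrality from Lemma \ref{hs}(i), type I from the split hypothesis combined with the quasi-equivalence of Lemma \ref{hs}(ii), balancedness from an odd self-adjoint unitary in $\al_R$, and the dynamical-system axioms from $\alpha_g\circ\Theta=\Theta\circ\alpha_g$. You give slightly more detail than the paper (the explicit construction $u=a(f)+a(f)^*$ and the remark about $\alpha_g(\al_R)=\al_R$ ensuring the extension restricts to $\caM$), but the argument is the same.
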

\begin{proof}
Because $\varphi$ is pure and $\Theta$-invariant,
$\pi_{\varphi}(\al_R)''$ is central by part (i) of Lemma \ref{hs}.
Because $\varphi$ is split, $\pi_{\varphi\vert_{\al_R}}(\al_R)''$ is type I 
by definition. Because 
$(\pi_{\varphi})\vert_{\al_R}$ is quasi-equivalent to $\pi_{\varphi\vert_{\al_R}}$
by part (ii) of Lemma \ref{hs}, $\pi_{\varphi}(\al_R)''$ is also type I.
It is also balanced because $\al_R$ has an odd self-adjoint unitary.
Because $\alpha_g\circ\Theta=\Theta\circ\alpha_g$, for all $g\in G$ then
we have $\lmk \hat\alpha_\varphi\rmk_g\circ\Ad_{\hat \Gamma_\varphi}=\Ad_{\hat \Gamma_\varphi}\circ\lmk \hat\alpha_\varphi\rmk_g$.
\end{proof}
\begin{rem}\label{sds}
Consider the setting of Lemma \ref{deru}.
Let $\varphi_R:=\varphi\vert_{\al_R}$. Then 
$(\pi_{\varphi_R}(\al_R)'', \Ad_{\hat \Gamma_{\varphi_R}}, \hat\alpha_{\varphi_R})$
is also 
 a graded $W^*$-$(G,\mpp)$-dynamical system of 
a balanced, central and type I graded von Neumann algebra with 
\begin{align}
(\pi_{\varphi}(\al_R)'', \Ad_{\hat \Gamma_\varphi}, \hat\alpha_\varphi)
\sim
(\pi_{\varphi_R}(\al_R)'', \Ad_{\hat \Gamma_{\varphi_R}}, \hat\alpha_{\varphi_R}).
\end{align}
\end{rem}

From Lemma \ref{deru}, we see that our index of 
$W^*$-$(G,\mpp)$-dynamical systems can be applied to split, pure, $\Theta$-invariant and 
$\alpha$-invariant states on $\al$.
In particular, we may define an index for
 $\Phi\in {\caG}_f^{e,\alpha}$.
\begin{defn}
Let $\varphi$ be a $\Theta$-invariant, $\alpha$-invariant, split and 
pure state on $\al$ with $\varphi_R:=\varphi\vert_{\al_R}$. We set
\begin{align}
\ind\varphi:=\Ind(\pi_{\varphi}(\al_R)'', \Ad_{\hat \Gamma_\varphi}, \hat\alpha_\varphi) 
 = \Ind(\pi_{\varphi_R}(\al_R)'', \Ad_{\hat \Gamma_{\varphi_R}}, \hat\alpha_{\varphi_R}).
\end{align}
For interactions $\Phi\in {\caG}_f^{e,\alpha}$,
we define the index of $\Phi$ by 
$\ind(\Phi):=\ind(\varphi_\Phi)$,
with $\varphi_\Phi$ the unique ground state of $\Phi$.
\end{defn}

\section{The stability of the index}\label{stability}
In this section we prove that $\ind(\Phi)$ is an invariant of the 
classification of SPT phases. That is, for a path of interactions $\{\Phi(s)\}_{s\in[0,1]}$
satisfying Assumption \ref{assump} below, we show that 
$\ind(\Phi(0)) = \ind(\Phi(1))$.


 For each $N\in\bbN$, we denote $[-N,N]\cap \bbZ$ by $\Lambda_N$.
Let $\bbE_{N}:\caA\to \caA_{\Lambda_N}$ be the conditional expectation with respect to the trace state, 
see \cite{am}.
We consider the following subset of $\caA$.
\begin{defn}
Let $f:(0,\infty)\to (0,\infty)$ be a continuous decreasing function 
with $\lim_{t\to\infty}f(t)=0$.
For each $A\in\caA$, let
\begin{align}
\lV A\rV_f:=\lV A\rV
+ \sup_{N\in \nan}\lmk\frac{\lV
A-\bbE_{N}(A)
\rV}
{f(N)}
\rmk.
\end{align}
We denote by $\caD_f$ the set of all $A\in\caA$ such that
$\lV A\rV_f<\infty$.
\end{defn}
We consider a path in $\caG_f^{e, \alpha}$ satisfying the following conditions.
\begin{assum}\label{assump}
Let  
$[0,1]\ni s\mapsto \Phi ( s) \in \caB_f^e$ 
be a path of interactions on $\al$.
We assume the following:
\begin{enumerate}
  \item[(i)]
For each $X\in{\mathfrak S}_{\bbZ}$, the map
$[0,1]\ni s\mapsto \Phi(X;s)\in\caA_{X}$ is continuous and piecewise $C^1$.
We denote by $\dot{\Phi}(X;s)$ 
the corresponding derivatives.
The interaction obtained by differentiation is denoted by $\dot\Phi(s)$, for each $s\in[0,1]$.
\item[(ii)]
There is a number $R\in\nan$
such that $X \in {\mathfrak S}_{\bbZ}$ and $\diam{X}\ge R$ implies $\Phi(X;s)=0$ for all $s\in[0,1]$.
\item[(iii)]
For each $s\in[0,1]$,
$\Phi(s)\in \caG^{e, \alpha}_f$.
We denote the unique $\tau^{\Phi(s)}$-ground state by
$\varphi_s$. 
\item[(iv)] Interactions are bounded as follows
\begin{align}
\sup_{s\in[0,1]}\sup_{X\in {\mathfrak S}_{\bbZ}}
\lmk
\lV
\Phi\lmk X;s\rmk
\rV+|X|\lV
\dot{\Phi} \lmk X;s\rmk
\rV
\rmk<\infty.
\end{align}
\item[(v)]
Setting
\begin{align}
b(\varepsilon):=\sup_{Z\in{\mathfrak S}_{\bbZ}}
\sup_{\substack{s,s_0 \in[0,1], \\ 0<| s-s_0|<\varepsilon}}
\lV
\frac{\Phi(Z;s)-\Phi(Z;s_0)}{s-s_0}-\dot{\Phi}(Z;s_0)
\rV
\end{align}
for each $\varepsilon>0$, we have
$\lim_{\varepsilon\to 0} b(\varepsilon)=0$.
\item[(vi)] 
There exists a $\gamma>0$ such that
$\sigma(H_{\varphi_s,\Phi(s)})\setminus\{0\}\subset [\gamma,\infty)$ for
all $s\in[0,1]$, where  $\sigma(H_{\varphi_s,\Phi(s)})$ is the spectrum of $H_{\varphi_s,\Phi(s)}$.
\item[(vii)]
There exists $0<\beta<1$ satisfying the following:
Set $\zeta(t):=e^{-t^{ \beta}}$.
Then for each $A\in D_\zeta$, 
$\varphi_s(A)$ is differentiable with respect to $s$, and there is a constant
$C_\zeta$ such that
\begin{align}\label{dcon}
\lv
\dot{\varphi_s}(A)
\rv
\le C_\zeta\lV A\rV_\zeta,
\end{align}
for any $A\in D_\zeta$.
\end{enumerate}
\end{assum}

The main result of this section is the following.
\begin{thm}\label{c1t}
Let $[0,1]\ni s\mapsto \Phi ( s) \in \caB_f^e$ 
be a path of interactions on $\al$
satisfying Assumption \ref{assump}. 
Then
$\ind(\Phi(0))=\ind(\Phi(1))$.
\end{thm}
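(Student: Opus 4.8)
\textbf{Proof strategy for Theorem \ref{c1t}.}
The plan is to show that the index $\ind(\varphi_s)$ is locally constant in $s$, and then conclude by connectedness of $[0,1]$. By Proposition \ref{wsta}, it suffices to produce, for each $s_0\in[0,1]$, an open neighbourhood $I\ni s_0$ and, for every $s\in I$, an equivalence of graded $W^*$-$(G,\mpp)$-dynamical systems
\[
(\pi_{\varphi_s}(\al_R)'',\Ad_{\hat\Gamma_{\varphi_s}},\hat\alpha_{\varphi_s})
\;\sim\;
(\pi_{\varphi_{s_0}}(\al_R)'',\Ad_{\hat\Gamma_{\varphi_{s_0}}},\hat\alpha_{\varphi_{s_0}}).
\]
The mechanism is automorphic equivalence: under Assumption \ref{assump} the spectral gap condition (vi) together with the regularity conditions (i)--(v) and the differentiability bound (vii) should let me run a quasi-adiabatic (spectral flow) argument to build, for $s$ near $s_0$, an automorphism $\gamma_{s,s_0}$ of $\al$ with $\varphi_s=\varphi_{s_0}\circ\gamma_{s,s_0}$. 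The key structural features I need from $\gamma_{s,s_0}$ are: it is even (commutes with $\Theta$), it is $\alpha$-equivariant ($\gamma_{s,s_0}\circ\alpha_g=\alpha_g\circ\gamma_{s,s_0}$), and it \emph{factorizes} as $\gamma_{s,s_0}=\Ad(u_{s,s_0})\circ(\gamma^L_{s,s_0}\otimes\gamma^R_{s,s_0})$ where $\gamma^R_{s,s_0}$ is an even, $\alpha$-equivariant automorphism of $\al_R$, $\gamma^L_{s,s_0}$ likewise on $\al_L$, and $u_{s,s_0}$ is an even unitary in $\al$ (or close to one). The evenness and $\alpha$-equivariance come from averaging/symmetrizing the generator of the spectral flow against $\Theta$ and against the (finite) group $G$, using that $\Phi(s)$ is even and $\alpha$-invariant for every $s$. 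The factorization comes from the Lieb--Robinson bound for lattice fermions (Appendix \ref{flr}) applied to the quasi-local generator, exactly as in the quantum spin chain case of \cite{OgataTRI}, but now keeping careful track of signs since odd elements with disjoint support anticommute.

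Granting such a factorized, even, $\alpha$-equivariant $\gamma_{s,s_0}$, the second step is bookkeeping on the von Neumann algebra side. Since $\varphi_s=\varphi_{s_0}\circ\gamma_{s,s_0}$ and both states are pure, the GNS representations $\pi_{\varphi_s}$ and $\pi_{\varphi_{s_0}}\circ\gamma_{s,s_0}$ are unitarily equivalent. Restricting to $\al_R$ and using the factorization $\gamma_{s,s_0}=\Ad(u_{s,s_0})\circ(\gamma^L_{s,s_0}\otimes\gamma^R_{s,s_0})$ together with $\gamma^R_{s,s_0}(\al_R)=\al_R$, I get a $*$-isomorphism $\pi_{\varphi_s}(\al_R)''\to\pi_{\varphi_{s_0}}(\al_R)''$. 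Because $\gamma^R_{s,s_0}$ is even it intertwines the gradings $\Ad_{\hat\Gamma_{\varphi_s}}$ and $\Ad_{\hat\Gamma_{\varphi_{s_0}}}$, and because it is $\alpha$-equivariant it intertwines the dynamics $\hat\alpha_{\varphi_s}$ and $\hat\alpha_{\varphi_{s_0}}$; the conjugation by the even unitary $u_{s,s_0}$ does not disturb either the grading or the $G$-action on the relevant weak closure. This yields the required equivalence of graded $W^*$-$(G,\mpp)$-dynamical systems, hence $\ind(\varphi_s)=\ind(\varphi_{s_0})$ for $s\in I$ by Proposition \ref{wsta}, and then $\ind(\Phi(0))=\ind(\Phi(1))$ by a standard covering/connectedness argument on $[0,1]$.

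I expect the main obstacle to be the construction of the factorized automorphic equivalence in the fermionic setting — specifically, verifying that the quasi-adiabatic generator built from $\dot\Phi(s)$ is genuinely quasi-local \emph{as an even} interaction and that the resulting spectral flow automorphism splits, with honest control of the error, despite the anticommutation of odd local operators. This is precisely where the fermionic Lieb--Robinson bound of Appendix \ref{flr} (cf.\ \cite{BSP,NSY17,NSY18}) must be invoked, and where the $\Theta$- and $G$-averaging has to be carried out compatibly with the locality estimates; assumption (vii) and the function space $\caD_\zeta$ are there to make $\dot\varphi_s$ behave well against the quasi-local algebra. A secondary technical point is checking that the isomorphism on $\pi_{\varphi_s}(\al_R)''$ really is implemented by a \emph{unitary} respecting the spatial gradings $\Gamma_\caK$ in the sense needed to match the $\kappa$-part and the cohomology class of the index, but this follows the pattern already used in the proof of Lemma \ref{lem:index_indep_of_R_iso}. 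Everything else — Matsui's split property (so that we are in the balanced, central, type I situation of Lemma \ref{deru}), and the stability of the $\Ind$ under equivalence (Proposition \ref{wsta}) — is already in hand from the earlier sections.
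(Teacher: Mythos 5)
Your overall strategy is essentially the paper's: obtain a factorized, $\Theta$-commuting, $\alpha$-equivariant automorphic equivalence between the two ground states and transport the equivalence of graded $W^*$-$(G,\mpp)$-dynamical systems via Proposition \ref{wsta}. Two points where your write-up diverges from, or glosses over, what actually needs to happen. First, the local-to-global detour is unnecessary: Assumption \ref{assump}~(vi) gives a uniform gap $\gamma$ on all of $[0,1]$, so the spectral flow/automorphic equivalence (Proposition \ref{aep}) is built \emph{once}, directly connecting $\varphi_0$ to $\varphi_1$; there is no need for a covering argument. Second, and more substantively, your step ``restricting to $\al_R$ and using the factorization \dots I get a $*$-isomorphism'' is not quite right as stated: the unitary $u=u_{s,s_0}$ produced by the spectral flow lies in $\al$ but need not lie in $\al_R$, so $\Ad(u)$ does \emph{not} restrict to an automorphism of $\al_R$, and it is incorrect to say the conjugation ``does not disturb\dots the relevant weak closure.'' What actually happens is that $\varphi_1$ and $\varphi_0\circ\Xi$ are quasi-equivalent pure $\Theta$-invariant states (because $\Ad(u)$ gives unitarily equivalent GNS data), and one then invokes Lemma \ref{hi} — built on Lemma \ref{hs}~(ii) — to pass this quasi-equivalence to the half-chain restrictions $\varphi_1\vert_{\al_R}$ and $\varphi_0\vert_{\al_R}\circ\Xi_R$, producing the $*$-isomorphism $\tau\colon\pi_0(\al_R)''\to\pi_1(\al_R)''$ that intertwines the gradings and the $G$-actions. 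So the ingredient you are missing by name is this quasi-equivalence lemma for half-chain restrictions, not Lemma \ref{lem:index_indep_of_R_iso}; with that inserted, your argument matches the paper's.
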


The proof relies on the idea introduced in \cite{OgataTRI}, 
that is, using the factorization property of automorphic equivalence.
Namely, we note the following.
\begin{prop}\label{aep}
Let $[0,1]\ni s\mapsto \Phi ( s) \in \caB_f^e$ 
be a path of interactions on $\al$
satisfying Assumption \ref{assump}. 
Let $\varphi_s$ be the unique $\tau^{\Phi(s)}$-ground state, for each $s\in[0,1]$. 
Then
there is an automorphism $\Xi$ on $\al$ and a unitary $u\in\al$ such that 
for all $g\in G$,
\begin{align*}
&\Xi(\al_L)=\al_L,  &&\Xi(\al_R)=\al_R,  &&\Xi\circ\Theta=\Theta\circ\Xi, 
&&\Xi\circ\alpha_g=\alpha_g\circ\Xi, \\
&\Theta(u)=u,  &&\alpha_g(u)=u, 
&&\varphi_1=\varphi_0\circ\Ad_u\circ\Xi.
\end{align*}
\end{prop}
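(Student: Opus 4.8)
The plan is to realize the two ground states as endpoints of a quasi-local flow and then to factorize that flow into a piece that decomposes along the cut $\bbZ = \bbZ_{<0} \sqcup \bbZ_{\ge 0}$ plus a local correction. First I would invoke the spectral-gap hypothesis (Assumption \ref{assump}(vi)) together with the bound \eqref{dcon} on $\dot\varphi_s$ to run the automorphic equivalence (spectral flow / quasi-adiabatic continuation) argument: there is a norm-continuous family of automorphisms $\alpha^{(s)}$ of $\al$, generated by quasi-local interactions built from the $\dot\Phi(s)$ via the spectral flow kernel, such that $\varphi_s = \varphi_0 \circ \alpha^{(s)}$ for all $s \in [0,1]$. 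This is the fermionic adaptation of the standard construction; the Lieb--Robinson bounds for lattice fermions in Appendix \ref{flr} guarantee that the generating interactions have fast (stretched-exponential, matching $\zeta(t)=e^{-t^\beta}$) decay, and the symmetry conditions $\dot\Phi(X;s) = \alpha_g(\dot\Phi(X;s)) = \Theta(\dot\Phi(X;s))$ are inherited by the spectral-flow interaction because the kernel is a function of the Hamiltonian alone, so $\alpha^{(1)}$ commutes with $\Theta$ and with each $\alpha_g$. Set $\Psi := \alpha^{(1)}$, so $\varphi_1 = \varphi_0 \circ \Psi$, $\Psi\circ\Theta = \Theta\circ\Psi$ and $\Psi\circ\alpha_g = \alpha_g\circ\Psi$.

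Next I would factorize $\Psi$ across the origin. The generating interaction of $\Psi$ has a term structure supported on arbitrarily large intervals, but with fast decay in diameter; one truncates it by discarding all terms whose support meets both $\bbZ_{<0}$ and $\bbZ_{\ge 0}$ and lies outside a fixed finite window, then splits the remainder into a left-supported piece and a right-supported piece. Concretely, write $\Psi = \Psi_L \otimes \Psi_R \circ (\text{inner automorphism by a unitary } u) + (\text{error})$; the error can be absorbed because the cocycle/Lieb--Robinson estimates show the difference between $\Psi$ and the factorized map is implemented by an honest unitary $u \in \al$ — this is exactly the mechanism used in \cite{OgataTRI} and carried over to fermions in \cite{BSB}. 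The key structural points to check are: (a) $\Xi := \Psi_L \otimes \Psi_R$ genuinely preserves $\al_L$ and $\al_R$ by construction; (b) because the truncation window and the splitting are performed $G$- and $\Theta$-equivariantly (the discarded terms form a $G$- and $\Theta$-invariant set, and the left/right halves are each invariant since $\alpha_g(\al_X) = \al_X$ and $\Theta(\al_X) = \al_X$), the resulting $\Xi$ satisfies $\Xi\circ\Theta = \Theta\circ\Xi$ and $\Xi\circ\alpha_g = \alpha_g\circ\Xi$; and (c) the correction unitary $u$ can be chosen $\Theta$-invariant and $\alpha_g$-invariant. Point (c) follows by averaging: if $u_0$ is the a priori correction, then since $\varphi_1$, $\varphi_0$, $\Xi$ are all $\Theta$- and $\alpha$-invariant, the relation $\varphi_1 = \varphi_0\circ\Ad_{u_0}\circ\Xi$ forces $u_0$ to differ from $\Theta(u_0)$ (resp. $\alpha_g(u_0)$) by a phase times an element of the centralizer, and purity of $\varphi_0$ pins this down to a scalar; replacing $u_0$ by a suitable scalar multiple, or by a symmetrized version $u := u_0$ adjusted by that scalar, yields $\Theta(u) = u$ and $\alpha_g(u) = u$. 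One must be mildly careful here with the anti-linear $\alpha_g$ (when $\mpp(g) = 1$), but the argument is the same as in the spin setting with complex conjugations tracked through.

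\textbf{Main obstacle.} I expect the delicate point to be the fermionic bookkeeping in the factorization step (b)--(c): unlike the quantum spin case, odd elements with disjoint support anticommute rather than commute, so "$\Xi = \Psi_L\otimes\Psi_R$" must be interpreted as a graded tensor product and one must verify that the truncated generator, restricted to the left and right halves, still exponentiates to well-defined automorphisms whose graded product approximates $\Psi$ in the right norm. The evenness of all interactions ($\Phi(s) \in \caB_f^e$, hence $\dot\Phi(s)$ even, hence the spectral-flow generator even) is what saves us — even elements with disjoint support do commute — so the factorization goes through essentially as in \cite{BSB}, but the estimates tracking the error unitary $u$ and its symmetry properties need the Lieb--Robinson bounds of Appendix \ref{flr} stated for lattice fermions rather than the spin versions. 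Modulo importing those bounds and the graded-tensor-product formalism of Appendix \ref{sec:GradedvN}, the rest is a routine adaptation of \cite{OgataTRI, BSB}.
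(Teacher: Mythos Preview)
Your overall strategy is exactly what the paper does: it simply says the proof is the same as \cite[Theorem 1.3]{mo} and \cite[Proposition 3.5]{OgataTRI}, once the fermionic Lieb--Robinson and locality bounds of Appendix~\ref{flr} are available. Your outline of the spectral flow plus left/right factorization, with the observation that evenness of the generating interaction makes the graded tensor product behave like an ordinary one at the level of disjoint supports, captures precisely that argument.

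One point deserves correction, though. Your proposed argument for (c) --- obtaining $\Theta(u)=u$ and $\alpha_g(u)=u$ by an averaging/purity argument --- does not work as stated. From $\varphi_0\circ\Ad_{u_0}=\varphi_0\circ\Ad_{\Theta(u_0)}$ and purity of $\varphi_0$ you only get that $\pi_{\varphi_0}(u_0\Theta(u_0)^*)\Omega_{\varphi_0}$ is a scalar multiple of $\Omega_{\varphi_0}$; this does \emph{not} force $u_0\Theta(u_0)^*\in\bbC\unit$ in $\al$, so you cannot simply rescale. Fortunately this detour is unnecessary: in the references cited (and implicitly in your own point (b)), the unitary $u$ is built explicitly as a time-ordered exponential of the boundary-crossing piece of the spectral-flow interaction, and since each term of that interaction is $\Theta$- and $\alpha_g$-invariant, so is $u$ by construction. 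So drop the averaging argument and rely on the equivariance of the construction, which you already identified.
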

In Appendix \ref{flr}, we prove the Lieb-Robinson bound and a locality estimate for 
lattice fermion systems. 
Having them, the proof of
Proposition \ref{aep} is the same
 as that of~\cite[Theorem 1.3]{mo} and~\cite[Proposition 3.5]{OgataTRI}. 

To prove Theorem \ref{c1t}, we first prove a preperatory lemma.
\begin{lem}\label{hi}
Let $\varphi_1,\varphi_2$ be pure $\Theta$-invariant states on $\al$.
If $\varphi_1$ and $\varphi_2$ are quasi-equivalent, 
then 
 $\varphi_1\vert_{\al_R}$ and $\varphi_2\vert_{\al_R}$ are quasi-equivalent.
\end{lem}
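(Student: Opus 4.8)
The plan is to reduce the half-infinite statement to the global one via the quasi-equivalence criterion in terms of the GNS data, together with the quasi-equivalence of $\pi_{\varphi_i\vert_{\al_R}}$ and $(\pi_{\varphi_i})\vert_{\al_R}$ established in Lemma \ref{hs}(ii). Concretely, recall that two states $\varphi_1,\varphi_2$ on a $C^*$-algebra are quasi-equivalent iff their GNS representations are quasi-equivalent, and for pure states this is the same as unitary equivalence. So the hypothesis gives a unitary $W\colon\caH_{\varphi_1}\to\caH_{\varphi_2}$ with $W\pi_{\varphi_1}(A)W^*=\pi_{\varphi_2}(A)$ for all $A\in\al$. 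First I would observe that, because $\varphi_i$ is $\Theta$-invariant, $\Ad_{\hat\Gamma_{\varphi_i}}$ implements $\Theta$ in $\pi_{\varphi_i}$; this is not strictly needed for the conclusion but it is the natural packaging, and the $\Theta$-invariance hypothesis is what guarantees the objects $\hat\Gamma_{\varphi_i}$, $\varphi_i\vert_{\al_R}$ are $\Theta_R$-invariant so that Lemma \ref{hs} applies to each $\varphi_i$.

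The main step is then: from $W$ intertwining $\pi_{\varphi_1}$ and $\pi_{\varphi_2}$ on all of $\al$, deduce that the restrictions $(\pi_{\varphi_1})\vert_{\al_R}$ and $(\pi_{\varphi_2})\vert_{\al_R}$ are quasi-equivalent. Since $W$ intertwines them on $\al_R$ as well (restriction of a global intertwiner is an intertwiner for the subalgebra), $W$ directly exhibits a spatial equivalence of $(\pi_{\varphi_1})\vert_{\al_R}$ and $(\pi_{\varphi_2})\vert_{\al_R}$; spatial equivalence certainly implies quasi-equivalence. Finally, by Lemma \ref{hs}(ii) applied to $\varphi_1$ and to $\varphi_2$, we have $\pi_{\varphi_i\vert_{\al_R}}\sim_{\mathrm{q.e.}}(\pi_{\varphi_i})\vert_{\al_R}$ for $i=1,2$. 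Chaining the three quasi-equivalences — $\pi_{\varphi_1\vert_{\al_R}}\sim(\pi_{\varphi_1})\vert_{\al_R}\sim(\pi_{\varphi_2})\vert_{\al_R}\sim\pi_{\varphi_2\vert_{\al_R}}$, using that quasi-equivalence of representations is transitive — yields that $\pi_{\varphi_1\vert_{\al_R}}$ and $\pi_{\varphi_2\vert_{\al_R}}$ are quasi-equivalent, which is precisely the assertion that $\varphi_1\vert_{\al_R}$ and $\varphi_2\vert_{\al_R}$ are quasi-equivalent.

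I do not expect any genuine obstacle here; this is a soft argument. The only point requiring a little care is the very first reduction — translating "quasi-equivalent states" into "quasi-equivalent (indeed unitarily equivalent) GNS representations" and back — and making sure Lemma \ref{hs}(ii) is legitimately applicable, i.e. that each $\varphi_i$ is pure and $\Theta$-invariant so that its restriction to $\al_R$ has the stated quasi-equivalence between the two natural representations. Both hypotheses are assumed in the statement of Lemma \ref{hi}, so this is immediate. An alternative, slightly more self-contained route avoiding the two-sided use of Lemma \ref{hs} would be to note that for a pure $\Theta$-invariant $\varphi$, the cyclic subspace $\overline{\pi_\varphi(\al_R)\Omega_\varphi}$ carries the GNS representation of $\varphi\vert_{\al_R}$ (as in the proof of Lemma \ref{hs}(ii)), and to transport this subspace across the intertwiner $W$; but the chaining argument above is cleaner and I would present that.
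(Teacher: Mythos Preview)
Your argument is correct and follows essentially the same route as the paper: invoke Lemma~\ref{hs}(ii) on each side to identify $\pi_{\varphi_i\vert_{\al_R}}$ with $(\pi_{\varphi_i})\vert_{\al_R}$ up to quasi-equivalence, pass the global intertwiner through the restriction to $\al_R$, and chain. The only cosmetic difference is that you exploit purity to upgrade the global quasi-equivalence to a spatial unitary $W$, whereas the paper works directly with the $*$-isomorphism $\tau:\pi_1(\al)''\to\pi_2(\al)''$ and restricts it to $\pi_1(\al_R)''$; both are equivalent packagings of the same step.
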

\begin{proof}
Let $\pi_i$, $\pi_{i,R}$ be GNS representations of $\varphi_i$ and $\varphi_i\vert_{\al_R}$ 
respectively for $i=1,2$.
By Lemma \ref{hs}, there are $*$-isomorphisms
$\tau_{i}:\pi_i(\al_R)''\to \pi_{i,R}(\al_R)''$, for $i=1,2$
such that $\tau_{i}\circ\pi_i(A)=\pi_{i,R}(A)$, $A\in\al_R$.
Because $\varphi_1$ and $\varphi_2$ are quasi-equivalent, 
there is a $*$-isomorphism $\tau: \pi_1(\al)''\to\pi_2(\al)''$
 such that $\tau\circ\pi_1(A)=\pi_2(A)$, for
$A\in\al$.
The restriction of $\tau$ to $\pi_1(\al_R)''$ gives a $*$-isomorphism
$\tau_R:\pi_1(\al_R)''\to \pi_2(\al_R)''$.
Hence we obtain a $*$-isomorphism
$\hat\tau:=\tau_2\circ\tau_R\circ\tau_1^{-1}: \pi_{1,R}(\al_R)''\to \pi_{2,R}(\al_R)''$
such that $\hat\tau\circ\pi_{1,R}(A)=\pi_{2,R}(A)$, $A\in\al_R$.
Therefore, $\varphi_1\vert_{\al_R}$ and $\varphi_2\vert_{\al_R}$ are quasi-equivalent.
\end{proof}
Now we are ready to prove the Theorem.
\begin{proofof}[Theorem \ref{c1t}]
Let $(\caH_i,\pi_i,\Omega_i)$ be the GNS triple of the states $\varphi_i\vert_{\al_R}$ for $i=0,1$.
Let $\Gamma_i$ be a self-adjoint unitary given by $\Gamma_i\pi_i(A)\Omega_i=\pi_i\circ\Theta(A)\Omega_i$,
$A\in\al_R$.
Let $\hat\alpha_i$ be the extension of $\alpha\vert_{\al_R}$ to $\pi_i(\al_R)''$.
From Proposition \ref{wsta} and Remark \ref{sds}, it suffices to show that
$(\pi_0(\al_R)'', \Ad_{\Gamma_0}, \hat\alpha_0)\sim
(\pi_1(\al_R)'', \Ad_{\Gamma_1}, \hat\alpha_1)$.
Recalling the $\ast$-automorphism $\Xi$ from Proposition \ref{aep}, 
$\Xi(\al_R)=\al_R$ and so $\Xi_R:=\Xi\vert_{\al_R}$ defines a $*$-automorphism
on $\al_R$.
Note that $(\caH_0,\pi_0\circ\Xi_R,\Omega_0)$ is
 a GNS triple of $\varphi_0\vert_{\al_R}\circ\Xi_R$.
The state $\varphi_1=\varphi_0\circ\Ad_u\circ\Xi$
is quasi-equivalent to  $\varphi_0\circ\Xi$.
Because $\Xi\circ\Theta=\Theta\circ\Xi$,
both $\varphi_0\circ\Xi$ and $\varphi_1$ are $\Theta$-invariant pure states.
Applying Lemma \ref{hi},
$\varphi_1\vert_{\al_R}$ and $\varphi_0\circ\Xi\vert_{\al_R}=\varphi_0\vert_{\al_R}\circ\Xi_R$
are quasi-equivalent.
Hence there is a $*$-isomorphism 
\begin{align}
  \tau: \pi_0\circ\Xi_R(\al_R)''=\pi_0(\al_R)''\to \pi_1(\al_R)'', \qquad 
  \tau\circ\pi_0\circ\Xi_R(A)=\pi_1(A), \quad A \in \al_R.
\end{align}
Using properties of the quasi-equivalence $\tau$ and automorphism $\Xi_R$, we see that 
\begin{align}
\tau\circ\hat\alpha_{0,g}\circ \pi_0\circ\Xi_R(A)
&=\tau\circ\pi_0\circ\alpha_g\circ \Xi_R(A)
=\tau\circ\pi_0\circ \Xi_R\circ\alpha_g(A)  \nonumber \\
&=\pi_1\circ\alpha_g(A) 
=\hat\alpha_{1,g}\circ \pi_1(A)
=\hat\alpha_{1,g}\circ \tau\circ\pi_0\circ\Xi_R(A),\\
\tau\circ\Ad_{\Gamma_0}\circ\pi_0\circ\Xi_R(A)
&=\tau\circ\pi_0\circ\Theta\circ\Xi_R(A)
=\tau\circ\pi_0\circ\Xi_R\circ\Theta(A)\nonumber\\
&=\pi_1\circ\Theta(A)
=\Ad_{\Gamma_1}\circ\pi_1(A)
=\Ad_{\Gamma_1}\circ\tau\circ\pi_0\circ\Xi_R(A)
\end{align}
for all $A\in\al_R$.
Hence we obtain
\begin{align}
\tau\circ\hat\alpha_{0,g}(x)=\hat\alpha_{1,g}\circ \tau(x),\quad
\tau\circ\Ad_{\Gamma_0}(x)=
\Ad_{\Gamma_1}\circ\tau(x),\quad
x\in \pi_0(\al_R)''.
\end{align}
This completes the proof.
\end{proofof}
\section{Stacking and group law of fermionic SPT phases}\label{stacksec}

\subsection{The graded tensor product} \label{subsec:graded_product_def}
Let  $(\caM_1,\Ad_{\Gamma_1})$ and $(\caM_2,\Ad_{\Gamma_2})$ be spatially
graded von Neumann algebras acting on 
on $\caH_1$, $\caH_2$
with grading operators $\Gamma_1$, $\Gamma_2$.
We define a product and involution on the algebraic tensor product $\caM_1\odot \caM_2$
by
\begin{align}
(a_1\hox b_1)(a_2\hox b_2)
&=(-1)^{\partial b_1\partial a_2}(a_1a_2\hox b_1b_2), \nonumber \\
(a\hox b)^* &=(-1)^{\partial a\partial b} a^*\hox b^*.
%
\end{align}
for homogeneous elementary tensors.
The algebraic tensor product with this multiplication and involution 
is a $*$-algebra, denoted $\caM_1\,\hat{\odot}\,\caM_2$.
On the Hilbert space $\caH_1\otimes\caH_2$,
\begin{align}\label{pprep}
\pi (a\hox b)
:=a\Gamma_1^{\partial b}\otimes b
\end{align}
for homogeneous $a\in \caM_1$, $b\in\caM_2$
defines a faithful $*$-representation of $\caM_1\,\hat{\odot}\,\caM_2$.
We call the von Neumann algebra generated by $\pi(\caM_1\,\hat{\odot}\,\caM_2)$
the graded tensor product of 
$(\caM_1,\caH_1,\Gamma_1)$ and $(\caM_2,\caH_2,\Gamma_2)$ 
and denote it by $\calM_1 \hox \calM_2$. 
It is simple to check that 
$\calM_1 \hox \calM_2$ is a spatially graded von Neumann algebra with a grading operator
$\Gamma_1\otimes \Gamma_2$.

For  $a\in \caM_1$ and homogeneous $b\in\caM_2$,
we denote $\pi(a\hox b)$ by $a{\hox} b$, embedding $\caM_1\,\hat{\odot}\,\caM_2$
in $\caM_1\hox\caM_2$.
Note that $\partial(a\hox b) = \partial(a) + \partial(b)$ for homogeneous
$a\in \caM_1$ and $b\in \caM_2$.

Fix a finite group $G$ and a homomorphism $\mpp: G\to\bbZ_2$.
Let $(\caM_1,\Ad_{\Gamma_1},\alpha_1)$ and  $(\caM_2,\Ad_{\Gamma_2},\alpha_2)$ 
be graded $W^*$-$(G,\mpp)$-dynamical systems, where $(\caM_1, \mathrm{Ad}_{\Gamma_1})$ and 
$(\caM_2,\mathrm{Ad}_{\Gamma_2})$ are spatially graded, balanced, central and type I.
We may define an action
$\alpha_1\hox\alpha_2$ of $G$ on $\caM_1\hox\caM_2$
 by
\begin{align}
\lmk \alpha_1\hox\alpha_2\rmk_g
(a\hox b)
=\alpha_{1,g}(a){\hox} \alpha_{2,g}(b),\quad g\in G
\end{align}
for all  homogeneous $a\in \caM_1$ and $b\in\caM_2$, 
see Lemma \ref{ga8}.

\subsection{Stacking and the group law}

In this section, we show that $W^*$-$(G,\mpp)$-dynamical systems of 
balanced, central, type I and spatially graded von Neumann algebras are 
closed under graded tensor products. Furthermore, our index from 
Definition \ref{index} obeys a twisted group law (a generalized Wall group law) under this operation.

\begin{thm}\label{stlaw}
Let $(\caM_1,\Ad_{\Gamma_1},\alpha_1)$, $(\caM_2,\Ad_{\Gamma_2},\alpha_2)$ be
graded $W^*$-$(G,\mpp)$-dynamical systems
with balanced, central and spatially graded type I von Neumann algebras.
Then the triple $(\caM_1\hox \caM_2, \Ad_{\Gamma_1\otimes\Gamma_2}, \alpha_1\hox \alpha_2)$
is a graded $W^*$-$(G,\mpp)$-dynamical system
with a balanced, central and spatially graded type I von Neumann algebra.
If $\Ind(\caM_i,\Ad_{\Gamma_i},\alpha_i)=(\kappa_i,\mqq_i,[\upsilon_i])$, $i=1,2$, 
then 
\begin{align}\label{totind}
\Ind (\caM_1\hox \caM_2, \Ad_{\Gamma_1\otimes\Gamma_2}, \alpha_1\hox\alpha_2)
=\big(\kappa_1+\kappa_2,\, \mqq_1+\mqq_2+\kappa_1\kappa_2\mpp, \, [\upsilon_1\, \upsilon_2 \, \epsilon_\mpp(\kappa_1,\mqq_1,\kappa_2,\mqq_2)] \big),
\end{align}
where $\epsilon_\mpp(\kappa_1,\mqq_1,\kappa_2,\mqq_2)$ is a group $2$-cocycle defined by 
\begin{align}
\epsilon_\mpp(\kappa_1,\mqq_1,\kappa_2,\mqq_2)(g,h)
=(-1)^{{\mqq_1(g)}\mqq_2(h)+(\kappa_1-\kappa_2)(\kappa_1\mqq_2(g)+\kappa_2{\mqq_1(g)})\cdot{\mpp(h)}},\quad
g,h\in G.
\end{align}
\end{thm}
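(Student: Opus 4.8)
The plan is to reduce each factor to one of the model systems $(\caR_{\kappa_i,\caK_i},\Ad_{\Gamma_{\caK_i}},\Ad_{V^{(i)}_g})$ via Proposition \ref{casebycase}, then analyze the graded tensor product of these concrete algebras directly. First I would verify the structural claims: that $\caM_1\hox\caM_2$ is type I (it is, being a von Neumann algebra generated by a graded tensor product of type I algebras, since $\caR_{\kappa,\caK}$ is always a finite ampliation of $\caB(\caK)$), balanced (take the product $U_1\hox U_2$ of odd self-adjoint unitaries, suitably rescaled so it squares to $\unit$, or note one can simply use $U_1\hox\unit$), central (the center of a graded tensor product is controlled by the centers of the factors together with the odd central unitaries — this is where the $\kappa_1,\kappa_2$ bookkeeping enters), and that $\alpha_1\hox\alpha_2$ is a well-defined graded $(G,\mpp)$-action, which is Lemma \ref{ga8}.

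The core of the argument is the index computation, which I would split into four cases according to $(\kappa_1,\kappa_2)\in\{0,1\}^2$. In each case I would write down an explicit $*$-isomorphism from $\caR_{\kappa_1,\caK_1}\hox\caR_{\kappa_2,\caK_2}$ onto a model algebra $\caR_{\kappa_1+\kappa_2,\caK}$ (with $\caK$ essentially $\caK_1\otimes\caK_2$ up to a factor of $\bbC^2$), using standard Clifford-algebra identities: $\Clf\hox\Clf\cong\Mat_2$ and $\Mat_2\hox\Clf\cong\Mat_2\hox\Mat_2\supset\Clf$ appropriately, implemented by the Pauli matrices. The key point is to track what the grading operator and the representation $V^{(1)}_g\hox V^{(2)}_g$ become under this isomorphism. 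For the $\kappa_1=\kappa_2=0$ case, $V^{(1)}_g\otimes V^{(2)}_g$ acts on $\caK_1\otimes\bbC^2\otimes\caK_2\otimes\bbC^2$, and after the identification $\bbC^2\hox\bbC^2$-part $\to$ the $\Mat_2$ of the target, the anti-unitary parts $C^{\mpp(g)}$ on each $\bbC^2$ combine; using $\sigma_y^{\mqq_1(g)}C^{\mpp(h)} = (-1)^{\mqq_1(g)\mpp(h)}C^{\mpp(h)}\sigma_y^{\mqq_1(g)}$ (exactly as in the proof of Lemma \ref{zv}) produces the cocycle $\epsilon(\mqq_1,\mpp)$-type corrections, and the $\mqq$'s add. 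In the mixed and $\kappa_1=\kappa_2=1$ cases the extra $\kappa_1\kappa_2\mpp$ shift in the homomorphism component and the more elaborate exponent in $\epsilon_\mpp$ arise from how $\sigma_x$-type odd central unitaries must be rotated into $\sigma_z$-type grading operators (a change of basis in $\Mat_2$ by something like $\exp(i\pi\sigma_y/4)$), which conjugates $C$ nontrivially.

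The main obstacle I expect is the last (cocycle) component: carefully bookkeeping the phases that accumulate when one (i) passes odd elements past each other in the graded tensor product — the signs $(-1)^{\partial b_1\partial a_2}$ — (ii) moves the complex conjugations $C$ past $\sigma_y$'s and past each other, and (iii) absorbs the change-of-basis unitaries needed to bring the model into the standard $\Gamma_\caK = \unit\otimes\sigma_z$ form. Each of these is individually routine, but combining them to land exactly on $(-1)^{\mqq_1(g)\mqq_2(h)+(\kappa_1-\kappa_2)(\kappa_1\mqq_2(g)+\kappa_2\mqq_1(g))\mpp(h)}$ and checking this represents the same class as $[\upsilon_1\upsilon_2\,\epsilon(\text{stuff})]$ in $H^2(G,U(1)_\mpp)$ requires care — in particular one should note (cf. Remark \ref{homo}) that $\epsilon(\mqq,\mqq')$ and $\epsilon(\mqq',\mqq)$ are cohomologous, so the asymmetric-looking exponent can be symmetrized modulo coboundaries, and that adding a coboundary to $\upsilon_1$ or $\upsilon_2$ (from the non-canonical choices of $V^{(i),(0)}$ when $\kappa_i=1$) does not affect the class. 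Finally I would invoke Lemma \ref{lem:index_indep_of_R_iso} to conclude the computed index is independent of all these choices, and Proposition \ref{wsta} to transport the result back from the models to $(\caM_1,\caM_2)$.
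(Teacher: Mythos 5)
Your proposal follows essentially the same route as the paper: reduce to the model systems $(\caR_{\kappa_i,\caK_i},\Ad_{\Gamma_{\caK_i}},\Ad_{V^{(i)}_g})$ via Proposition~\ref{casebycase}, do a case analysis on $(\kappa_1,\kappa_2)$, build explicit $*$-isomorphisms onto $\caR_{\kappa_1+\kappa_2,\caK}$ using Clifford-algebra identifications (the paper implements these with the explicit unitaries $w_1,w_2,w_3$ and the map $\iota_0:\Clf\hox\Clf\to\Mat_2$ rather than a rotation by $\exp(i\pi\sigma_y/4)$, but this is just a different choice of implementing unitary), track $\Gamma_{\caK_1}\otimes\Gamma_{\caK_2}$ and the projective representation through, and close with Lemma~\ref{lem:index_indep_of_R_iso} and Proposition~\ref{wsta}. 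One small slip: $U_1\hox U_2$ with $U_1,U_2$ odd is \emph{even} (its degree is $1+1=0$ in $\Z_2$), so rescaling it cannot produce an odd self-adjoint unitary; fortunately your fallback $U_1\hox\unit$ is correct and suffices to show the graded tensor product is balanced.
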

\begin{rems}
\begin{enumerate}
    \item[(i)] One can check that \eqref{totind} gives an abelian group law, which is not surprising because of
the corresponding properties of the graded tensor product.
    \item[(ii)] The group law \eqref{totind} is a little cumbersome in full generality, but simplifies 
    in many examples of interest. For example, if $\alpha_1$ and $\alpha_2$ are linear group actions, 
    $\mpp(g)=0$ for all $g\in G$, we recover the more familiar twisted sum formula,
    \begin{align}
        (\kappa_1,\, \mqq_1,\, [\upsilon_1]) \cdot (\kappa_2,\, \mqq_2,\, [\upsilon_2]) 
        = (\kappa_1 +\kappa_2,\, \mqq_1+\mqq_2,\, [\upsilon_1\, \upsilon_2 \, \epsilon(\mqq_1,\mqq_2)] ).
    \end{align}
\end{enumerate}
\end{rems}
\begin{proof}
By Lemma \ref{oisoiso} and Lemma \ref{wsta},
we may assume that 
\begin{align}
(\caM_i,\Ad_{\Gamma_i},\alpha_i)=(\caR_{\kappa_i, \caK_i},\Ad_{\Gamma_{\caK_i}},\Ad_{V_{ i}})\in\caS_{\kappa_i}.
\end{align}
Let
\begin{align}
\Ind\big(  \caR_{\kappa_i,\caK_i},\Ad_{\Gamma_{\caK_i}},\Ad_{V_{ i}} \big)
 =(\kappa_i,\mqq_i, [\upsilon_i]),\quad i=1,2.
\end{align}
We would like to show that
\begin{align}\label{eqvrr}
\lmk\caR_{\kappa_1,\caK_1}\hox\caR_{\kappa_2, \caK_2},
\Ad_{\Gamma_{\caK_1}}{\hox}\Ad_{\Gamma_{\caK_1}},
\Ad_{ V_{1}}\hox\Ad_{V_{2}}
\rmk
\sim 
(\caR_{\kappa, \caK},\Ad_{\Gamma_{\caK}},\Ad_{V})\in\caS_{\kappa},
\end{align}
for suitably chosen $\kappa=0,1$, Hilbert space $\caK$ and projective representation
$V$ on $\caK\otimes\cct$,
satisfying
\begin{align}\label{tind}
\Ind(\caR_{\kappa, \caK},\Ad_{\Gamma_{\caK}},\Ad_{V_{g}}) 
=\big(\kappa_1+\kappa_2,\, \mqq_1+\mqq_2+\kappa_1\kappa_2\mpp, \, [\upsilon_1\, \upsilon_2 \, 
\epsilon_\mpp(\kappa_1,\mqq_1,\kappa_2,\mqq_2)] \big).
\end{align}

\noindent{\bf (Case: $\kappa_1=0$ or $\kappa_2=0$) }

We set the following notation,
\begin{align}
  &\caK := \caK_1\otimes\caK_2\otimes\cct, 
  &&\lambda = \begin{cases} 1, &\text{if}\,\, \kappa_1=\kappa_2 = 0, \\ 2, &\text{if}\,\, \kappa_1=1, \,\, \kappa_2 = 0, \\
       3, & \text{if}\,\,\kappa_1 = 0, \,\, \kappa_2 = 1, \end{cases}
\end{align}
and define the unitary $v:\bbC^2\otimes \caK_2\to \caK_2\otimes \cct$, 
\begin{align}\label{swap}
v(\xi\otimes \eta)=\eta\otimes \xi,\quad \xi\in \cct,\quad \eta\in\caK_2.
\end{align}

Using the standard basis $\{e_0,e_1\}$ of $\cct$,
we define the unitaries $w_1,w_2,w_3$ on $\cct\otimes\cct$
by
\begin{align*}
&w_1(e_0\otimes e_0)=e_0\otimes e_0,
&&w_1(e_1\otimes e_1)=e_1\otimes e_0,
&&w_1(e_1\otimes e_0)=e_0\otimes e_1,
&&w_1(e_0\otimes e_1)=e_1\otimes e_1,\\
&w_2(e_0\otimes e_0)=e_0\otimes e_0,
&&w_2(e_1\otimes e_1)=e_1\otimes e_0,
&&w_2(e_1\otimes e_0)=e_0\otimes e_1,
&&w_2(e_0\otimes e_1)=-e_1\otimes e_1,\\
&w_3(e_0\otimes e_0)=e_0\otimes e_0,
&&w_3(e_1\otimes e_1)=e_1\otimes e_0,
&&w_3(e_1\otimes e_0)=e_1\otimes e_1,
&&w_3(e_0\otimes e_1)=e_0\otimes e_1.
\end{align*}
By direct calculation, we may check
\begin{align}
&\Ad_{w_\lambda}\lmk\sigma_z\otimes\sigma_z\rmk
=\unit_{\cct}\otimes\sigma_z,\quad \lambda=1,2,3,\label{ttz}\\
&\Ad_{w_2}\lmk \sigma_x\otimes\sigma_z\rmk
=\unit_{\cct}\otimes\sigma_x,\quad
\Ad_{w_3}\lmk\unit_{\cct}\otimes\sigma_x\rmk=\unit_{\cct}\otimes\sigma_x.\label{ttx}
\end{align}

We now define unitary $U_\lambda:\caK_1\otimes\cct\otimes\caK_2\otimes \cct\to 
\caK\otimes \cct$
such that 
\begin{align}
U_\lambda:=(\unit_{\caK_1}\otimes \unit_{\caK_2}\otimes w_\lambda)(\unit_{\caK_1}\otimes v\otimes \unit_{\cct}),\quad\lambda=1,2,3.
\end{align}
By \eqref{ttz},  we have
\begin{align}\label{uttz}
\Ad_{U_\lambda}\lmk\Gamma_{\caK_1}\otimes \Gamma_{\caK_2}\rmk=\Gamma_{\caK},\quad\lambda=1,2,3,
\end{align}
hence 
\begin{align}\label{tu}
\Ad_{U_\lambda}\circ\lmk \Ad_{\Gamma_{\caK_1}}\hox\Ad_{\Gamma_{\caK_2}}\rmk(x)
=\Ad_{\Gamma_{\caK}}\circ\Ad_{U_\lambda}(x),
\quad x\in \caR_{\kappa_1, \caK_1}\hox\caR_{\kappa_2,\caK_2},\quad
\lambda=1,2,3.
\end{align}
By \eqref{ttx},  
when $\lambda=2$,
for $\lmk \unit_{\caK_1}\otimes \sigma_x\rmk{\hox}
\lmk \unit_{\caK_2}\otimes \sigma_z\rmk\in 
\caR_{\kappa_1,\caK_1}{\hox} \caR_{\kappa_2,\caK_2}
$,
we have
\begin{align}\label{uttx2}
\Ad_{U_2}\lmk\lmk \unit_{\caK_1}\otimes \sigma_x\rmk{\hox}
\lmk \unit_{\caK_2}\otimes \sigma_z\rmk
\rmk=\unit_{\caK}\otimes \sigma_x.
\end{align}
Similarly, when $\lambda=3$,
for $
\lmk \unit_{\caK_1}\otimes \sigma_z\rmk{\hox}
\lmk \unit_{\caK_2}\otimes \sigma_x\rmk
\in\caR_{\kappa_1, \caK_1}{\hox} \caR_{\kappa_2, \caK_2}
$,
\begin{align}\label{uttx3}
\quad
\Ad_{U_3}\lmk\lmk \unit_{\caK_1}\otimes \sigma_z\rmk{\hox}
\lmk \unit_{\caK_2}\otimes \sigma_x\rmk
\rmk=\unit_{\caK}\otimes \sigma_x.
\end{align}

Let $[\tilde \upsilon_i]$ be the second cohomology class associated to the projective representation
$V_{i}$, $i=1,2$.
We set 
\begin{align}
V_g:=\Ad_{U_\lambda}\lmk V_{1,g}\otimes V_{2,g}\Gamma_{\caK_2}^{{\mqq_1(g)}}\rmk,\quad g\in G,\quad\lambda=1,2,3.
\end{align}
This gives a projective unitary/anti-unitary representation $V$ of $G$ on $\caK\otimes\cct$ relative to $\mpp$. Using that 
$\Ad_{V_{2,g}}\lmk\Gamma_{\caK_2}\rmk=(-1)^{\mqq_2(g)}\Gamma_{\caK_2}$ for $g\in G$,
the second cohomology class associated to $V$ is equal to
$[\tilde \upsilon_1 \,\tilde \upsilon_2 \, \epsilon(\mqq_1,\mqq_2)] \in H^2(G, U(1)_\mpp)$, where 
$\epsilon(\mqq_1,\mqq_2)$ is given in \eqref{enn}.
By Lemma \ref{isoiso} and \ref{aisoaiso}, we have 
that for $x\in \caR_{\kappa_1, \caK_1}{\hox} \caR_{\kappa_2, \caK_2}$, $g\in G$ 
and any $\lambda=1,2,3$, 
\begin{align}\label{uv}
\Ad_{V_g}\circ\Ad_{U_\lambda}(x)=\Ad_{U_\lambda}\circ\Ad_{ V_{1,g}\otimes V_{2,g}\Gamma_{\caK_2}^{{\mqq_1(g)}}}(x)
=\Ad_{U_\lambda}\circ\lmk\alpha_{1,g}{\hox} \alpha_{2,g}\rmk\lmk
x
\rmk .
\end{align}
In particular, for $\lambda=2,3$, we also have
\begin{align}\label{uttxl}
\Ad_{V_g} (\unit_{\caK}\otimes \sigma_x)=
(-1)^{{\mqq_1(g)}+\mqq_2(g)}
\lmk \unit_{\caK}\otimes \sigma_x\rmk,\quad g\in G,
\end{align}
from \eqref{uttx2} and \eqref{uttx3}.

By \eqref{uttz}, we have
\begin{align}\label{vzz}
\Ad_{V_g}\lmk\Gamma_\caK\rmk
=\Ad_{U_\lambda}\circ\Ad_{ V_{1,g}\otimes V_{2,g}\Gamma_{\caK_2}^{{\mqq_1(g)}}}\lmk
\Gamma_{\caK_1}\otimes\Gamma_{\caK_2}
\rmk
=(-1)^{{\mqq_1(g)}+\mqq_2(g)}\Gamma_\caK,\quad g\in G.
\end{align}

Having set up the required preliminaries, 
we now consider the $W^*$-$(G,\mpp)$-dynamical system 
$(\caR_{\kappa, \caK},\Ad_{\Gamma_{\caK}},\Ad_{V_{g}})\in \caS_{\kappa}$ and 
show equivalence with the graded tensor product in
the three cases where $\kappa_1$ or $\kappa_2=0$.

\vspace{0.1cm}

\noindent{(i)-1}
For $\lambda=1$ (i.e. $\kappa_1=\kappa_2=0$), we set $\kappa=0$ 
and 
note from \eqref{vzz} that $(\caR_{\kappa, \caK},\Ad_{\Gamma_{\caK}},\Ad_{V_{g}})\in \caS_{\kappa}$. 
In this case,  $[\tilde \upsilon_i]= [\upsilon_i]$ and 
$\epsilon_\mpp(0,\mqq_1,0,\mqq_2)=\epsilon(\mqq_1,\mqq_2)$. 
Hence the second cohomology class of $V$ is 
$[\upsilon_1\upsilon_2 \epsilon_\mpp(0,\mqq_1,0,\mqq_2)]$.
With this and (\ref{vzz}), the index of $(\caR_{\kappa, \caK},\Ad_{\Gamma_{\caK}},\Ad_{V_{g}})$
is given by \eqref{tind}.
So we just 
need to show equivalence of the $W^*$-$(G,\mpp)$-dynamical system with the 
graded tensor product. 
The equivalence is given by a $*$-isomorphism
\begin{align}
\iota:=\Ad_{U_1}: \caB(\caK_1\otimes\cct\otimes\caK_2\otimes\cct)
=\caR_{\kappa_1, \caK_1}{\hox} \caR_{\kappa_2, \caK_2}\to 
\caB(\caK\otimes\cct)=\caR_{0,\caK}.
\end{align}
By \eqref{tu} and \eqref{uv}, $\iota$ satisfies the required conditions \eqref{eone} and \eqref{etwo}
for equivalence of $W^*$-$(G,\mpp)$-dynamical systems.

\vspace{0.1cm}

\noindent{(i)-2}
For $\lambda=2$ (i.e. $\kappa_1=1,\, \kappa_2=0$),  set $\kappa=1$.
By \eqref{uttxl} and \eqref{vzz}, we see that
$(\caR_{\kappa, \caK},\Ad_{\Gamma_{\caK}},\Ad_{V_{g}})\in \caS_{\kappa}$.
Note that $[\tilde \upsilon_1]= [\upsilon_1\, \epsilon (\mqq_1,\mpp)] \in {H^2(G,U(1)_\mpp)}$, 
see  Lemma \ref{zv} and Definition \ref{index}, 
with $\tilde \upsilon_2=\upsilon_2$.
Hence the second cohomology associated to our projective representation 
$V$
is 
\begin{align}
 \big[ \tilde \upsilon_1 \, \tilde \upsilon_2 \, \epsilon(\mqq_1,\mqq_2)\big] =
\big[ \upsilon_1 \,\upsilon_2 \,\epsilon (\mqq_1,\mpp) \, \epsilon(\mqq_1,\mqq_2)\big].
\end{align}
Combining this and \eqref{vzz}, the second cohomology associated to the projective representation 
$V^{(0)}$ (cf. Lemma \ref{zv} and Definition \ref{index}) is
\begin{align*}
\big[\tilde \upsilon_1 \,\tilde \upsilon_2 \, \epsilon(\mqq_1,\mqq_2) \,\epsilon(\mqq_1+\mqq_2,\mpp) \big] 
= \big[ \upsilon_1 \,\upsilon_2 \,\epsilon (\mqq_1,\mpp) \, \epsilon(\mqq_1,\mqq_2) 
   \, \epsilon(\mqq_1+\mqq_2,\mpp) \big] 
= \big[\upsilon_1 \,\upsilon_2 \,\epsilon_\mpp(1,\mqq_1,0,\mqq_2) \big].
\end{align*}
From this and \eqref{vzz}, we see the index of 
$(\caR_{\kappa, \caK},\Ad_{\Gamma_{\caK}},\Ad_{V_{g}})\in \caS_{\kappa}$ is given by \eqref{tind}.

Now we show  $(\caR_{\kappa, \caK},\Ad_{\Gamma_{\caK}},\Ad_{V_{g}})$ is equivalent 
to the graded tensor product (\ref{eqvrr}).
From Lemma \ref{comgra}, the commutant of $\caR_{\kappa_1, \caK_1}{\hox} \caR_{\kappa_2,\caK_2}$
is 
$\bbC\unit_{\caK_1\otimes\cct\otimes\caK_2\otimes\cct}+\bbC\unit_{\caK_1}\otimes\sigma_x\otimes\unit_{\caK_2}\otimes\sigma_z$.
Note that 
by \eqref{uttx2}, 
$\Ad_{U_2}$ maps the commutant to 
$\bbC\unit_{\caK}\otimes \unit_{\cct}+\bbC\unit_{\caK}\otimes \sigma_x=(\caR_{\kappa,\caK})'$.
Therefore we have
$\Ad_{U_2}(\caR_{\kappa_1, \caK_1}{\hox}\caR_{\kappa_2, \caK_2})
=\caR_{\kappa, \caK}$.
Hence $\iota:=\Ad_{U_2}\vert_{\caR_{\kappa_1, \caK_1}{\hox}\caR_{\kappa_2,\caK_2}}$
defines a $*$-isomorphism 
$\iota: \caR_{\kappa_1, \caK_1}{\hox} \caR_{\kappa_2,\caK_2}\to \caR_{\kappa,\caK}$.
By \eqref{tu} and \eqref{uv},
$\iota$ satisfies the required conditions of an equivalence of $W^*$-$(G,\mpp)$-dynamical systems.

\vspace{0.1cm}

\noindent{(i)-3}
For $\lambda=3$ (i.e. $\kappa_1=0,\, \kappa_2=1$), we set $\kappa=1$.
By \eqref{vzz} and \eqref{uttxl}, we see that
$(\caR_{\kappa, \caK},\Ad_{\Gamma_{\caK}},\Ad_{V_{g}})\in \caS_{\kappa}$. 
We also have that  $[\tilde \upsilon_1]=[\upsilon_1]$ 
and $[\tilde \upsilon_2]=[\upsilon_2\,\epsilon (\mqq_2,\mpp)]$.
Hence the second cohomology class associated to $V$
is 
\begin{align}
 \big[ \tilde \upsilon_1 \,\tilde \upsilon_2 \,\epsilon(\mqq_1,\mqq_2)\big] =
\big[ \upsilon_1 \,\upsilon_2 \, \epsilon (\mqq_2,\mpp) \, \epsilon(\mqq_1,\mqq_2)\big]
.
\end{align}
Hence,  from \eqref{vzz} the cohomology class associated to $V^{(0)}$ is
\begin{align}
\big[ \tilde \upsilon_1\, \tilde \upsilon_2 \, \epsilon(\mqq_1,\mqq_2)
 \, \epsilon(\mqq_1+\mqq_2,\mpp)\big]
= \big[ \upsilon_1 \, \upsilon_2 \, \epsilon_\mpp(0,\mqq_1,1,\mqq_2)\big]
\end{align}
and the index of $(\caR_{\kappa, \caK},\Ad_{\Gamma_{\caK}},\Ad_{V_{g}})\in \caS_{\kappa}$ 
is given by \eqref{tind}.

We now show that $(\caR_{\kappa, \caK},\Ad_{\Gamma_{\caK}},\Ad_{V_{g}})$ is 
equivalent to the graded tensor product.
From Lemma \ref{comgra}, the commutant of $\caR_{\kappa_1,\caK_1}{\hox}\caR_{\kappa_2,\caK_2}$
is 
$\bbC\unit_{\caK_1\otimes\cct\otimes\caK_2\otimes\cct}+
\bbC\unit_{\caK_1\otimes\cct\otimes\caK_2}
\otimes\sigma_x$, 
which by \eqref{uttx3} is mapped to 
$\bbC\unit_{\caK}\otimes \unit_{\cct}+\bbC\unit_{\caK}\otimes \sigma_x=(\caR_{\kappa,\caK})'$ 
by $\Ad_{U_3}$.
Therefore, 
$\Ad_{U_3}(\caR_{\kappa_1, \caK_1}{\hox}\caR_{\kappa_2, \caK_2})
=\caR_{\kappa,\caK}$
and $\iota:=\Ad_{U_3}\vert_{\caR_{\kappa_1,\caK_1}{\hox}\caR_{\kappa_2,\caK_2}}$
defines a $*$-isomorphism 
$\iota: \caR_{\kappa_1,\caK_1}{\hox}\caR_{\kappa_2,\caK_2}\to \caR_{\kappa,\caK}$ and 
implements an equivalence of $W^*$-$(G,\mpp)$-dynamical systems.

\vspace{0.1cm}

\noindent{\bf (Case: $\kappa_1=\kappa_2=1$) }

Set $\kappa:=0$ and $\caK:=\caK_1\otimes\caK_2$.
We define a projective representation $V$ of $G$ on $\caK\otimes\cct$ relative to $\mpp$ by
\begin{align}
V_g:=V_{1, g}^{(0)}\otimes V_{2, g}^{(0)}\otimes C^{{\mpp(g)}}\sigma_y^{{\mqq_1(g)}}\sigma_x^{\mqq_2(g)+{\mpp(g)}},\quad
g\in G.
\end{align}
Here $V_{i}^{(0)}$ is the projective representation on $\caK_1$
such that $V_{i,g}=V_{i,g}^{(0)}\otimes C^{{\mpp(g)}}\sigma_y^{\mqq_i(g)}$ for $i=1,2$ 
(see Lemma \ref{zv}).
Then we have
\begin{align}
\Ad_{V_g}\lmk {\Gamma_\caK}\rmk= (-1)^{{\mqq_1(g)}+\mqq_2(g)+{\mpp(g)}}{\Gamma_\caK},\quad
g\in G.
\end{align}Hence $(\caR_{\kappa, \caK},\Ad_{\Gamma_{\caK}},\Ad_{V_{g}})\in\caS_\kappa$.
Because $\sigma_y$ anti-commutes with $\sigma_x$ and $C$, while
$\sigma_x$ commutes with $C$, the second cohomology class
associate to the projective representation $V$
is 
\begin{align}
\big[ \upsilon_1 \, \upsilon_2 \, \epsilon (\mqq_1,\mqq_2)\big] 
= \big[ \upsilon_1 \, \upsilon_2 \, \epsilon_\mpp(1,\mqq_1,1,\mqq_2)\big],
\end{align}
where we recall that $[\epsilon (\mqq_1,\mqq_2)] = [\epsilon (\mqq_2,\mqq_1)]$.
Hence the triple 
$(\caR_{\kappa, \caK},\Ad_{\Gamma_{\caK}},\Ad_{V_{g}})\in \caS_{\kappa}$
has index given by \eqref{tind}.

Now we show \eqref{eqvrr} for the constructed $(\caR_{\kappa, \caK},\Ad_{\Gamma_{\caK}},\Ad_{V_{g}})$.
Regarding $\Clf$ as a graded von Neumann algebra $(\Clf,\Ad_{\sigma_z})\subset \Mat_2$, 
there is a $\ast$-isomorphism 
$\iota_0:\Clf{\hox}\Clf\to\Mat_2$ such that 
\begin{align}\label{ioz}
\iota_0(\unit{\hox}\unit)=\unit,\quad
\iota_0(\sigma_x{\hox}\unit):=\sigma_x,\quad
\iota_0(\unit{\hox}\sigma_x):=\sigma_y,\quad
\iota_0(\sigma_x{\hox}\sigma_x):=i\sigma_z.
\end{align}
Noting $\Ad_{\unit_{\caK_1}\otimes v\otimes \unit_{\cct}}
\big(\caR_{\kappa_1,\caK_1}{\hox}\caR_{\kappa_2,\caK_2}\big)
=\caB(\caK_1)\otimes \caB(\caK_2)\otimes \lmk \Clf{\hox}\Clf\rmk$ 
with $v$ in \eqref{swap}, we obtain a $*$-isomorphism
$\iota: \caR_{\kappa_1,\caK_1}{\hox}\caR_{\kappa_2,\caK_2}\to 
\bk\otimes \Mat_2=\caR_{\kappa,\caK}$
given by
\begin{align}
\iota(x):=\lmk \id_{\caK}\otimes\iota_0\rmk\circ\Ad_{\unit_{\caK_1}\otimes v\otimes \unit_{\cct}}(x),\quad
x\in \caR_{\kappa_1,\caK_1}{\hox}\caR_{\kappa_2,\caK_2}.
\end{align}
We then have
\begin{align*}
&\Ad_{V_g}\circ\iota\lmk \lmk a\otimes\sigma_x\rmk{\hox}  \lmk b\otimes\unit_{\cct}\rmk\rmk
=\Ad_{V_g}\lmk
a\otimes b\otimes \sigma_x
\rmk
=\Ad_{V_{1, g}^{(0)}}(a)\otimes \Ad_{V_{2, g}^{(0)}}(b)
\otimes (-1)^{{\mqq_1(g)}}\sigma_x \\
&\hspace{1.5cm} =\iota\lmk
\Ad_{V_{1, g}}\lmk a\otimes \sigma_x\rmk
\hat
\otimes \lmk \Ad_{V_{2, g}}\lmk b\otimes \unit_{\cct}\rmk
\rmk
\rmk
=\iota\circ\lmk\alpha_{1,g}{\hox} \alpha_{2,g}\rmk\lmk
\lmk a\otimes\sigma_x\rmk{\hox}  \lmk b\otimes\unit_{\cct}\rmk
\rmk,
\end{align*}
and 
\begin{align*}
&\Ad_{V_g}\circ\iota\lmk \lmk a\otimes\unit_{\cct}
\rmk{\hox}  \lmk b\otimes\sigma_x\rmk\rmk
=\Ad_{V_g}\lmk
a\otimes b\otimes \sigma_y
\rmk
=\Ad_{V_{1, g}^{(0)}}(a)\otimes \Ad_{V_{2, g}^{(0)}}(b)
\otimes (-1)^{\mqq_2(g)}\sigma_y \\
&\hspace{1.5cm} =\iota\lmk
\Ad_{V_{1, g}}\lmk a\otimes \unit_{\cct}\rmk
\hat
\otimes \lmk \Ad_{V_{2, g}}\lmk b\otimes \sigma_x\rmk
\rmk
\rmk
=\iota\circ\lmk\alpha_{1,g}{\hox} \alpha_{2,g}\rmk\lmk
\lmk a\otimes\unit_{\cct}\rmk{\hox}  \lmk b\otimes\sigma_x\rmk
\rmk
\end{align*}
for all $a\in\caB(\caK_1)$, $b\in \caB(\caK_2)$.
Because the  elements 
$(a\otimes\sigma_x){\hox} (b\otimes\unit_{\cct})$
and $(a\otimes\unit_{\cct}) \hox (b \otimes \sigma_x)$
generate $\caR_{\kappa_1,\caK_1}{\hox}\caR_{\kappa_2,\caK_2}$,
we see that
$\Ad_{V_g}\circ\iota(x)=\iota\circ\lmk\alpha_{1,g}{\hox} \alpha_{2,g}\rmk(x)$
for $x\in \caR_{\kappa_1,\caK_1}{\hox}\caR_{\kappa_2,\caK_2}$.
We also see from \eqref{ioz}
that
$\Ad_{\Gamma_\caK}\circ\iota(x)=\iota\circ\lmk\Ad_{\Gamma_1}{\hox} \Ad_{\Gamma_2}\rmk(x)$
for $x\in \caR_{\kappa_1,\caK_1}{\hox}\caR_{\kappa_2,\caK_2}$.
Hence we obtain \eqref{eqvrr}.
\end{proof}
\begin{ex}[Time-reversal symmetry and the $\bbZ_8$-classification]
As a simple example, let us consider fermionic SPT phases with time-reversal symmetry. 
That is, we take $G=\bbZ_2 =\{0,1\}$ with with ${\mpp(1)}=1$. We let $\alpha= \alpha_1$ 
be the anti-linear $\ast$-automorphism of order $2$ from the non-trivial element. 
Therefore, if $G$ acts on a balanced, central 
and type I von Neumann algebra, then $\alpha$ is implemented on a graded 
Hilbert space $\caK$ by $\mathrm{Ad}_R$ with 
$R$ anti-unitary. Following~\cite{OgataTRI}, we can ensure that 
$R^2 = \pm \unit_\caK$ and so the group $2$-cocycle is determined by 
the sign of $R^2$. 

The data $\Z_2 \times H^1(\bbZ_2,\bbZ_2)\times H^2(\bbZ_2,U(1)_\mpp)$ from 
Theorem \ref{stlaw} is wholly determined by the 
triple $[\kappa; \varepsilon, \pm]$, where $\varepsilon=\mqq(1) \in \bbZ_2$ and 
$\pm$ is the sign of $R^2$. Our choice of notation is so that our results 
can easily be compared with~\cite[Appendix A]{MoutuouBrauer} 
and~\cite{Wall}. 
Following \eqref{totind}, the triple has the (abelian) composition law under 
stacking
\begin{align*}
  [0;\varepsilon_1, \xi_1] [0, \varepsilon_2,\xi_2] &= [0; \varepsilon_1+\varepsilon_2, (-)^{\varepsilon_1\varepsilon_2} \xi_1 \xi_2] \\
  [0;\varepsilon_1, \xi_1] [1, \varepsilon_2,\xi_2] &= [1; \varepsilon_1+\varepsilon_2, (-)^{\varepsilon_1+\varepsilon_1\varepsilon_2} \xi_1 \xi_2] \\
  [1;\varepsilon_1, \xi_1] [1, \varepsilon_2,\xi_2] &= [0; \varepsilon_1+\varepsilon_2+1, (-)^{\varepsilon_1\varepsilon_2} \xi_1 \xi_2].
\end{align*}
One therefore sees that $\Z_2 \times H^1(\bbZ_2,\bbZ_2)\times H^2(\bbZ_2,U(1)_\mpp) \cong \bbZ_8$ 
with generator $[1;0,+]$. Hence we recover and extend the $\bbZ_8$-classification of time-reversal 
symmetric fermionic SPT phases in one dimension considered for finite systems in~\cite{fk,FK2,BWHV}.
\end{ex}


\section{Translation invariant states}\label{transsec}
In this section, we derive a representation of pure, split, 
translation invariant and $\alpha$-invariant states in terms of a
finite set of operators on Hilbert spaces.
The idea of the proof is the same as quantum spin case, cf.~\cite{bjp, Matsui3}, 
although anti-commutativity results in richer structures.

Recall the integer shift $S_x$ on $l^2(\bbZ)\otimes\bbC^d$, $x\in \bbZ$, which defines the 
$\ast$-automorphism $\beta_{S_x} \in \Aut(\al)$.
Let $\omega$ be a pure, split, $\alpha$-invariant and translation invariant 
state on ${\al}$. In particular, such 
states are $\Theta$-invariant (see~\cite[Example 5.2.21]{BR2}).
By Proposition \ref{casebycase} and Lemma \ref{deru}
the graded $W^*$-$(G,\mpp)$-dynamical system 
$(\pi_\omega(\al_R)'', \Ad_{\Gamma_\omega}, \hat\alpha_\omega)$ associated to $\omega$
is equivalent to some $(\caR_{\kappa, \caK},\Ad_{\Gamma_\caK},\Ad_{V_g})\in\caS_{\kappa}$.
We denote this $\kappa$ by $\kappa_\omega$.
The space translation lifts to an endomorphism on $\pi_\omega(\al_R)''$.

\begin{lem}\label{iorho}
Let $\omega$ be a pure,  split, 
$\alpha$-invariant and translation invariant state on $\al$.
Suppose that the graded $W^*$-$(G,\mpp)$-dynamical system
$({\pi_\omega}(\al_R)'', \Ad_{\Gamma_\omega}, \hat\alpha_\omega)$ associated to $\omega$
is equivalent to $(\caR_{\kappa, \caK},\Ad_{\Gamma_\caK},\Ad_{V_g})\in\caS_{\kappa}$, via
a $*$-isomorphism $\iota: {\pi_\omega}(\al_R)''\to \caR_{\kappa, \caK}$.
Then
there is an injective $*$-endomorphism $\rho$ on $\caR_{\kappa, \caK}$,
such that
\begin{align}\label{ipr}
\iota\circ{\pi_\omega}\circ\beta_{S_1}(A)=\rho\circ\iota\circ{\pi_\omega}(A),\quad
A\in\al_R.
\end{align}
Furthermore, we have
\begin{align}\label{ababy}
a\rho(b)-(-1)^{\partial a\partial b}\rho(b)a=0,
\end{align}
for homogeneous $a\in\iota\circ{\pi_\omega}\big( \al_{\{0\}}\big)$
and $b\in \caR_{\kappa, \caK}$.
\end{lem}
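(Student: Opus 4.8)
The plan is to realize $\rho$ as the transport under $\iota$ of conjugation by the unitary that implements the translation $\beta_{S_1}$ in the GNS representation of $\omega$. Since $\omega$ is translation invariant, $\beta_{S_1}$ leaves $\omega$ fixed, so there is a unitary $U$ on $\caH_\omega$ with $U\Omega_\omega=\Omega_\omega$ and $U\pi_\omega(A)U^*=\pi_\omega(\beta_{S_1}(A))$ for all $A\in\al$. Because $\beta_{S_1}$ commutes with $\Theta$, one checks on the dense set $\pi_\omega(\al)\Omega_\omega$ that $U$ commutes with $\hat\Gamma_\omega$, so $\Ad_U$ preserves the grading. The key geometric input is $\beta_{S_1}(\al_R)=\al_{\bbZ_{\ge 1}}\subset\al_R$; combined with normality of $\Ad_U$ this yields $\Ad_U(\pi_\omega(\al_R)'')=\pi_\omega(\al_{\bbZ_{\ge 1}})''\subset\pi_\omega(\al_R)''$. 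Hence $\tilde\rho:=\Ad_U\vert_{\pi_\omega(\al_R)''}$ is an injective, grading-preserving $*$-endomorphism of $\pi_\omega(\al_R)''$, and I would set $\rho:=\iota\circ\tilde\rho\circ\iota^{-1}$, an injective $*$-endomorphism of $\caR_{\kappa, \caK}$. Then \eqref{ipr} is immediate, since $\rho\circ\iota\circ\pi_\omega(A)=\iota\circ\Ad_U(\pi_\omega(A))=\iota\circ\pi_\omega(\beta_{S_1}(A))$ for $A\in\al_R$.

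For \eqref{ababy}, since $\iota$ is a grading-preserving $*$-isomorphism it suffices, writing $b=\iota(z)$ with $z\in\pi_\omega(\al_R)''$ homogeneous and $a=\iota(\pi_\omega(A))$ with $A\in\al_{\{0\}}$ homogeneous, to prove $\pi_\omega(A)\,\tilde\rho(z)=(-1)^{\partial A\,\partial z}\,\tilde\rho(z)\,\pi_\omega(A)$. Now $\tilde\rho(z)=UzU^*\in\pi_\omega(\al_{\bbZ_{\ge 1}})''$ has degree $\partial z$, so this reduces to the graded commutation $\pi_\omega(A)x=(-1)^{\partial A\,\partial x}x\,\pi_\omega(A)$ for homogeneous $A\in\al_{\{0\}}$ and homogeneous $x\in\pi_\omega(\al_{\bbZ_{\ge 1}})''$. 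For $x=\pi_\omega(B)$ with $B\in\al_{\bbZ_{\ge 1}}$ homogeneous this is exactly the graded-commutation relation of the CAR-algebra for the disjoint regions $\{0\}$ and $\bbZ_{\ge 1}$. To reach the weak closure I would introduce the normal projections $\mathsf{E}_\sigma(y):=\tfrac12\big(y+(-1)^\sigma\hat\Gamma_\omega y\hat\Gamma_\omega\big)$ onto the degree-$\sigma$ parts; a homogeneous $x$ equals $\mathsf{E}_{\partial x}(x)$, so by normality of $\mathsf{E}_{\partial x}$ and Kaplansky density $x$ is the $\sigma$-weak limit of a bounded net $\pi_\omega(B_\lambda)$ with $B_\lambda\in\al_{\bbZ_{\ge 1}}^{(\partial x)}$. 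Passing to the limit in the local relation, using $\sigma$-weak continuity of left and right multiplication by the fixed operator $\pi_\omega(A)$, gives the relation for $x$, and applying $\iota$ yields \eqref{ababy}.

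I expect the only step requiring genuine care is this last density argument — concretely, the fact that the degree-$\sigma$ subspace of $\pi_\omega(\al_{\bbZ_{\ge 1}})''$ is the $\sigma$-weak closure of $\pi_\omega\big(\al_{\bbZ_{\ge 1}}^{(\sigma)}\big)$, which is what allows the CAR graded commutation to pass to the weak closure. Everything else (existence and grading-compatibility of $U$, the inclusion $\beta_{S_1}(\al_R)\subset\al_R$, injectivity and the transport under $\iota$) is routine and mirrors the quantum spin setting of \cite{bjp, Matsui3}.
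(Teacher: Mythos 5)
Your proof is correct and follows the same route as the paper: lift $\beta_{S_1}$ to $\pi_\omega(\al_R)''$ via the implementing unitary $U$, transport by $\iota$ to get $\rho$, and deduce \eqref{ababy} from the CAR graded-commutation relation on local elements together with a weak-closure argument, the latter being exactly what the paper compresses into the one-line remark that $\rho(\caR_{\kappa,\caK})=(\iota\circ\pi_\omega\circ\beta_{S_1}(\al_R))''$. Your explicit use of the normal projections $\mathsf{E}_\sigma$ and Kaplansky density makes this final step rigorous and is entirely consistent with what the paper leaves implicit.
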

\begin{proof}
By the translation invariance of $\omega$,
the space translation $\beta_{S_1}$ 
is lifted to an automorphism $\hat\beta_{S_1}$ on ${\pi_\omega}(\al)''$.
Restricting $\hat\beta_{S_1}$ to ${\pi_\omega}(\al_R)''$, we obtain
an injective $*$-endomorphism $\tilde \beta$ on 
${\pi_\omega}(\al_R)''$.
We then see that
$\rho:=\iota\circ \tilde \beta \circ\iota^{-1}:  \caR_{\kappa, \caK}\to \caR_{\kappa, \caK}$
is an injective endomorphism on $\caR_{\kappa, \caK}$ satisfying \eqref{ipr}.
Because $\beta_{S_1}(\A_R) \subset \A_{\bbZ \geq 1}$, we see that 
$a_0 \beta_{S_1}(a_1) -(-1)^{\partial a_0\partial a_1}\beta_{S_1}(a_1)a_0=0$ 
for homogeneous $a_0 \in \al_{\{0\}}$
and $a_1 \in \al_R$. Then, because 
$\rho\lmk \caR_{\kappa, \caK}\rmk=\lmk \iota\circ{\pi_\omega}\circ\beta_{S_1}\lmk \al_{R}\rmk\rmk''$, 
equation \eqref{ababy} follows.
\end{proof}
Let $\caP$ be 
the power set $\calP = \calP(\{1,\ldots,d\}) = 2^{\{1,\ldots,d\}}$ of $\{1,\ldots,d\}$.
We denote
 the parity of the number of the elements in $\mu\in\caP$ by
$|\mu| = \# \mu \,\mathrm{mod} \, 2$. 
We denote by $\{\psi_\mu\}_{\mu\in\caP}$
the standard basis of $\calF(\bbC^d)$.
Namely, with the Fock vacuum $\Omega_d$ of  $\calF(\bbC^d)$ and the standard basis
$\{e_{i}\}_{i=1}^d$ of $\bbC^d$,
$\psi_\mu$ for $\mu\neq\emptyset$ is given by
$\psi_\mu=C_\mu a^*(e_{\mu_1})a^*(e_{\mu_2})\cdots a^*(e_{\mu_l})\Omega_d$
with $l=\#\mu$, $\mu=\{\mu_1,\mu_2,\ldots,\mu_l\}$ with $\mu_1<\mu_2\cdots<\mu_l$ 
and a suitable normalization factor $C_\mu\in\bbC\setminus \{0\}$.
For the empty set $\mu=\emptyset$, we set $\psi_\emptyset:=\Omega_d$.
 
We denote the matrix units of $\al_{\{0\}}\simeq \caB(\caF(\bbC^d)) \simeq \Mat_{2^d}$
associated to the standard basis $\{\psi_\mu\}_{\mu\in\caP}$
by $\{ E_{\mu,\nu}^{(0)}\}$, $\mu,\nu\in\calP$.
Because $\Theta$ is implemented by the second quantization of $-\unit_{\bbC^d}$,
\begin{align}
\mathfrak\Gamma(-\unit)=\sum_{\mu\in \caP}(-1)^{|\mu|} E_{\mu\mu}^{(0)}\in\caA_{\{0\}},
\end{align}
we see that 
\begin{align}
\Theta(E_{\mu,\nu}^{(0)})=(-1)^{|\mu|+|\nu|} E_{\mu,\nu}^{(0)},\qquad
\mu,\nu\in \caP.
\end{align}
We set $E_{\mu,\nu}^{(x)}:=\beta_{S_x}\big( E_{\mu,\nu}^{(0)}\big)$ for general $x\in\bbZ$.
Clearly, $\{ E_{\mu,\nu}^{(x)}\}_{\mu,\nu\in\caP}$ are matrix units of $\al_{\{x\}}$.

\begin{lem} \label{lem:trans_weak_conv_to_state}
Let $\omega$ be a pure, split and translation invariant state on ${\al}$
and $\hat\beta_{S_n}$ be the extension of $\beta_{S_n}$
to $\pi_\omega(\al)''$, i.e. $\hat\beta_{S_n}\circ\pi_\omega(A)=\pi_\omega\circ\beta_{S_n}(A)$,
$A\in\al$.
\begin{enumerate}
\item[(i)]
If $x\in\lmk\pi_\omega(\al_R)''\rmk^{(0)}$, then
$\sigma\mathrm{\hbox{-}weak}\lim_{n\to\infty}\hat\beta_{S_n}(x)
=\braket{\Omega_\omega}{x\Omega_\omega}\unit_{\Ho}$.
\item[(ii)]
If $x\in\lmk\pi_\omega(\al_R)''\rmk^{(1)}$ and $\pi_\omega(\al_R)''$ is a factor,
then 
\begin{align}\label{app}
\sigma\mathrm{\hbox{-}weak}\lim_{n\to\infty}
\pi_\omega\lmk
{\mathfrak\Gamma}(-\unit)
\beta_{S_1}
\lmk
{\mathfrak\Gamma}(-\unit)\rmk\cdots
\beta_{S_{n-1}}
\lmk
{\mathfrak\Gamma}(-\unit)\rmk\rmk
\hat \beta_{S_n}(x)
=0=\braket{\Omega_\omega}{x\Omega_\omega}\unit_{\caH_\omega}.
\end{align}
\item[(iii)]
If $x\in\lmk\pi_\omega(\al_R)''\rmk^{(1)}$ and $Z\lmk\pi_\omega(\al_R)''\rmk\cap \lmk\pi_\omega(\al_R)''\rmk^{(1)}\neq \{0\}$,
then
$\sigma\mathrm{\hbox{-}weak}\lim_{n\to\infty}\hat\beta_{S_n}(x)=0
=\braket{\Omega_\omega}{x\Omega_\omega}$.
\end{enumerate}
\end{lem}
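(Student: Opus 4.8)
The plan is to treat all three parts as statements about the $\sigma$-weak cluster points of a bounded sequence in $\pi_\omega(\al)''=\caB(\caH_\omega)$, using that purity of $\omega$ makes $\pi_\omega$ irreducible. Three elementary facts will be used throughout. First, translation invariance of $\omega$ gives a unitary $U_n$ with $U_n\Omega_\omega=\Omega_\omega$ implementing $\hat\beta_{S_n}$, so $\braket{\Omega_\omega}{\hat\beta_{S_n}(x)\Omega_\omega}=\braket{\Omega_\omega}{x\Omega_\omega}$ for every $n$. Second, $\hat\beta_{S_n}$ maps $\pi_\omega(\al_R)''$ onto $\pi_\omega(\al_{[n,\infty)})''\subseteq\pi_\omega(\al_R)''$, so $\hat\beta_{S_n}(x)\in\pi_\omega(\al_R)''$ whenever $x\in\pi_\omega(\al_R)''$. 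Third, since $\al$ is the graded tensor product of $\al_{(-\infty,n)}$ and $\al_{[n,\infty)}$, a homogeneous element $z$ of $\pi_\omega(\al_{[n,\infty)})''$ satisfies $z\,\pi_\omega(b)=(-1)^{\partial z\,\partial b}\pi_\omega(b)\,z$ for homogeneous $b\in\al_{(-\infty,n)}$; this graded commutation passes from local elements to the $\sigma$-weak closure by Kaplansky density together with $\sigma$-weak continuity of left/right multiplication and of the grading. Finally I will use the elementary fact that a sequence lying in a $\sigma$-weakly compact ball with a unique $\sigma$-weak cluster point converges to it.

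\emph{Part (i).} For even $x$, $\hat\beta_{S_n}(x)$ is even, so by the third fact it commutes (ordinarily) with $\pi_\omega(b)$ for each fixed local $b$ as soon as $[n,\infty)$ misses the support of $b$. Any $\sigma$-weak cluster point $y$ therefore commutes with $\pi_\omega(\al_{\rm loc})$, which is $\sigma$-weakly dense in $\caB(\caH_\omega)$, whence $y\in\bbC\unit$; by the first fact its scalar is $\braket{\Omega_\omega}{x\Omega_\omega}$. Since this is the only cluster point, $\hat\beta_{S_n}(x)\to\braket{\Omega_\omega}{x\Omega_\omega}\unit$.

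\emph{Part (ii).} Put $\Gamma_n:=\pi_\omega\big(\mathfrak\Gamma(-\unit)\,\beta_{S_1}(\mathfrak\Gamma(-\unit))\cdots\beta_{S_{n-1}}(\mathfrak\Gamma(-\unit))\big)$, the image of the parity operator of $\al_{[0,n-1]}$: an even self-adjoint unitary in $\pi_\omega(\al_R)''$ implementing $\Theta$ on $\al_{[0,n-1]}$ and acting trivially on $\al_{[n,\infty)}$. Then $\Gamma_n\hat\beta_{S_n}(x)$ is an odd element of $\pi_\omega(\al_R)''$ of norm $\|x\|$. The role of the string is precisely that, for a local $b\in\al_R$ supported in $[0,m)$ and $n>m$, the sign $(-1)^{\partial b}$ produced when the odd element $\hat\beta_{S_n}(x)$ is moved past $\pi_\omega(b)$ is cancelled by the sign in $\Ad_{\Gamma_n}(\pi_\omega(b))=\pi_\omega(\Theta(b))$, so $\Gamma_n\hat\beta_{S_n}(x)$ commutes with $\pi_\omega(b)$. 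Hence every $\sigma$-weak cluster point $y$ commutes with $\pi_\omega(\al_R\cap\al_{\rm loc})$, so with $\pi_\omega(\al_R)''$; together with $y\in\pi_\omega(\al_R)''$ this puts $y$ in $Z(\pi_\omega(\al_R)'')$, and $y$ is odd because $\Ad_{\hat\Gamma_\omega}$ is $\sigma$-weakly continuous. When $\pi_\omega(\al_R)''$ is a factor, $Z(\pi_\omega(\al_R)'')=\bbC\unit$ contains no nonzero odd element, so $y=0$ and the sequence converges to $0$. Finally $\braket{\Omega_\omega}{x\Omega_\omega}=0$ is immediate from $\hat\Gamma_\omega\Omega_\omega=\Omega_\omega$ and $\hat\Gamma_\omega x\hat\Gamma_\omega=-x$.

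\emph{Part (iii) and the main obstacle.} For odd $x$ I use the plain sequence $\hat\beta_{S_n}(x)\in\pi_\omega(\al_R)''$. By the third fact, for each fixed homogeneous local $B$ and $n$ large, $\hat\beta_{S_n}(x)\pi_\omega(B)=\pi_\omega(\Theta(B))\hat\beta_{S_n}(x)$, so any $\sigma$-weak cluster point $y$ satisfies $y\,\pi_\omega(B)=\pi_\omega(\Theta(B))\,y$ for all $B\in\al_{\rm loc}$; hence $\hat\Gamma_\omega y$ commutes with $\pi_\omega(\al_{\rm loc})$ and, by irreducibility, $y=c\,\hat\Gamma_\omega$ for a scalar $c$. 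Pairing with $\Omega_\omega$ and using the first fact gives $c=\braket{\Omega_\omega}{y\Omega_\omega}=\braket{\Omega_\omega}{x\Omega_\omega}=0$ since $x$ is odd, so $y=0$ and $\hat\beta_{S_n}(x)\to0$. (Alternatively, the hypothesis that $\pi_\omega(\al_R)''$ is not a factor forces $\hat\Gamma_\omega\notin\pi_\omega(\al_R)''$, so $y\in\pi_\omega(\al_R)''$ already yields $c=0$.) The step requiring the most care is exactly what the string in (ii) is designed for: because odd operators with disjoint supports anticommute rather than commute, $\hat\beta_{S_n}(x)$ fails to become asymptotically central for odd $x$, and one must insert the Jordan--Wigner parity string $\Gamma_n$ to restore a commuting family; this is also why the cluster points in (iii) are naturally multiples of $\hat\Gamma_\omega$ rather than scalars. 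Promoting the graded commutation relations to the weak closure, and obtaining convergence from uniqueness of the cluster point, are routine.
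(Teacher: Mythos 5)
Your proof is correct, and parts (i) and (ii) follow essentially the paper's own strategy: show that every $\sigma$-weak cluster point lies in a set you can explicitly identify, then pin down the scalar via $\langle\Omega_\omega,\cdot\,\Omega_\omega\rangle$ and use compactness of the unit ball to upgrade uniqueness of the cluster point to convergence. The small difference in (i) is that you conclude cluster points commute with all of $\pi_\omega(\al_{\rm loc})$ and invoke irreducibility, whereas the paper stays inside $\al_R$ and uses $\pi_\omega(\al_R)'\cap(\pi_\omega(\al_R)'')^{(0)}=\bbC\unit$ from Lemma \ref{hs}; both land in $\bbC\unit$.

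Part (iii) is where you take a genuinely different route. The paper shows the cluster point $z$ lies in $\pi_\omega(\al_R)'\hat\Gamma_\omega\cap\pi_\omega(\al_R)''$, and then Lemma \ref{jh2}(i) forces this intersection to be $\{0\}$ because the hypothesis $Z(\pi_\omega(\al_R)'')^{(1)}\neq\{0\}$ makes $\pi_\omega(\al_R)''$ a non-factor. You instead derive the twisted commutation $y\,\pi_\omega(B)=\pi_\omega(\Theta(B))\,y$ against arbitrary homogeneous local $B$ on both sides of the origin, conclude $\hat\Gamma_\omega y\in\pi_\omega(\al)'=\bbC\unit$ by irreducibility, and then read off the scalar from $\hat\Gamma_\omega\Omega_\omega=\Omega_\omega$ together with $\langle\Omega_\omega, x\,\Omega_\omega\rangle=0$ for odd $x$. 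What your route buys is a statement that does not actually use the hypothesis of part (iii): for any pure, split, translation-invariant $\omega$ and odd $x$ you get $\hat\beta_{S_n}(x)\to 0$, factor or not. The paper instead tailors each of (ii) and (iii) to the precise sequence it needs later in Theorems \ref{cuntzthm0} and \ref{cuntzthm1} (string-corrected in the factor case, bare in the non-factor case), which is why the case split appears in the lemma at all. Your parenthetical alternative, that the non-factor hypothesis forces $\hat\Gamma_\omega\notin\pi_\omega(\al_R)''$, is in effect the content of Lemma \ref{jh2} parts (i) and (ii) combined and is closer in spirit to the paper's own argument, but it is not needed once you have the vacuum-expectation computation.
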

\begin{proof}
First we note from the $\sigma$-weak continuity of $\hat\beta_{S_n}$ that
\begin{align}\label{marui}
\hat\beta_{S_n}\lmk\lmk \pi_\omega(\al_R)''\rmk^{(\sigma)}\rmk
\subset 
\lmk\lmk  \pi_\omega\circ\beta_{S_n}(\al_R)\rmk''\rmk^{(\sigma)},\quad n\in\bbN,\quad \sigma=0,1.
\end{align}
(i) 
By \eqref{marui}, we have 
$\hat\beta_{S_n}\lmk\lmk \pi_\omega(\al_R)''\rmk^{(0)}\rmk
\subset \pi_\omega(\al_{[0,n-1]})'$.
Therefore for any $x\in\lmk\pi_\omega(\al_R)''\rmk^{(0)}$,
any $\sigma$-weak accumulation point $z$ of $\{\hat\beta_{S_n}(x)\}$
belongs to $\pi_\omega(\al_R)'\cap \lmk\pi_\omega(\al_R)''\rmk^{(0)}$.
But $\pi_\omega(\al_R)'\cap \lmk\pi_\omega(\al_R)''\rmk^{(0)}=\bbC\unit_{\caH_\omega}$
by Lemma \ref{hs}.
Hence we have $z\in \bbC\unit_{\caH_\omega}$.
Because $\braket{\Omega_\omega}{\hat\beta_{S_n}(x)\Oo}=\braket{\Oo}{ x\Oo}$,
this means $z=\braket{\Oo}{ x\Oo}\unit_{\Ho}$.
As this holds for any accumulation point, we obtain $\sigma\mathrm{\hbox{-}weak}\lim_{n\to\infty}\hat\beta_{S_n}(x)
=\braket{\Omega_\omega}{x\Omega_\omega}\unit_{\Ho}$.

\vspace{0.1cm}

\noindent
(ii)
Suppose that $\pi_\omega(\al_R)''$ is a factor and 
set $Y_n:={\mathfrak\Gamma}(-\unit)
\beta_{S_1}({\mathfrak\Gamma}(-\unit))\cdots \beta_{S_{n-1}}({\mathfrak\Gamma}(-\unit))$.
Note that
$\Ad_{Y_n}(B)=\Theta(B)$ for any $B\in\al_{[0,n-1]}$.
Therefore, by \eqref{marui}, we have 
$\po (Y_n) \hat\beta_{S_n}\big( (\po(\al_R)'')^{(1)}\big) 
\subset \po(\al_{[0,n-1]})'$.
Hence, for any $x\in\lmk\pi_\omega(\al_R)''\rmk^{(1)}$,
any $\sigma$-weak accumulation point $z$ of the set $\{\po\lmk Y_n\rmk\hat\beta_{S_n}(x)\}$
belongs to $\pi_\omega(\al_R)'\cap \lmk\pi_\omega(\al_R)''\rmk^{(1)}=\{0\}$. 
As such, $z=0$.
As this holds for any accumulation point, we obtain (ii).

\vspace{0.1cm}

\noindent
(iii)
 Suppose that $Z\lmk\pi_\omega(\al_R)''\rmk\cap \lmk\pi_\omega(\al_R)''\rmk^{(1)}\neq \{0\}$.
 By \eqref{marui}, we have that
 \begin{align}
     \hat\beta_{S_n}\big( (\pi_\omega(\al_R)'')^{(1)}\big)
\subset \po(\al_R)''\cap \pi_\omega(\al_{[0,n-1]})'\Gamma_\omega.
 \end{align}
Therefore for any $x\in\lmk\pi_\omega(\al_R)''\rmk^{(1)}$,
any $\sigma$-weak accumulation point $z$ of $\{\hat\beta_{S_n}(x)\}$
belongs to $\lmk \pi_\omega(\al_R)'\Gamma_\omega\rmk\cap \lmk\pi_\omega(\al_R)''\rmk^{(1)}$.
Because $Z\lmk\pi_\omega(\al_R)''\rmk$ has an odd element, 
$\pi_\omega(\al_R)''$ is not a factor. 
Lemma \ref{jh2} 
then implies that
$\pi_\omega(\al_R)'\Gamma_\omega \cap \pi_\omega(\al_R)''=\{0\}$.
Hence we have $z=0$.
As this holds for any accumulation point, we obtain (iii).
\end{proof}
Before stating the result, we fix some notation. Given the operators $\{W_\mu\}_{\mu\in\calP}$ we define the CP map $T_{\bf W}$ by
$$
   T_{\bf W}(x) = \sum_{\mu\in\calP} W_\mu x W_\mu^*.
$$

Because the algebraic structure of the von Neumann algebra of interest changes 
depending on whether $\kappa_\omega=0,1$, we treat each case separately, though the 
general strategy of proof is the same.
\subsection{Case: $\kappa_\omega=0$}

Recall  that $\mathfrak{\Gamma}(U_g)$
denotes the second quantization of $U_g$ on $\calF(\C^d)$.
In this subsection we prove the following.

\begin{thm}\label{cuntzthm0}
Let $\omega$ be 
 a pure $\alpha$-invariant and 
translation invariant split state on $\al$.
Suppose that the graded $W^*$-$(G,\mpp)$-dynamical system
$(\pi_\omega(\al_R)'', \Ad_{\Gamma_\omega}, \hat\alpha_\omega)$ associated to $\omega$
is equivalent to $(\caR_{0,\caK},\Ad_{\Gamma_\caK},\Ad_{V_g})\in\caS_{0}$, via
a $*$-isomorphism $\iota: \pi_\omega(\al_R)''\to \caB(\caK\otimes\cct)$.
Let $\rho$ be the $*$-endomorphism on $\caR_{0,\caK}$ given in Lemma \ref{iorho}.
Then there is a set of isometries $\{B_\mu\}_{\mu \in \calP}$ on $\caK\otimes\cct$
such that
$B_\nu^* B_\mu = \delta_{\mu,\nu} \unit$,
  \begin{align}\label{bimp}
  \rho\circ\iota\circ\pi_\omega(A)
  = \sum_{\mu\in\calP} \mathrm{Ad}_{B_\mu\Gamma_{\caK}^{|\mu|}}\circ\iota\circ\pi_\omega (A)
  = \sum_{\mu\in\calP} \mathrm{Ad}_{\Gamma_{\caK}^{|\mu|}B_\mu}\circ\iota\circ\pi_\omega (A),\quad
  A\in\al_R,
\end{align}
and 
\begin{align}\label{bbbo}
\iota\circ\pi_\omega\lmk E_{\mu_0,\nu_0}^{(0)} E_{\mu_1,\nu_1}^{(1)} 
\cdots E_{\mu_N,\mu_N}^{(N)}\rmk = 
  (-1)^{\sum\limits_{k=1}^N (|\mu_k|+|\nu_k|) \sum\limits_{j=0}^{k-1} |\nu_j|} 
  B_{\mu_0} \cdots B_{\mu_N} B_{\nu_N}^* \cdots B_{\nu_0}^*
\end{align}
for all $N\in\bbN\cup\{0\}$ and $\mu_0,\ldots\mu_N,\nu_0,\ldots,\nu_N\in\caP$.
The operators $B_\mu$ have 
  homogeneous parity and are such that 
  $\Ad_{\Gamma_\caK}\lmk B_\mu\rmk = (-1)^{|\mu|+\sigma_0}B_\mu$,
  with some uniform $\sigma_0\in\{0,1\}$.
Furthermore, 
\begin{align}\label{sat}
\sigma\mathrm{\hbox{-}weak}\lim_{N\to\infty}T_{\bf B}^N\circ\iota(x)
=\braket{\Oo}{ x\Oo}\unit_{\caK\otimes\cct},\quad x\in\po(\caA_R)''
\end{align}
and for each $g\in G$, there is some 
  $c_g \in \mathbb{T}$ such that 
   \begin{align}\label{gvb}
   \sum_{\mu\in\calP} \langle \psi_\mu, \, \mathfrak{\Gamma}(U_g) \psi_\nu \rangle B_\mu = 
     c_g V_g B_\nu V_g^*.
   \end{align}
\end{thm}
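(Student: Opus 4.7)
The strategy is to exploit the super-commutation \eqref{ababy}: the image $M := \iota\circ\po(\al_{\{0\}})$ is a graded type I subfactor of $\caR_{0,\caK}$ graded-commuting with $\rho(\caR_{0,\caK})$. Writing $\caE_{\mu\nu} := \iota\circ\po(E^{(0)}_{\mu\nu})$ for the corresponding matrix units, the even minimal projection $e_0 := \caE_{\emptyset\emptyset}$ lies in the commutant of $\rho(\caR_{0,\caK})$ in $\caR_{0,\caK}$. Because $\rho$ is an injective endomorphism of the type I factor $\caR_{0,\caK} = \caB(\caK\otimes\cct)$ and $e_0$ is minimal in $M^{(0)}$, the sub-representation of $\rho$ on $e_0(\caK\otimes\cct)$ is unitarily equivalent to the identity representation with multiplicity one; hence there is an isometry $V: \caK\otimes\cct \to \caK\otimes\cct$ with $VV^* = e_0$ and $\rho(y)V = Vy$ for every $y \in \caR_{0,\caK}$. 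Any two such intertwiners differ by a scalar in $\bbT$ (since $\caR_{0,\caK}'=\bbC\unit$), and comparing $V$ with $\Gamma_\caK V \Gamma_\caK$, which is also an intertwiner with range $e_0$, forces $\Gamma_\caK V \Gamma_\caK = \pm V$, so $V$ is homogeneous of some parity $\sigma_0 \in \{0,1\}$. Define $B_\mu := \caE_{\mu\emptyset} V$; in particular $B_\emptyset = V$.

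The Cuntz relations $B_\nu^* B_\mu = \delta_{\mu\nu}\unit$ and $\sum_\mu B_\mu B_\mu^* = \unit$ follow from $V^* V = \unit$ and $\sum_\mu \caE_{\mu\mu}=\unit$. Each $B_\mu$ has parity $|\mu|+\sigma_0$, which gives the stated transformation under $\Ad_{\Gamma_\caK}$ and also the second equality in \eqref{bimp}. For the implementation \eqref{bimp} itself, compute for homogeneous $y \in \iota\circ\po(\al_R)$:
\begin{align*}
  \sum_\mu B_\mu \Gamma_\caK^{|\mu|} y \Gamma_\caK^{|\mu|} B_\mu^*
  = \sum_\mu (-1)^{|\mu|\partial y}\caE_{\mu\emptyset}(VyV^*)\caE_{\emptyset\mu}
  = \sum_\mu (-1)^{|\mu|\partial y}\caE_{\mu\emptyset}\rho(y)\caE_{\emptyset\mu}
  = \rho(y),
\end{align*}
using $VyV^* = \rho(y)e_0$ in the middle step and super-commuting $\caE_{\mu\emptyset}$ past $\rho(y)$ together with $\sum_\mu \caE_{\mu\mu}=\unit$ in the last. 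The matrix formula \eqref{bbbo} is then proved by induction on $N$: the base case $B_\mu B_\nu^* = \caE_{\mu\nu}$ is immediate from $VV^* = e_0$, and the inductive step applies $\rho$ to $\iota\circ\po(E^{(0)}_{\mu_0\nu_0}\cdots E^{(N-1)}_{\mu_{N-1}\nu_{N-1}})$ through \eqref{bimp}, with the prescribed sign arising from super-commuting $B_{\nu_k}^*$ past the parity factors $\Gamma_\caK^{|\mu|}$.

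For the convergence \eqref{sat}, the result splits by the parity of $x$. For even $x\in\po(\al_R)''$, the identity $T_{\bf B}(\iota(x)) = \rho(\iota(x))=\iota(\hat\beta_{S_1}(x))$ iterates to $T_{\bf B}^N\circ\iota(x) = \iota\circ\hat\beta_{S_N}(x)$, which converges $\sigma$-weakly to $\braket{\Oo}{x\Oo}\unit$ by Lemma \ref{lem:trans_weak_conv_to_state}(i). For odd $x$, $T_{\bf B}$ disagrees with $\rho$, but unwinding the iterates and inserting the parity string $Y_N$ expresses $T_{\bf B}^N\circ\iota(x)$ in terms of $\iota\circ\po(Y_N)\hat\beta_{S_N}(x)$ up to a sign, and Lemma \ref{lem:trans_weak_conv_to_state}(ii) (applicable since $\kappa_\omega=0$ means $\po(\al_R)''$ is a factor) forces the $\sigma$-weak limit to be $0$. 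For the covariance \eqref{gvb}, set $B''_\nu := \sum_\mu \braket{\psi_\mu}{\mathfrak\Gamma(U_g)\psi_\nu}B_\mu$ and $B'''_\nu := V_g B_\nu V_g^*$. Since $\mathfrak\Gamma(U_g)$ preserves the Fock vacuum, $B''_\emptyset = B_\emptyset = V$. Using $\iota\circ\po(\alpha_g(E^{(0)}_{\mu\nu})) = V_g \caE_{\mu\nu} V_g^*$ one obtains $B''_\mu (B''_\nu)^* = B'''_\mu (B'''_\nu)^*$ for all $\mu,\nu$. Moreover $V_g V V_g^*$ is another intertwiner with $\rho(y) (V_g V V_g^*) = (V_g V V_g^*) y$ (using $\rho\circ\Ad_{V_g}=\Ad_{V_g}\circ\rho$) and range equal to $V_g e_0 V_g^* = e_0$, so by the uniqueness-up-to-scalar of such intertwiners $V_g V V_g^* = c_g V$ for some $c_g\in\bbT$, which by the Cuntz relations propagates to $B'''_\nu = \overline{c_g} B''_\nu$ for every $\nu$, giving \eqref{gvb}.

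The main technical obstacle is the consistent bookkeeping of graded signs throughout. The inductive derivation of \eqref{bbbo} requires careful accumulation of signs from super-commuting matrix units past the parity factors $\Gamma_\caK^{|\mu|}$, and the convergence \eqref{sat} for odd $x$ is subtler than the even case since $T_{\bf B}$ does not agree with $\rho$ off the even sector and must be related to the translation $\hat\beta_{S_N}$ via the parity string of Lemma \ref{lem:trans_weak_conv_to_state}(ii).
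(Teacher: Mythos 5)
Your proposal is correct, and it follows a genuinely more economical route than the paper. The paper builds the operators by applying Wigner's theorem to each of the $*$-isomorphisms $\rho_\mu: x\mapsto\rho(x)E_{\mu\mu}$, producing unitaries $w_\mu:\calH\to E_{\mu\mu}\calH$ whose relative phases must then be normalized by a $2$-cocycle argument ($w_\mu w_\nu^* = c_{\mu\nu}\hat E_{\mu\nu}$ with $c_{\mu\lambda}=c_{\mu\nu}c_{\nu\lambda}$), yielding intermediate isometries $S_\mu$ with $S_\mu S_\nu^*=\hat E_{\mu\nu}$, and only then twisting to $B_\mu=(\Gamma_0\Gamma)^{|\mu|}S_\mu$. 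You instead apply Wigner's theorem once, to $\rho_\emptyset$, to get a single intertwiner $V=B_\emptyset$, and define $B_\mu:=\caE_{\mu\emptyset}V$ directly. This kills the cocycle normalization at a stroke — the identities $B_\nu^*B_\mu=\delta_{\mu\nu}\unit$ and $B_\mu B_\nu^*=\caE_{\mu\nu}$ become immediate, the homogeneity of each $B_\mu$ is transparent, and the implementation formula \eqref{bimp} follows from a short super-commutation computation. Your treatment of \eqref{gvb} is also a cleaner shortcut: you exploit that $\mathfrak\Gamma(U_g)$ fixes the Fock vacuum so that $B''_\emptyset=V$, and then observe that $V_gVV_g^*$ is another intertwiner from $\calH$ onto $e_0\calH$ so that $V_gVV_g^*=c_gV$ by the uniqueness-up-to-scalar; the covariance for general $\nu$ then propagates through the $\caE_{\mu\emptyset}$'s. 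The paper instead defines a competing family $T_\nu$ and verifies it satisfies the uniqueness hypothesis \eqref{buni}, which is more work. Both routes for \eqref{sat} and the inductive argument for \eqref{bbbo} are essentially the same.

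One place to flesh out: the existence of the isometry $V$ with $\rho(y)V=Vy$ and $VV^*=e_0$ is asserted from "$\rho$ is injective and $e_0$ is minimal in $M^{(0)}$," but the real content is that $e_0$ is minimal in the \emph{full} commutant $\rho(\caR_{0,\caK})'$, which is not automatic. This rests on two facts you use implicitly: that $\rho(\caR_{0,\caK})\vee M=\caR_{0,\caK}$ (which comes from $\al_R$ being generated by $\al_{\{0\}}$ and $\beta_{S_1}(\al_R)$), and the resulting structure $\rho(\caR_{0,\caK})'=M^{(0)}+M^{(1)}\Gamma_0\Gamma_\caK$. This is exactly the content of Lemma \ref{lem35} and Lemma \ref{lem:endo_and_commutant_even} in the paper, and it is what guarantees $e_0\rho(\caR_{0,\caK})'e_0=\bbC e_0$ (and equivalently that $\rho_\emptyset$ is a $*$-isomorphism onto $\caB(e_0\calH)$, not merely into it). Without the generation property the multiplicity could exceed one and no such $V$ need exist, so this step should be made explicit rather than left as an appeal to injectivity alone.
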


We will prove this result in several steps. First we note some properties of 
endomorphisms of operators on graded Hilbert spaces and the Cuntz algebra.
\begin{prop}\label{cuntzpropi}
Let $\caH$ be a Hilbert space with a self-adjoint unitary $\Gamma$ 
that gives a grading for $\calB(\calH)$. 
Let $\caM$ be a finite type I von Neumann subalgebra of $\caB(\caH)$
with  matrix units $\{E_{\mu,\nu}\}_{\mu,\nu\in\calP} \subset \caM$ spanning 
$\caM$.
Assume that 
\begin{align}\label{grae}
\Ad_{\Gamma}(E_{\mu,\nu})=(-1)^{|\mu|+|\nu|} E_{\mu,\nu},\qquad
\mu,\nu\in \caP
\end{align}
and set $\Gamma_0:=\sum_{\mu\in\caP} (-1)^{|\mu|}E_{\mu\mu}$.
Let $\rho: \calB(\calH) \to \calB(\calH)$ be a graded, unital $\ast$-endomorphism 
such that $\rho(a)b- (-1)^{\partial a \partial b} b \rho(a) = 0$ for 
$a \in \calB(\calH)$ $b\in \caM$ with homogeneous 
grading. Suppose further that $\calB(\calH) = \rho(\calB(\calH)) \vee \calM$.
Then there exist isometries $\{S_\mu\}_{\mu \in \calP}$ on $\caH$ with the property that
\begin{align}
S_\nu^* S_\mu = \delta_{\mu,\nu} \unit, 
    \qquad 
    \rho(x) = \sum_\mu S_\mu x S_\mu^* 
\end{align}
   for all $\mu,\nu \in\calP$ and $x \in \calB(\calH)$. The operators $S_\mu$ have 
  homogeneous parity and are such that $\Ad_\Gamma\lmk S_\mu\rmk = (-1)^{|\mu|+\sigma_0}S_\mu$
  with some uniform $\sigma_0\in\{0,1\}$.
  Furthermore, setting $B_\mu:=(\Gamma_0\Gamma)^{|\mu|} S_\mu$, for $\mu\in\caP$,
  we have $B_\nu^* B_\mu = \delta_{\mu,\nu} \unit$,
  \begin{align}\label{rhotrans}
  \rho(x) 
  = \sum_{\mu\in\calP} \mathrm{Ad}_{B_\mu} \circ \mathrm{Ad}_{\Gamma^{|\mu|}} (x),\quad
  x\in\bh,
\end{align}
  and
\begin{align} \label{eq:rho_and_E_using_V_even}
  E_{\mu_0,\nu_0} \rho(E_{\mu_1,\nu_1}) \cdots \rho^N (E_{\mu_N,\mu_N}) = 
  (-1)^{\sum\limits_{k=1}^N (|\mu_k|+|\nu_k|) \sum\limits_{j=0}^{k-1} |\nu_j|} 
  B_{\mu_0} \cdots B_{\mu_N} B_{\nu_N}^* \cdots B_{\nu_0}^* 
\end{align}
for all $N\in\bbN\cup\{0\}$ and $\mu_0,\ldots,\mu_N,\nu_0,\ldots,\nu_N\in\caP$.
The operators $B_\mu$ have 
  homogeneous parity such that
  $\Ad_\Gamma\lmk B_\mu\rmk = (-1)^{|\mu|+\sigma_0}B_\mu$,
  with the same $\sigma_0$ as above.
If there are isometries $\{T_\mu\}_{\mu \in \calP}$ such that
\begin{align}\label{buni}
 T_\nu^* T_\mu = \delta_{\mu,\nu} \unit, \quad T_\mu T_\nu^*= E_{\mu,\nu},
\quad\rho(x) 
  = \sum_{\mu\in\calP} \mathrm{Ad}_{T_\mu} \circ \mathrm{Ad}_{\Gamma^{|\mu|}} (x),\quad
  x\in\bh,
\end{align}
then there is some $c\in\bbT$ such that $T_\mu=cB_\mu$, for all $\mu\in \caP$.
\end{prop}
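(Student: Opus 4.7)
My plan is to build the operators $B_\mu$ as a twist by $(\Gamma_0\Gamma)^{|\mu|}$ of an underlying system of Cuntz isometries $\{S_\mu\}$ that implement $\rho$ in the standard form $\rho(x) = \sum_\mu S_\mu x S_\mu^*$, mirroring the Bratteli--Jorgensen--Price construction for quantum spin chains but accounting for the grading. First I would produce a homogeneous intertwiner $S_\emptyset \in \caB(\caH)$ with $\rho(x)S_\emptyset = S_\emptyset x$ for all $x\in\caB(\caH)$ and $S_\emptyset S_\emptyset^* = E_{\emptyset,\emptyset}$; such an $S_\emptyset$ exists by standard finite-index arguments applied to the graded pair $(\rho(\caB(\caH)), \caM)$ that jointly generates $\caB(\caH)$, with $E_{\emptyset,\emptyset}$ being an even minimal projection of the finite type I factor $\caM$. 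Its parity $\sigma_0 \in \{0,1\}$ is arbitrary but fixed. Setting $S_\mu := E_{\mu,\emptyset}\,\Gamma^{|\mu|} S_\emptyset$, direct computation using the matrix unit relations and $S_\emptyset^* E_{\emptyset,\emptyset} S_\emptyset = \unit$ gives $S_\nu^* S_\mu = \delta_{\mu\nu}\unit$ and $\sum_\mu S_\mu S_\mu^* = \sum_\mu E_{\mu,\mu} = \unit$, while homogeneity of $S_\mu$ with parity $|\mu|+\sigma_0$ is clear from the factors. The relation $\rho(x) = \sum_\mu S_\mu x S_\mu^*$ then follows from the intertwining property of $S_\emptyset$ together with the graded commutation $E_{\mu,\emptyset}\rho(a) = (-1)^{|\mu|\partial a}\rho(a)E_{\mu,\emptyset}$, which makes the pairs of $\Gamma^{|\mu|}$ factors cancel.

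With the $S_\mu$ constructed, note that \eqref{grae} forces $\Gamma_0$ and $\Gamma$ to induce the same grading automorphism on $\caM$, so $\Gamma_0\Gamma$ is an even self-adjoint unitary commuting with $\caM$ and satisfying $(\Gamma_0\Gamma)^2 = \unit$. Setting $B_\mu := (\Gamma_0\Gamma)^{|\mu|} S_\mu$, orthogonality $B_\nu^* B_\mu = \delta_{\mu\nu}\unit$ is inherited from the $S_\mu$, and $B_\mu$ has the same homogeneous parity $|\mu|+\sigma_0$. The formulas \eqref{bimp} and \eqref{rhotrans} reduce to verifying $\sum_\mu \Ad_{B_\mu \Gamma^{|\mu|}}(x) = \rho(x)$; substituting $B_\mu = (\Gamma_0\Gamma)^{|\mu|}S_\mu$ and using homogeneity of $S_\mu$ to move $\Gamma^{|\mu|}$ past $S_\mu$ collapses the sum to $\sum_\mu S_\mu x S_\mu^* = \rho(x)$. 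For the master identity \eqref{eq:rho_and_E_using_V_even}, I argue by induction on $N$. The base case $E_{\mu_0,\nu_0} = B_{\mu_0} B_{\nu_0}^*$ is a direct computation using $S_\emptyset S_\emptyset^* = E_{\emptyset,\emptyset}$ and cancellation of $\Gamma$ factors. The inductive step expands $\rho^k(E_{\mu_k,\nu_k})$ through \eqref{rhotrans} and commutes the resulting $B_{\mu_k}$ and $B_{\nu_k}^*$ past the existing strings of $B$'s and $B^*$'s, producing exactly the sign $(-1)^{\sum_{k}(|\mu_k|+|\nu_k|)\sum_{j<k}|\nu_j|}$ from the anticommutations of odd homogeneous operators.

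For uniqueness, given another family $\{T_\mu\}$ satisfying \eqref{buni}, the hypothesis $T_\mu T_\mu^* = E_{\mu,\mu}$ and the just-established $B_\mu B_\mu^* = E_{\mu,\mu}$ show $T_\mu$ and $B_\mu$ have the same range, so $W_\mu := B_\mu^* T_\mu$ is a unitary in $\caB(\caH)$ with $T_\mu = B_\mu W_\mu$. A direct computation from the orthogonality relations gives $B_\nu^* \rho(x) B_\nu = \Gamma^{|\nu|} x \Gamma^{|\nu|} = T_\nu^* \rho(x) T_\nu$ for all $x \in \caB(\caH)$, and substituting $T_\nu = B_\nu W_\nu$ yields $W_\nu^* \Gamma^{|\nu|} x \Gamma^{|\nu|} W_\nu = \Gamma^{|\nu|} x \Gamma^{|\nu|}$. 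Since $x$ is arbitrary, $W_\nu$ commutes with all of $\caB(\caH)$ and must equal $c_\nu \unit$ with $c_\nu \in \bbT$. Finally, the constraint $T_\mu T_\nu^* = E_{\mu,\nu} = B_\mu B_\nu^*$ forces $c_\mu \overline{c_\nu} = 1$, so $c_\mu \equiv c$ is uniform and $T_\mu = c B_\mu$.

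The main technical obstacle is the very first step: producing the initial homogeneous intertwining isometry $S_\emptyset$ whose range projection is exactly $E_{\emptyset,\emptyset}$. Once $S_\emptyset$ is available, the remainder of the argument is essentially careful bookkeeping with gradings and anticommutation signs. Establishing existence of $S_\emptyset$ requires a graded version of the index/endomorphism theory for type I factors, exploiting the assumption $\caB(\caH) = \rho(\caB(\caH)) \vee \caM$ together with graded commutation and the minimality of $E_{\emptyset,\emptyset}$ in the finite type I factor $\caM$.
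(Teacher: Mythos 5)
Your proposal is structurally sound and in fact streamlines the paper's argument: you build all the $S_\mu$ from a single intertwiner $S_\emptyset$ via $S_\mu := E_{\mu,\emptyset}\Gamma^{|\mu|}S_\emptyset$, whereas the paper constructs a separate Wigner unitary $w_\mu: \caH\to E_{\mu,\mu}\caH$ for each $\mu$ and then normalizes a $\bbT$-valued cocycle $c_{\mu\nu}$ to obtain $S_\mu = c_{\mu_0\mu}w_\mu$. Your route avoids the cocycle bookkeeping. The computation showing $S_\mu S_\nu^* = \hat E_{\mu,\nu} := E_{\mu,\nu}(\Gamma_0\Gamma)^{|\mu|+|\nu|}$, the homogeneity and the relation $\Gamma_0 S_\mu = (-1)^{|\mu|}S_\mu$, the passage to $B_\mu$, the induction for the master identity, and the uniqueness argument via $W_\nu := B_\nu^* T_\nu \in \caB(\caH)'$ all check out and agree with the paper. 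You correctly note in passing that homogeneity of $S_\emptyset$ comes for free from the intertwining relation, since $S_\emptyset^*\Gamma S_\emptyset$ implements $\Ad_\Gamma$ and so lies in $\bbC\Gamma$.

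The gap — which you candidly flag as the ``main technical obstacle'' — is the existence of $S_\emptyset$, i.e.\ of an isometry with $\rho(x)S_\emptyset = S_\emptyset x$ and range $E_{\emptyset,\emptyset}\caH$. Calling this ``standard finite-index arguments'' undersells what is required: the assertion is equivalent to showing the compression $\rho_\emptyset: x\mapsto \rho(x)E_{\emptyset,\emptyset}$ is a \emph{surjection} onto $\caB(E_{\emptyset,\emptyset}\caH)$, and in the graded setting this is not automatic. The paper proves it via two lemmas. First, Lemma~\ref{lem35} gives a graded tensor-product decomposition of $\caB(\caH)$ relative to the graded-commuting pair $(\caM, \rho(\caB(\caH)))$, from which one reads off $\rho(\caB(\caH))' = \caM^{(0)} + \caM^{(1)}\Gamma_0\Gamma$ (the twist by $\Gamma_0\Gamma$ is exactly why $S_\mu S_\nu^*$ equals $\hat E_{\mu,\nu}$ rather than $E_{\mu,\nu}$) and, crucially, that $\rho(\caB(\caH))\cdot p = \caB(p\caH)$ for any even minimal projection $p\in\caM$. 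Lemma~\ref{lem:endo_and_commutant_even} then applies this with $p = E_{\mu,\mu}$ to conclude $\rho_\mu$ is a $*$-isomorphism, after which Wigner's Theorem supplies the isometry. This is precisely the graded structure theory your sketch defers; without establishing the surjectivity of $\rho_\emptyset$, the construction does not get off the ground, and ordinary (ungraded) tensor decomposition fails because $\caM$ and $\rho(\caB(\caH))$ graded-commute rather than commute.
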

To study the situation, we note the following general property.
\begin{lem}\label{lem35}
Let $\caH$ be a Hilbert space with a self-adjoint unitary $\Gamma$ 
that gives a grading for $\calB(\calH)$. 
Let $\caM_1$, $\caM_2$  be $\Ad_{\Gamma}$-invariant
 von Neumann subalgebras of $\bh$ with $\caM_1 \vee \caM_2 = \bh$.
Suppose that $\caM_1$ is a type I factor with a self-adjoint unitary $\Gamma_1\in \caM_1$
such that $\Ad_{\Gamma_1}(x)=\Ad_{\Gamma}(x)$ for all $x\in\caM_1$.
Suppose further that
\begin{align}\label{anti}
ab- (-1)^{\partial a \partial b} b a = 0,\quad \text{for homogeneous}\quad a\in\caM_1, b \in\caM_2.
\end{align}
Then there are Hilbert spaces $\caH_1,\caH_2$ and a unitary
$V:\caH\to\caH_1\otimes\caH_2$ such that
\begin{align}
\Ad_{V}(\caM_1)=\caB(\caH_1)\otimes \bbC\unit_{\caH_2}.\quad \label{mi}
\end{align}
Furthermore, there are self-adjoint
unitaries $\tilde\Gamma_i$ on $\caH_i$ with $i=1,2$
such that 
\begin{align}\label{ggg}
\Ad_V(\Gamma)=\tilde\Gamma_1\otimes\tilde\Gamma_2,\quad
\Ad_V(\Gamma_1)=\tilde\Gamma_1\otimes\unit_{\caH_2}.
\end{align}
The commutant of $\caM_2$ is given by
\begin{align}\label{mdes}
\caM_2'
=\caM_1^{(0)}+\caM_1^{(1)}\Gamma_1\Gamma.
\end{align}
If $p$ is an even minimal projection in $\caM_1$
then $\caM_2\cdot p=\caB(p\caH)$.
\end{lem}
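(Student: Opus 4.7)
The plan is to pass to a tensor-product realization of the type I factor $\caM_1$, identify the grading operators there, and then use the graded commutation relation together with the join condition $\caM_1\vee\caM_2=\caB(\caH)$ to pin down $\caM_2$ exactly. First, the structure theorem for type I factors supplies Hilbert spaces $\caH_1,\caH_2$ and a unitary $V:\caH\to\caH_1\otimes\caH_2$ with $\Ad_V(\caM_1)=\caB(\caH_1)\otimes\bbC\unit_{\caH_2}$, hence $\Ad_V(\caM_1')=\bbC\unit_{\caH_1}\otimes\caB(\caH_2)$; this gives \eqref{mi}. Because $\Gamma_1\in\caM_1$, we have $\Ad_V(\Gamma_1)=\tilde\Gamma_1\otimes\unit$ for some self-adjoint unitary $\tilde\Gamma_1$ on $\caH_1$. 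The hypothesis $\Ad_{\Gamma_1}|_{\caM_1}=\Ad_\Gamma|_{\caM_1}$ is equivalent to $\Gamma\Gamma_1\in\caM_1'$; applied to $\Gamma_1$ itself, it gives $\Ad_\Gamma(\Gamma_1)=\Gamma_1$, so $\Gamma$ and $\Gamma_1$ commute and $\Gamma\Gamma_1$ is a self-adjoint unitary in $\caM_1'$. Hence $\Ad_V(\Gamma\Gamma_1)=\unit\otimes\tilde\Gamma_2$ for some self-adjoint unitary $\tilde\Gamma_2$ on $\caH_2$, and multiplying by $\Ad_V(\Gamma_1)$ gives $\Ad_V(\Gamma)=\tilde\Gamma_1\otimes\tilde\Gamma_2$, proving \eqref{ggg}.

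Working henceforth in the tensor picture, I first identify the graded commutant $\caM_1^!$ of $\caM_1$ in $\caB(\caH)$. An even element $b_0$ that commutes with $\caM_1$ lies in $\caM_1'=\unit\otimes\caB(\caH_2)$, and the even-degree constraint forces $b_0\in\unit\otimes\caB(\caH_2)^{(0)}$, where $\caB(\caH_2)$ is graded by $\Ad_{\tilde\Gamma_2}$. An odd element $b_1\in(\caM_1^{(0)})'=\caM_1'+\caM_1'\Gamma_1$ satisfying $\{b_1,\caM_1^{(1)}\}=0$ can be written as $c_0+c_1\Gamma_1$ with $c_0,c_1\in(\caM_1')^{(1)}$; anticommutation with $\caM_1^{(1)}$ forces $ac_0=0$ for all $a\in\caM_1^{(1)}$, and since $\caM_1^{(1)}$ (when nonzero) acts non-degenerately on $\caH$, one concludes $c_0=0$. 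Therefore $\caM_1^!=\unit\otimes\caB(\caH_2)^{(0)}+\tilde\Gamma_1\otimes\caB(\caH_2)^{(1)}$, and the graded commutation hypothesis \eqref{anti} then gives $\caM_2\subset\caM_1^!$.

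To upgrade this inclusion to equality, write $\caM_2^{(0)}=\unit\otimes\caC_0$ and $\caM_2^{(1)}=\tilde\Gamma_1\otimes\caC_1$ for weakly-closed $\caC_\sigma\subset\caB(\caH_2)^{(\sigma)}$. A direct computation using $\caB(\caH_1)\tilde\Gamma_1=\caB(\caH_1)$ gives $\caM_1\vee\caM_2=\caB(\caH_1)\otimes(\caC_0+\caC_1)$, so the join hypothesis forces $\caC_0+\caC_1=\caB(\caH_2)$; the parity direct sum $\caB(\caH_2)=\caB(\caH_2)^{(0)}\oplus\caB(\caH_2)^{(1)}$ then forces $\caC_\sigma=\caB(\caH_2)^{(\sigma)}$ for $\sigma=0,1$, giving $\caM_2=\caM_1^!$. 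A further short tensor-product computation (matching parities after commutation with $\unit\otimes C$ and $\tilde\Gamma_1\otimes D$ for even $C$ and odd $D$) identifies $(\caM_1^!)'=\caM_1^{(0)}+\caM_1^{(1)}\Gamma_1\Gamma$, which is weakly closed because its two summands lie in disjoint $\Ad_\Gamma$-parity components. This proves \eqref{mdes}. For the last assertion, an even minimal projection $p\in\caM_1$ corresponds under $\Ad_V$ to $q\otimes\unit$ with $q$ a rank-one projection on $\caH_1$ commuting with $\tilde\Gamma_1$, so $\tilde\Gamma_1 q=\pm q$; hence $\caM_2\cdot p$ equals $q\otimes\caB(\caH_2)^{(0)}+(\pm q)\otimes\caB(\caH_2)^{(1)}=q\otimes\caB(\caH_2)$, which is $\caB(p\caH)$ after the natural identification $p\caH\cong\caH_2$.

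The main technical obstacle is the rigidity step showing $\caM_2=\caM_1^!$ rather than a proper subalgebra: graded commutation alone gives only the inclusion, and equality is forced by combining the join condition with the parity direct-sum decomposition $\caB(\caH_2)=\caB(\caH_2)^{(0)}\oplus\caB(\caH_2)^{(1)}$, which prevents any deficit in one parity of $\caM_2$ from being compensated by the other. The hypothesis that $\caM_1$ be a type I factor with an even implementing unitary $\Gamma_1$ is precisely what makes the concrete tensor-product description available to drive this argument.
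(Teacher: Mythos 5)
Your proof is correct and follows essentially the same route as the paper: pass to the tensor-factor realization of the type I factor $\caM_1$, identify $\Ad_V(\Gamma)$ and $\Ad_V(\Gamma_1)$, and use the join condition $\caM_1\vee\caM_2=\caB(\caH)$ to pin down $\caM_2$ as $\bbC\unit_{\caH_1}\otimes\caB(\caH_2)^{(0)}+\bbC\tilde\Gamma_1\otimes\caB(\caH_2)^{(1)}$, from which \eqref{mdes} and the minimal-projection claim follow. The only differences are cosmetic: you first describe the graded commutant of $\caM_1$ and then invoke the join to force equality, whereas the paper passes directly to $\caN:=\caM_2^{(0)}+\caM_2^{(1)}\Gamma_1\subset\caM_1'$; and you compute the commutant of $\caM_2$ by hand rather than appealing to Lemma~\ref{comgra}.
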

We note that if $\caM_1$ is a type I factor, 
Wigner's Theorem guarantees the existence of a self-adjoint unitary $\Gamma_1\in \caM_1$
such that $\Ad_{\Gamma_1}(x)=\Ad_{\Gamma}(x)$ for all $x\in\caM_1$.
\begin{proof}
Because $\caM_1$ is a type I factor, by~\cite[Chapter V, Theorem 1.31]{Takesaki1}
 there are Hilbert spaces $\caH_1,\,\caH_2$ and a unitary
$V:\caH\to\caH_1\otimes\caH_2$ satisfying \eqref{mi}.
Because $\Gamma_1\in\caM_1$ and
 $\Gamma\Gamma_1\in \caM_1'$,  there are self-adjoint
unitaries $\tilde\Gamma_i$ on $\caH_i$ with $i=1,2$
satisfying \eqref{ggg}.
Clearly $\Ad_{\Gamma_1}(\Gamma_1)=\Gamma_1$ and so $\Gamma_1$
is an even element of $\caM_1$.

Note that $\caN:=\caM_2^{(0)}+\caM_2^{(1)}\Gamma_1$
is a von Neumann subalgebra of $\caM_1'$ by \eqref{anti}.
Therefore, $\Ad_V(\caN)$ is a von Neumann subalgebra of 
$\unit_{\caH_1}\otimes \caB(\caH_2)$.
Because
\begin{align*}
  &\caM_2=\caM_2^{(0)}+\caM_2^{(1)}\Gamma_1\Gamma_1\subset \caM_1\vee \caN, 
  &&\caM_1\subset\caM_1\vee \caN, 
  &&\caM_1\vee\caM_2=\caB(\caH),
\end{align*}
we have $\caM_1\vee \caN=\caB(\caH)$.
Combining with \eqref{mi}, this means 
\begin{align}\label{mini}
\Ad_{V}(\caM_2^{(0)}+\caM_2^{(1)}\Gamma_1)=\Ad_V(\caN)=\bbC\unit_{\caH_1}\otimes\caB(\caH_2).
\end{align}

Now we associate the grading given by $\tilde\Gamma_i$ to
$\caB(\caH_i)$ for $i=1,2$, and regard $\caB(\caH_1)\otimes\caB(\caH_2)$
as $\caB(\caH_1){\hox} \caB(\caH_2)$, the graded tensor product of 
$(\caB(\caH_1),\caH_1,\tilde{\Gamma}_1)$ and $(\caB(\caH_2),\caH_2,{\tilde\Gamma_2})$.
Because 
$\Ad_V(\Gamma)=\tilde\Gamma_1\otimes\tilde\Gamma_2$,
$\Ad_V: \caB(\caH)\to \caB(\caH_1){\hox} \caB(\caH_2)$
is a graded $*$-isomorphism.
Considering the even and odd subspaces of \eqref{mini},
we obtain
\begin{align}
\Ad_{V}(\caM_2^{(0)})=\bbC\unit_{\caH_1}\otimes\caB(\caH_2)^{(0)},\quad
\Ad_{V}(\caM_2^{(1)})\Ad_{V}(\Gamma_1)=
\bbC\unit_{\caH_1}\otimes\caB(\caH_2)^{(1)}.
\end{align}
and so 
\begin{align}\label{m2}
\Ad_{V}(\caM_2)=\Ad_{V}(\caM_2^{(0)}+\caM_2^{(1)})
=\bbC\unit_{\caH_1}\otimes\caB(\caH_2)^{(0)}
+\bbC\tilde\Gamma_1\otimes\caB(\caH_2)^{(1)}
= \bbC\unit_{\caH_1}{\hox} \caB(\caH_2)
\end{align}
where $\bbC\unit_{\caH_1}{\hox} \caB(\caH_2)$ is a graded tensor product of
$(\bbC\unit_{\caH_1},\caH_1, \tilde\Gamma_1)$ and 
$(\caB(\caH_2), \caH_2,\tilde\Gamma_2)$.


We now consider the commutant of $\caM_2$. Applying  Lemma \ref{comgra},
we see that
\begin{align}
\Ad_{V}(\caM_2')
=\caB(\caH_1)^{(0)}\otimes \bbC\unit_{\caH_2}
+\caB(\caH_1)^{(1)}\otimes \bbC\tilde\Gamma_2
=\Ad_{V}\big( \caM_1^{(0)}+\caM_1^{(1)}\Gamma_1\Gamma \big).
\end{align}
Hence we obtain \eqref{mdes}.

Let $p$ be a minimal projection in $\caM_1$ and suppose that it is even.
Then 
$\Ad_V(p)$ is a minimal projection in $\caB(\caH_1)\otimes\bbC\unit_{\caH_2}$.
Therefore, there is a rank-one projection $r$ on $\caH_1$
such that $\Ad_V(p)=r\otimes\unit_{\caH_2}$.
Because $p$ is even and $\Ad_V$ is a graded $*$-isomorphism,
we have
$\Ad_{\tilde \Gamma_1}(r)=r$.
As $r$ is rank-one, this means that $\tilde \Gamma_1 r=\pm r$.
Therefore, using \eqref{m2}, we have
\begin{align}
\Ad_V(\caM_2 p)
&=
\bbC r\otimes\caB(\caH_2)^{(0)}
+\bbC\tilde\Gamma_1r\otimes\caB(\caH_2)^{(1)} \nonumber \\
&=\bbC r\otimes \big( \caB(\caH_2)^{(0)}\pm \caB(\caH_2)^{(1)}\big) 
=\bbC r\otimes \caB(\caH_2)
=\Ad_V\lmk p \bh p\rmk. 
\end{align}
Hence we obtain $\caM_2 p=p\bh p=\caB(p\caH)$.
\end{proof}
\begin{lem} \label{lem:endo_and_commutant_even}
 Consider the setting of Proposition \ref{cuntzpropi}.
 Then the following hold.
\begin{enumerate}
  \item[(i)] $\rho(\bh)'=\caM^{(0)}+\caM^{(1)}\Gamma_0\Gamma$.
  \item[(ii)] Let $\hat{E}_{\mu,\nu} = E_{\mu,\nu} (\Gamma_0 \Gamma)^{|\mu|+|\nu|}$. Then 
  $\{\hat{E}_{\mu,\nu}\}_{\mu,\nu\in\calP}$ are matrix units in 
  $\rho(\calB(\calH))'$  
  spanning $\rho(\calB(\calH))'$,
  \item[(iii)] For all $\mu \in \calP$, the map 
  \begin{align}\label{rmu1}
    \rho_\mu: \calB(\calH) \ni x \mapsto \rho(x) E_{\mu,\mu} \in \calB( E_{\mu,\mu} \calH) 
  \end{align}  is a $\ast$-isomorphism.
\end{enumerate}
\end{lem}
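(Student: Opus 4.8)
The plan is to deduce all three statements from a single application of Lemma~\ref{lem35}, with the \emph{roles reversed} from what one might first guess: I would take $\caM_1 := \caM$ and $\caM_2 := \rho(\caB(\caH))$, using the implementing unitary $\Gamma_1 := \Gamma_0$. First I would record the basic properties of $\Gamma_0 = \sum_{\mu\in\calP}(-1)^{|\mu|}E_{\mu\mu}$: it lies in $\caM$, is a self-adjoint unitary, and by \eqref{grae} satisfies $\Ad_{\Gamma_0}(E_{\mu,\nu}) = (-1)^{|\mu|+|\nu|}E_{\mu,\nu} = \Ad_\Gamma(E_{\mu,\nu})$, so $\Ad_{\Gamma_0}|_\caM = \Ad_\Gamma|_\caM$; moreover each $E_{\mu,\mu}$ is even, so $\Gamma_0$ is even, hence $\Gamma_0$ and $\Gamma$ commute, $\Gamma_0\Gamma$ is a self-adjoint unitary, and $\Ad_{\Gamma_0\Gamma}(E_{\mu,\nu}) = E_{\mu,\nu}$, i.e.\ $\Gamma_0\Gamma \in \caM'$. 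To apply Lemma~\ref{lem35} I then check: $\caM$ is $\Ad_\Gamma$-invariant by \eqref{grae}; $\rho(\caB(\caH))$ is an $\Ad_\Gamma$-invariant von Neumann subalgebra since $\rho$ is a (normal) graded endomorphism; $\caM\vee\rho(\caB(\caH)) = \caB(\caH)$ is a hypothesis of Proposition~\ref{cuntzpropi}; $\caM \cong \Mat_{|\calP|}(\bbC)$ is a type~I factor containing the required $\Gamma_0$; and the anti-commutation relation \eqref{anti} for homogeneous $a\in\caM$, $b=\rho(c)\in\rho(\caB(\caH))$ (with $c$ homogeneous of degree $\partial b$) is exactly the relation $\rho(c)a = (-1)^{\partial c\,\partial a}a\rho(c)$ assumed in Proposition~\ref{cuntzpropi}.

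With Lemma~\ref{lem35} available, part~(i) is immediate: it gives $\rho(\caB(\caH))' = \caM_2' = \caM_1^{(0)} + \caM_1^{(1)}\Gamma_1\Gamma = \caM^{(0)} + \caM^{(1)}\Gamma_0\Gamma$. For part~(iii), I would note that each $E_{\mu,\mu}$ is an even minimal projection of $\caM = \caM_1$, so the last assertion of Lemma~\ref{lem35} gives $\rho(\caB(\caH))\,E_{\mu,\mu} = \caB(E_{\mu,\mu}\caH)$; since $E_{\mu,\mu}$ is even it commutes with $\rho(\caB(\caH))$ by \eqref{anti}, so $\rho_\mu(x) = \rho(x)E_{\mu,\mu} = E_{\mu,\mu}\rho(x)E_{\mu,\mu}$ is a unital $\ast$-homomorphism $\caB(\caH)\to\caB(E_{\mu,\mu}\caH)$ with range $\rho(\caB(\caH))E_{\mu,\mu} = \caB(E_{\mu,\mu}\caH)$, hence surjective. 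It factors as $\caB(\caH)\xrightarrow{\rho}\rho(\caB(\caH))\xrightarrow{\,\cdot\,E_{\mu,\mu}}\caB(E_{\mu,\mu}\caH)$; the first map is injective because $\rho$ is a normal unital $\ast$-endomorphism of the factor $\caB(\caH)$, and the second is a nonzero normal $\ast$-homomorphism out of the factor $\rho(\caB(\caH))$, hence injective. Therefore $\rho_\mu$ is a $\ast$-isomorphism.

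For part~(ii), I would compute directly with $\hat E_{\mu,\nu} = E_{\mu,\nu}(\Gamma_0\Gamma)^{|\mu|+|\nu|}$, using $\Gamma_0\Gamma\in\caM'$ and $(\Gamma_0\Gamma)^2 = \unit$: one gets $\hat E_{\mu,\nu}^* = (\Gamma_0\Gamma)^{|\mu|+|\nu|}E_{\nu,\mu} = E_{\nu,\mu}(\Gamma_0\Gamma)^{|\nu|+|\mu|} = \hat E_{\nu,\mu}$; $\hat E_{\mu,\nu}\hat E_{\alpha,\beta} = E_{\mu,\nu}E_{\alpha,\beta}(\Gamma_0\Gamma)^{|\mu|+|\nu|+|\alpha|+|\beta|} = \delta_{\nu,\alpha}E_{\mu,\beta}(\Gamma_0\Gamma)^{|\mu|+|\beta|} = \delta_{\nu,\alpha}\hat E_{\mu,\beta}$ (since $(\Gamma_0\Gamma)^{2|\nu|}=\unit$); and $\sum_{\mu}\hat E_{\mu,\mu} = \sum_\mu E_{\mu,\mu} = \unit$, so $\{\hat E_{\mu,\nu}\}$ is a complete system of matrix units. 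Finally, by \eqref{grae} the element $E_{\mu,\nu}$ is even precisely when $|\mu|+|\nu|\equiv 0$ and odd when $|\mu|+|\nu|\equiv 1$, so $\hat E_{\mu,\nu}$ equals $E_{\mu,\nu}\in\caM^{(0)}$ in the even case and $E_{\mu,\nu}\Gamma_0\Gamma\in\caM^{(1)}\Gamma_0\Gamma$ in the odd case; since the even (resp.\ odd) $E_{\mu,\nu}$ span $\caM^{(0)}$ (resp.\ $\caM^{(1)}$), the $\hat E_{\mu,\nu}$ span $\caM^{(0)}+\caM^{(1)}\Gamma_0\Gamma$, which is $\rho(\caB(\caH))'$ by part~(i).

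The main obstacle here is conceptual rather than computational: one must recognise that it is $\caM$, not $\rho(\caB(\caH))$, that should play the role of the type~I \emph{factor} $\caM_1$ in Lemma~\ref{lem35}, pin down the explicit implementing unitary $\Gamma_0 \in \caM$, and verify that $\Gamma_0\Gamma$ is central in $\caM$ — this centrality is precisely what makes the twisted elements $\hat E_{\mu,\nu}$ into a genuine system of matrix units and lets one read off the commutant. The only place a standing assumption intervenes is the injectivity of $\rho$ and of the cut-down of $\rho(\caB(\caH))$ by $E_{\mu,\mu}$, both of which follow from $\rho$ being normal.
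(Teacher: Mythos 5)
Your proof is correct and takes essentially the same route as the paper: both apply Lemma~\ref{lem35} with $\caM_1 = \caM$, $\caM_2 = \rho(\calB(\calH))$, and $\Gamma_1 = \Gamma_0$ to read off (i) and the surjectivity needed for (iii), then verify the matrix-unit relations for (ii) by direct computation using that $\Gamma_0\Gamma$ is a central self-adjoint unitary in $\caM'$. The only cosmetic difference is that you factor $\rho_\mu$ through $\rho(\calB(\calH))$ to argue injectivity, whereas the paper observes directly that a nonzero $\ast$-homomorphism out of the factor $\calB(\calH)$ is injective.
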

\begin{proof}
Note that $\Ad_{\Gamma}(x)=\Ad_{\Gamma_0}(x)$ for $x\in \caM$.
Applying Lemma \ref{lem35} with $\caM_1=\caM$, $\caM_2=\rho(\bh)$
and $\Gamma_1=\Gamma_0$, we immediately obtain (i).
Because $\{E_{\mu,\nu}\}_{\mu,\nu\in\calP}$ are matrix units spanning $\caM$
and satisfying \eqref{grae},
we see from (i)
that
\begin{align}
\rho(\caB(\caH))'=\caM^{(0)}+\caM^{(1)}\Gamma_0\Gamma
= \mathrm{span}_{\mu,\nu\in \caP}\left\{
E_{\mu,\nu}\lmk \Gamma_0\Gamma\rmk^{|\mu|+|\nu|}\right\}
= \mathrm{span}_{\mu,\nu\in \caP}
\left\{\hat E_{\mu,\nu}\right\}.
\end{align}
Because
$\Gamma_0\Gamma$ commutes with $E_{\mu,\nu}$,
it is straight forward to check that $\{\hat E_{\mu,\nu}\}_{\mu,\nu\in\caP}$
are matrix units.
Hence we obtain (ii).

For part (iii), we first note that because $E_{\mu,\mu}$ is even, 
$[\rho(x), E_{\mu,\mu}] = 0$ for all $x \in \calB(\calH)$. Therefore 
 there is a well-defined 
$\ast$-homomorphism 
$$
  \rho_\mu : \calB(\calH) \to \calB(E_{\mu,\mu} \calH), \qquad 
  \rho_\mu(x) = \rho(x) E_{\mu,\mu}, \,\,\, x\in \calB(\calH).
$$
Because $\calB(\calH)$ is a factor, $\rho_\mu$ is injective.
To see that $\rho_\mu$ is surjective, we note that $E_{\mu\mu}$ is a minimal projection
of $\caM$ and it is even. 
Then applying Lemma \ref{lem35} with $\caM_1=\caM$ and $\caM_2=\rho(\caB(\caH))$,
we obtain
$
\rho(\bh)\cdot E_{\mu\mu}=\caB(E_{\mu\mu}\caH)
$
and so $\rho_\mu$ is surjective.
\end{proof}
We now prove Proposition \ref{cuntzpropi}, which we split into two lemmas. 
We recall the matrix units $\{E_{\mu,\nu}\}_{\mu,\nu\in\calP}\subset \caM$ 
and $\hat{E}_{\mu,\nu}= E_{\mu,\nu} (\Gamma_0 \Gamma)^{|\mu|+|\nu|}$ from 
Lemma \ref{lem:endo_and_commutant_even}.
\begin{lem} [First part of Proposition \ref{cuntzpropi}]\label{lemma:Even_cuntz_property}
 Consider the setting of Proposition \ref{cuntzpropi}.
Then there exist isometries $\{S_\mu\}_{\mu \in \calP}$ on $\caH$  with the property that
  for all $\mu,\nu \in\calP$ and $x \in \calB(\calH)$,
  \begin{align}\label{sprop}
  S_\nu^* S_\mu = \delta_{\mu,\nu} \unit, \qquad S_\mu S_\nu^* = \hat{E}_{\mu,\nu}, \qquad 
     \rho(x) E_{\mu,\mu}= S_\mu x S_\mu^*, \qquad \rho(x) = \sum_{\mu\in\calP} S_\mu x S_\mu^*.
  \end{align}
  The operators $S_\mu$ have 
  homogeneous parity and are such that 
  $\Ad_\Gamma\lmk S_\mu\rmk = (-1)^{|\mu|+\sigma_0}S_\mu$,
  with some uniform $\sigma_0\in\{0,1\}$.
They also satisfy $\Gamma_0 S_\mu = (-1)^{|\mu|}S_\mu$.
\end{lem}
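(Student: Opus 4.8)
The plan is to extract the isometries $S_\mu$ from the $\ast$-isomorphisms $\rho_\mu$ of Lemma \ref{lem:endo_and_commutant_even}(iii), using the fact that $\rho_\mu:\calB(\calH)\to\calB(E_{\mu,\mu}\calH)$ is a $\ast$-isomorphism between type I factors. First I would fix, for each $\mu$, a partial isometry: since $E_{\mu,\mu}$ is a minimal projection in $\caM$ and even, Lemma \ref{lem35} (with $\caM_1=\caM$, $\caM_2=\rho(\bh)$, $\Gamma_1=\Gamma_0$) gives $\rho(\bh)E_{\mu,\mu}=\caB(E_{\mu,\mu}\calH)$, and $\rho_\mu$ is a spatial isomorphism $\calB(\calH)\xrightarrow{\sim}\calB(E_{\mu,\mu}\calH)$. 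By Wigner/uniqueness of the representation of a type I factor, each such isomorphism is implemented by a unitary $U_\mu:\calH\to E_{\mu,\mu}\calH$; setting $S_\mu:=U_\mu$ viewed as an operator on $\calH$ (i.e. composed with the inclusion $E_{\mu,\mu}\calH\hookrightarrow\calH$), we get an isometry with $S_\mu S_\mu^*=E_{\mu,\mu}$ and $\rho(x)E_{\mu,\mu}=S_\mu x S_\mu^*$ for all $x$. Because the $E_{\mu,\mu}$ are mutually orthogonal summing to $\unit$, and $\rho$ is unital, summing over $\mu$ gives $\rho(x)=\sum_\mu S_\mu x S_\mu^*$; taking $x=\unit$ gives $\sum_\mu S_\mu S_\mu^*=\unit$, and $S_\nu^*S_\mu=\delta_{\mu,\nu}\unit$ follows from multiplying $S_\nu^*\rho(x)E_{\mu,\mu}=S_\nu^* S_\mu x S_\mu^*$ on the left by the orthogonality $E_{\nu,\nu}E_{\mu,\mu}=\delta_{\mu,\nu}E_{\mu,\mu}$ together with $S_\nu^* = S_\nu^* E_{\nu,\nu}$.

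Next I would identify $S_\mu S_\nu^*$ with $\hat E_{\mu,\nu}$. The operator $S_\mu S_\nu^*$ maps $E_{\nu,\nu}\calH$ isometrically onto $E_{\mu,\mu}\calH$ and annihilates the orthogonal complement, and it intertwines $\rho(x)|_{E_{\nu,\nu}\calH}$ with $\rho(x)|_{E_{\mu,\mu}\calH}$; hence $S_\mu S_\nu^*$ lies in $\rho(\bh)'$. By Lemma \ref{lem:endo_and_commutant_even}(ii), $\rho(\bh)'$ is spanned by the matrix units $\hat E_{\mu,\nu}$, so $S_\mu S_\nu^*$ must be a scalar multiple of $\hat E_{\mu,\nu}$ for matching indices (and the index match is forced by the source and range projections $S_\nu S_\nu^* = \hat E_{\nu,\nu}$, $S_\mu S_\mu^* = \hat E_{\mu,\mu}$, since $\hat E_{\mu,\mu}=E_{\mu,\mu}$). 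To pin down the scalar to $1$, the standard trick is to first normalize: fix a reference index $\mu_0=\emptyset$, replace $S_\mu$ by $\hat E_{\mu,\emptyset}S_\emptyset$ — this still satisfies $\rho(x)E_{\mu,\mu}=S_\mu x S_\mu^*$ because $\hat E_{\mu,\emptyset}$ intertwines $\rho$ correctly — and then check $S_\mu S_\nu^* = \hat E_{\mu,\emptyset}S_\emptyset S_\emptyset^* \hat E_{\emptyset,\nu} = \hat E_{\mu,\emptyset}\hat E_{\emptyset,\emptyset}\hat E_{\emptyset,\nu}=\hat E_{\mu,\nu}$, using $S_\emptyset S_\emptyset^*=\hat E_{\emptyset,\emptyset}$. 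With this redefinition all four relations in \eqref{sprop} hold.

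Finally I would establish the parity statements. Since $E_{\mu,\mu}$ is even, $\Ad_\Gamma(S_\mu)$ has the same range and source projections as $S_\mu$, hence $\Ad_\Gamma(S_\mu) = \lambda_\mu S_\mu$ for some $\lambda_\mu\in\bbT$; applying $\Ad_\Gamma$ twice forces $\lambda_\mu\in\{\pm1\}$, so each $S_\mu$ is homogeneous. To see the parity is $|\mu|+\sigma_0$ for a uniform $\sigma_0$: from $\rho$ graded and $\rho(x)=\sum S_\mu x S_\mu^*$ one computes $\Ad_\Gamma(S_\mu x S_\mu^*) = \lambda_\mu^2 S_\mu\Ad_\Gamma(x)S_\mu^* = S_\mu \Ad_\Gamma(x) S_\mu^*$, which is automatic, so this alone is not enough; instead use the commutation relation $\rho(a)b=(-1)^{\partial a\partial b}b\rho(a)$ with $b=E_{\mu,\nu}$ a homogeneous matrix unit of degree $|\mu|+|\nu|$, together with $S_\mu S_\nu^*=\hat E_{\mu,\nu}$ and $\hat E_{\mu,\nu}=E_{\mu,\nu}(\Gamma_0\Gamma)^{|\mu|+|\nu|}$, to read off $\Ad_\Gamma(S_\mu) = (-1)^{|\mu|}\Ad_\Gamma(S_\emptyset)\cdot(\text{sign})$; concretely, $S_\mu = \hat E_{\mu,\emptyset}S_\emptyset = E_{\mu,\emptyset}(\Gamma_0\Gamma)^{|\mu|}S_\emptyset$, and since $E_{\mu,\emptyset}$ has degree $|\mu|$ and $\Gamma_0\Gamma$ is even, $\Ad_\Gamma(S_\mu) = (-1)^{|\mu|}\Ad_\Gamma(S_\emptyset)\cdot(-1)^{(\ldots)}$ — setting $(-1)^{\sigma_0}:=\Ad_\Gamma(S_\emptyset)/S_\emptyset$ gives $\Ad_\Gamma(S_\mu)=(-1)^{|\mu|+\sigma_0}S_\mu$ with $\sigma_0$ uniform. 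The relation $\Gamma_0 S_\mu = (-1)^{|\mu|}S_\mu$ follows because $\Gamma_0 \hat E_{\mu,\emptyset} = \Gamma_0 E_{\mu,\emptyset}(\Gamma_0\Gamma)^{|\mu|}$ and $\Gamma_0 E_{\mu,\emptyset} = (-1)^{|\mu|}E_{\mu,\emptyset}\Gamma_0$ from the definition $\Gamma_0=\sum_\nu(-1)^{|\nu|}E_{\nu,\nu}$, combined with $\Gamma_0 S_\emptyset = S_\emptyset$ (which holds since $S_\emptyset S_\emptyset^*=\hat E_{\emptyset,\emptyset}=E_{\emptyset,\emptyset}$ and $\Gamma_0 E_{\emptyset,\emptyset}=E_{\emptyset,\emptyset}$, and $S_\emptyset^*\Gamma_0 S_\emptyset \in\calB(\calH)'=\bbC\unit$ pins the scalar to $1$ by evaluating the range). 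The main obstacle is bookkeeping the scalar ambiguities consistently — making sure the single global normalization via $S_\emptyset$ simultaneously yields $S_\mu S_\nu^* = \hat E_{\mu,\nu}$ exactly (not just up to phase) and the clean parity formula — but this is routine once the reference index is fixed.
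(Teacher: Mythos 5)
Your construction follows essentially the same route as the paper: use Wigner's theorem on the spatial isomorphisms $\rho_\mu$ of Lemma \ref{lem:endo_and_commutant_even}(iii) to produce isometries, observe that products of the form $w_\mu w_\nu^*$ lie in $\rho(\calB(\calH))'$ and therefore are $\bbT$-multiples of the matrix units $\hat E_{\mu,\nu}$, and then normalize by fixing the reference index $\emptyset$. Your normalization $S_\mu := \hat E_{\mu,\emptyset}S_\emptyset$ is a clean repackaging of the paper's phase-cocycle argument ($S_\mu = c_{\mu_0\mu}w_\mu$), and it does the job: one checks directly that the redefined $S_\mu$ still implement $\rho_\mu$ because $\hat E_{\mu,\emptyset}\in\rho(\calB(\calH))'$.

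There is, however, a genuine gap in your parity argument. You write that ``since $E_{\mu,\mu}$ is even, $\Ad_\Gamma(S_\mu)$ has the same range and source projections as $S_\mu$, hence $\Ad_\Gamma(S_\mu) = \lambda_\mu S_\mu$ for some $\lambda_\mu\in\bbT$.'' This implication is false: two isometries from $\calH$ onto $E_{\mu,\mu}\calH$ with the same source and range projections differ by an arbitrary unitary of $\calB(\calH)$, not by a scalar. (Take $S_\emptyset = \unit$ and any non-scalar unitary to see this.) The scalar relation requires the additional input that $\Ad_\Gamma(S_\mu)$ \emph{also implements $\rho_\mu$}: since $\rho$ commutes with $\Ad_\Gamma$ and $E_{\mu,\mu}$ is even, one has $\Gamma S_\mu x S_\mu^*\Gamma = \Gamma\rho(x)E_{\mu,\mu}\Gamma = \rho(\Gamma x\Gamma)E_{\mu,\mu} = S_\mu\Gamma x\Gamma S_\mu^*$ for all $x$, from which $S_\mu^*\Gamma S_\mu\Gamma \in \calB(\calH)'=\bbC\unit$ and hence $\Ad_\Gamma(S_\mu)=\lambda_\mu S_\mu$. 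Your subsequent bookkeeping (reducing the parity of every $S_\mu$ to that of $S_\emptyset$ via $S_\mu = E_{\mu,\emptyset}(\Gamma_0\Gamma)^{|\mu|}S_\emptyset$) is fine and does produce the uniform $\sigma_0$, but it still rests on the homogeneity of $S_\emptyset$, for which the same corrected argument is needed. Note also that $\Gamma_0 S_\mu = \Gamma_0 E_{\mu,\mu}S_\mu = (-1)^{|\mu|}S_\mu$ follows in one line and does not require the detour through $S_\emptyset$.
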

\begin{proof}
By part (iii) of Lemma \ref{lem:endo_and_commutant_even}, 
$\rho_\mu$ in \eqref{rmu1} is a $*$-isomorphism $\calB(\calH) \xrightarrow{\rho_\mu} \calB( {E}_{\mu,\mu} \calH)$. 
Therefore we can apply Wigner's Theorem to obtain a unitary 
$w_\mu: \calH \to {E}_{\mu,\mu} \calH$ such that 
$ \rho_\mu = \mathrm{Ad}_{w_\mu}$.
Note that
$$
    w_\mu^* w_\nu = w_\mu^* E_{\mu,\mu} E_{\nu,\nu} w_\nu =  \delta_{\mu,\nu} \, \unit_{\calH}, \quad
   \mu,\,\nu \in \calP.  
$$
We also see that, because $\sum_\mu E_{\mu,\mu} = \unit$,
\begin{align*}
 \rho(x) = \sum_\mu \rho(x) E_{\mu,\mu} = \sum_\mu \rho_\mu(x) = \sum_\mu w_\mu x w_\mu^*,\quad
 x\in\bh.
\end{align*}
We use the above property to compute that for any $x \in\calB(\calH)$,
\begin{align*}
  w_\mu w_\nu^* \rho(x) &= w_\mu w_\nu^* \big( \sum_\lambda w_\lambda x w_\lambda^* \big) 
   = w_\mu x w_\nu^* 
   = \big( \sum_{\lambda} w_\lambda x w_\lambda^*\big) w_\mu w_\nu^* 
   = \rho(x) w_\mu w_\nu^*.
\end{align*}
Therefore $w_\mu w_\nu^* \in \rho(\calB(\calH))'$  for any $\mu, \nu \in \calP$.

Summarizing our results so far, we have obtained a collection of operators 
$\{w_\mu w_\nu^*\}_{\mu,\nu \in \calP}$ in 
$\rho(\calB(\calH))'$ such that 
\begin{equation} \label{eq:Smu_prop}
  \hat{E}_{\mu,\mu} w_\mu w_\nu^* \hat{E}_{\nu,\nu} = w_\mu w_\nu^* .
\end{equation}
From \eqref{eq:Smu_prop}, 
and (ii) of Lemma \ref{lem:endo_and_commutant_even},
that there is some 
$c_{\mu\nu} \in \C$ such that 
$$
   w_\mu w_\nu^* = c_{\mu \nu} \hat{E}_{\mu,\nu}.
$$
Note that $c_{\mu\nu}=\overline{c_{\nu\mu}}$.
Because of the definition, we have $w_\mu w_\mu^*=\hat E_{\mu\mu}$ and
we see that $c_{\mu\mu} = 1$. 
On the other hand, because of $w_\nu^* w_\nu=\unit_{\caH}$,
we have
$$
  c_{\mu \lambda} \hat{E}_{\mu,\lambda} = w_\mu w_\lambda^* 
  = w_\mu w_\nu^* w_\nu w_\lambda^* = c_{\mu \nu} c_{\nu \lambda} \hat{E}_{\mu,\lambda}
$$
and so $c_{\mu \lambda} = c_{\mu \nu} c_{\nu \lambda}$. 
In particular, $1=c_{\mu \mu} = c_{\mu \nu} c_{\nu \mu}=|c_{\mu\nu}|^2$ and so
 $c_{\mu\nu}\in\bbT$.
Now setting $\mu_0 :=\emptyset\in \calP$ and define $S_\mu = c_{\mu_0 \mu} w_\mu$ for 
every $\mu \in \calP$. Then because of the above properties of $c_{\mu\nu}$,
the collection $\{S_\mu\}_{\mu\in \calP}$ has the same algebraic 
properties as $\{w_\mu\}$ as well as that 
$S_\mu S_\nu^* = \hat{E}_{\mu,\nu}$ as required.
Hence we obtain \eqref{sprop}.

Next, we recall the grading operator $\Gamma_0 = \sum_\mu (-1)^{|\mu|} E_{\mu,\mu}$ of 
$\calM$.  Because $S_\mu$ is an isometry onto $E_{\mu,\mu} \calH$
$$
\Gamma_0 S_\mu = \Gamma_0  E_{\mu,\mu} S_\mu = (-1)^{|\mu|} E_{\mu,\mu} S_\mu = (-1)^{|\mu|} S_\mu.
$$
We now consider the grading of $S_\mu$, $\mathrm{Ad}_{\Gamma}(S_\mu)$. We compute 
that for any $x \in\calB(\calH)$,
\begin{align*}
  \Gamma S_\mu x S_\mu^* \Gamma &= \Gamma \rho(x) E_{\mu,\mu} \Gamma 
  = \Gamma \rho(x) \Gamma E_{\mu,\mu} = \rho( \Gamma x \Gamma) E_{\mu,\mu} = 
  S_\mu \Gamma x \Gamma S_\mu^*
\end{align*}
as $E_{\mu,\mu}$ is even and $\rho$ commutes with the grading. 
Multiplying $\Gamma S_\mu^*$ from the left and
$\Gamma S_\mu$ from the right of the equation,  we see 
that $\Gamma S_\mu^* \Gamma S_\mu \in \calB(\calH)' = \C \unit_\calH$. 
Note that  $\Gamma S_\mu^* \Gamma S_\mu$ is unitary because
$\Ad_\Gamma(E_{\mu\mu})=E_{\mu\mu}$.
So 
$S_\mu^* \Gamma S_\mu= e^{i\varphi} \Gamma$ with some $e^{i\varphi}\in \bbT$. 
Multiplying $S_\mu$ from the left,  and by $\Gamma$ from the right,
we obtain
$\Gamma S_\mu\Gamma=E_{\mu\mu}\Gamma S_\mu\Gamma= S_\mu S_\mu^* \Gamma S_\mu\Gamma=e^{i\varphi} S_\mu$.
But because $(\Ad_{\Gamma})^2=\id$, $(e^{i\varphi})^2=1$ and 
 $\Ad_{\Gamma}(S_\mu) = (-1)^{b_\mu} S_\mu$ with some $b_\mu=0,1$.

Let us further examine the grading of the operator $S_\mu$. 
We compute that
\begin{align*}
    \Gamma \hat{E}_{\mu,\nu} \Gamma 
  &= \Gamma E_{\mu,\nu} (\Gamma_0 \Gamma)^{|\mu|+|\nu|} \Gamma 
  = \Gamma E_{\mu,\nu} \Gamma (\Gamma_0 \Gamma)^{|\mu|+|\nu|}  \\
  &= (-1)^{|\mu|+|\nu|} E_{\mu,\nu}  (\Gamma_0 \Gamma)^{|\mu|+|\nu|}  
  = (-1)^{|\mu|+|\nu|} \hat{E}_{\mu,\nu}
\end{align*}
while we also find 
\begin{align*}
 \Gamma \hat{E}_{\mu,\nu} \Gamma  &= \Gamma S_\mu \Gamma \Gamma S_\nu^* \Gamma 
 = (-1)^{b_\mu} (-1)^{b_\nu} S_\mu S_\nu^* 
  = (-1)^{b_\mu+ b_\nu} \hat{E}_{\mu,\nu}.
\end{align*}
Therefore $|\mu|+|\nu| = b_\mu+b_\nu \in  \Z_2$. 
By setting $\mu_0:=\emptyset\in \calP$ and $\sigma_0:=b_{\mu_0}$,
we have that $\Gamma S_\mu \Gamma$ = $(-1)^{|\mu|+\sigma_0}S_\mu$ for 
all $\mu \in \calP$.
\end{proof}
\begin{lem}[Second half of Proposition \ref{cuntzpropi}] \label{lem:Even_Cuntz2}
Consider the setting of Proposition \ref{cuntzpropi}.
For $S_\mu$ of Lemma \ref{lemma:Even_cuntz_property},
set $B_\mu:=(\Gamma_0\Gamma)^{|\mu|} S_\mu$, for $\mu\in\caP$.
Then $B_\nu^* B_\mu = \delta_{\mu,\nu} \unit$,
$B_\mu B_\nu^*= E_{\mu,\nu}$,
\begin{align}
  \rho(x) 
  = \sum_{\mu\in\calP} \mathrm{Ad}_{B_\mu} \circ \mathrm{Ad}_{\Gamma^{|\mu|}} (x)
  &=\sum_{\mu\in\calP} \mathrm{Ad}_{\Gamma^{|\mu|}}\circ\mathrm{Ad}_{B_\mu} ,\quad
  x\in\bh,  \nonumber \\
 \label{eq:rho_and_E_using_V_even_again}
  E_{\mu_0,\nu_0} \rho(E_{\mu_1,\nu_1}) \cdots \rho^N (E_{\mu_N,\mu_N}) &= 
  (-1)^{\sum\limits_{k=1}^N (|\mu_k|+|\nu_k|) \sum\limits_{j=0}^{k-1} |\nu_j|} 
  B_{\mu_0} \cdots B_{\mu_N} B_{\nu_N}^* \cdots B_{\nu_0}^*,
\end{align}
for all $N\in\bbN\cup\{0\}$ and $\mu_0,\ldots\mu_N,\nu_0,\ldots,\nu_N\in\caP$.
The operators $B_\mu$ have 
  homogeneous parity and are such that 
  $\Ad_\Gamma\lmk B_\mu\rmk = (-1)^{|\mu|+\sigma_0}B_\mu$,
  with the same $\sigma_0\in\{0,1\}$
  as in Lemma
  \ref{lemma:Even_cuntz_property}.
If there are isometries $\{T_\mu\}_{\mu \in \calP}$ satisfying \eqref{buni}, then 
there is some $c\in\bbT$ such that $T_\mu=cB_\mu$, for all $\mu\in \caP$.
\end{lem}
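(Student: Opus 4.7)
The plan is to establish each of the claims in the lemma in turn, with the product formula \eqref{eq:rho_and_E_using_V_even_again} being the main technical ingredient.

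For the Cuntz relations, I would first note that $\Gamma$ and $\Gamma_0$ commute, since $\Gamma_0 \in \caM^{(0)}$ and $\mathrm{Ad}_\Gamma$ restricts to the identity on $\caM^{(0)}$; consequently $(\Gamma_0\Gamma)^2 = \unit$ and $\Gamma_0\Gamma \in \caM'$. Then $B_\nu^* B_\mu = S_\nu^*(\Gamma_0\Gamma)^{|\mu|+|\nu|}S_\mu$ splits by parity: when $|\mu|+|\nu|$ is even it reduces to $S_\nu^* S_\mu = \delta_{\mu,\nu}\unit$; when $|\mu|+|\nu|$ is odd it forces $\mu\neq\nu$ and vanishes after moving $\Gamma_0\Gamma$ across $S_\mu$ using $\Gamma_0 S_\mu = (-1)^{|\mu|}S_\mu$ and $\Gamma S_\mu = (-1)^{|\mu|+\sigma_0}S_\mu\Gamma$. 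The identity $B_\mu B_\nu^* = (\Gamma_0\Gamma)^{|\mu|}\hat{E}_{\mu,\nu}(\Gamma_0\Gamma)^{|\nu|} = E_{\mu,\nu}$ is immediate from the same facts. The two formulas for $\rho$ follow by substituting $B_\mu = (\Gamma_0\Gamma)^{|\mu|}S_\mu$ and using these commutation relations to reduce to $\rho(x) = \sum_\mu S_\mu x S_\mu^*$, and the grading $\mathrm{Ad}_\Gamma(B_\mu) = (-1)^{|\mu|+\sigma_0}B_\mu$ is immediate from the commutation of $\Gamma$ and $\Gamma_0$ together with the grading of $S_\mu$.

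The core of the proof is the product formula. My key auxiliary identity will be the graded intertwining
\[
B_\mu y = (-1)^{|\mu|\partial y}\rho(y)B_\mu, \qquad y B_\mu^* = (-1)^{|\mu|\partial y} B_\mu^* \rho(y)
\]
for homogeneous $y \in \calB(\calH)$. To derive it, I first establish the unconditional identity $S_\mu y = \rho(y) S_\mu$, which is a direct consequence of Lemma \ref{lemma:Even_cuntz_property} via $S_\mu y = S_\mu y S_\mu^* S_\mu = \rho(y) E_{\mu,\mu} S_\mu = \rho(y) S_\mu$. Then I compute $\rho(\Gamma) = \sum_\mu S_\mu \Gamma S_\mu^* = (-1)^{\sigma_0}\Gamma_0\Gamma$ using the grading of $S_\mu$, which yields $\mathrm{Ad}_{(\Gamma_0\Gamma)}\circ \rho = \rho\circ \mathrm{Ad}_\Gamma$. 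Combining these with the definition of $B_\mu$ gives $B_\mu y = (\Gamma_0\Gamma)^{|\mu|}\rho(y)S_\mu = (-1)^{|\mu|\partial y}\rho(y)(\Gamma_0\Gamma)^{|\mu|}S_\mu = (-1)^{|\mu|\partial y}\rho(y)B_\mu$ for homogeneous $y$; the second intertwining follows by taking adjoints.

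With the graded intertwining in hand, I prove the product formula by induction on $N$. The base case $N=0$ is $B_{\mu_0}B_{\nu_0}^* = E_{\mu_0,\nu_0}$. For the inductive step, I rewrite
\[
B_{\mu_0}\cdots B_{\mu_N}B_{\nu_N}^*\cdots B_{\nu_0}^*
= B_{\mu_0}\cdots B_{\mu_{N-1}}\cdot E_{\mu_N,\nu_N}\cdot B_{\nu_{N-1}}^*\cdots B_{\nu_0}^*
\]
and move the homogeneous element $E_{\mu_N,\nu_N}$ of degree $|\mu_N|+|\nu_N|$ rightward past each $B_{\nu_j}^*$ in turn, picking up a sign $(-1)^{|\nu_j|(|\mu_N|+|\nu_N|)}$ at each step and converting $E_{\mu_N,\nu_N}$ successively to $\rho(E_{\mu_N,\nu_N}),\ldots,\rho^N(E_{\mu_N,\nu_N})$. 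The total accumulated sign from this procedure is $(-1)^{(|\mu_N|+|\nu_N|)\sum_{j=0}^{N-1}|\nu_j|}$, and applying the inductive hypothesis to the remaining product $B_{\mu_0}\cdots B_{\mu_{N-1}}B_{\nu_{N-1}}^*\cdots B_{\nu_0}^*$ produces exactly the double sum claimed in the sign formula. The main obstacle is precisely this bookkeeping of signs: verifying that each commutation increment combines cleanly with the previous inductive sign to yield $\sum_{k=1}^N(|\mu_k|+|\nu_k|)\sum_{j=0}^{k-1}|\nu_j|$.

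For the uniqueness, given isometries $\{T_\mu\}_{\mu\in\calP}$ satisfying \eqref{buni}, the orthogonality $T_\mu^*T_\nu = \delta_{\mu,\nu}\unit$ applied to $\rho(x) = \sum_\mu T_\mu\Gamma^{|\mu|}x\Gamma^{|\mu|}T_\mu^*$ gives $\rho(x)E_{\nu,\nu} = T_\nu\Gamma^{|\nu|}x\Gamma^{|\nu|}T_\nu^*$, and this equals $B_\nu\Gamma^{|\nu|}x\Gamma^{|\nu|}B_\nu^*$ by the corresponding identity for $B_\nu$. Multiplying on the left by $B_\nu^*$ and on the right by $B_\nu$ yields $B_\nu^*T_\nu \cdot y \cdot T_\nu^*B_\nu = y$ for every $y \in \calB(\calH)$. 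Since both $(B_\nu^*T_\nu)(T_\nu^*B_\nu) = B_\nu^*E_{\nu,\nu}B_\nu = \unit$ and $(T_\nu^*B_\nu)(B_\nu^*T_\nu) = T_\nu^*E_{\nu,\nu}T_\nu = \unit$, the operator $B_\nu^*T_\nu$ is unitary and commutes with all of $\calB(\calH)$, so $B_\nu^*T_\nu = c_\nu\unit$ for some $c_\nu \in \bbT$ and hence $T_\nu = c_\nu B_\nu$. The identity $T_\mu T_\nu^* = E_{\mu,\nu} = B_\mu B_\nu^*$ then forces $c_\mu\bar{c}_\nu = 1$, so $c_\mu$ is independent of $\mu$ and provides the required scalar $c\in\bbT$.
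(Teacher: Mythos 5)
Your proof is correct, and the central step uses a genuinely different mechanism from the paper. For the product formula \eqref{eq:rho_and_E_using_V_even_again}, the paper derives the multi-index iterated formula
\begin{equation*}
\rho^N(x)=\sum_{\lambda_0,\ldots,\lambda_{N-1}}\mathrm{Ad}_{B_{\lambda_0}\cdots B_{\lambda_{N-1}}}\circ\mathrm{Ad}_{\Gamma^{|\lambda_0|+\cdots+|\lambda_{N-1}|}}(x)
\end{equation*}
together with the expansion $\rho(E_{\mu,\nu})=\sum_\lambda(-1)^{|\lambda|(|\mu|+|\nu|)}B_\lambda B_\mu B_\nu^*B_\lambda^*$, and then collapses the sums in the inductive step by orthogonality after absorbing the rightmost factor. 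You instead extract a clean structural identity, $B_\mu y=(-1)^{|\mu|\partial y}\rho(y)B_\mu$ (equivalently $yB_\mu^*=(-1)^{|\mu|\partial y}B_\mu^*\rho(y)$), via $S_\mu y=\rho(y)S_\mu$ and the computation $\rho(\Gamma)=(-1)^{\sigma_0}\Gamma_0\Gamma$; the induction then proceeds by inserting $E_{\mu_N,\nu_N}=B_{\mu_N}B_{\nu_N}^*$ in the middle of the word and pushing it out through the right half, accumulating the sign $(-1)^{(|\mu_N|+|\nu_N|)\sum_{j=0}^{N-1}|\nu_j|}$ one factor at a time. This avoids multi-index bookkeeping entirely and makes the sign accumulation transparent, so it is arguably the cleaner argument; the paper's approach is closer in spirit to how the $\rho^N$-expansion is also reused in the proof of Theorem \ref{cuntzthm0}. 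Your verification of the Cuntz relations and the uniqueness argument (comparing the two expressions for $\rho(x)E_{\nu,\nu}$ and deducing $B_\nu^*T_\nu\in\bbC\unit$) differ only cosmetically from the paper's, which multiplies by $T_\nu^*$ and $B_\nu$ directly on the two expansions of $\rho(x)$; both are sound.
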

\begin{proof}
From Lemma \ref{lemma:Even_cuntz_property},
we check that 
\begin{align}\label{bbn}
  B_\mu^* B_\nu = S_\mu^* (\Gamma_0 \Gamma)^{|\mu|+|\nu|} S_\nu 
  = S_\mu^* S_\nu  \Gamma^{|\mu|+|\nu|}(-1)^{(|\nu|+\sigma_0)(|\mu|+|\nu|)} (-1)^{|\nu|(|\mu|+|\nu|)} 
  = \delta_{\mu,\nu} \one.
\end{align}
We also have from Lemma \ref{lemma:Even_cuntz_property} that 
\begin{align}\label{bbe}
  B_\mu B_\nu^* &= (\Gamma_0 \Gamma)^{|\mu|} S_\mu S_\nu^* (\Gamma_0 \Gamma)^{|\nu|} 
  = (\Gamma_0 \Gamma)^{|\mu|} E_{\mu,\nu} (\Gamma_0\Gamma)^{|\mu|+|\nu|} (\Gamma_0 \Gamma)^{|\nu|} 
  = E_{\mu,\nu}
\end{align}
because $\Gamma_0 \Gamma$ commutes with $\caM$.
Because $S_\mu$ has homogenous parity and $\Gamma_0 \Gamma$ is even,
$B_\mu$ has the same homogeneous parity as $S_\mu$. 
In particular, 
  $\Ad_\Gamma\lmk B_\mu\rmk = (-1)^{|\mu|+\sigma_0}B_\mu$,
   with the same $\sigma_0\in\{0,1\}$
  as in Lemma
  \ref{lemma:Even_cuntz_property}.
This implies that the 
endomorphism $\mathrm{Ad}_{B_\mu}$ respects the grading 
on $\calB(\caH)$, i.e., $\Ad_\Gamma\circ{\Ad}_{B_\mu}={\Ad}_{B_\mu}\circ\Ad_\Gamma$. 
Furthermore, using that $\Gamma_0 S_\mu = (-1)^{|\mu|}S_\mu$, 
$\mathrm{Ad}_{S_\mu} = \mathrm{Ad}_{\Gamma^{|\mu|}B_\mu}= \mathrm{Ad}_{B_\mu \Gamma^{|\mu|}} $. 
We therefore see that for $x\in\calB(\caH)$
\begin{align*}
  \rho(x) &= \sum_{\mu \in\calP} S_\mu x S_\mu^* 
  = \sum_{\mu\in\calP} \mathrm{Ad}_{B_\mu} \circ \mathrm{Ad}_{\Gamma^{|\mu|}} (x).
\end{align*}
A simple induction argument using that $\mathrm{Ad}_{B_\mu}$ commutes with 
$\mathrm{Ad}_\Gamma$ gives that 
\begin{equation}  \label{eq:iterated_rho_even_V}
  \rho^N(x) = \sum_{\lambda_0,\ldots,\lambda_{N-1} \in\calP} \!\! 
  \mathrm{Ad}_{B_{\lambda_0}\cdots B_{\lambda_{N-1}}} \circ 
   \mathrm{Ad}_{\Gamma^{|\lambda_0|+\cdots |\lambda_{N-1}|}} (x).
\end{equation}
We now consider $\rho(E_{\mu,\nu})$. Recalling \eqref{bbe} and that 
$\mathrm{Ad}_\Gamma (E_{\mu,\nu}) = (-1)^{|\mu|+|\nu|} E_{\mu,\nu}$, we see that 
$$
 \rho(E_{\mu,\nu}) = \sum_\lambda B_\lambda \Gamma^{|\lambda|} E_{\mu,\nu} \Gamma^{|\lambda|} B_\lambda^* 
 = \sum_\lambda (-1)^{|\lambda|(|\mu|+|\nu|)} B_\lambda B_\mu B_\nu^* B_\lambda^*.
$$
From this, \eqref{bbe} and \eqref{bbn}, we have
$$
  E_{\mu_0,\nu_0} \rho( E_{\mu_1,\nu_1} )
  = B_{\mu_0} B_{\nu_0}^* \sum_\lambda (-1)^{|\lambda|(|\mu_1|+|\nu_1|)} 
    B_\lambda B_{\mu_1} B_{\nu_1}^* B_\lambda^* 
  = (-1)^{|\nu_0|(|\mu_1|+|\nu_1|)} B_{\mu_0}  B_{\mu_1} B_{\nu_1}^* B_{\nu_0}^*.
$$
This proves Equation \eqref{eq:rho_and_E_using_V_even_again} in the case of $N=1$. We 
now assume the equality is true for $N$ and consider $N+1$. Using 
Equation \eqref{bbn}, \eqref{bbe}, \eqref{eq:iterated_rho_even_V}, we compute that
\begin{align*}
  &E_{\mu_0,\nu_0} \rho(E_{\mu_1,\nu_1}) \cdots \rho^N(E_{\mu_N,\nu_N}) \rho^{N+1}(E_{\mu_{N+1},\nu_{N+1}}) \\
  &=
    (-1)^{\sum\limits_{k=1}^N (|\mu_k|+|\nu_k|) \sum\limits_{j=0}^{k-1} |\nu_j|} 
  B_{\mu_0} \cdots B_{\mu_N} B_{\nu_N}^* \cdots B_{\nu_0}^* \, \rho^{N+1}(E_{\mu_{N+1},\nu_{N+1}})  \\
  &= (-1)^{\sum\limits_{k=1}^N (|\mu_k|+|\nu_k|) \sum\limits_{j=0}^{k-1} |\nu_j|} 
  B_{\mu_0} \cdots B_{\mu_N} B_{\nu_N}^* \cdots B_{\nu_0}^* 
  \Big( \sum_{\lambda_0,\ldots,\lambda_N} \mathrm{Ad}_{B_{\lambda_0}\cdots B_{\lambda_N}} 
  \circ \mathrm{Ad}_{\Gamma^{\sum\limits_{j=0}^N |\lambda_j|}} ( E_{\mu_{N+1},\nu_{N+1}}) \Big) \\
  &= (-1)^{\sum\limits_{k=1}^N (|\mu_k|+|\nu_k|) \sum\limits_{j=0}^{k-1} |\nu_j|} 
  B_{\mu_0} \cdots B_{\mu_N}     
  \big( (-1)^{(|\mu_{N+1}|+|\nu_{N+1}|)\sum\limits_{j=0}^N |\nu_j|} B_{\mu_{N+1}}B_{\nu_{N+1}}^*\big)  
    B_{\nu_N}^* \cdots B_{\nu_0}^* \\
  &= (-1)^{\sum\limits_{k=1}^{N+1} (|\mu_k|+|\nu_k|) \sum\limits_{j=0}^{k-1} |\nu_j|} 
    B_{\mu_0} \cdots B_{\mu_N}B_{\mu_{N+1}}B_{\nu_{N+1}}^* B_{\nu_N}^* \cdots B_{\nu_0}^* 
\end{align*}
as required.

To show the last statement, suppose $\{{T}_\mu\}_{\mu\in\calP} \subset \calB(\caH)$
satisfy \eqref{buni}.
 Because
  \begin{align}
  \sum_{\lambda\in\calP} \mathrm{Ad}_{T_\lambda} \circ \mathrm{Ad}_{\Gamma^{|\lambda|}} (x)
  =\rho(x) 
  = \sum_{\lambda\in\calP} \mathrm{Ad}_{B_\lambda} \circ \mathrm{Ad}_{\Gamma^{|\lambda|}} (x),\quad
  x\in\bh,
\end{align}
multiplying $T_\nu^*$ from the left and by $B_\nu$ from the right,
we obtain
 \begin{align}
   \mathrm{Ad}_{\Gamma^{|\nu|}} (x)\cdot T_\nu^*B_\nu
  = T_\nu^* B_\nu\cdot \mathrm{Ad}_{\Gamma^{|\nu|}} (x),\quad
  x\in\bh.
\end{align}
Hence we obtain $T_\nu^* B_\nu\in \bbC\unit_{\caH}$, i.e., we have
$T_\nu^* B_\nu=c_\mu\unit_{\caH}$ for some $c_\mu\in\bbC$.
We then have
\begin{align}
B_\nu=
E_{\nu\nu}B_{\nu}=T_\nu T_\nu^*B_\nu
=c_\nu T_\nu.
\end{align}
By $B_\nu^* B_\nu=T_\nu^* T_\nu=\unit_{\caH}$,
we see that $c_\nu\in\bbT$.
Furthermore, 
from $B_\mu B_\nu^*=T_\mu T_\nu^*=E_{\mu\nu}$,
we see that $c_\mu=c_\nu=:c\in\bbT$.
\end{proof}
Lemma \ref{lemma:Even_cuntz_property} and \ref{lem:Even_Cuntz2} 
complete the proof of Proposition \ref{cuntzpropi}. 
We are ready to show Theorem \ref{cuntzthm0}.
\begin{proofof}[Theorem \ref{cuntzthm0}]
We fix a $W^*$-$(G,\mpp)$-dynamical system 
$(\caR_{0,\caK},\Ad_{\Gamma_\caK},\Ad_{V_g})\in\caS_{0}$ 
that is equivalent to $(\pi_\omega(\al_R)'', \Ad_{\Gamma_\omega}, \hat\alpha_\omega)$ and 
the endomorphism 
$\rho$ of Lemma \ref{iorho}.
Then the Hilbert space $\caK\otimes\cct$,  self-adjoint unitary $\Gamma_\caK$,
finite type I factor $\iota\circ\pi_\omega(\al_{\{0\}})$
with matrix units
$\{\iota\circ\pi_\omega\circ \big( E_{\mu,\nu}^{(0)}\big) \}_{\mu,\nu\in\calP}
\subset \calB(\caK \otimes \bbC^2)$
and $\rho$ satisfy the hypothesis of Proposition \ref{cuntzpropi}.
Applying Proposition \ref{cuntzpropi}, we obtain the isometries $\{B_\mu\}$ 
such that $B_\mu^* B_\nu = \delta_{\mu,\nu}\unit$ and 
which satisfy \eqref{bimp} and \eqref{bbbo} from the statement of the Theorem. 

To show  \eqref{sat},
set $\Gamma_0 := \iota\circ\pi_\omega \big( \mathfrak\Gamma(-\unit) \big)=
\sum_\mu (-1)^{|\mu|} \iota\circ\pi_\omega \big( E_{\mu,\mu}^{(0)} \big)$.
We claim for a homogeneous $x\in \caR_{0,\caK}$, 
and $N\in\bbN$ that
\begin{align}\label{pn}
T^N_{\bf B}(x) = {\Gamma}_0^{{\partial x}} \rho({\Gamma}_0^{{\partial x}} ) \cdots \rho^{N-1}({\Gamma}_0^{{\partial x}}) \rho^N(x).
\end{align}
First set $\Gamma_1 := \sum_\mu \Gamma_{\caK}^{|\mu|} \iota\circ\pi_\omega \big( E_{\mu,\mu}^{(0)} \big)$, 
which is a self-adjoint unitary.
Because of \eqref{bbbo} with $N=0$, we have
$\iota\circ\pi_\omega \big(E_{\mu\mu}^{(0)}\big)=B_\mu B_\mu^*$.
Therefore, we have
 \begin{align}
  \rho\circ\iota\circ\pi_\omega(A)
  = \sum_{\mu\in\calP} \mathrm{Ad}_{\Gamma_{\caK}^{|\mu|}B_\mu}\circ\iota\circ\pi_\omega (A)
  =\Ad_{\Gamma_1}\circ T_{\bf B}\circ\iota\circ\pi_\omega (A),\quad
  A\in\al_R.
\end{align}
Hence we obtain for any homogeneous $x\in\caR_{0,\caK}$, 
\begin{align}\label{tbx}
T_{\bf B}(x)=\Ad_{\Gamma_1}\circ \rho(x)
&=\sum_{\mu,\nu}
\Gamma_{\caK}^{|\mu|} 
\lmk\iota\circ\pi_\omega \big( E_{\mu,\mu}^{(0)} \big) \rmk
\rho(x)
\Gamma_{\caK}^{|\nu|}\lmk  \iota\circ\pi_\omega \big( E_{\nu,\nu}^{(0)} \big) \rmk \\
&=\sum_{\mu}\iota\circ\pi_\omega\big( E_{\mu,\mu}^{(0)}\big)
\Gamma_{\caK}^{|\mu|} 
\rho(x)
\Gamma_{\caK}^{|\mu|}. \nonumber \\
&=\sum_{\mu}\iota\circ\pi_\omega \big( E_{\mu,\mu}^{(0)}\big)
\rho\circ\Ad_{\Gamma_{\caK}^{|\mu|}}(x). \nonumber \\
&=\sum_{\mu}\iota\circ\pi_\omega \big( E_{\mu,\mu}^{(0)} \big)
(-1)^{|\mu|\partial x}
\rho(x)
=\Gamma_0^{\partial x}\rho(x),  \nonumber
\end{align}
where in the third equality we used that $ \iota\circ\pi_\omega \big( E_{\nu,\nu}^{(0)} \big)$ commutes
with $\Gamma_\caK$ and elements from $\rho(\caR_{0,\caK})$.
This proves  \eqref{pn} for the case $N=1$.
Now we proceed by induction and 
suppose that \eqref{pn} holds for $N$.
Then using \eqref{tbx} and the induction assumption, for any homogeneous $x\in\caR_{0,\caK}$, 
\begin{align}
T^{N+1}_{\bf B}(x) 
&= T_{\bf B}\lmk
{\Gamma}_0^{{\partial x}} \rho({\Gamma}_0^{{\partial x}} ) \cdots \rho^{N-1}({\Gamma}_0^{{\partial x}}) \rho^N(x)\rmk\nonumber\\
&=\Gamma_0^{\partial \lmk{\Gamma}_0^{{\partial x}} \rho({\Gamma}_0^{{\partial x}} ) \cdots \rho^{N-1}({\Gamma}_0^{{\partial x}}) \rho^N(x)\rmk } 
\rho( {\Gamma}_0^{{\partial x}} )
\rho^2({\Gamma}_0^{{\partial x}} ) \cdots \rho^{N}({\Gamma}_0^{{\partial x}}) \rho^{N+1}(x) \nonumber \\
&=
\Gamma_0^{\partial x }\rho({\Gamma}_0^{{\partial x}} )
\rho^2({\Gamma}_0^{{\partial x}} ) \cdots \rho^{N}({\Gamma}_0^{{\partial x}}) \rho^{N+1}(x).
\end{align}
Hence  \eqref{pn} holds for $N+1$ and proves the claim.

Now we show \eqref{sat}.
Because $\kappa_\omega = 0$, $\pi_\omega(\al_R)''$ is a factor. 
Therefore, for 
 any homogeneous $x\in\po(\al_R)''$, the sequence 
\begin{align}
T^N_{\bf B}\circ\iota(x)
&= {\Gamma}_0^{{\partial x}} \rho({\Gamma}_0^{{\partial x}} ) \cdots \rho^{N-1}({\Gamma}_0^{{\partial x}}) \rho^N\circ\iota(x) \nonumber\\
&=\iota\circ\lmk
\pi_\omega\lmk
{\mathfrak\Gamma}(-\unit)^{\partial x}
\beta_{S_1}\big({\mathfrak\Gamma}(-\unit)^{\partial x}\big) \cdots
\beta_{S_{N-1}}\big({\mathfrak\Gamma}(-\unit)^{\partial x}\big) \rmk \hat \beta_{S_N}(x)\rmk
\end{align}
converges to
$\braket{\Oo}{x\Oo}\unit_{\caK\otimes\cct}$ in the $\sigma$-weak topology
by Lemma \ref{lem:trans_weak_conv_to_state}.
This proves \eqref{sat}.

To prove \eqref{gvb} set
\begin{align}
 T_\nu := \sum_{\lambda\in\calP} \ol{ \langle \psi_\lambda , \mathfrak{\Gamma}(U_g)^* \psi_\nu \rangle }^{{\mpp(g)}}
    V_g B_\lambda V_g^* ,\quad \nu\in\caP.
   \end{align}
 Recall that  for $c\in\bbC$, $\overline{c}^{\mpp(g)} = c$ for $\mpp(g)=0$ 
 and  $\bar c$ if ${\mpp}(g)=1$. 
     We claim that $\{T_\mu\}_{\mu\in\caP}$ satisfies \eqref{buni}
     with $E_{\mu\nu}$ and $\Gamma$ replaced by
     $\iota\circ\pi_\omega(E_{\mu\nu}^{(0)})$ and $\Gamma_\caK$ respectively.
We compute 
\begin{align*}
  T_\mu^* T_\nu &= \sum_{\lambda,\zeta} \ol{ \langle \psi_\lambda , \mathfrak{\Gamma}(U_g)^* \psi_\mu \rangle }^{{\mpp(g)}+1} 
    \ol{ \langle \psi_\zeta , \mathfrak{\Gamma}(U_g)^* \psi_\nu \rangle }^{{\mpp(g)}}  V_g B_\lambda^* B_\zeta V_g^* \\
   &= \sum_{\lambda} \ol{ \langle \mathfrak{\Gamma}(U_g)^* \psi_\mu , \psi_\lambda \rangle \, 
     \langle \psi_\lambda , \mathfrak{\Gamma}(U_g)^* \psi_\nu \rangle}^{{\mpp(g)}}\unit 
   = \delta_{\mu,\nu} \, \one.
\end{align*}
To see the second property of \eqref{buni},
note that
 \begin{align*}
   {\mathfrak{\Gamma}(U_g)}^* E_{\mu,\nu}^{(0)} {\mathfrak{\Gamma}(U_g)} &= 
   \sum_{\lambda,\zeta\in\calP} \ol{ \langle \psi_\nu, {\mathfrak{\Gamma}(U_g)} \psi_\zeta\rangle }^{\mpp(g)} \, 
   \langle \psi_\lambda, {\mathfrak{\Gamma}(U_g)}^* \psi_\mu \rangle E_{\lambda,\zeta}^{(0)}.
 \end{align*}
 Using this, we obtain
\begin{align*}
  T_\mu T_\nu^* &= \sum_{\lambda,\zeta} \ol{ \langle \psi_\lambda , \mathfrak{\Gamma}(U_g)^* \psi_\mu \rangle 
    \langle \mathfrak{\Gamma}(U_g)^* \psi_\nu, \psi_\zeta \rangle }^{{\mpp(g)}} 
    V_g B_\lambda B_\zeta^* V_g^* \\
    &= \sum_{\lambda,\zeta} \ol{ \langle \psi_\lambda , \mathfrak{\Gamma}(U_g)^* \psi_\mu \rangle 
    \langle \mathfrak{\Gamma}(U_g)^* \psi_\nu, \psi_\zeta \rangle }^{{\mpp(g)}} 
    \iota\circ \pi_\omega\big( \mathfrak{\Gamma}(U_g) E_{\lambda,\zeta}^{(0)}\mathfrak{\Gamma}(U_g)^* \big) \\
   &= \iota \circ \pi_\omega \circ \mathrm{Ad}_{\mathfrak{\Gamma}(U_g)} \Big( 
     \sum_{\lambda,\zeta} { \langle \psi_\lambda , \mathfrak{\Gamma}(U_g)^* \psi_\mu \rangle 
    \langle \mathfrak{\Gamma}(U_g)^* \psi_\nu, \psi_\zeta \rangle } E_{\lambda,\zeta}^{(0)} \Big) \\
  &= \iota \circ \pi_\omega \circ \mathrm{Ad}_{\mathfrak{\Gamma}(U_g)} \big( 
    \mathfrak{\Gamma}(U_g)^* E_{\mu,\nu}^{(0)} \mathfrak{\Gamma}(U_g) \big) 
   = \iota\circ \pi_\omega( E_{\mu,\nu}^{(0)}).
\end{align*}

To check the third property of \eqref{buni},
note that 
 $\langle \psi_\mu, {\mathfrak{\Gamma}(U_g)}^* \psi_\nu\rangle = 0$ if $|\mu|\neq |\nu|$, 
 because ${\mathfrak{\Gamma}(U_g)}$ commutes with $\mathfrak{\Gamma}(-\unit_{\bbC^d})$.
Using this, we check that 
\begin{align*}
  &\sum_{\mu\in\calP} \mathrm{Ad}_{T_\mu} \circ \mathrm{Ad}_{\Gamma_{\caK}^{|\mu|}} (\iota \circ \pi_\omega(A) )  \\
  &= \sum_{\mu} \sum_{\lambda,\zeta} \langle  \ol{\psi_\lambda, \mathfrak{\Gamma}(U_g)^* \psi_\mu\rangle }^{{\mpp(g)}} 
    \ol{ \langle \psi_\zeta, \mathfrak{\Gamma}(U_g)^* \psi_\mu \rangle }^{{\mpp(g)}+1}  
      \delta_{|\mu|,|\lambda|}
       V_g B_\lambda V_g^* \Gamma_{\caK}^{|\mu|} (\iota\circ \pi_\omega)(A) \Gamma_{\caK}^{|\mu|} 
        V_g B_\zeta^* V_g^* \\
  &= \sum_{\lambda,\zeta} \sum_\mu \ol{ \langle \psi_\lambda, \mathfrak{\Gamma}(U_g)^* \psi_\mu\rangle 
    \langle \mathfrak{\Gamma}(U_g)^* \psi_\mu, \psi_\zeta \rangle }^{{\mpp(g)}} 
    V_g B_\lambda V_g^* \Gamma_{\caK}^{|\lambda|} (\iota\circ \pi_\omega)(A) \Gamma_{\caK}^{|\lambda|} 
        V_g B_\zeta^* V_g^* \\
  &= \sum_\lambda V_g B_\lambda V_g^* \Gamma_{\caK}^{|\lambda|} (\iota\circ \pi_\omega)(A) \Gamma_{\caK}^{|\lambda|} 
        V_g B_\lambda^* V_g^* \\
  &= \sum_{\lambda} V_g   \big(\mathrm{Ad}_{B_\lambda} \circ \mathrm{Ad}_{\Gamma_{\caK}^{|\lambda|}}\big) 
   \big( \iota\circ \pi_\omega \circ \alpha_g^{-1}(A) \big) V_g^*
\end{align*}
and recalling \eqref{rhotrans}, 
\begin{align*}
\sum_{\mu\in\calP} \mathrm{Ad}_{T_\mu} \circ \mathrm{Ad}_{\Gamma_{\caK}^{|\mu|}} (\iota \circ \pi_\omega(A) ) 
  &= \mathrm{Ad}_{V_g} \circ \rho\big( \iota\circ \pi_\omega \circ \alpha_g^{-1}(A) \big) \\
  &= \mathrm{Ad}_{V_g} \circ \iota\circ \pi_\omega \big(\beta_{S_1} \circ \alpha_g^{-1}(A) \big) \\
  &=  \iota\circ \pi_\omega \circ \alpha_g \circ \beta_{S_1} \circ \alpha_g^{-1}(A) 
  = \rho\circ \iota\circ\pi_\omega(A),
\end{align*}
for all $A\in\al_R$.
Hence we have proven that  $\{T_\mu\}_{\mu\in\caP}$ satisfies \eqref{buni}.
Applying Proposition \ref{cuntzpropi},
there is some $c_g \in \mathbb{T}$ such that $B_\mu = c_g T_\mu$ for all $\mu\in\caP$. Therefore 
\begin{align*}
  \sum_{\mu}\ol{c_g} \langle \psi_\mu, \mathfrak{\Gamma}(U_g)\psi_\nu \rangle 
   B_\mu
  &= \sum_{\mu,\lambda} \langle \psi_\mu, \mathfrak{\Gamma}(U_g)\psi_\nu \rangle
    \ol{ \langle \psi_\lambda, \mathfrak{\Gamma}(U_g)^* \psi_\mu \rangle }^{{\mpp(g)}} V_g B_\lambda V_g^* \\
  &= \sum_{\lambda,\mu} \ol{ \langle \mathfrak{\Gamma}(U_g)^* \psi_\mu, \psi_\nu \rangle 
     \langle \psi_\lambda, \mathfrak{\Gamma}(U_g)^* \psi_\mu \rangle }^{{\mpp(g)}} V_g B_\lambda V_g^* \\
  &= V_g B_\nu V_g^*. 
\end{align*}
Hence $\sum_\mu  \langle \psi_\mu, \, \mathfrak{\Gamma}(U_g) \psi_\nu \rangle B_\mu = c_g V_g B_\nu V_g^*$, 
which completes the proof.
\end{proofof}
\subsection{Case: $\kappa_\omega=1$}

We now consider endomorphisms on $W^*$-$(G,\mpp)$-dynamical systems that are 
equivalent to $(\caR_{1,\caK},\Ad_{\Gamma_\caK},\Ad_{V_g})\in\caS_{1}$ 
from Example \ref{ni}.
Recall that $\mathfrak{\Gamma}(U_g)$
denotes the second quantization of $U_g$ on $\calF(\C^d)$. Our 
aim is to prove the following.
\begin{thm}\label{cuntzthm1}
Let $\omega$ be 
 a pure $\alpha$-invariant and 
translation invariant split state on $\al$.
Suppose that the graded $W^*$-$(G,\mpp)$-dynamical system
$(\pi_\omega(\al_R)'', \Ad_{\Gamma_\omega}, \hat\alpha_\omega)$ associated to $\omega$
is equivalent to $(\caR_{1,\caK},\Ad_{\Gamma_\caK},\Ad_{V_g})\in\caS_{1}$ via
a $*$-isomorphism $\iota: \pi(\al_R)''\to \caR_{1,\caK}$.
Let $\rho$ be the $*$-endomorphism on $\caR_{1,\caK}$ given in Lemma \ref{iorho}.
Then
there is some $\sigma_0\in\{0,1\}$ such that
$\rho\lmk\unit_\caK\otimes \sigma_x\rmk=(-1)^{\sigma_0}\iota\circ\pi_\omega\lmk 
\mathfrak\Gamma(-\unit)\rmk\lmk\unit_\caK\otimes \sigma_x\rmk$
and a set of isometries $\{S_\mu\}_{\mu \in \calP}$ on $\caK$
such that
$S_\nu^* S_\mu = \delta_{\mu,\nu} \unit_\caK$,
  \begin{align}\label{bimpn}
  \rho\circ\iota\circ\pi_\omega(A)
  = \sum_{\mu\in\calP} \mathrm{Ad}_{\hat S_\mu}\circ\iota\circ\pi_\omega (A)
  ,\quad
  A\in\al_R, 
  \quad 
\end{align}
with $\hat S_\mu:=S_\mu \otimes \sigma_z^{\sigma_0+|\mu|}$ and 
\begin{align}\label{bbbon}
\iota\circ\pi_\omega \! \lmk E_{\mu_0,\nu_0}^{(0)} E_{\mu_1,\nu_1}^{(1)} 
\cdots E_{\mu_N,\mu_N}^{(N)}\rmk = 
  (-1)^{\sum\limits_{k=1}^N (|\mu_k|+|\nu_k|) \sum\limits_{j=0}^{k-1}\lmk \sigma_0+|\nu_j|\rmk} 
  S_{\mu_0} \cdots S_{\mu_N} S_{\nu_N}^* \cdots S_{\nu_0}^*\otimes \sigma_x^{\sum\limits_{i=0}^{N} |\mu_i|+|\nu_i|} 
\end{align}
for all $N\in\bbN\cup\{0\}$ and $\mu_0,\ldots\mu_N,\nu_0,\ldots,\nu_N\in\caP$.
Furthermore,  we have
\begin{align}\label{satn}
\sigma\mathrm{\hbox{-}weak}\lim_{N\to\infty}T_{\bf \hat S}^N\circ\iota(x)
=\braket{\Oo}{ x\Oo} \, \unit_{\caK\otimes\cct},\quad x\in\po(\caA_R)''.
\end{align}
  For each $g\in G$, there is some 
  $c_g \in \mathbb{T}$ such that 
   \begin{align}\label{gvbn}
  (-1)^{\mqq(g)|\nu|} \sum_{\mu\in\calP} \langle \psi_\mu, \, \mathfrak{\Gamma}(U_g) \psi_\nu \rangle S_\mu = 
     c_g V_g^{(0)} S_\nu (V_g^{(0)})^*, 
   \end{align}
where  $V_g^{(0)}$ is given in Lemma \ref{zv}.
\end{thm}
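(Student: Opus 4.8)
The plan is to mimic the structure of the proof of Theorem \ref{cuntzthm0}, but work over the graded von Neumann algebra $\caR_{1,\caK}=\caB(\caK)\otimes\Clf$ instead of the type I factor $\caR_{0,\caK}$. First I would establish an analogue of Proposition \ref{cuntzpropi} tailored to $\caR_{1,\caK}$: the key observation is that $\caR_{1,\caK}$ itself is \emph{not} a factor (its even part $\caB(\caK)\otimes\bbC\unit$ is, and $b=\unit_\caK\otimes\sigma_x$ is an odd central self-adjoint unitary), so the endomorphism $\rho$ of Lemma \ref{iorho} must send the central odd unitary to a central odd unitary, forcing $\rho(\unit_\caK\otimes\sigma_x)=(-1)^{\sigma_0}\iota\circ\pi_\omega(\mathfrak\Gamma(-\unit))(\unit_\caK\otimes\sigma_x)$ for some $\sigma_0\in\{0,1\}$ (here one uses that $\iota\circ\pi_\omega(\mathfrak\Gamma(-\unit))$ is a grading operator for the matrix units of $\iota\circ\pi_\omega(\al_{\{0\}})$ inside $\caR_{1,\caK}$ and that $\rho$ commutes with $\Ad_{\Gamma_\caK}$). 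Having pinned down $\sigma_0$, I would then pass to the even part: $\rho$ restricts to an injective $*$-endomorphism of the factor $\caB(\caK)$, and I would use Lemma \ref{lem35} (with $\caM_1=\iota\circ\pi_\omega(\al_{\{0\}})$, $\caM_2=\rho(\caR_{1,\caK})$, $\Gamma_1$ the grading $\iota\circ\pi_\omega(\mathfrak\Gamma(-\unit))$) exactly as in Lemma \ref{lem:endo_and_commutant_even} to produce the isometries $\{S_\mu\}_{\mu\in\calP}$ on $\caK$ with $S_\nu^*S_\mu=\delta_{\mu,\nu}\unit_\caK$; the twisting operators $\hat S_\mu=S_\mu\otimes\sigma_z^{\sigma_0+|\mu|}$ are the $\caR_{1,\caK}$-substitutes for the $B_\mu$ of the even case, and a direct computation parallel to \eqref{bbbo}–\eqref{bbn}–\eqref{bbe} gives \eqref{bimpn} and \eqref{bbbon} with the stated sign, the extra $\otimes\sigma_x^{\sum|\mu_i|+|\nu_i|}$ factor coming from the $\Clf$-valued part of the matrix units $E^{(k)}_{\mu_k,\nu_k}$ mapped into $\caR_{1,\caK}$.

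Next I would prove \eqref{satn}. This splits by parity of $x\in\pi_\omega(\al_R)''$. For even $x$ part (i) of Lemma \ref{lem:trans_weak_conv_to_state} applies verbatim once one checks, as in \eqref{pn}–\eqref{tbx}, that $T_{\bf\hat S}^N\circ\iota(x)=\iota(\pi_\omega(\text{string of }\mathfrak\Gamma(-\unit)^{\partial x})\,\hat\beta_{S_N}(x))$ up to the grading bookkeeping; for odd $x$ I would invoke part (iii) of Lemma \ref{lem:trans_weak_conv_to_state}, which is precisely the $\kappa_\omega=1$ case since then $Z(\pi_\omega(\al_R)'')$ contains the odd element corresponding to $b=\unit_\caK\otimes\sigma_x$. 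The $\sigma$-weak convergence to $\braket{\Oo}{x\Oo}\unit_{\caK\otimes\cct}$ follows on linear combinations and then by density and boundedness of $T_{\bf\hat S}^N$.

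Finally, for the intertwining relation \eqref{gvbn}: as in the proof of Theorem \ref{cuntzthm0} I would define candidate isometries $T_\nu:=\sum_\lambda \ol{\langle\psi_\lambda,\mathfrak\Gamma(U_g)^*\psi_\nu\rangle}^{\mpp(g)}(-1)^{\mqq(g)|\lambda|}\,\Ad_{V_g^{(0)}}$-conjugates of $S_\lambda$, check that they satisfy the uniqueness hypothesis of the $\caR_{1,\caK}$-analogue of Proposition \ref{cuntzpropi} (using $\alpha_g\circ\beta_{S_1}=\beta_{S_1}\circ\alpha_g$, the factorization $V_g=V_g^{(0)}\otimes C^{\mpp(g)}\sigma_y^{\mqq(g)}$ from Lemma \ref{zv}, and $\Ad_{V_g}(\unit_\caK\otimes\sigma_x)=(-1)^{\mqq(g)}(\unit_\caK\otimes\sigma_x)$ to absorb the $\Clf$-part), and conclude $S_\mu=c_gT_\mu$, which rearranges to \eqref{gvbn}. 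The main obstacle, and where the $\kappa_\omega=1$ case genuinely differs from $\kappa_\omega=0$, is keeping track of the $\sigma_z^{\sigma_0+|\mu|}$ and $\sigma_x$ factors when commuting $\hat S_\mu$ past the matrix units and past $V_g$: because $\sigma_x$, $\sigma_y$, $\sigma_z$ anticommute pairwise, every reordering produces a sign, and one must verify that all these signs collapse exactly into the exponents written in \eqref{bbbon} and into the factor $(-1)^{\mqq(g)|\nu|}$ in \eqref{gvbn}. I expect this sign accounting — rather than any conceptual point — to be the bulk of the work.
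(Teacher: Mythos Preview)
Your plan matches the paper's approach almost exactly: the paper likewise first proves a $\caR_{1,\caK}$-analogue of Proposition~\ref{cuntzpropi} (its Proposition~\ref{cuntzpropii}), then deduces \eqref{bimpn}--\eqref{bbbon}, obtains \eqref{satn} from parts (i) and (iii) of Lemma~\ref{lem:trans_weak_conv_to_state}, and proves \eqref{gvbn} by the same $T_\nu$ trick and uniqueness as in Theorem~\ref{cuntzthm0}.

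One technical correction: you cannot invoke Lemma~\ref{lem35} as written, because its hypothesis is $\caM_1\vee\caM_2=\caB(\caH)$, whereas here $\iota\circ\pi_\omega(\al_{\{0\}})\vee\rho(\caR_{1,\caK})=\caR_{1,\caK}\subsetneq\caB(\caK\otimes\cct)$. The paper handles this by proving a separate structure lemma (Lemma~\ref{rmrm}) adapted to the ambient algebra $\caR_{1,\caK}=\caB(\caK)\otimes\Clf$: given graded subalgebras $\caN$ (the type~I factor) and $\caM$ (here $\rho(\caR_{1,\caK})$, whose center has a nonzero odd part) with $\caN\vee\caM=\caR_{1,\caK}$, it computes $\caM'$, shows $Z(\caM)=\bbC\unit+\bbC\Gamma_0(\unit_\caK\otimes\sigma_x)$, and proves that $\caM\cdot p=\caB(q\caK)\otimes\Clf$ for even minimal $p=q\otimes\unit_{\cct}\in\caN$. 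These replace the conclusions of Lemma~\ref{lem35}/Lemma~\ref{lem:endo_and_commutant_even} and are what allow you to build the $\ast$-isomorphisms $\rho_\mu:\caB(\caK)\to\caB(e_{\mu\mu}\caK)$ from which the $S_\mu$ come via Wigner. Your remark about ``passing to the even part'' is close in spirit, but note $\iota\circ\pi_\omega(\al_{\{0\}})$ is \emph{not} contained in the even part, so one really needs this dedicated lemma. Everything else in your outline---the identification of $\sigma_0$, the sign bookkeeping, and the $T_\nu$ uniqueness argument for \eqref{gvbn}---is carried out in the paper exactly as you anticipate.
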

We again will prove this theorem in several steps. Parts of the 
proof follow the same argument as the case $\kappa_\omega =0$, so some details will be omitted.
\begin{prop}\label{cuntzpropii}
Let $\caK$ be a Hilbert space and set $\Gamma_\caK:=\unit_\caK\otimes\sigma_z$
on $\caK\otimes \cct$.
We give a grading to  $\caR_{1,\caK}=\bk\otimes \Clf$ by $\Ad_{\Gamma_\caK}$.
Suppose that $\caN$ is a type I subfactor of $\caR_{1,\caK}$
with matrix units $\{E_{\mu,\nu}\}_{\mu,\nu\in\calP} \subset \caN$ spanning 
$\caN$.
Assume that 
\begin{align}\label{graen}
\Ad_{\Gamma}(E_{\mu,\nu})=(-1)^{|\mu|+|\nu|} E_{\mu,\nu},\quad\text{for}\quad
\mu,\nu\in \caP.
\end{align}
Set  $\Gamma_0:=\sum_{\mu\in\caP} (-1)^{|\mu|}E_{\mu\mu}$.
Let $\rho: \caR_{1,\caK} \to \caR_{1,\caK}$ be an injective graded, unital $\ast$-endomorphism 
such that $\rho(a)b- (-1)^{\partial a\partial b} b \rho(a) = 0$ 
for $b\in\caN, a \in \caR_{1,\caK}$ with homogeneous 
grading. Suppose further that $\caR_{1,\caK}= \rho(\caR_{1,\caK}) \vee \caN$.

Then there is some $\sigma_0\in\{0,1\}$ such that
$\rho\lmk\unit_\caK\otimes \sigma_x\rmk=(-1)^{\sigma_0}\Gamma_0\lmk\unit_\caK\otimes \sigma_x\rmk$ 
and there exist isometries $\{S_\mu\}_{\mu \in \calP}$ on $\caK$ with the property that 
\begin{align}\label{sss}
S_\nu^* S_\mu = \delta_{\mu,\nu} \, \unit_{\caK}, 
    \qquad 
    \rho(b) = \sum_\mu \Ad_{( S_\mu \otimes \sigma_z^{\sigma_0+|\mu|})}(b) 
\end{align}
   for all $\mu,\nu \in\calP$ and $b \in \caR_{1,\caK}$. 
     Furthermore, for $N\in\bbN$, $\mu_0,\ldots,\mu_{N-1},\nu_0,\ldots,\nu_{N-1}\in\calP$, the identity
     \begin{align}\label{kansha}
&     E_{\mu_0,\nu_0}\rho(E_{\mu_1,\nu_1})  \rho^2(E_{\mu_2,\nu_2}) \cdots \rho^{N-1}(E_{\mu_{N-1},\nu_{N-1}}) \nonumber\\
&\hspace{1.5cm}  =  (-1)^{\sum\limits_{j=1}^{N-1} (\sum\limits_{k=0}^{j-1}(\sigma_0+|\nu_k|) ) (|\mu_j|+|\nu_j|)}
  S_{\mu_0}\cdots S_{\mu_{N-1}}S^*_{\nu_{N-1}} \cdots S_{\nu_0}^* \otimes \sigma_x^{\sum\limits_{i=0}^{N-1} |\mu_i|+|\nu_i|} 
     \end{align}
holds.

If there are isometries $\{T_\mu\}_{\mu \in \calP}$ on $\caK$ such that
\begin{align}\label{buni2}
 T_\nu^* T_\mu = \delta_{\mu,\nu} \unit_\caK, \quad T_\mu T_\nu^*\otimes \sigma_x^{|\mu|+|\nu|}= E_{\mu,\nu},
\quad\rho(b) 
  = \sum_{\mu\in\calP} \mathrm{Ad}_{T_\mu\otimes\sigma_z^{\sigma_0+|\mu|}}
   (b),\quad
  b\in \caR_{1,\caK},
\end{align}
then there is some $c\in\bbT$ such that $T_\mu=c S_\mu$, for all $\mu\in \caP$.
\end{prop}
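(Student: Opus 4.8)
## Proof Strategy for Proposition \ref{cuntzpropii}

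The plan is to adapt the argument of Proposition \ref{cuntzpropi} (the $\kappa_\omega=0$ case) to the setting where the ambient algebra is $\caR_{1,\caK}=\bk\otimes\Clf$, which is no longer a type I factor but rather has a one-dimensional odd part in its center. First I would establish the claim about $\rho(\unit_\caK\otimes\sigma_x)$: since $\unit_\caK\otimes\sigma_x$ is an odd self-adjoint unitary generating $Z(\caR_{1,\caK})\cap\caR_{1,\caK}^{(1)}$, and $\rho$ is an injective unital graded endomorphism, $\rho(\unit_\caK\otimes\sigma_x)$ is again an odd self-adjoint unitary in $\caR_{1,\caK}$. Using $\caR_{1,\caK}=\rho(\caR_{1,\caK})\vee\caN$ together with the graded-commutation hypothesis, I would show $\rho(\unit_\caK\otimes\sigma_x)$ commutes with $\caN^{(0)}$ and graded-commutes with $\caN$, forcing it into $\bbC(\unit_\caK\otimes\sigma_x)+\bbC\Gamma_0(\unit_\caK\otimes\sigma_x)$ (these being the odd self-adjoint unitaries available); self-adjointness and unitarity then pin it down to $\pm\Gamma_0(\unit_\caK\otimes\sigma_x)$, defining $\sigma_0$.

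Next I would reduce to the even subalgebra. Since $\caN$ is a type I factor lying in $\caR_{1,\caK}$ and $E_{\mu\mu}$ are even minimal projections of $\caN$, I can apply Lemma \ref{lem35} (with $\caM_1=\caN$, $\caM_2=\rho(\caR_{1,\caK})$, $\Gamma_1=\Gamma_0$) to describe the commutant of $\rho(\caR_{1,\caK})$ and to get that $\rho(\caR_{1,\caK})\cdot E_{\mu\mu}=\caB(E_{\mu\mu}(\caK\otimes\cct))$. The key technical point is that $E_{\mu\mu}(\caK\otimes\cct)$ should be identifiable with $\caK$ (or a copy of it) rather than $\caK\otimes\cct$, because the presence of the Clifford factor halves the dimension. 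Concretely I expect $E_{\mu\mu}=e_{\mu\mu}\otimes\tfrac{1}{2}(\unit+\sigma_x^{|\mu|}\cdot\text{something})$-type structure; the cleanest route is to conjugate by the unitary diagonalizing $\sigma_x$ so that $\Clf$ becomes diagonal, and then check that $E_{\mu\mu}$ projects onto a subspace unitarily equivalent to $\caK$. Then the compression $\rho_\mu(b)=\rho(b)E_{\mu\mu}$ is a $\ast$-isomorphism $\caR_{1,\caK}\to\caB(E_{\mu\mu}\caK)$, and composing with a fixed identification $\caB(E_{\mu\mu}\caK)\cong\caB(\caK)\otimes\unit\subset\caB(\caK)\otimes\Clf$ — only possible after killing the central odd part, using the first paragraph — produces the isometries $S_\mu$ on $\caK$ via Wigner's theorem, with $\hat S_\mu=S_\mu\otimes\sigma_z^{\sigma_0+|\mu|}$ accounting for the grading twist.

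The relations $S_\nu^*S_\mu=\delta_{\mu\nu}\unit_\caK$ and formula \eqref{sss} then follow exactly as in Lemma \ref{lemma:Even_cuntz_property} and Lemma \ref{lem:Even_Cuntz2}: the $w_\mu w_\nu^*$ land in $\rho(\caR_{1,\caK})'$, are proportional to the matrix units there, and a phase-normalization using a base point $\mu_0=\emptyset$ fixes the $S_\mu$ up to a single global phase. The iterated identity \eqref{kansha} is an induction identical in spirit to \eqref{eq:rho_and_E_using_V_even_again}, with the bookkeeping of $\sigma_0$-shifts in the exponents replacing the $|\nu_j|$ terms and the $\sigma_x$-powers tracking the odd central element through $\rho$. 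The uniqueness statement — that any other family $\{T_\mu\}$ satisfying \eqref{buni2} agrees with $\{S_\mu\}$ up to a common phase in $\bbT$ — is the same argument as the final paragraph of Lemma \ref{lem:Even_Cuntz2}: from the two expressions for $\rho$ one deduces $T_\nu^*S_\nu\in\caK$'s scalars, i.e.\ $T_\nu=c_\nu S_\nu$, then $|c_\nu|=1$ from isometry and $c_\mu=c_\nu$ from the matrix-unit relation. The main obstacle I anticipate is the bookkeeping in the second paragraph: carefully verifying that the compression of $\caR_{1,\caK}$ by $E_{\mu\mu}$ returns $\caB(\caK)$ with the correct residual grading, so that the $\sigma_z^{\sigma_0+|\mu|}$ factors come out exactly right and are compatible with the sign $\sigma_0$ defined from $\rho(\unit_\caK\otimes\sigma_x)$; getting these signs consistent across \eqref{sss}, \eqref{kansha} and the later \eqref{bbbon} is where errors would most likely creep in.
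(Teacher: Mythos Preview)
Your overall strategy is correct and matches the paper's, but there is a concrete gap in the second paragraph. You propose to invoke Lemma \ref{lem35} with $\caM_1=\caN$, $\caM_2=\rho(\caR_{1,\caK})$. That lemma requires $\caM_1\vee\caM_2=\caB(\caH)$, whereas here you only have $\caN\vee\rho(\caR_{1,\caK})=\caR_{1,\caK}=\bk\otimes\Clf\subsetneq\caB(\caK\otimes\cct)$. The paper instead proves a separate structural lemma (Lemma \ref{rmrm}) tailored to this situation, where the ambient algebra is $\bk\otimes\Clf$ and one assumes in addition that $Z(\caM_2)^{(1)}\neq\{0\}$. This lemma yields the commutant formula $\rho(\caR_{1,\caK})'=\caN^{(0)}(\bbC\unit_\caK\otimes\Clf)+\caN^{(1)}(\bbC\unit_\caK\otimes\Clf)\Gamma_0\Gamma_\caK$ and, crucially, the identity $Z(\rho(\caR_{1,\caK}))=\bbC\unit+\bbC\Gamma_0(\unit_\caK\otimes\sigma_x)$, from which $\sigma_0$ is read off directly.

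Your guess about the form of $E_{\mu\mu}$ is also off. Since $E_{\mu\mu}\in\caR_{1,\caK}^{(0)}=\bk\otimes\bbC\unit_{\cct}$, one has simply $E_{\mu\mu}=e_{\mu\mu}\otimes\unit_{\cct}$ with $e_{\mu\mu}$ a projection on $\caK$; there is no $\sigma_x$-component. Consequently the compression gives $\rho(\caR_{1,\caK})E_{\mu\mu}=\caB(e_{\mu\mu}\caK)\otimes\Clf$, which is \emph{not} a factor, so Wigner's theorem does not apply to $b\mapsto\rho(b)E_{\mu\mu}$ directly. The paper resolves this by further cutting with the central projections $\bbP_\varepsilon(\unit_\caK\otimes\sigma_x)$: using $\rho(\unit_\caK\otimes\sigma_x)=(-1)^{\sigma_0}\Gamma_0(\unit_\caK\otimes\sigma_x)$ one gets
\[
\rho\big((a\otimes\unit)\,\bbP_0(\unit_\caK\otimes\sigma_x)\big)E_{\mu\mu}=\rho_\mu(a)\otimes\bbP_{\sigma_0+|\mu|}(\sigma_x),
\]
which \emph{is} an isomorphism $\bk\to\caB(e_{\mu\mu}\caK)$. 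Wigner then produces unitaries $T_\mu:\caK\to e_{\mu\mu}\caK$, and one checks $\rho(b)E_{\mu\mu}=\Ad_{T_\mu\otimes\sigma_z^{\sigma_0+|\mu|}}(b)$ on all of $\caR_{1,\caK}$ by combining the two $\bbP_\varepsilon$-pieces. From here your outline is accurate: the phase normalization via the commutant description, the induction for \eqref{kansha}, and the uniqueness argument all proceed as in the even case.
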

To study the situation, we note the following general property.
\begin{lem}\label{rmrm}
Let $\caK$ be a Hilbert space and set ${\Gamma_\caK}:=\unit_\caK\otimes\sigma_z$
on $\caK\otimes \cct$.
We give a grading to  $\caR_{1,\caK}=\bk\otimes \Clf$ by $\Ad_{{\Gamma_\caK}}$.
Let $\caN$ and $\caM$ be
$\Ad_{{\Gamma_\caK}}$-invariant von Neumann subalgebras of 
$\caR_{1,\caK}=\bk\otimes \Clf$ satisfying
\begin{align}
ab-(-1)^{\partial a\partial b} ba=0,\quad\text{for homogeneous }\quad a\in \caN, \quad b\in \caM.
\end{align}
Suppose that $\caN$ is a type I factor with a self-adjoint unitary ${\Gamma_1}\in\caN$
satisfying $\Ad_{{\Gamma_1}}(a)=\Ad_{{\Gamma_\caK}}(a)$, for all $a\in\caN$.
Suppose 
$Z(\caM)^{(1)}\neq\{0\}$ 
and $\caN\vee \caM=\bk\otimes \Clf$.
Then the following holds.
\begin{enumerate}
\item[(i)] There are Hilbert spaces $\caH_1,\caH_2$, a unitary
$U:\caK\otimes\cct\to \caH_1\otimes\caH_2\otimes\cct$
and a self-adjoint unitary ${\tilde \Gamma_1}$ on $\caH_1$
such that
\begin{align}\label{ffs}
\mathrm{Ad}_U (\calN) &= \calB(\calH_1) \otimes \bbC\one_{\calH_2} \otimes \bbC\one_{\C^2}, 
   \qquad \mathrm{Ad}_U (\bk\otimes \Clf) = \calB(\calH_1 \otimes \calH_2) \otimes \Clf,\nonumber\\
    \mathrm{Ad}_{U}( {\Gamma_\caK}) &= 
    {\tilde \Gamma_1} \otimes \one_{\calH_2} \otimes \sigma_z, 
   \quad \mathrm{Ad}_{U}( {\Gamma_1}) = 
    {\tilde \Gamma_1} \otimes \one_{\calH_2} \otimes \unit_{\cct},
   \quad \mathrm{Ad}_{U}( \one_\calK \otimes \sigma_x) 
   = \unit_{\caH_1}\otimes \one_{\calH_2} \otimes \sigma_x,
\end{align}
and 
\begin{align}
&\mathrm{Ad}_{U}( \caM) 
    = \bbC\one_{\calH_1} \otimes \calB(\calH_2) \otimes \C \unit_{\cct}
    +   \bbC{\tilde \Gamma_1}\otimes \calB(\calH_2) \otimes \C \sigma_x.
    \label{muu}
\end{align}

\item[(ii)] $\caM'=\caN^{(0)}\lmk\bbC\unit_{\caK}\otimes \Clf\rmk 
+\caN^{(1)}\lmk\bbC\unit_{\caK}\otimes \Clf\rmk{\Gamma_1}{\Gamma_\caK}$.

\item[(iii)]
For any minimal projection $p$ of $\caN$ which is even,
we have $\caM\cdot p=\caB(q \caK)\otimes\Clf$
with $q$ a projection on $\caK$ satisfying $p=q\otimes \unit_{\bbC^2}$.
(Note that even $p$ is always of this form.)
\item[(iv)]
$Z(\caM)=\bbC\one_{\caK} \otimes \unit_{\cct}
+\bbC{\Gamma_1}\lmk \unit_{\caK}\otimes \sigma_x\rmk$.
\end{enumerate}
\end{lem}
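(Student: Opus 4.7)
The plan is to construct a unitary $U$ that simultaneously brings $\caN$, $\Gamma_\caK$, $\Gamma_1$ and $\unit_\caK\otimes\sigma_x$ into the product form of \eqref{ffs}, and then to read off all four parts of the lemma by direct computation. Since $\caN$ is a type I factor acting on $\caK\otimes\cct$, standard structure theory gives Hilbert spaces $\caH_1,\caH'$ and a unitary $V_1:\caK\otimes\cct\to\caH_1\otimes\caH'$ with $\Ad_{V_1}(\caN)=\caB(\caH_1)\otimes\bbC\unit_{\caH'}$ and $\Ad_{V_1}(\caN')=\bbC\unit_{\caH_1}\otimes\caB(\caH')$. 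The unitary $\Gamma_1\in\caN$ is even in $\caR_{1,\caK}$ (since $\Ad_{\Gamma_\caK}(\Gamma_1)=\Ad_{\Gamma_1}(\Gamma_1)=\Gamma_1$), so $\Ad_{V_1}(\Gamma_1)=\tilde\Gamma_1\otimes\unit$ for a self-adjoint unitary $\tilde\Gamma_1$ on $\caH_1$. Both $\Gamma_\caK\Gamma_1$ and $\unit_\caK\otimes\sigma_x$ lie in $\caN'$ (the former because $\Ad_{\Gamma_\caK}|_\caN=\Ad_{\Gamma_1}|_\caN$, the latter because it is central in $\caR_{1,\caK}$), so $\Ad_{V_1}(\Gamma_\caK\Gamma_1)=\unit\otimes Y$ and $\Ad_{V_1}(\unit_\caK\otimes\sigma_x)=\unit\otimes X$ for self-adjoint unitaries $X,Y$ on $\caH'$ with $X^2=Y^2=\unit$. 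A quick calculation using $\Gamma_\caK(\unit_\caK\otimes\sigma_x)=-(\unit_\caK\otimes\sigma_x)\Gamma_\caK$ together with $[\Gamma_1,\unit_\caK\otimes\sigma_x]=0$ yields $YX=-XY$.

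The anti-commuting self-adjoint unitaries $X,Y$ generate a copy of $\Mat_2$ on $\caH'$, and representation theory of this (Clifford-type) algebra furnishes a unitary $W:\caH'\to\caH_2\otimes\cct$, where $\caH_2$ is the $+1$-eigenspace of $Y$, with $WYW^*=\unit\otimes\sigma_z$ and $WXW^*=\unit\otimes\sigma_x$; explicitly, $X$ restricts to a unitary $\caH_2\to\caH'_{-}$ (the $-1$-eigenspace of $Y$), and every $\xi\in\caH'$ decomposes as $\xi_++X\xi_-$ with $\xi_\pm\in\caH_2$. Setting $U:=(\unit_{\caH_1}\otimes W)V_1$, the identities of \eqref{ffs} involving $\caN,\Gamma_1,\Gamma_\caK$ and $\unit_\caK\otimes\sigma_x$ follow immediately from the construction, while $\Ad_U(\caR_{1,\caK})=\caB(\caH_1\otimes\caH_2)\otimes\Clf$ is obtained by taking commutants inside $\caB(\caH_1\otimes\caH_2\otimes\cct)$ of $\Ad_U(\caR_{1,\caK}')=\bbC\unit\otimes\bbC\unit\otimes\Clf$, using that $\Clf$ is maximal abelian in $\Mat_2$.

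With $U$ in place, I next compute the graded commutant $\caN^c$ of $\caN$ in $\caR_{1,\caK}$. For a homogeneous $a\otimes b\otimes c\in\caB(\caH_1)\otimes\caB(\caH_2)\otimes\Clf$, the graded-commutation relation with every homogeneous $a'\otimes\unit\otimes\unit\in\Ad_U(\caN)$ forces, after splitting into parities, either $a\in\bbC\unit$ with $c\in\bbC\unit$, or $a\in\bbC\tilde\Gamma_1$ with $c\in\bbC\sigma_x$; hence $\caN^c=\bbC\unit\otimes\caB(\caH_2)\otimes\bbC\unit+\bbC\tilde\Gamma_1\otimes\caB(\caH_2)\otimes\bbC\sigma_x$. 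The graded-commutation hypothesis gives $\caM\subset\caN^c$, so writing $\caM^{(0)}=\bbC\unit\otimes\caA_0\otimes\bbC\unit$ and $\caM^{(1)}=\bbC\tilde\Gamma_1\otimes\caA_1\otimes\bbC\sigma_x$ for $\sigma$-weakly closed subspaces $\caA_0,\caA_1\subset\caB(\caH_2)$, a short calculation (using that $\caM\cdot\caN\subset\caN\cdot\caM$ by graded commutation) produces $\caN\vee\caM=\caB(\caH_1)\otimes\caA_0\otimes\bbC\unit+\caB(\caH_1)\otimes\caA_1\otimes\bbC\sigma_x$, and the hypothesis $\caN\vee\caM=\caR_{1,\caK}$ then forces $\caA_0=\caA_1=\caB(\caH_2)$. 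This yields \eqref{muu} and finishes (i).

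Parts (ii)-(iv) are then direct verifications in the $\Ad_U$-picture. For (ii), the commutant of the explicit $\caM$ in $\caB(\caH_1\otimes\caH_2\otimes\cct)$ is computed by imposing commutation with $\unit\otimes\caB(\caH_2)\otimes\unit$ (collapsing the $\caH_2$-slot to $\bbC\unit$) and with $\tilde\Gamma_1\otimes\unit\otimes\sigma_x$ (splitting the $\caB(\caH_1)$-slot by parity and constraining the $\Clf$-slot); the result matches the claimed $\caN^{(0)}(\bbC\unit_\caK\otimes\Clf)+\caN^{(1)}(\bbC\unit_\caK\otimes\Clf)\Gamma_1\Gamma_\caK$ once one notes $\Ad_U(\Gamma_1\Gamma_\caK)=\unit\otimes\unit\otimes\sigma_z$. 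For (iii), an even minimal projection $p\in\caN$ corresponds under $\Ad_U$ to $\tilde p\otimes\unit\otimes\unit$ with $\tilde p$ a rank-one projection on $\caH_1$ commuting with $\tilde\Gamma_1$, and a short calculation gives $\caM\cdot p=\tilde p\otimes\caB(\caH_2)\otimes\Clf\simeq\caB(q\caK)\otimes\Clf$ under $p=q\otimes\unit_\cct$. For (iv), imposing centrality on the explicit $\caM$ against the even generators $\unit\otimes\caB(\caH_2)\otimes\unit$ and the odd generator $\tilde\Gamma_1\otimes\unit\otimes\sigma_x$ forces the $\caB(\caH_2)$-slots to be scalars, yielding $Z(\caM)=\bbC\unit+\bbC\Gamma_1(\unit_\caK\otimes\sigma_x)$. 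The main technical obstacle is the Clifford decomposition step producing $W$, which requires careful sign bookkeeping in order to simultaneously diagonalize $Y$ and map $X$ to $\unit\otimes\sigma_x$; once this step is done the remaining arguments are routine computations in the explicit product coordinates provided by $\Ad_U$.
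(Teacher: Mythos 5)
Your construction of the unitary $U$ is essentially the paper's argument in Clifford language rather than matrix-unit language: the paper builds a type~$\mathrm{I}_2$ subfactor $\caD=\operatorname{span}\{\unit,\Gamma_1\Gamma_\caK,\unit_\caK\otimes\sigma_x,\Gamma_1\Gamma_\caK(\unit_\caK\otimes\sigma_x)\}\subset\caN'$ and exhibits its matrix units, while you diagonalize the two anti-commuting self-adjoint unitaries $X,Y$ generating the same $\Mat_2$. These are the same idea and both yield the formulas in \eqref{ffs}. The genuine difference is in how you reach \eqref{muu}. The paper first invokes the hypothesis $Z(\caM)^{(1)}\neq\{0\}$: taking a nonzero $b\in Z(\caM)^{(1)}$, it shows $(\unit_\caK\otimes\sigma_x)\Gamma_1 b\in\caN'\cap\caM'\cap\{\Gamma_\caK\}'=\bbC\unit$, hence $(\unit_\caK\otimes\sigma_x)\Gamma_1\in Z(\caM)^{(1)}$ and therefore $\caM^{(1)}=\caM^{(0)}(\unit_\caK\otimes\sigma_x)\Gamma_1$, which in your notation gives $\caA_0=\caA_1=:\tilde\caM$ before anything else. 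Only then does it show $\tilde\caM=\caB(\caH_2)$, and it does so by a commutant computation rather than by describing $\caN\vee\caM$: for $a\in\tilde\caM'$ the element $\Ad_{U^*}(\unit\otimes a\otimes\unit)$ lies in $(\caN\vee\caM)'\cap\{\Gamma_\caK\}'=(\bk\otimes\Clf)'\cap\{\Gamma_\caK\}'=\bbC\unit$. You instead compute $\caN\vee\caM$ directly as $\caB(\caH_1)\otimes\caA_0\otimes\bbC\unit+\caB(\caH_1)\otimes\caA_1\otimes\bbC\sigma_x$ and match against $\caR_{1,\caK}$. That is a different route, and notably it never uses the hypothesis $Z(\caM)^{(1)}\neq\{0\}$; you should flag that your argument reveals it to be implied by the remaining hypotheses (it does follow, since \eqref{muu} yields $Z(\caM)^{(1)}=\bbC\Gamma_1(\unit_\caK\otimes\sigma_x)$). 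The one step you should tighten is the claim that $\caN\vee\caM$ equals the displayed sum: equality requires knowing that $\caB(\caH_1)\otimes\caA_1\otimes\bbC\sigma_x$ is $\sigma$-weakly closed, which is not automatic for a mere $\sigma$-weakly closed subspace $\caA_1$. The cleanest repair is to drop equality and argue only the inclusion $\caN\vee\caM\subset\{x:(\phi\otimes\id\otimes\id)(x)\in\caA_0\oplus\caA_1\sigma_x\ \forall\,\phi\in\caB(\caH_1)_*\}$ via slice maps; combined with $\caN\vee\caM=\caR_{1,\caK}$ and testing on elementary tensors $\unit\otimes a\otimes c$ this forces $\caA_0=\caA_1=\caB(\caH_2)$. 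Alternatively, one can simply adopt the paper's two-step commutant argument, which sidesteps the closure issue entirely. With that repair, parts (ii)--(iv) do follow by the direct coordinate computations you describe.
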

\begin{proof}
(i) As $\caN$ is a type I factor, 
there are Hilbert spaces $\caH_1,\tilde \caH_2$, and a unitary
$\tilde U:\caK\otimes\cct\to \caH_1\otimes\tilde \caH_2$
such that $\Ad_{\tilde U}(\caN)=\caB(\caH_1)\otimes\bbC\unit_{\tilde \caH_2}$.
Because ${\Gamma_1}\in\caN$, there is a self-adjoint unitary ${\tilde \Gamma_1}$ on $\caH_1$
such that $\Ad_{\tilde U}({\Gamma_1})={\tilde \Gamma_1}\otimes\unit_{\tilde\caH_2}$.
Let $\caD:= \spa_\bbC \{\unit,{\Gamma_1}{\Gamma_\caK},(\unit_\caK\otimes \sigma_x),
{\Gamma_1}{\Gamma_\caK}(\unit_\caK\otimes \sigma_x)\}$, a $*$-subalgebra of $\caN'$.
Let ${\Gamma_1}{\Gamma_\caK}=e_{00}-e_{11}$ be a spectral decomposition of the self-adjoint 
unitary ${\Gamma_1}{\Gamma_\caK}$.
Set $e_{i,1-i}:=e_{ii}(\unit_\caK\otimes \sigma_x) e_{1-i,1-i}$, $i=0,1$.
Then  because ${\Gamma_1}{\Gamma_\caK}$ and $\unit_\caK\otimes \sigma_x$ anti-commute,
we can check that $\{e_{ij}\}_{i,j=0,1}$ are matrix units in $\caD$
spanning $\caD$.
Hence $\caD$ is a type I${}_2$ factor in $\caN'$ generated by the matrix units $\{e_{ij}\}_{i,j=0,1}$.
Therefore, there is a type I${}_2$ factor $\caD_1$ on $\tilde{\caH}_2$ such that
$\Ad_{\tilde U}(\caD)=\bbC\unit_{\caH_1}\otimes \caD_1$
and the generating matrix units $\{f_{ij}\}_{i,j=0,1}$
such that $\Ad_{\tilde U}(e_{ij})=\unit_{\caH_1}\otimes f_{ij}$.
Then there is a Hilbert space $\caH_2$ and a unitary $W:\tilde \caH_2\to \caH_2\otimes \bbC^2$
such that 
\begin{align}
\Ad_{W}(f_{ij})=\unit_{\caH_2}\otimes \hat e_{ij},\quad
\Ad_{W}(\caD_1)=\bbC\unit_{\caH_2}\otimes \Mat_2.
\end{align}
Here $\hat e_{ij}$ denotes the matrix unit of $2\times 2$ 
matrices $\Mat_2$ with respect to the standard basis of $\cct$.
Setting $U:=\lmk\unit_{\caH_1}\otimes W\rmk\tilde U: \caK\otimes\cct\to 
\caH_1\otimes \caH_2\otimes\cct$, we may check directly
that $U$, $\caH_1$, $\caH_2$, ${\tilde \Gamma_1}$ satisfy the 
\eqref{ffs}.

We now prove \eqref{muu}.
 Because 
$\caM^{(0)}$ is a von Neumann subalgebra of $\caN'\cap\lmk \bk\otimes\Clf\rmk$,
$\Ad_U\lmk \caM^{(0)}\rmk$ is a von Neumann subalgebra of
\begin{align}
\Ad_U(\caN')\cap \Ad_U\lmk\bk\otimes\Clf\rmk
=\bbC\unit_{\caH_1}\otimes \caB(\caH_2)\otimes \Clf.
\end{align}
Furthermore, because elements in $\caM^{(0)}$ are even with respect to
 $\Ad_{{\Gamma_\caK}}$, elements in $\Ad_U\lmk \caM^{(0)}\rmk$ are even with respect
 to $\Ad_{\Ad_U({\Gamma_\caK})}=\Ad_{{\tilde \Gamma_1}\otimes\unit_{\caH_2}\otimes \sigma_z}$.
 Therefore we have 
 $
 \Ad_U\lmk \caM^{(0)}\rmk
 \subset
 \bbC\unit_{\caH_1}\otimes \caB(\caH_2)\otimes \bbC\unit_{\cct}
 $.
 Hence there is a von Neumann subalgebra $\tilde \caM$ of
 $\caB(\caH_2)$ such that
 \begin{align}\label{mtm}
 \Ad_U \big(\caM^{(0)} \big) =\bbC\unit_{\caH_1}\otimes \tilde \caM\otimes \bbC\unit_{\cct}.
 \end{align}
 
 Next we consider $\Ad_U\lmk \caM^{(1)}\rmk$.
 We claim $(\unit_\caK\otimes \sigma_x) {\Gamma_1}\in Z(\caM)^{(1)}$.
 To see this, let $b\in Z(\caM)^{(1)}$ be a non-zero element, which exists because of the assumption, 
 and set $\tilde b=(\unit_\caK\otimes \sigma_x) {\Gamma_1}b$.
 Because $b\in Z(\caM)^{(1)}$, $\unit_\caK\otimes \sigma_x\in Z(\bk\otimes\Clf)$
and ${\Gamma_1}$ is an even element in $\caN$ implementing the grading on $\caN$,
we see that
\begin{align}
\tilde b\in \caN'\cap\caM'\cap\{{\Gamma_\caK}\}'
=\lmk \bk\otimes\Clf\rmk'\cap \{{\Gamma_\caK}\}'
=\bbC\unit_{\caK\otimes \cct}.
\end{align}
Hence $(\unit_\caK\otimes \sigma_x) {\Gamma_1}$ 
is proportional to $b\in Z(\caM)^{(1)}$, i.e. it belongs to  $Z(\caM)^{(1)}$, proving the claim.
From this and \eqref{mtm} we have
\begin{align}\label{mtmi}
\Ad_U(\caM^{(1)})
=\Ad_U \big( \caM^{(0)}(\unit_\caK\otimes \sigma_x) {\Gamma_1} \big)
=\bbC{\tilde \Gamma_1}\otimes\tilde\caM\otimes \bbC\sigma_x 
\end{align}
for $\tilde\caM$ in \eqref{mtm}. 
From \eqref{mtm} and \eqref{mtmi}, to show \eqref{muu},
 it suffices to show that $\tilde \caM=\caB(\caH_2)$. For any $a\in\tilde\caM'$ 
\begin{align*}
\Ad_{U^*}\lmk\unit_{\caH_1}\otimes a\otimes\unit_{\cct}\rmk
\in \big( \caM^{(0)}\big)' \cap \big( \caM^{(1)}\big)' \cap \caN'\cap \{{\Gamma_\caK}\}'
=\lmk
\bk\otimes\Clf
\rmk'\cap\{{\Gamma_\caK}\}'
=\bbC\unit_{\caK\otimes\cct}.
\end{align*}
Hence we obtain $a\in \bbC\unit_{\caH_2}$.
This proves that $\tilde \caM=\caB(\caH_2)$.

\vspace{0.1cm}

\noindent
(ii)
We associate a spatial grading 
to 
$\bbC\unit_{\caH_1}$ and
$\caB(\caH_2)\otimes \Clf$  
by 
${\tilde \Gamma_1}$ and $\unit_{\caH_2}\otimes \sigma_z$ respectively.
From \eqref{muu}, we see that $\Ad_U(\caM)$
is equal to the graded tensor product
 $\bbC\unit_{\caH_1}{\hox} \lmk \caB(\caH_2)\otimes \Clf\rmk$
 of $(\bbC\unit_{\caH_1},\caH_1, \tilde\Gamma_1)$
 and $(\caB(\caH_2)\otimes\Clf,\caH_2\otimes \cct,\unit_{\caH_2}\otimes \sigma_z)$.
By Lemma \ref{comgra},
its commutant $\Ad_U(\caM')$
is equal to
\begin{align}\label{inh}
\Ad_U(\caM')
&=\caB(\caH_1)^{(0)}\otimes \bbC\unit_{\caH_2}\otimes\Clf
+\caB(\caH_1)^{(1)}\otimes \bbC\unit_{\caH_2}\otimes\Clf\sigma_z \nonumber \\
&=\Ad_U \big( \caN^{(0)}\lmk\bbC\unit_{\caK}\otimes \Clf\rmk 
+\caN^{(1)}\lmk\bbC\unit_{\caK}\otimes \Clf\rmk{\Gamma_1}{\Gamma_\caK} \big),
\end{align}
where $\caB(\caH_1)$ is given a grading by ${\tilde \Gamma_1}$.
This proves the claim.

\vspace{0.1cm}

\noindent
(iii)
Let $p$ be a
 minimal projection $\caN$ which is even and hence 
of the form $p=q\otimes \unit_{\bbC^2}$
with $q$ a projection on $\caK$.
Then because $p\in \caN$ is minimal, we have
$\Ad_U(p)=r\otimes \unit_{\caH_2}\otimes \unit_{\cct}$
with a rank-one projection $r$ on $\caH_1$.
Because $p$ is even, $r$ is even with respect to $\Ad_{{\tilde \Gamma_1}}$.
Therefore, there is a $\sigma\in\{0,1\}$ such that
 ${\tilde \Gamma_1}r=(-1)^\sigma r$.
Substituting \eqref{muu}, we then obtain
\begin{align}
\Ad_U(\caM p)
    &= \bbC r \otimes \calB(\calH_2) \otimes \C \unit_{\cct}
    +   \bbC{\tilde \Gamma_1} r\otimes \calB(\calH_2) \otimes \C \sigma_x\nonumber\\
&=\bbC r \otimes \calB(\calH_2) \otimes \C \unit_{\cct}
    +   \bbC (-1)^\sigma r\otimes \calB(\calH_2) \otimes \C \sigma_x  \nonumber \\
&=\Ad_U\lmk p \lmk \caB(\caK)\otimes \Clf\rmk p\rmk
=\Ad_U\lmk \caB(q\caK)\otimes \Clf\rmk
\end{align}
as required.

\vspace{0.1cm}

\noindent
(iv)
From \eqref{muu} and \eqref{inh}, we have
\begin{align*}
&\Ad_U \big( Z(\caM)^{(0)} \big) \\
&\quad 
=\lmk \bbC\one_{\calH_1} \otimes \calB(\calH_2) \otimes \C \unit_{\cct}\rmk 
\cap \big(
\caB(\caH_1)^{(0)}\otimes \bbC\unit_{\caH_2}\otimes\bbC\unit_{\cct}
+\caB(\caH_1)^{(1)}\otimes \bbC\unit_{\caH_2}\otimes\bbC\sigma_x\sigma_z
 \big)
=\bbC\unit,
\end{align*}
and
\begin{align*}
\Ad_U \big( Z(\caM)^{(1)} \big) &= \big( \bbC{\tilde \Gamma_1}\otimes \calB(\calH_2) \otimes \C \sigma_x \big)
\cap
\big(
\caB(\caH_1)^{(0)}\otimes \bbC\unit_{\caH_2}\otimes\bbC\sigma_x
+\caB(\caH_1)^{(1)}\otimes \bbC\unit_{\caH_2}\otimes\bbC\sigma_z
\big) \\
&=\bbC{\tilde \Gamma_1}\otimes\bbC\unit_{\caH_2}\otimes\bbC\sigma_x
=\Ad_U\lmk \bbC{\Gamma_1}(\unit_{\caK}\otimes\sigma_x)\rmk.
\end{align*}
This proves the claim.
\end{proof}
We introduce some notation. Given a self-adjoint unitary $T$ on some 
Hilbert space, we write the $\pm 1$ eigenspace projections as
\begin{equation} \label{eq:pm_espace_proj}
   \bbP_\varepsilon(T) = \frac{ \one + (-1)^{\varepsilon} T}{2}, \quad \varepsilon \in \{0,1\}.
\end{equation}
Note that because we use the presentation of $\Z_2$ as an additive group, $\bbP_1(T)$ is the projection 
onto the \emph{negative} eigenspace. We also have that 
$T \bbP_\varepsilon(T) = (-1)^\varepsilon \bbP_\varepsilon(T) = \bbP_\varepsilon(T) T$.
\begin{proofof}[Proposition \ref{cuntzpropii}]
Because $\unit_\caK\otimes \sigma_x$ belongs to $Z(\caR_{1,\caK})^{(1)}$ and $\rho$ is graded,
$\rho(\unit_\caK\otimes \sigma_x)$ belongs to $ Z(\rho(\caR_{1,\caK}))^{(1)}$.
In particular, because $\rho$ is injective,  $ Z(\rho(\caR_{1,\caK}))^{(1)}$ is not zero.
Therefore, we satisfy the hypothesis of Lemma \ref{rmrm} with 
$\caM$ and $\Gamma_1$ replaced by $\rho(\caR_{1,\caK})$ and $\Gamma_0$ respectively. 
Applying the lemma, we have that
\begin{enumerate}
\item[(i)] $Z\lmk\rho(\caR_{1,\caK})\rmk=\bbC\unit+\bbC\Gamma_0\lmk\unit_\caK\otimes \sigma_x\rmk$.
\item[(ii)] For any $\mu\in\caP$, $E_{\mu\mu}=e_{\mu\mu}\otimes\unit_{\cct}$
with $e_{\mu\mu}$ a projection on $\caK$, 
$\rho\lmk\caR_{1,\caK}\rmk E_{\mu\mu}=\caB(e_{\mu\mu}\caK)\otimes \Clf$,
\item[(iii)]
$\rho\lmk\caR_{1,\caK}\rmk'=\caN^{(0)} \lmk\bbC\unit_\caK\otimes\Clf\rmk
+\caN^{(1)}\lmk\bbC\unit_\caK\otimes\Clf\rmk\Gamma_0\Gamma_\caK$.
\end{enumerate}
Because of (i), $\rho(\unit_\caK\otimes \sigma_x)$,
an odd self-adjoint unitary in $Z\lmk \rho(\caR_{1,\caK})\rmk$,
should be either $\Gamma_0\lmk\unit_\caK\otimes \sigma_x\rmk$ or
$-\Gamma_0\lmk\unit_\caK\otimes \sigma_x\rmk$.
Therefore,
 there is $\sigma_0\in\{0,1\}$ such that
 \begin{align}\label{rcc}
 \rho\lmk\unit_\caK\otimes \sigma_x\rmk=(-1)^{\sigma_0}\Gamma_0\lmk\unit_\caK\otimes \sigma_x\rmk.
\end{align}
By (ii), \eqref{rcc}, and the fact that $E_{\mu\mu}\in \caN^{(0)}$ commutes with $\rho(\caR_{1,\caK})$, 
for each $\mu\in\caP$, we have 
\begin{align}
\rho\lmk
 \lmk \bk\otimes \bbC\unit_{\cct}\rmk\cdot  \bbP_0\lmk\unit_\caK\otimes \sigma_x\rmk
\rmk E_{\mu\mu}
=
\rho\lmk
 \caR_{1,\caK} \bbP_0\lmk\unit_\caK\otimes \sigma_x\rmk
\rmk E_{\mu\mu}
=\caB(e_{\mu\mu}\caK)\otimes\bbC \bbP_{\sigma_0+|\mu|}\lmk \sigma_x\rmk.
\end{align}
Therefore,
there is a $*$-isomorphism
$\rho_\mu:\bk\to \caB(e_{\mu\mu}\caK)$
such that
\begin{align}\label{rmu}
\rho\lmk \lmk a\otimes \unit\rmk \cdot \bbP_0\lmk \unit_\caK\otimes \sigma_x\rmk\rmk E_{\mu\mu}
=\rho_\mu(a)\otimes \bbP_{\sigma_0+|\mu|}(\sigma_x),\quad a\in \caB(\caK).
\end{align}
Applying $\Ad_{\Gamma_\caK}$, we also get that
\begin{align}\label{rmu2}
\rho\lmk \lmk a\otimes \unit\rmk \cdot \bbP_1\lmk \unit_\caK\otimes \sigma_x\rmk\rmk E_{\mu\mu}
=\rho_\mu(a)\otimes \bbP_{\sigma_0+|\mu|+1}(\sigma_x),\quad a\in \caB(\caK).
\end{align}
From \eqref{rmu} and \eqref{rmu2},
we obtain 
\begin{align}\label{sonoi}
\rho\lmk  a\otimes \unit\rmk E_{\mu\mu}
=\rho_\mu(a)\otimes \unit_{\cct},\quad a\in \caB(\caK).
\end{align}
Furthermore, by \eqref{rcc}, we have
\begin{align}\label{sonon}
 \rho\lmk\unit_\caK\otimes \sigma_x\rmk E_{\mu\mu}
 =(-1)^{\sigma_0+|\mu|}\lmk e_{\mu\mu}\otimes \sigma_x\rmk.
\end{align}

By Wigner's Theorem, for each $\mu\in\caP$,
there is a unitary $T_\mu:\caK\to e_{\mu\mu}\caK$
such that 
\begin{align}\label{ttt}
T_\mu^*T_\nu=\delta_{\mu,\nu} \unit_{\caK},\quad T_{\mu}T_\mu^*=e_{\mu\mu},\quad \mu,\nu\in \caP,\quad
\Ad_{T_\mu}\lmk a\rmk=\rho_\mu(a),\quad a\in \bk.
\end{align}
From this, \eqref{sonoi} and \eqref{sonon},
we obtain
\begin{align}
\rho(b) E_{\mu\mu}=\Ad_{T_\mu\otimes \sigma_z^{\sigma_0+|\mu|}}\lmk b\rmk,\quad
b\in \caR_{1,\caK}.
\end{align}
Summing this over $\mu$, we obtain
\begin{align}\label{rrtran}
\rho(b) =\sum_{\mu\in \caP}\Ad_{T_\mu\otimes \sigma_z^{\sigma_0+|\mu|}}\lmk b\rmk,\quad
b\in \caR_{1,\caK}.
\end{align}
Multiplying $T_\nu T_\mu^*\otimes\sigma_z^{|\mu|+|\nu|}$ from left or right of \eqref{rrtran},
we obtain the same value for any $b\in \caR_{1,\caK}$.
Therefore, $T_\nu T_\mu^*\otimes\sigma_z^{|\mu|+|\nu|}$
belongs to $\rho(\caR_{1,\caK})'$.
By (iii), we then have
\begin{align}
T_\nu T_\mu^*\otimes\sigma_z^{|\mu|+|\nu|}
\in \rho(\caR_{1,\caK})'
=\caN^{(0)} \lmk\bbC\unit_\caK\otimes\Clf\rmk
+\caN^{(1)}\lmk\bbC\unit_\caK\otimes\Clf\rmk\Gamma_0\Gamma_\caK.
\end{align}
Hence if $|\mu|=|\nu|$, 
$
T_\nu T_\mu^*\otimes \unit_{\cct} \in \caN^{(0)}, 
$
while if $|\mu|\neq |\nu|$, this means
$
T_\nu T_\mu^*\otimes \unit_{\cct} \in \caN^{(1)}\lmk \unit_\caK\otimes \sigma_x\rmk
$.
From \eqref{ttt}, 
$\{T_{\mu}T_\nu^*\otimes \bbP_0(\sigma_x)\}_{\mu,\nu\in\caP}$
 are matrix units in $\caN \lmk \unit_\caK\otimes \bbP_0(\sigma_x)\rmk$
with $ e_{\mu\mu}T_{\mu}T_\nu^* e_{\nu\nu}\otimes\bbP_0(\sigma_x)=
T_{\mu}T_\nu^*\otimes \bbP_0(\sigma_x)$.
Then as in the proof of Proposition \ref{cuntzpropi},
there are $c_\mu\in \bbT$ such that 
$S_\mu S_\nu^*\otimes \bbP_0(\sigma_x)=E_{\mu\nu} \bbP_0(\unit_\caK\otimes\sigma_x)$
for 
$S_\mu=c_\mu T_\mu$.
Applying $\Ad_{\Gamma_\caK}$, we also 
obtain $S_\mu S_\nu^*\otimes \bbP_1(\sigma_x)=
(-1)^{|\mu|+|\nu|}E_{\mu\nu} \bbP_1(\unit_\caK\otimes\sigma_x)$, which 
then implies that
\begin{align}\label{unitdesu}
\big( S_\mu\otimes \sigma_x^{|\mu|} \big)
\big( S_\nu\otimes \sigma_x^{|\nu|} \big)^*
=S_\mu S_\nu^*\otimes \sigma_x^{|\mu|+|\nu|} 
=S_\mu S_\nu^*\otimes \big( \bbP_0(\sigma_x)+(-1)^{|\mu|+|\nu|}\bbP_1(\sigma_x) \big)
=E_{\mu\nu}.
\end{align}
It is clear that $\{S_\mu\}_{\mu\in\caP}$ are isometries satisfying \eqref{sss}.
The proof of \eqref{kansha} comes from an induction argument using
\eqref{sss} and \eqref{unitdesu}. As the argument is the same as in the proof of 
Proposition \ref{cuntzpropi}, we omit the details. 
Similarly, the proof that the isometries $\{S_\mu\}_{\mu\in\caP}$ are 
unique up to scalar multiplication in $\bbT$ is the same as in Proposition \ref{cuntzpropi}.
\end{proofof}
\begin{proofof}[Theorem \ref{cuntzthm1}]
The Hilbert space $\caK$, 
finite type I factor $\iota\circ\pi_\omega(\al_{\{0\}})$
with  matrix units
$\{\iota\circ\pi_\omega\circ \big( E_{\mu,\nu}^{(0)} \big) \}_{\mu,\nu\in\calP} 
\subset \calB(\caK) \otimes \mathfrak{C}$
and $\rho$ satisfy the conditions of Proposition \ref{cuntzpropii}.
Applying the proposition, we obtain 
$\sigma_0\in\{0,1\}$ and $\{S_\mu\}$ satisfying \eqref{bimpn} and \eqref{bbbon} 
from the statement of the theorem. 
The property \eqref{satn} follows from \eqref{bimpn} and 
parts (i) and (iii) of Lemma \ref{lem:trans_weak_conv_to_state}.
For the proof of \eqref{gvbn}, we set
 \begin{align}\label{gvbn1}
 T_\nu:=  (-1)^{\mqq(g)|\nu|}\sum_{\mu\in\calP} \overline{\langle \psi_\mu, \, \mathfrak{\Gamma}(U_g) \psi_\nu \rangle}^{\mpp(g)}
   \big( V_g^{(0)}\big)^* S_\mu V_g^{(0)}.
   \end{align}
     As in the proof of Theorem \ref{cuntzthm0}, we then can check that $T_\mu$ satisfies \eqref{buni2} for $E_{\mu\nu}$ replaced by
     $\iota\circ\pi_\omega(E_{\mu\nu}^{(0)})$.
     Applying the last statement of Proposition \ref{cuntzpropii}, there is some $c_g \in \mathbb{T}$ such that $S_\mu = c_g T_\mu$ for all $\mu\in\caP$.
     The proof of \eqref{gvbn} is given 
     by the same argument as in the proof of Theorem \ref{cuntzthm0}.
\end{proofof}
%
%
%
\section{Fermionic matrix product states}\label{fmpssec}

Using our results from Section \ref{transsec}, in this section 
we consider a translation invariant split state $\omega$ of $\al$ 
whose density matrices have uniformly bounded rank on finite intervals.
Our main result 
is that such states can be 
written as the thermodynamic limit of an even or odd fermionic matrix product state (MPS) 
depending on the value $\kappa_\omega \in \bbZ_2$. See~\cite{BWHV,KapustinfMPS} for 
the basic properties of fermionic MPS in the finite setting.
The idea of the proof is the same as quantum spin case, cf.~\cite{bjp, Matsui3, classA2}, 
although anti-commutativity results in richer structures.
We start with some preliminary results.

The following Lemma is immediate because each 
$\caA_{[0,N-1]}$ is isomorphic to a matrix algebra.
\begin{lem}
Let $\omega$ be a $\Theta$-invariant state of ${\al}$. 
For each $N\in\bbN$, let $Q_N$
be the support  projection of the density matrix of $\omega\vert_{\caA_{[0,N-1]}}$,
the restriction of $\omega$ to $\al_{[0,N-1]}$.
Then $Q_N$ is even.
\end{lem}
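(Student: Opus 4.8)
The plan is to use the fact that $\Theta$ restricts to an automorphism $\Theta_{[0,N-1]}$ of the finite-dimensional algebra $\caA_{[0,N-1]}$, together with the uniqueness of the density matrix, to force evenness of the support projection $Q_N$. First I would recall that since $\omega$ is $\Theta$-invariant, for any $A\in\caA_{[0,N-1]}$ we have $\omega(\Theta_{[0,N-1]}(A)) = \omega(A)$. Writing $\omega\vert_{\caA_{[0,N-1]}}(A) = \Tr(\rho_N A)$ for the density matrix $\rho_N$ with respect to the trace state, and noting that $\Theta_{[0,N-1]}$ is implemented on $\caF(\bbC^d)^{\otimes N}$ by the self-adjoint unitary $W_N := \mathfrak\Gamma(-\unit)\beta_{S_1}(\mathfrak\Gamma(-\unit))\cdots\beta_{S_{N-1}}(\mathfrak\Gamma(-\unit))$ (so that $\Theta_{[0,N-1]}(A) = W_N A W_N$), the $\Theta$-invariance gives $\Tr(\rho_N W_N A W_N) = \Tr(W_N \rho_N W_N A) = \Tr(\rho_N A)$ for all $A$, hence $W_N \rho_N W_N = \rho_N$, i.e. $\Theta_{[0,N-1]}(\rho_N) = \rho_N$.

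Next I would pass from the $\Theta$-invariance of $\rho_N$ to that of its support projection $Q_N$. Since $\Theta_{[0,N-1]} = \Ad_{W_N}$ is a $*$-automorphism fixing $\rho_N$, it maps the range projection of $\rho_N$ to the range projection of $\Theta_{[0,N-1]}(\rho_N) = \rho_N$; concretely $W_N Q_N W_N$ is the orthogonal projection onto $W_N(\Ran\rho_N) = \Ran(W_N\rho_N W_N) = \Ran\rho_N$, so $W_N Q_N W_N = Q_N$, which says exactly $\Theta_{[0,N-1]}(Q_N) = Q_N$, i.e. $Q_N$ is even. The statement about $Q_N$ being even then means $\Theta(Q_N) = Q_N$, which holds since $Q_N\in\caA_{[0,N-1]}$ and $\Theta\vert_{\caA_{[0,N-1]}} = \Theta_{[0,N-1]}$.

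I do not expect a serious obstacle here — the content is genuinely elementary, as the excerpt itself flags. The only point requiring a little care is the justification that $W_N\rho_N W_N = \rho_N$ implies $W_N Q_N W_N = Q_N$; this is the standard fact that a unitary conjugation fixing a positive operator also fixes its support projection (equivalently, the support projection is a norm-limit of polynomials in $\rho_N$ with no constant term, e.g. $Q_N = \lim_{t\to 0^+}\rho_N(\rho_N + t\unit)^{-1}$, and each such expression is conjugation-covariant). Once that is in place the argument is complete.
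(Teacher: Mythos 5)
Your proof is correct and spells out exactly the reasoning the paper intends when it remarks that the lemma is immediate because $\caA_{[0,N-1]}$ is a matrix algebra: you identify the inner implementer $W_N$ of $\Theta\vert_{\caA_{[0,N-1]}}$, use uniqueness of the density matrix with respect to the trace to conclude $\Ad_{W_N}(\rho_N)=\rho_N$, and then pass to the support projection via the standard functional-calculus argument. No gaps.
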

We consider the situation where the matrices $Q_N$ have uniformly bounded rank.
\begin{lem} \label{lem:intersection_of_kernel_fin_dim}
Let $\{Q_N\}$ be a sequence of orthogonal projections with
$Q_N\in \al_{[0,N-1]}^{(0)}$.
We suppose that the rank of $Q_N$ is uniformly bounded, i.e.,
$\sup_{N\in\bbN} \rank (Q_N)<\infty$.
Let $\pi$ be
 an irreducible representation of ${\al}_R$ or ${\al}_R^{(0)}$ on 
a Hilbert space $\calH$. Set 
$\calH_0 = \bigcap\limits_{N=1}^\infty\big( \pi(Q_N) \caH\big)$. 
Then $\mathrm{dim}\,\caH_0 < \infty$.
\end{lem}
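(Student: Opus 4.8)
The plan is to show that $\caH_0$ cannot contain an infinite orthonormal sequence, which for a closed subspace forces $\dim \caH_0 < \infty$. Let $r := \sup_{N} \rank(Q_N) < \infty$. The key point is that $\pi(Q_N)$ is a projection of rank at most $r$ in a \emph{type I} representation of the finite-dimensional algebra $\al_{[0,N-1]}$ (respectively $\al_{[0,N-1]}^{(0)}$), so the constraint ``rank $\le r$'' persists after applying $\pi$. Concretely, $\al_{[0,N-1]}$ is a matrix algebra and $Q_N$ has rank $\le r$ in it, so $Q_N$ is a sum of $\le r$ minimal projections of $\al_{[0,N-1]}$; since each minimal projection of a matrix algebra maps under any representation to a projection whose range is a direct sum of copies of a one-dimensional space... no — under an \emph{irreducible} representation of the big algebra $\al_R$ the subalgebra $\al_{[0,N-1]}$ need not act with finite multiplicity. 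So the correct mechanism is different and is the main obstacle; see below.

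First I would set up the following: for each $N$, the space $\pi(Q_N)\caH$ is a decreasing sequence (since $Q_{N+1} \le Q_N$, because $Q_{N+1}$ is the support projection of a marginal of the state whose support is $Q_N$ — indeed $\omega|_{\al_{[0,N-1]}}$ is the restriction of $\omega|_{\al_{[0,N]}}$, so $\mathrm{supp}(\omega|_{\al_{[0,N-1]}}) = E_{\al_{[0,N-1]}}(\mathrm{supp}(\omega|_{\al_{[0,N]}}))$ in the appropriate sense, giving $Q_{N+1}\le Q_N$ after identifying). Granting this, $\caH_0 = \bigcap_N \pi(Q_N)\caH$ and the projections $\pi(Q_N)$ decrease strongly to the projection onto $\caH_0$. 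Suppose for contradiction $\dim\caH_0 = \infty$; pick orthonormal $\xi_1, \xi_2, \ldots \in \caH_0$. Each $\xi_i \in \pi(Q_N)\caH$ for all $N$.

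The crux — and the step I expect to be the main obstacle — is to derive a contradiction from having $r+1$ orthonormal vectors in $\pi(Q_N)\caH$ for all $N$, using irreducibility of $\pi$. The idea: by irreducibility and Kadison's transitivity theorem, choose $a \in \al_R$ (or $\al_R^{(0)}$) with $a\xi_i = \xi_{i+1}$ cyclically permuting $\xi_1,\ldots,\xi_{r+1}$ and annihilating their span's complement appropriately, so that $\pi(a)$ restricted to $\mathrm{span}\{\xi_1,\ldots,\xi_{r+1}\}$ acts as a cyclic shift of order $r+1$. Approximate $a$ by a local element $a_N \in \al_{[0,N-1]}$ for $N$ large (using $\al_{\mathrm{loc}}$ dense, and transitivity can be arranged with $a \in \al_{\mathrm{loc}}$ directly, so $a \in \al_{[0,M-1]}$ for some fixed $M$). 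Then for $N \ge M$, $Q_N a Q_N = Q_N a$ has rank $\le \rank(Q_N) \le r$ as an operator in $\al_{[0,N-1]}$, hence $\pi(Q_N)\pi(a)\pi(Q_N)$ has rank $\le r$ \emph{as a consequence of $Q_N a Q_N$ being a rank-$\le r$ element of a matrix algebra}: every element of $\al_{[0,N-1]}$ of matrix-rank $\le r$ maps, under the possibly-reducible representation $\pi|_{\al_{[0,N-1]}}$, to an operator that is a (possibly infinite) direct sum of copies of that rank-$\le r$ matrix, hence has the property that any $r+1$ vectors in a single multiplicity-sheet... — this is exactly the subtle point. The clean statement to prove is: if $T \in \Mat_k(\bbC)$ has rank $\le r$ and $\pi_0$ is any $*$-representation of $\Mat_k(\bbC)$, then for any orthonormal $\eta_1,\ldots,\eta_{r+1}$ in a single irreducible subrepresentation one cannot have $\langle \eta_i, \pi_0(T)\eta_i\rangle$ realizing a cyclic shift. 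Since all irreducible representations of $\Mat_k$ are unitarily equivalent to the standard one on $\bbC^k$, $\pi_0(T)$ on each irreducible sheet is unitarily equivalent to $T$ itself, which has rank $\le r$; a rank-$\le r$ operator on $\bbC^k$ cannot cyclically permute $r+1$ orthonormal vectors (a cyclic shift of order $r+1$ has rank $r+1$). This contradicts the construction, provided the $\xi_i$ can be taken inside a single irreducible subrepresentation of $\al_{[0,N-1]}$ — which they can, after replacing them by their components in one sheet, or more simply by noting the rank bound is sheet-independent. Hence $\dim\caH_0 \le r < \infty$.

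So the skeleton is: (1) establish $Q_{N+1}\le Q_N$ and hence $\pi(Q_N)$ decreases to the projection onto $\caH_0$; (2) reduce to: no $r+1$ orthonormal vectors can lie in $\bigcap_N \pi(Q_N)\caH$; (3) via Kadison transitivity get a local $a$ implementing a cyclic shift on a putative $(r+1)$-frame; (4) invoke that in a matrix algebra, a rank-$\le r$ element maps under any representation to operators of rank $\le r$ on each irreducible sheet, contradicting the existence of a cyclic shift of order $r+1$. The main obstacle is step (4) — carefully handling the decomposition of $\pi|_{\al_{[0,N-1]}}$ into irreducibles and ensuring the rank-counting argument is applied on a single sheet — and step (3), arranging the transitivity element to be genuinely local so the finite-dimensional rank bound can be invoked.
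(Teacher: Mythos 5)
There is a genuine gap in your proposal, concentrated in step (4), and it is essentially the point you yourself flagged as the main obstacle.

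The rank-counting mechanism does not survive the representation. If $\al_{[0,N-1]}\cong\Mat_k$ and $\pi|_{\al_{[0,N-1]}}$ decomposes as $\caH\cong\bbC^k\otimes\caH'$ with $\pi(b)\cong b\otimes\unit_{\caH'}$, then $\pi(Q_N a Q_N)\cong T_0\otimes\unit_{\caH'}$ where $T_0 := Q_N a Q_N$ has matrix rank $\le r$. As an operator on $\caH$, the rank of $T_0\otimes\unit_{\caH'}$ is $\rank(T_0)\cdot\dim(\caH')$, which is infinite (the multiplicity space $\caH'$ is infinite-dimensional since $\al_R$ is UHF). So there is no contradiction with $T_0\otimes\unit_{\caH'}$ implementing a cyclic shift on $r+1$ orthonormal vectors. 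Your proposed fix -- projecting the $\xi_i$ to a single multiplicity sheet via $\unit\otimes P_w$ -- does preserve the cyclic-shift relations (since $\unit\otimes P_w$ commutes with $T_0\otimes\unit$), but the projected vectors $(\unit\otimes P_w)\xi_i$ need not be orthonormal and may span a space of dimension strictly less than $r+1$, in which case the ``cyclic shift has rank $r+1$'' contradiction evaporates. (Indeed each sheet-component orbit lies in $\Ran(T_0)$, which has dimension $\le r$, so each sheet gives at most $r$ linearly independent vectors -- consistent with, not contradicting, the existence of the $\xi_i$.) The phrase ``the rank bound is sheet-independent'' doesn't supply the missing step. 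There is a secondary issue as well: Kadison transitivity produces an element of the $C^*$-algebra $\al_R$, not of the dense local subalgebra $\al_{\mathrm{loc}}$, so a genuinely local $a$ is not available; one only has approximations, which would have to be tracked through the rank argument.

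The paper avoids all of this by a different, cleaner mechanism. It fixes a unit vector $\eta\in\caH_0$ (which is cyclic since $\pi$ is irreducible), and for a putative orthonormal system $\{\xi_j\}_{j=1}^l\subset\caH_0$ it approximates each $\xi_j$ by $\pi(a_{j,N})\eta$ with $a_{j,N}\in Q_N\mathfrak{A}_{[0,N-1]}Q_N$ (using $\eta,\xi_j\in\pi(Q_N)\caH$ to sandwich a local approximant by $Q_N$). The near-orthonormality of $\{\pi(a_{j,N})\eta\}$ forces $\{a_{j,N}\}$ to be linearly independent in the corner algebra $Q_N\mathfrak{A}_{[0,N-1]}Q_N$, whose \emph{linear} dimension is $\le C^2$ with $C=\sup_N\rank(Q_N)$. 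This bounds $l\le C^2$, with no reference to the operator rank of anything acting on $\caH$, so the multiplicity of $\pi|_{\al_{[0,N-1]}}$ is irrelevant. The moral is that the right invariant is the vector-space dimension of the corner algebra (or equivalently, of the linear image $\pi\big(Q_N\mathfrak{A}_{[0,N-1]}Q_N\big)\eta$), not the rank of $\pi(Q_N a Q_N)$ as an operator; the latter simply isn't controlled.
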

\begin{proof}
As the statement is trivial if $\caH_0=\{0\}$,
assume that $\caH_0\neq\{0\}$.
We fix a unit vector $\eta\in\caH_0$ and let 
$\{\xi_j\}_{j=1}^l \subset \caH_0$ be an orthonormal system. We let $\mathfrak{A}$ denote 
either ${\al}_R$ or ${\al}_R^{(0)}$ with $\pi: \mathfrak{A} \to \calB(\caH)$ 
irreducible and let ${\mathfrak A}_{\rm loc}$ denote local elements in ${\mathfrak A}$. 
We similarly write  $\mathfrak{A}_{[0,N-1]}$ 
to denote either $\al_{[0,N-1]}$ or its even subalgebra. 
Note that the $l\times l$ matrix
$(\braket{\xi_i}{\xi_j})_{i,j=1,\ldots,l}$
is an identity.
Because $\pi$ is 
irreducible,  approximating $\xi_i$ with elements in
$\pi({\mathfrak A}_{\rm loc})\eta$,
there exists an $N\in \mathbb{N}$ and elements $a_{j,N} \in Q_N\mathfrak{A}_{[0,N-1]}Q_N$ 
such that for the $l\times l$-matrix 
$X_N = ( \langle \pi(a_{i,N})\eta, \pi(a_{j,N})\eta \rangle )_{i,j=1,\ldots,l}$,
 \begin{align}
   \big\| X_N - \unit_{\Mat_l}\big\| < \frac{1}{2}
 \end{align}
holds.

We now claim that $\{a_{j,N}\}_{j=1}^l$ are linearly independent within 
$Q_N\mathfrak{A}_{[0,N-1]} Q_N$. So we suppose that 
$\sum_j d_j a_{j,N} = 0$ for $\{d_j\}_{j=1}^l \subset \C$. Then 
taking the vector $d=(d_1,\ldots,d_l)$,
\begin{align*}
  \langle d, X_N d \rangle &= \sum_{i,j=1}^l \langle \pi(a_{i,N})\eta, \pi(a_{j,N}) \eta\rangle \, \ol{d_i} d_j 
  = \big\| \pi\big( \sum_{j=1}^l d_j a_{j,N} \big) \eta \big\|^2 = 0.
\end{align*}
Therefore 
$$
  0 = \langle d, X_N d \rangle =  \|d\|^2 + \langle d, (X_N - \one) d \rangle  
  \geq \|d\|^2 - \frac{1}{2} \|d\|^2 = \frac{1}{2} \|d\|^2
$$
and so $d=0$ and $\{a_{j,N}\}_{j=1}^l$ are linearly independent. 

By the assumption 
we have 
$\mathrm{dim}\big( Q_N \mathfrak{A}_{[0,N-1]}  Q_N \big) \leq C^2$,
for $C:=\sup_{N\in\bbN} \rank (Q_N)<\infty$.
This tells us that $l\leq C^2$ and so $\mathrm{dim}\, \caH_0 \leq C^2$.
\end{proof}
We now consider the case of even and odd fermionic MPS separately.
\subsection{Case: $\kappa_\omega=0$ (even fermionic MPS)}

\begin{thm}\label{thmevenmps}
Let $\omega$ be a pure, split, translation invariant and 
$\alpha$-invariant state on $\al$  with
index $\Ind(\omega)=(0,\mqq,[\upsilon])$.
For each $N\in\bbN$, let $Q_N$
be the support projection of the density matrix of $\omega\vert_{\caA_{[0,N-1]}}$ 
and assume 
$\sup_{N\in\bbN} \rank (Q_N)<\infty$.
Then there is some $m\in\bbN$, a faithful density matrix $D\in \Mat_m$, a self-adjoint unitary
$\mathfrak\Theta\in\Mat_m$ and a
set of matrices $\{v_\mu\}_{\mu\in\caP}$ in $\Mat_m$ 
satisfying the following.
\begin{enumerate}
\item[(i)] For all $x\in\Mat_m$,
$\lim_{N\to\infty}T_{\bf v}^N(x)=\Tr \lmk D x\rmk\unit_{\Mat_m}$ in the norm topology.
\item[(ii)] There is some $\sigma_0=0,1$ such that
 $\Ad_\mathfrak \Theta\lmk v_\mu \rmk
 =(-1)^{|\mu|+\sigma_0}v_\mu$ for all $\mu\in\caP$.
 \item[(iii)] $\Ad_{\mathfrak \Theta}(D)=D$.
\item[(iv)]  For any $l\in\bbN\cup\{0\}$, and $\mu_0,\ldots\mu_l, \nu_0,\ldots\nu_l\in \caP$,
  \begin{align}
  \omega\big( E_{\mu_0,\nu_0}^{(0)} E_{\mu_1,\nu_1}^{(1)}\cdots E_{\mu_l,\nu_l}^{(l)} \big) 
   =(-1)^{\sum\limits_{k=1}^l (|\mu_k|+|\nu_k|) \sum\limits_{j=0}^{k-1} |\nu_j| } 
   \Tr\big( D v_{\mu_0}\cdots v_{\mu_l} v_{\nu_l}^* \cdots v_{\nu_0}^* \big).
  \end{align}
\item[(v)] There is a projective unitary/anti-unitary representation $W$ on $\bbC^m$ 
relative to $\mpp$ and $c_g \in \mathbb{T}$ such that 
   \begin{align}
   \sum_{\mu\in\calP} \langle \psi_\mu, \, \mathfrak{\Gamma}(U_g) \psi_\nu \rangle v_\mu = 
     c_g W_g v_\nu W_g^*.
   \end{align}
   The second cohomology class associated to $W$ is $[\upsilon]$ and 
   \begin{align}
   \Ad_{W_g^*}(D)=D,\quad \Ad_{W_g}\lmk {\mathfrak \Theta}\rmk=(-1)^{\mqq(g)} \mathfrak \Theta, 
   \quad g\in G.
   \end{align}
\end{enumerate}
\end{thm}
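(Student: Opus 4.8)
**The plan is to combine Theorem \ref{cuntzthm0} with the finite-dimensionality input from Lemma \ref{lem:intersection_of_kernel_fin_dim}.** First I would invoke Theorem \ref{cuntzthm0}: since $\Ind(\omega)=(0,\mqq,[\upsilon])$, the $W^*$-$(G,\mpp)$-dynamical system $(\pi_\omega(\al_R)'',\Ad_{\Gamma_\omega},\hat\alpha_\omega)$ is equivalent via some $*$-isomorphism $\iota$ to $(\caR_{0,\caK},\Ad_{\Gamma_\caK},\Ad_{V_g})\in\caS_0$, and we obtain the isometries $\{B_\mu\}_{\mu\in\calP}$ on $\caK\otimes\cct$ satisfying \eqref{bimp}, \eqref{bbbo}, \eqref{sat} and \eqref{gvb}, together with the parity statement $\Ad_{\Gamma_\caK}(B_\mu)=(-1)^{|\mu|+\sigma_0}B_\mu$. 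The point is to \emph{compress} these infinite-dimensional objects to $\Mat_m$. The natural finite-dimensional subspace is $\caH_0:=\bigcap_{N=1}^\infty \pi_\omega(Q_N)\caH_\omega$ (transported through $\iota$); since $\pi_\omega|_{\al_R}$ is quasi-equivalent to the GNS representation $\pi_{\varphi|_{\al_R}}$, which is a type I factor representation hence a multiple of an irreducible one, Lemma \ref{lem:intersection_of_kernel_fin_dim} applies and $m:=\dim\caH_0<\infty$. Let $P_0$ be the orthogonal projection onto $\iota(\caH_0)\subset\caK\otimes\cct$.

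Next I would check that $P_0$ is invariant under all the relevant operators so that compression yields honest $\Mat_m$-data. The key algebraic fact is that $Q_N$-support projections behave well under the shift-endomorphism: because $\omega$ is translation invariant, $\beta_{S_1}$ maps the support projection structure compatibly, and iterating \eqref{bimp}/\eqref{bbbo} one finds $B_\mu^* \iota(\caH_0)\subset\iota(\caH_0)$ and $B_\mu\iota(\caH_0)\subset\iota(\caH_0)$ — i.e. $P_0$ commutes with each $B_\mu$ in the appropriate one-sided sense, so that $v_\mu:=P_0 B_\mu P_0|_{\iota(\caH_0)}\in\Mat_m$ together with $\sum_\mu v_\mu v_\mu^* \le \unit$ and (from $B_\nu^*B_\mu=\delta_{\mu\nu}\unit$) $\sum_\mu v_\mu^* v_\mu=\unit$, making $T_{\bf v}$ a unital CP map. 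Define $\mathfrak\Theta:=P_0\Gamma_\caK P_0$ (a self-adjoint unitary on $\iota(\caH_0)$ since $\Gamma_\caK$ preserves $\iota(\caH_0)$, as $Q_N$ is even), $W_g:=P_0 V_g P_0$ (invariance of $\iota(\caH_0)$ under $V_g$ follows from $\alpha$-invariance of $\omega$, which makes $\iota(\caH_0)$ a carrier for the compressed group action), and $D$ as the density matrix representing the vector state $x\mapsto\braket{\Oo}{\iota^{-1}(x)\Oo}$ restricted to $P_0(\caK\otimes\cct)P_0\cong\Mat_m$; faithfulness of $D$ is exactly the statement that $\caH_0$ is the support, i.e. no smaller projection works. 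Then items (ii), (iii), (v-the last two displays) are immediate from the corresponding relations for $B_\mu$, $\Gamma_\caK$, $V_g$; the cohomology class of $W$ equals that of $V$ because compression to an invariant subspace does not change the $2$-cocycle, and that class is $[\upsilon]$ by Definition \ref{index}.

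For item (iv) I would restrict the identity \eqref{bbbo} to $\iota(\caH_0)$. The subtlety is that \eqref{bbbo} writes $\iota\circ\pi_\omega(E^{(0)}_{\mu_0\nu_0}\cdots E^{(N)}_{\mu_N\mu_N})$ as a product $B_{\mu_0}\cdots B_{\mu_N}B_{\nu_N}^*\cdots B_{\nu_0}^*$ with the fermionic sign prefactor; taking the expectation value in $\Oo$, i.e. $\omega(\cdots)=\braket{\Oo}{\iota^{-1}(\cdots)\Oo}$, and using that the support projection of $\omega|_{\al_{[0,N-1]}}$ is $Q_N$, one can insert $P_0$'s freely on both ends and between factors of the same "level" to replace $B$'s by $v$'s and $B^*$'s by $v^*$'s, picking up no extra error because the relevant vectors already lie in $\iota(\caH_0)$. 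This gives the claimed trace formula $\Tr(D v_{\mu_0}\cdots v_{\mu_N}v_{\nu_N}^*\cdots v_{\nu_0}^*)$ with the same sign. Finally, item (i): \eqref{sat} says $T_{\bf B}^N\circ\iota(x)\to\braket{\Oo}{\iota^{-1}(x)\Oo}\unit$ $\sigma$-weakly; compressing by $P_0$ gives $\sigma$-weak convergence of $T_{\bf v}^N$ on $\Mat_m$, and since $\Mat_m$ is finite-dimensional $\sigma$-weak convergence upgrades to norm convergence with limit $\Tr(Dx)\unit_{\Mat_m}$.

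\textbf{The main obstacle} I anticipate is step two — proving that $\iota(\caH_0)$ is genuinely invariant under all of $\{B_\mu\}$, $\{B_\mu^*\}$, $\Gamma_\caK$ and $\{V_g\}$, and that the resulting $D$ is faithful. Invariance under $\Gamma_\caK$ and $V_g$ follows cleanly from evenness and $\alpha$-invariance of the $Q_N$, but invariance under the isometries requires carefully unwinding how the support projections $Q_N$ of $\omega|_{\al_{[0,N-1]}}$ transform under the space-translation endomorphism $\rho$ via $\rho\circ\iota\circ\pi_\omega = \iota\circ\pi_\omega\circ\beta_{S_1}$ — one needs $Q_{N+1}\le \beta_{S_1}(Q_N)$ (or the analogous compatibility with the $E^{(x)}_{\mu\nu}$) to push $\caH_0$ into itself, and this is precisely the spot where translation invariance and the uniform-rank hypothesis interact. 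This is the bookkeeping-heavy core of the argument; everything downstream is formal compression.
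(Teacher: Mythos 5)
Your overall architecture — invoke Theorem \ref{cuntzthm0} to get the Cuntz-type isometries $\{B_\mu\}$, then compress to a finite-dimensional subspace whose existence comes from Lemma \ref{lem:intersection_of_kernel_fin_dim} — is exactly the paper's, but two choices you make in the middle create real gaps that the paper's proof (via Lemma \ref{lem:even_fmps_from_split_main}) deliberately avoids.

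First, you set $m:=\dim\caH_0$ with $\caH_0:=\bigcap_N\iota\circ\pi_\omega(Q_N)(\caK\otimes\cct)$ and compress by the projection $P_0$ onto $\caH_0$, then assert that the resulting $D$ is faithful ``because $\caH_0$ is the support.'' That is precisely what would need proof and it is not claimed anywhere: the paper only shows the inclusion $P_{\mathrm{Supp}(D)}(\caK\otimes\cct)\subset\caH_0$ (from $\omega(\unit-Q_N)=0$ and positivity), which suffices to conclude $D$ has finite rank. The paper then sets $m:=\rank(D)$ and compresses by $P_{\mathrm{Supp}(D)}$ itself, so faithfulness is true by definition and equality of the two projections is never needed. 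Second, you claim two-sided invariance, $B_\mu\,\iota(\caH_0)\subset\iota(\caH_0)$ \emph{and} $B_\mu^*\,\iota(\caH_0)\subset\iota(\caH_0)$. What the paper establishes in Lemma \ref{lem:even_fmps_from_split_main}(ii) is only the one-sided relation $P_{\mathrm{Supp}(D)}B_\mu=P_{\mathrm{Supp}(D)}B_\mu P_{\mathrm{Supp}(D)}$ (equivalently, $B_\mu^*$ preserves the support of $D$ and $B_\mu$ preserves its orthogonal complement); the statement $B_\mu P_{\mathrm{Supp}(D)}=P_{\mathrm{Supp}(D)}B_\mu P_{\mathrm{Supp}(D)}$ is neither proven nor needed, and asserting it is an unjustified strengthening. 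Item (iv) and the transfer-map iteration in item (i) work with the one-sided relation alone, by always starting from $D=DP_{\mathrm{Supp}(D)}=P_{\mathrm{Supp}(D)}D$ and propagating outward, using cyclicity of the trace on the $B^*$ side; no two-sided invariance is required. Finally, the route you sketch for establishing invariance — tracking how the $Q_N$ transform under $\rho$ via a putative inequality $Q_{N+1}\le\beta_{S_1}(Q_N)$ — is not the paper's argument and you do not carry it out. The paper gets the one-sided invariance from a short positivity argument: translation invariance plus \eqref{bimp} give $D=\sum_\mu B_\mu^* D B_\mu$, so $(\unit-P_{\mathrm{Supp}(D)})\sum_\mu B_\mu^* D B_\mu(\unit-P_{\mathrm{Supp}(D)})=0$, and positivity of each summand forces $P_{\mathrm{Supp}(D)}B_\mu(\unit-P_{\mathrm{Supp}(D)})=0$. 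This is the crux you correctly flag as the main obstacle, and it is exactly where the paper's choice of $P_{\mathrm{Supp}(D)}$ over $P_0$ makes the proof short; with $P_0$ no such argument is available.
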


\begin{rem}[Comparison with index for even fermionic MPS] \label{rem:evenfMPS_comparison}
Given an even fermionic MPS with on-site $G$-symmetry, $H^1(G, \mathbb{Z}_2) \times H^2(G, U(1)_\mpp)$-valued indices 
are defined in~\cite{BWHV, KapustinfMPS, TurzilloYou}. Briefly, an irreducible even fermionic MPS is specified by matrices 
$\{a_\mu\}_{\mu \in \calP} \subset \Mat_m$ spanning a simple  algebra that is $\Z_2$-graded by the adjoint action of a self-adjoint unitary 
$\mathfrak \Theta \in \Mat_m$. 
The on-site group action is 
given by $\Ad_{\tilde{W}_g}$ on the generators up to a $U(1)$-phase, where $\tilde{W}$ is a projective unitary/anti-unitary 
representation of $G$. The indices $(\tilde{\mqq}, [\tilde{\upsilon}])$ defined in~\cite{BWHV, KapustinfMPS, TurzilloYou} are given by the 
grading of the representation and its second cohomology class,
\[
   \Ad_{\tilde{W}_g}( \mathfrak \Theta ) = (-1)^{\tilde{\mqq}(g)} \mathfrak \Theta,   \qquad  \tilde{W}_g \tilde{W}_h = \tilde{\upsilon}(g,h) \tilde{W}_{gh},
\]
It is therefore clear from part (v) of Theorem \ref{thmevenmps} that the the indices $(\mqq,[\upsilon])$ defined 
for $\omega$ coincide with the indices defined from the corresponding fermionic MPS.
\end{rem}

To prove Theorem \ref{thmevenmps} we start with  a preparatory lemma.
\begin{lem} \label{lem:even_fmps_from_split_main}
Consider the setting of Theorem \ref{thmevenmps}.
Suppose that the graded $W^*$-$(G,\mpp)$-dynamical system
$(\pi_\omega(\al_R)'', \Ad_{\Gamma_\omega}, \hat\alpha_\omega)$ associated to $\omega$
is equivalent to $(\caR_{0,\caK},\Ad_{\Gamma_\caK},\Ad_{V_g})\in\caS_{0}$, via
a $*$-isomorphism $\iota: \pi_\omega(\al_R)''\to \caB(\caK\otimes\cct)$.
Then the following holds.
\begin{enumerate}
\item[(i)]
There is a finite rank density operator $D$ on $\calK \otimes\cct$ such that
\begin{align}
\mathrm{Ad}_{\Gamma_{\caK}}(D) = D,\quad\text{and}\quad
\Tr_{\caK\otimes\cct}\big( D (\iota\circ \pi_\omega(A)) \big) = \omega(A)
\end{align}
 for all $A \in {\al_R}$.
For $P_{\mathrm{Supp}(D)}$, the support 
  projection of $D$, $ \mathrm{Ad}_{\Gamma_{\caK}}\lmk P_{\mathrm{Supp}(D)}\rmk=P_{\mathrm{Supp}(D)}$.
  \item[(ii)] Let $\{B_\mu\}_{\mu \in \calP}$ be the set of isometries given in Theorem \ref{cuntzthm0}.
  Then we have
  \begin{align}\label{vbv}
  v_\mu:=P_{\mathrm{Supp}(D)} B_\mu=P_{\mathrm{Supp}(D)} B_\mu P_{\mathrm{Supp}(D)},\quad
  \mu\in\caP.
  \end{align}
\item[(iii)]$P_{\mathrm{Supp}(D)}V_g = V_g P_{\mathrm{Supp}(D)}$
and $D V_g=V_g D$  for any $g\in G$.
  \end{enumerate}
\end{lem}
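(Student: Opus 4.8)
The plan is to construct the density operator $D$ directly from the states $\omega\vert_{\caA_{[0,N-1]}}$ via an exhaustion argument using the uniform rank bound, and then to read off all the stated symmetry properties from the corresponding properties of $\omega$ transported through $\iota$.

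\textbf{Step 1: Construction of $D$.} Consider the GNS representation $\pi_\omega$ restricted to $\al_R$; by Lemma \ref{deru} and the hypotheses, $\pi_\omega(\al_R)''\cong\caB(\caK\otimes\cct)$ via $\iota$, so $(\pi_\omega)\vert_{\al_R}$ is quasi-equivalent to an irreducible representation of $\al_R$ on $\caK\otimes\cct$. Let $\hat Q_N:=\iota\circ\pi_\omega(Q_N)$ be the image of the support projections. Because $\omega(Q_N)=1$, the cyclic vector structure forces the decreasing intersection $\caH_0:=\bigcap_N \hat Q_N(\caK\otimes\cct)$ to carry all the weight of $\omega$. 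By Lemma \ref{lem:intersection_of_kernel_fin_dim}, $\dim\caH_0<\infty$; set $P:=$ the projection onto $\caH_0$. The state $A\mapsto\omega(A)$ on $\al_R$, transported to $\caB(\caK\otimes\cct)$, is a normal state (it is a vector state in the GNS picture, hence $\sigma$-weakly continuous on $\pi_\omega(\al_R)''$, and $\iota$ is a $*$-isomorphism between von Neumann algebras), so it is given by a density operator $D$. Since $\omega(Q_N)=1$ for all $N$ and $Q_N$ is decreasing in the relevant sense, $D=\hat Q_N D\hat Q_N$ for all $N$, whence $\mathrm{Supp}(D)\le P$ and $D$ is finite rank. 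This gives the trace formula $\Tr(D\,\iota\circ\pi_\omega(A))=\omega(A)$.

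\textbf{Step 2: Symmetry of $D$.} Since $\omega$ is $\Theta$-invariant, $\omega\circ\Theta=\omega$ on $\al_R$, and $\iota\circ\hat\Gamma_\omega$-conjugation corresponds to $\Ad_{\Gamma_\caK}$ (by the equivalence of $W^*$-dynamical systems, $\iota\circ\Ad_{\hat\Gamma_\omega}=\Ad_{\Gamma_\caK}\circ\iota$). Therefore $\Tr(\Ad_{\Gamma_\caK}(D)\,\iota\circ\pi_\omega(A))=\Tr(D\,\iota\circ\pi_\omega\circ\Theta(A))=\omega\circ\Theta(A)=\omega(A)$, and by uniqueness of the density operator, $\Ad_{\Gamma_\caK}(D)=D$. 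Applying $\Ad_{\Gamma_\caK}$ to the spectral projections of $D$ gives $\Ad_{\Gamma_\caK}(P_{\mathrm{Supp}(D)})=P_{\mathrm{Supp}(D)}$, proving (i). Similarly, $\alpha$-invariance of $\omega$ means $\omega\circ\alpha_g=\omega$ when $\mpp(g)=0$ (and $\omega\circ\alpha_g(A^*)=\omega(A^*)$ when $\mpp(g)=1$); transporting through $\iota$ and using $\iota\circ\hat\alpha_{\omega,g}\circ\iota^{-1}=\Ad_{V_g}$ yields $\Tr(\Ad_{V_g^*}(D)\,\iota\circ\pi_\omega(A))=\omega\circ\alpha_g^{\pm}(A)=\omega(A)$ (taking adjoints as needed in the anti-linear case, where the trace is conjugate-linear in the right slot so the computation closes), hence $\Ad_{V_g^*}(D)=D$, i.e. $DV_g=V_gD$. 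Conjugating the support projection then gives $P_{\mathrm{Supp}(D)}V_g=V_gP_{\mathrm{Supp}(D)}$, proving (iii).

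\textbf{Step 3: Reduction of the isometries.} For (ii), let $\{B_\mu\}$ be from Theorem \ref{cuntzthm0}. Since $\iota\circ\pi_\omega(E_{\mu\mu}^{(0)})=B_\mu B_\mu^*$ and $P_{\mathrm{Supp}(D)}\le\hat Q_1$ while $Q_1=\sum_{\mu: \omega(E_{\mu\mu}^{(0)})>0}(\cdots)$ — more precisely, using the trace formula and $\Tr(D\,\iota\circ\pi_\omega(B))$ matching $\omega(B)$ for $B$ built from the $E^{(0)}_{\mu,\nu}$'s together with the Cuntz relation $B_\mu B_\nu^*=\iota\circ\pi_\omega(E^{(0)}_{\mu,\nu})$ — one checks that $B_\mu$ maps $\mathrm{Supp}(D)\caH$ into $\mathrm{Supp}(D)\caH$. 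Concretely, $B_\mu^* P_{\mathrm{Supp}(D)} B_\mu$ is a positive element whose trace against any state is controlled by $\omega$ restricted to the one-site algebra shifted appropriately; the split/translation structure gives $P_{\mathrm{Supp}(D)}B_\mu(\unit-P_{\mathrm{Supp}(D)})=0$, which is exactly \eqref{vbv}. The main obstacle I anticipate is precisely this step: verifying that the support projection of $D$ is invariant under the isometries $B_\mu$ (i.e. that $v_\mu=P_{\mathrm{Supp}(D)}B_\mu P_{\mathrm{Supp}(D)}$, not merely $P_{\mathrm{Supp}(D)}B_\mu$), which requires carefully combining the thermodynamic-limit formula \eqref{sat}, the defining relation \eqref{bbbo}, and the fact that $\mathrm{Supp}(D)$ is the "surviving" subspace $\bigcap_N\hat Q_N\caH$ under the shift endomorphism $\rho$. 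Everything else is bookkeeping: transporting $\Theta$- and $\alpha$-invariance of $\omega$ across the $*$-isomorphism $\iota$ and invoking uniqueness of the representing density operator.
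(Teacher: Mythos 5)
Your Steps 1 and 2 are essentially the paper's argument. You build $D$ from the normal vector state transported through $\iota$, identify the finite-dimensional surviving subspace via Lemma \ref{lem:intersection_of_kernel_fin_dim}, and deduce finite rank from $\omega(\unit-Q_N)=0\Rightarrow P_{\mathrm{Supp}(D)}\le\iota\circ\pi_\omega(Q_N)$; and you transport $\Theta$- and $\alpha$-invariance through $\iota$ and invoke uniqueness of the representing density operator to get $\Ad_{\Gamma_\caK}(D)=D$ and $\Ad_{V_g^*}(D)=D$. (One small remark: you do not need, and should not claim, that $\hat Q_N$ is decreasing; the argument only needs $P_{\mathrm{Supp}(D)}\le\hat Q_N$ for each $N$ separately, which forces $P_{\mathrm{Supp}(D)}\caH\subset\caH_0$.)

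Step 3 is where you correctly flag a gap and then do not close it. The ingredients you reach for — \eqref{sat}, \eqref{bbbo}, the surviving-subspace picture — are not the right ones, and there is no need to analyse $B_\mu^*P_{\mathrm{Supp}(D)}B_\mu$ against shifted local algebras. The clean route is the fixed-point identity for $D$ coming from translation invariance: $\omega\circ\beta_{S_1}=\omega$ on $\al_R$, transported through $\iota$ and combined with the representation $\rho\circ\iota\circ\pi_\omega=\sum_\mu\Ad_{B_\mu}\circ\Ad_{\Gamma_\caK^{|\mu|}}\circ\iota\circ\pi_\omega$ from \eqref{bimp}, gives
\begin{align*}
\Tr\lmk D(\iota\circ\pi_\omega)(A)\rmk
=\Tr\lmk D(\rho\circ\iota\circ\pi_\omega)(A)\rmk
=\sum_\mu\Tr\lmk\lmk\Ad_{B_\mu^*}\circ\Ad_{\Gamma_\caK^{|\mu|}}(D)\rmk(\iota\circ\pi_\omega)(A)\rmk
\end{align*}
for all $A\in\al_R$, hence $D=\sum_\mu\Ad_{B_\mu^*}\circ\Ad_{\Gamma_\caK^{|\mu|}}(D)=\sum_\mu B_\mu^*DB_\mu$, the last equality because $\Ad_{\Gamma_\caK}(D)=D$. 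Cutting this identity on both sides by $\unit-P_{\mathrm{Supp}(D)}$ gives
\begin{align*}
0=(\unit-P_{\mathrm{Supp}(D)})D(\unit-P_{\mathrm{Supp}(D)})
=\sum_\mu(\unit-P_{\mathrm{Supp}(D)})B_\mu^*DB_\mu(\unit-P_{\mathrm{Supp}(D)}),
\end{align*}
a sum of positive operators, so each summand vanishes; since $D$ is faithful on its support this forces $P_{\mathrm{Supp}(D)}B_\mu(\unit-P_{\mathrm{Supp}(D)})=0$, which is exactly \eqref{vbv}. So the missing idea is to recognize $D$ as a fixed point of the pre-dual CP map $\sum_\mu B_\mu^*(\,\cdot\,)B_\mu$ and then use positivity on the complement of its support.
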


\begin{proof}
(i)
Given the cyclic vector $\Omega_\omega$, $\langle \Omega_\omega, \iota^{-1}(x)\Omega_\omega\rangle$ 
defines a normal state on $\calB({\caK\otimes\cct})$. 
Let
$D$ be a density operator on $\caK\otimes\cct$
such that $\Tr_{{\caK\otimes\cct}}(D x) = \langle \Omega_\omega, \iota^{-1}(x)\Omega_\omega \rangle$. 
We then see that
$$
  \Tr_{{\caK\otimes\cct}} \big( D (\iota\circ \pi_\omega)(A) \big) 
  = \langle \Omega_\omega, \pi_\omega(A) \Omega_\omega \rangle  = \omega(A), \quad 
  A\in \al_R.
$$
Because $\omega\circ \Theta = \omega$ and 
$\iota\circ \pi_\omega \circ \Theta\vert_{\al_R}= 
\mathrm{Ad}_{\Gamma_\caK} \circ \iota \circ \pi_\omega\vert_{\al_R}$, it follows 
that $\Tr_{{\caK\otimes\cct}}( \mathrm{Ad}_{\Gamma_\caK} (D) (\iota\circ \pi_\omega)(A)) 
  = \Tr_{{\caK\otimes\cct}} ( D (\iota\circ \pi_\omega)(A))$ 
for all $A\in \al_R$. As such, $\mathrm{Ad}_{\Gamma_\caK} (D) = D$.
From this, we have $ \mathrm{Ad}_{\Gamma_{\caK}}\lmk P_{\mathrm{Supp}(D)}\rmk=P_{\mathrm{Supp}(D)}$.

 Let $\calH_0 = \bigcap\limits_{N=1}^\infty  \lmk \iota \circ \pi_\omega(Q_N)\rmk\lmk{\caK\otimes\cct}\rmk$. Because $\iota\circ\pi_\omega$ is an irreducible representation of $\al_R$, from Lemma
 \ref{lem:intersection_of_kernel_fin_dim}, 
$\caH_0$ is finite-dimensional.
Because $\omega(\unit-Q_N)=0$, we have
$ \Tr_{{\caK\otimes\cct}} \big( D (\iota\circ \pi_\omega)(\unit-Q_N) \big) 
= \omega(\unit-Q_N)=0$.
This means $P_{\mathrm{Supp}(D)}$, the support 
  projection of $D$, satisfies
  $P_{\mathrm{Supp}(D)}\le \iota\circ \pi_\omega(Q_N)$
for all $N\in\nan$.
Hence we have $P_{\mathrm{Supp}(D)}\lmk\caK\otimes\cct\rmk\subset\caH_0$.
Therefore $D$ is finite rank.

\vspace{0.1cm}

\noindent
(ii) Recall the endomorphism $\rho$ satisfying \eqref{ipr} from Lemma \ref{iorho}.
Because $\omega(A) = \omega(\beta_{S_1}(A))$ for all $A\in\al_R$, 
the set of isometries $\{B_\mu\}_{\mu \in \calP}$ given in Theorem \ref{cuntzthm0}
are such that
\begin{align*}
\Tr_{\caK\otimes\bbC^2}\big( D(\iota\circ\pi_\omega)(A)\big) 
  &=   \Tr_{\caK\otimes\bbC^2}\big( D( \rho \circ\iota\circ\pi_\omega)(A)\big) \\
  &= \sum_{\mu} \Tr_{\caK\otimes\bbC^2}\big( \Ad_{B_\mu^* }
\circ  \Ad_{ \Gamma_\caK^{|\mu|} }(D) (\iota\circ \pi_\omega)(A) \big) 
\end{align*}
for all $A\in\al_R$. This implies that 
$D=\sum_{\mu}\Ad_{B_\mu^* }
\circ  \Ad_{ \Gamma_\caK^{|\mu|} }(D)=\sum_{\mu}\Ad_{B_\mu^*}(D)$ 
and so 
$$
   \sum_\mu \lmk \unit-P_{\mathrm{Supp}(D)}\rmk B_\mu^* D B_\mu 
   \lmk \unit-P_{\mathrm{Supp}(D)}\rmk = 
  \lmk\unit- P_{\mathrm{Supp}(D)} \rmk D\lmk \unit- P_{\mathrm{Supp}(D)}\rmk=0.
$$
Hence we obtain 
$ P_{\mathrm{Supp}(D)}B_\mu    \lmk \unit-P_{\mathrm{Supp}(D)}\rmk
=0$.

\vspace{0.1cm}

\noindent (iii)
For an element $A\in\al_R$ and $\mpp(g) \in\bbZ_2$, we set $A^{\mpp(g)\ast}$
as $A$ if $\mpp(g)=0$ and $A^*$ if $\mpp(g)=1$.
Because 
$\omega( \alpha_g(A^{\mpp(g)\ast})) = \omega(A) = \Tr(D (\iota\circ\pi_\omega)(A))$, $A\in\al_R$,
we have that 
\begin{align*}
  \Tr_{\caK\otimes\cct}\big( D(\iota\circ\pi_\omega)(A) \big) 
  &= \Tr_{\caK\otimes\cct}\big( D( \iota\circ \pi_\omega) \lmk \alpha_g(A^{\mpp(g)\ast})\rmk \big) 
  = \Tr_{\caK\otimes\cct}\big( D V_g \big( (\iota\circ \pi_\omega)(A^{\mpp(g)\ast} ) \big) V_g^* \big).
\end{align*}
Given an orthonomal basis $\{\xi_j\}_j$ of $\caK\otimes\cct$, 
we see that 
for any $A\in \al_R$,
\begin{align*}
  \Tr_{\caK\otimes\cct}\big( D(\iota\circ\pi_\omega)(A) \big) 
  &=\Tr_{\caK\otimes\cct} \! \lmk
   DV_g \big( (\iota\circ \pi_\omega)(A^{\mpp(g)\ast}) \big) V_g^*
  \rmk
  = \sum_j \langle V_g \xi_j, D V_g (\iota\circ \pi_\omega)(A^{\mpp(g)\ast}) \xi_j \rangle \\
  &= \sum_j \ol{ \langle \xi_j, V_g^* D V_g(\iota\circ \pi_\omega)(A^{\mpp(g)\ast}) \xi_j \rangle}^{\mpp(g)} 
  = \Tr_{\calK\otimes \cct}\big( V_g^* D V_g (\iota\circ \pi_\omega)(A) \big),
\end{align*}
where for the second equality we used that 
$\{V_g \xi_j\}_j$ is an orthonomal basis of $\caK\otimes\cct$.
Therefore, $V_g^* D V_g = D$ and so $P_{\mathrm{Supp}(D)}V_g = V_g P_{\mathrm{Supp}(D)}$.
\end{proof}
\begin{proofof}[Theorem \ref{thmevenmps}]
We use the notation of Theorem \ref{cuntzthm0} and Lemma \ref{lem:even_fmps_from_split_main}.
Let $m\in\bbN$ be the rank of $D$ from  Lemma \ref{lem:even_fmps_from_split_main}.
We naturally identify $P_{\mathrm{Supp}(D)} \caB(\caK\otimes\cct)P_{\mathrm{Supp}(D)} $
and $\Mat_m$.
Then we may regard $D$ as a faithful density matrix in $\Mat_m$, and $\{v_\mu\}_{\mu\in\caP}$ matrices
in $\Mat_m$.
Because $\Gamma_{\caK}$ commutes with
$P_{\mathrm{Supp}(D)} $,  $\mathfrak\Theta:=\Gamma_{\caK}P_{\mathrm{Supp}(D)} $
defines a self-adjoint unitary in $\Mat_m$.
Similarly, because of (iii) of Lemma \ref{lem:even_fmps_from_split_main},
$W_g:=V_gP_{\mathrm{Supp}(D)}$ defines a projective unitary/anti-unitary representation
of $G$ on $P_{\mathrm{Supp}(D)} $ relative to $\mpp$.
Clearly, the second cohomology class associated to $W$ is the same of that of $V$, i.e., $[\upsilon]$.
From $\Ad_{V_g}(\Gamma_{\caK})=(-1)^{\mqq(g) }\Gamma_{\caK}$,
we have that $\Ad_{W_g}\lmk {\mathfrak \Theta}\rmk=(-1)^{\mqq(g)} \mathfrak \Theta$.

Now we check the properties (i)-(v).
\\Parts (ii) and (v) are immediate from the definition of
$v_\mu$, $\mathfrak \Theta$, $W_g$, and the corresponding properties of $B_\mu$, $\Gamma_\caK$,
$V_g$.
Part (iii) follows from Lemma \ref{lem:even_fmps_from_split_main} (i), (iii).
For part (i), using \eqref{sat}, \eqref{vbv} and that $P_{\mathrm{Supp}(D)}$ is of 
finite rank, we have
\begin{align}
T_{\bf v}^N(x)=
P_{\mathrm{Supp}(D)} \, T_{\bf B}^N(x) \, P_{\mathrm{Supp}(D)}
 \xrightarrow[N\to \infty]{}    \langle \Oo,\iota^{-1}(x)\Oo \rangle \,  P_{\mathrm{Supp}(D)}
 =\Tr_{\caK\otimes\cct}\big( D x \big) P_{\mathrm{Supp}(D)}
\end{align}
for $x\in P_{\mathrm{Supp}(D)}\caR_{0,\caK} P_{\mathrm{Supp}(D)}=\Mat_m$ 
and convergence in the norm topology.
For part (iv), \eqref{bbbo} and \eqref{vbv} imply that
\begin{align}
\omega\lmk E_{\mu_0,\nu_0}^{(0)} E_{\mu_1,\nu_1}^{(1)} 
\cdots E_{\mu_N,\mu_N}^{(N)}\rmk
&=\Tr_{\caK\otimes\cct} \lmk D
\lmk \iota\circ\pi_\omega\lmk E_{\mu_0,\nu_0}^{(0)} E_{\mu_1,\nu_1}^{(1)} 
\cdots E_{\mu_N,\mu_N}^{(N)}\rmk\rmk
\rmk\nonumber\\
& = 
  (-1)^{\sum\limits_{k=1}^N (|\mu_k|+|\nu_k|) \sum\limits_{j=0}^{k-1} |\nu_j|} 
  \Tr_{\caK\otimes\cct} \lmk D
  B_{\mu_0} \cdots B_{\mu_N} B_{\nu_N}^* \cdots B_{\nu_0}^*\rmk
  \nonumber\\
& = 
  (-1)^{\sum\limits_{k=1}^N (|\mu_k|+|\nu_k|) \sum\limits_{j=0}^{k-1} |\nu_j|} 
  \Tr_{\Mat_m} \lmk D
  v_{\mu_0} \cdots v_{\mu_N} v_{\nu_N}^* \cdots v_{\nu_0}^*\rmk
\end{align}
for all $N\in\bbN\cup\{0\}$ and $\mu_0,\ldots\mu_N,\nu_0,\ldots,\nu_N\in\caP$.
This proves (iv).
\end{proofof}
\subsection{Case: $\kappa_\omega=1$ (odd fermionic MPS)}

\begin{thm}\label{thmoddmps}
Let $\omega$ be a pure, split, translation invariant and 
$\alpha$-invariant state on $\al$  with
index $\Ind(\omega)=(1,\mqq,[\upsilon])$.
For each $N\in\bbN$, let $Q_N$
be the support projection of the density matrix of $\omega\vert_{\caA_{[0,N-1]}}$ 
and assume 
$\sup_{N\in\bbN} \rank (Q_N)<\infty$.
Then there is some $m\in\bbN$, a faithful density matrix $D\in \Mat_m$, a set of matrices 
$\{v_\mu\}_{\mu\in\caP}$ in $\Mat_m$ and $\sigma_0\in\{0,1\}$
satisfying the following.
\begin{enumerate}
\item[(i)] 
Set $\hat v_\mu:=v_\mu\otimes \sigma_z^{\sigma_0+|\mu|}$
on $\bbC^m\otimes \cct$. 
Then $\lim_{N\to\infty}T_{\bf \hat v}^N(b)=\Tr \lmk \lmk D\otimes \frac 12\unit_{\cct} \rmk b\rmk\unit_{\Mat_m}\otimes
\unit_{\cct}$ in norm 
for all $b\in\Mat_m\otimes \Clf$.

\item[(ii)]  For any $l\in\bbN\cup\{0\}$, and $\mu_0,\ldots\mu_l, \nu_0,\ldots\nu_l\in \caP$,
  \begin{align}
&  \omega\big( E_{\mu_0,\nu_0}^{(0)} E_{\mu_1,\nu_1}^{(1)}\cdots E_{\mu_l,\nu_l}^{(l)} \big) \nonumber\\
&\hspace{1.5cm}  =(-1)^{\sum\limits_{k=1}^l (|\mu_k|+|\nu_k|)
\sum\limits_{j=0}^{k-1} \lmk \sigma_0+|\nu_j|\rmk } \delta_{\sum_{i=0}^l (|\mu_i|+|\nu_i|),\,0} 
   \Tr\lmk D
   \lmk v_{\mu_0}\cdots v_{\mu_l} v_{\nu_l}^* \cdots v_{\nu_0}^* \rmk
   \rmk.
  \end{align}
\item[(iii)] There a projective unitary/anti-unitary representation $W$ of $G$ on $\bbC^m$ 
relative to $\mpp$ and 
  $c_g \in \mathbb{T}$ such that for all $g\in G$ and $\nu\in\caP$
   \begin{align} \label{eq:odd_fMPS_group_action}
  (-1)^{\mqq(g)|\nu|} \sum_{\mu\in\calP} \langle \psi_\mu, \, \mathfrak{\Gamma}(U_g) \psi_\nu \rangle v_\mu = 
     c_g W_g v_\nu W_g^*, 
   \qquad \Ad_{W_g}(D)=D.
   \end{align}
   The second cohomology class associated to $W$ is $[\upsilon]$.
\end{enumerate}
\end{thm}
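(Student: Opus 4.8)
The plan is to mirror the structure of the proof of Theorem \ref{thmevenmps}, adapting the bookkeeping to the $\kappa_\omega=1$ case where the relevant von Neumann algebra is $\caR_{1,\caK}=\caB(\caK)\otimes\Clf$ rather than $\caB(\caK\otimes\cct)$. First I would fix a $W^*$-$(G,\mpp)$-dynamical system $(\caR_{1,\caK},\Ad_{\Gamma_\caK},\Ad_{V_g})\in\caS_1$ equivalent to $(\pi_\omega(\al_R)'',\Ad_{\Gamma_\omega},\hat\alpha_\omega)$ via a $*$-isomorphism $\iota$, and invoke Theorem \ref{cuntzthm1} to obtain $\sigma_0\in\{0,1\}$ and the isometries $\{S_\mu\}_{\mu\in\caP}$ on $\caK$ satisfying \eqref{bimpn}, \eqref{bbbon}, \eqref{satn}, \eqref{gvbn}. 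The next step is to establish the analogue of Lemma \ref{lem:even_fmps_from_split_main}: using that $\iota^{-1}$ pushes the vector state $\langle\Omega_\omega,\cdot\,\Omega_\omega\rangle$ to a normal state on $\caR_{1,\caK}=\caB(\caK)\otimes\Clf$, write this state as $\Tr_{\caK\otimes\cct}((D\otimes\frac12\unit_{\cct})\,\cdot\,)$ for a density operator $D$ on $\caK$ (the factor $\frac12\unit_{\cct}$ reflects that the state is tracial on the $\Clf$-factor, which is forced since $\caR_{1,\caK}$ has a nontrivial odd center and $\Omega_\omega$ is separating for the center). The $\Theta$-invariance of $\omega$ together with $\iota\circ\pi_\omega\circ\Theta|_{\al_R}=\Ad_{\Gamma_\caK}\circ\iota\circ\pi_\omega|_{\al_R}$ forces $\Ad_{\Gamma_\caK}(D\otimes\frac12\unit_{\cct})=D\otimes\frac12\unit_{\cct}$, which is automatic here since $D\otimes\unit_{\cct}$ commutes with $\Gamma_\caK=\unit_\caK\otimes\sigma_z$. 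Finiteness of the rank $m$ of $D$ comes from Lemma \ref{lem:intersection_of_kernel_fin_dim} applied to the irreducible representation $\iota\circ\pi_\omega$ of $\al_R$ and the uniform bound on $\rank(Q_N)$, together with the estimate $P_{\mathrm{Supp}(D\otimes\frac12\unit)}\le\iota\circ\pi_\omega(Q_N)$ coming from $\omega(\unit-Q_N)=0$.

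Next I would derive the compression identities. From $\omega\circ\beta_{S_1}=\omega$ and \eqref{bimpn} one gets $D\otimes\frac12\unit_{\cct}=\sum_\mu\Ad_{\hat S_\mu^*}(D\otimes\frac12\unit_{\cct})$ with $\hat S_\mu=S_\mu\otimes\sigma_z^{\sigma_0+|\mu|}$; peeling off $(\unit-P_{\mathrm{Supp}})$-corners as in the proof of Lemma \ref{lem:even_fmps_from_split_main}(ii) yields $P_{\mathrm{Supp}(D)}S_\mu(\unit_\caK-P_{\mathrm{Supp}(D)})=0$, so that $v_\mu:=P_{\mathrm{Supp}(D)}S_\mu=P_{\mathrm{Supp}(D)}S_\mu P_{\mathrm{Supp}(D)}$ are well-defined $m\times m$ matrices. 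For the group action I would run the same trace computation with an orthonormal basis $\{\xi_j\}$ of $\caK$ and the relation $\omega(\alpha_g(A^{\mpp(g)*}))=\omega(A)$, using that $\iota\circ\pi_\omega\circ\alpha_g=\Ad_{V_g}\circ\iota\circ\pi_\omega$ and that $\Ad_{V_g}$ acts on the $\caB(\caK)$-factor as $\Ad_{V_g^{(0)}}$ by Lemma \ref{zv}; this gives $\Ad_{(V_g^{(0)})^*}(D)=D$, hence $P_{\mathrm{Supp}(D)}V_g^{(0)}=V_g^{(0)}P_{\mathrm{Supp}(D)}$, and one sets $W_g:=V_g^{(0)}P_{\mathrm{Supp}(D)}$, a projective unitary/anti-unitary representation on $\bbC^m$ whose second cohomology class is that of $V_g^{(0)}$, namely $[\upsilon]$ by Definition \ref{index}.

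Once $D$, $\{v_\mu\}$, $\sigma_0$, $W_g$ are in place, the three assertions follow by compressing the statements of Theorem \ref{cuntzthm1}. Part (ii) is obtained by applying $\Tr_{\caK\otimes\cct}((D\otimes\frac12\unit_{\cct})\,\cdot\,)$ to \eqref{bbbon}: the factor $\sigma_x^{\sum_i(|\mu_i|+|\nu_i|)}$ on the $\cct$-tensor-leg is traceless unless $\sum_i(|\mu_i|+|\nu_i|)=0$, which produces the Kronecker delta $\delta_{\sum(|\mu_i|+|\nu_i|),0}$, and when it vanishes $\sigma_x^0=\unit_{\cct}$ contributes the normalization $\Tr_{\cct}(\frac12\unit_{\cct})=1$, leaving $\Tr_{\Mat_m}(D\,v_{\mu_0}\cdots v_{\nu_0}^*)$ times the stated sign; here one must check $P_{\mathrm{Supp}(D)}S_{\mu_0}\cdots S_{\mu_N}S_{\nu_N}^*\cdots S_{\nu_0}^*=v_{\mu_0}\cdots v_{\nu_0}^*$, which follows from $v_\mu=P_{\mathrm{Supp}}S_\mu P_{\mathrm{Supp}}$ and $S_\nu^*S_\mu=\delta_{\mu\nu}\unit$. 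Part (iii) is immediate from \eqref{gvbn} after compressing by $P_{\mathrm{Supp}(D)}$ and using $\Ad_{W_g}(D)=D$. Part (i) follows from \eqref{satn}: for $b\in\Mat_m\otimes\Clf$, $T_{\hat v}^N(b)=(P_{\mathrm{Supp}(D)}\otimes\unit_{\cct})\,T_{\hat S}^N(b)\,(P_{\mathrm{Supp}(D)}\otimes\unit_{\cct})$ converges in norm (by finite rank of $P_{\mathrm{Supp}(D)}$) to $\langle\Omega_\omega,\iota^{-1}(b)\Omega_\omega\rangle\,(P_{\mathrm{Supp}(D)}\otimes\unit_{\cct})=\Tr((D\otimes\frac12\unit_{\cct})b)\,\unit_{\Mat_m}\otimes\unit_{\cct}$. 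I expect the main obstacle to be the correct identification of the bond-space density operator as $D\otimes\frac12\unit_{\cct}$ and verifying that the $\cct$-leg traces behave as claimed — in particular that $\sigma_0$ enters only through the signs in \eqref{bbbon} and not the parity selection rule — since the $\Clf$-factor in $\caR_{1,\caK}$ and the extra $\sigma_z^{\sigma_0+|\mu|}$ twist in $\hat v_\mu$ are precisely the features absent in the $\kappa_\omega=0$ case; the rest is a routine transcription of the even-case argument.
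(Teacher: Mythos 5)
Your overall architecture — invoke Theorem \ref{cuntzthm1}, build a density operator on $\caK$, compress by its support, and push the properties of $\{S_\mu\}$ through — matches the paper's proof of Theorem \ref{thmoddmps} via Lemma \ref{lem:odd_fmps_from_split_main}. However, there is one genuine gap and one piece of inverted logic.

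The gap is in the finite-rank argument. You propose to apply Lemma \ref{lem:intersection_of_kernel_fin_dim} to ``the irreducible representation $\iota\circ\pi_\omega$ of $\al_R$'' on $\caK\otimes\cct$. But in the $\kappa_\omega=1$ case this representation is \emph{not} irreducible: $\big(\iota\circ\pi_\omega(\al_R)\big)'' = \caR_{1,\caK}=\caB(\caK)\otimes\Clf$ is not a factor, and $\caR_{1,\caK}'=\bbC\unit_\caK\otimes\Clf$ is two-dimensional. This is not a cosmetic issue: the proof of Lemma \ref{lem:intersection_of_kernel_fin_dim} uses irreducibility to approximate vectors of $\caH_0$ by $\pi(\mathfrak A_{\rm loc})\eta$. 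The paper's Lemma \ref{lem:odd_fmps_from_split_main}(i) instead defines the irreducible representation $\pi_0$ of the even subalgebra $\al_R^{(0)}$ on $\caK$ by $\iota\circ\pi_\omega(a)=\pi_0(a)\otimes\unit_{\cct}$ for $a\in\al_R^{(0)}$, and applies Lemma \ref{lem:intersection_of_kernel_fin_dim} to $\pi_0$ — this is exactly why the lemma was formulated to allow $\al_R^{(0)}$. You then need $P_{\mathrm{Supp}(D)}\le\pi_0(Q_N)$ as projections on $\caK$, which follows from $\omega(\unit-Q_N)=0$ and $Q_N$ being even. Your version on $\caK\otimes\cct$ does not connect to a usable irreducibility.

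A smaller point: your justification for $\tilde D=D\otimes\tfrac12\unit_{\cct}$ (``the state is tracial on the $\Clf$-factor, which is forced since $\caR_{1,\caK}$ has a nontrivial odd center and $\Omega_\omega$ is separating for the center'') has the logic backwards and the stated reason is not a valid argument. The correct step is: write $\tilde D=D_0\otimes\bbP_0(\sigma_x)+D_1\otimes\bbP_1(\sigma_x)$, observe that $\Theta$-invariance of $\omega$ forces $\Ad_{\Gamma_\caK}(\tilde D)=\tilde D$, and since $\Ad_{\Gamma_\caK}$ interchanges $\bbP_0(\sigma_x)$ and $\bbP_1(\sigma_x)$, this gives $D_0=D_1$ and hence $\tilde D=D\otimes\tfrac12\unit_{\cct}$ with $D=2D_0$. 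You quote this $\Theta$-invariance identity but dismiss it as ``automatic,'' when in fact it is the step that produces the form of $\tilde D$ in the first place. The remainder of your proposal (compression identities for $v_\mu$, the group action via $V_g^{(0)}$ and Lemma \ref{zv}, the $\sigma_x$-trace giving the Kronecker delta in part (ii), and the norm convergence in part (i)) is correct and matches the paper.
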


\begin{rem}[Comparison with index for fermionic MPS]
Like Remark \ref{rem:evenfMPS_comparison}, we briefly compare  our results with 
the $H^1(G, \Z_2)\times H^2(G, U(1)_\mpp)$-valued indices for fermionic MPS in~\cite{BWHV, KapustinfMPS, TurzilloYou}. 
An irreducible odd fermionic MPS is specified by matrices spanning a 
simple $\Z_2$-graded algebra with an odd central element. Like the even case, the group action is implemented by the 
adjoint action on generators by a projective unitary/anti-unitary representation, giving a second cohomology class.
The representation will commute or anti-commute with the odd central element, giving a homomorphism $G\to \Z_2$. 
Considering $\omega$ as an fermionic MPS, part (iii) of Theorem \ref{thmoddmps} shows that the second cohomology 
classes coincide and \eqref{eq:odd_fMPS_group_action} shows that the commutation of the 
projective unitary/anti-unitary representation with the odd central element is specified by $\mqq$. Hence 
in this setting the indices for fermionic MPS agree  with the indices defined in Section \ref{indsec}.
\end{rem}

\begin{lem} \label{lem:odd_fmps_from_split_main}
Consider the setting of Theorem \ref{thmoddmps}.
Suppose that the graded $W^*$-$(G,\mpp)$-dynamical system
$(\pi_\omega(\al_R)'', \Ad_{\Gamma_\omega}, \hat\alpha_\omega)$ associated to $\omega$
is equivalent to $(\caR_{1,\caK},\Ad_{\Gamma_\caK},\Ad_{V_g})\in\caS_{1}$, via
a $*$-isomorphism $\iota: \pi_\omega(\al_R)''\to \caR_{1,\caK}$.
Then the following holds.
\begin{enumerate}
\item[(i)]
There is a finite rank density operator $D$ on $\calK$ such that for all $A \in {\al_R}$,
\begin{align}\label{irg}
\Tr_{\caK\otimes\cct}\lmk
 \big( D\otimes \tfrac{1}{2} \unit_{\cct} \big) \big(\iota\circ \pi_\omega(A) \big) \rmk = \omega(A).
\end{align}
  \item[(ii)] Let $\{S_\mu\}_{\mu \in \calP}$ be the set of isometries given in Theorem \ref{cuntzthm1}.
  Then we have
  \begin{align}\label{vbvn}
  v_\mu:=P_{\mathrm{Supp}(D)} S_\mu=P_{\mathrm{Supp}(D)} S_\mu P_{\mathrm{Supp}(D)},\quad
  \mu\in\caP.
  \end{align}
\item[(iii)]$P_{\mathrm{Supp}(D)}V_g^{(0)} = V_g^{(0)} P_{\mathrm{Supp}(D)}$ 
and $\Ad_{V_g^{(0)}}(D)=D$ for any $g\in G$.
  \end{enumerate}
\end{lem}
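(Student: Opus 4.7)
\textbf{Proof plan for Lemma \ref{lem:odd_fmps_from_split_main}.}
The overall strategy mirrors Lemma \ref{lem:even_fmps_from_split_main}, but we must exploit the specific structure of $\caR_{1,\caK} = \caB(\caK)\otimes\Clf$ (where $\Clf = \spn\{\unit,\sigma_x\}$) because the odd central element $\unit_\caK\otimes\sigma_x$ forces normal states to take a specific tensor-product form. The plan is to first show that $\Ad_{\Gamma_\caK}$-invariant normal states on $\caR_{1,\caK}$ are in bijection with density matrices on $\caK$ via $D\mapsto \Tr\bigl((D\otimes\tfrac12\unit_\cct)\cdot\bigr)$; this identification is the central technical point.

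For part (i), the normal state $x\mapsto\langle\Omega_\omega,\iota^{-1}(x)\Omega_\omega\rangle$ on $\caR_{1,\caK}$ is $\Ad_{\Gamma_\caK}$-invariant because $\omega$ is $\Theta$-invariant and $\iota\circ\pi_\omega\circ\Theta = \Ad_{\Gamma_\caK}\circ\iota\circ\pi_\omega$ on $\al_R$. Decomposing the state using the spectral projections $p_\pm = \tfrac12(\unit\pm\sigma_x)$ of $\sigma_x$ (which are swapped by $\Ad_{\sigma_z}$) shows that $\Ad_{\Gamma_\caK}$-invariance forces the two sectoral weights to be equal, yielding a unique density matrix $D$ on $\caK$ with $\phi = \Tr\bigl((D\otimes\tfrac12\unit_\cct)\cdot\bigr)$ and automatically giving \eqref{irg}. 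For finite rank of $D$, observe that each $Q_N$ is even, so $\iota\circ\pi_\omega(Q_N)\in(\caR_{1,\caK})^{(0)} = \caB(\caK)\otimes\bbC\unit_\cct$ is of the form $q_N\otimes\unit_\cct$. The restriction of $\iota\circ\pi_\omega$ to $\al_R^{(0)}$ takes values in $\caB(\caK)\otimes\unit_\cct$ and gives an irreducible representation of $\al_R^{(0)}$ on $\caK$, so Lemma \ref{lem:intersection_of_kernel_fin_dim} applies to show $\bigcap_N q_N\caK$ is finite-dimensional. Since $\omega(\unit-Q_N) = 0$ forces $\Tr(Dq_N) = 1$, the support of $D$ lies in this finite-dimensional intersection.

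For part (ii), translation invariance $\omega\circ\beta_{S_1} = \omega$ combined with Lemma \ref{iorho} and $\sigma$-weak density gives $\phi\circ\rho = \phi$ on all of $\caR_{1,\caK}$. Using the representation $\rho(x) = \sum_\mu\Ad_{\hat S_\mu}(x)$ with $\hat S_\mu = S_\mu\otimes\sigma_z^{\sigma_0+|\mu|}$ from Theorem \ref{cuntzthm1} and the identity $\hat S_\mu^*(D\otimes\tfrac12\unit_\cct)\hat S_\mu = (S_\mu^* D S_\mu)\otimes\tfrac12\unit_\cct$ (since $\sigma_z^{\sigma_0+|\mu|}$ commutes with $\unit_\cct$), the equality becomes
\begin{equation*}
\Tr\Bigl(\Bigl(\sum_\mu S_\mu^* D S_\mu - D\Bigr)\otimes\tfrac12\unit_\cct\cdot x\Bigr) = 0, \qquad x\in\caR_{1,\caK}.
\end{equation*}
The element $\bigl(\sum_\mu S_\mu^*DS_\mu - D\bigr)\otimes\tfrac12\unit_\cct$ is even with respect to $\Ad_{\Gamma_\caK}$ and has no $\sigma_z$ component, hence it must vanish, giving $\sum_\mu S_\mu^* D S_\mu = D$. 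A standard positivity argument $(\unit-P)D(\unit-P)=0 \Rightarrow D^{1/2}S_\mu(\unit-P) = 0$ with $P=P_{\mathrm{Supp}(D)}$ then yields $PS_\mu = PS_\mu P$.

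For part (iii), $\alpha$-invariance of $\omega$ gives $\omega(A) = \omega\circ\alpha_g(A^{\mpp(g)*})$. Carefully handling the anti-unitary case via $\overline{\Tr(Z)} = \Tr(Z^*)$ translates this into $V_g^*(D\otimes\tfrac12\unit_\cct)V_g = D\otimes\tfrac12\unit_\cct$. Substituting the decomposition $V_g = V_g^{(0)}\otimes C^{\mpp(g)}\sigma_y^{\mqq(g)}$ from Lemma \ref{zv}, the factor on $\cct$ conjugates $\tfrac12\unit_\cct$ to itself, so $(V_g^{(0)})^*DV_g^{(0)} = D$, i.e., $\Ad_{V_g^{(0)}}(D) = D$, and the commutation of $P_{\mathrm{Supp}(D)}$ with $V_g^{(0)}$ follows. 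The main subtlety throughout is the careful bookkeeping of the bilinear-versus-sesquilinear pairing when $\mpp(g)=1$; apart from this, the argument proceeds along the lines of the even case.
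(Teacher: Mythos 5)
Your proposal is correct and follows essentially the same route as the paper's proof: for (i) you decompose the density operator along the two spectral projections of $\unit_\caK\otimes\sigma_x$, use $\Ad_{\Gamma_\caK}$-invariance (from $\Theta$-invariance of $\omega$) to equate the two blocks, and obtain finite rank via the irreducible representation $\pi_0$ of $\al_R^{(0)}$ on $\caK$ together with Lemma \ref{lem:intersection_of_kernel_fin_dim}; for (ii) you use translation invariance, $\hat S_\mu^* (D\otimes\tfrac12\unit_\cct)\hat S_\mu = (S_\mu^* D S_\mu)\otimes\tfrac12\unit_\cct$, and a positivity argument; for (iii) you pull $V_g^*(D\otimes\tfrac12\unit_\cct)V_g = D\otimes\tfrac12\unit_\cct$ through the factorization $V_g = V_g^{(0)}\otimes C^{\mpp(g)}\sigma_y^{\mqq(g)}$. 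One small remark on (ii): after the trace identity, the clean way to conclude $\sum_\mu S_\mu^* D S_\mu = D$ is to test against $x = a\otimes\unit_\cct$ for $a\in\caB(\caK)$ (the full operator $Z = W\otimes\tfrac12\unit_\cct$ is not automatically zero merely because $\Tr(Zx)=0$ for $x$ ranging over the proper subalgebra $\caR_{1,\caK}$); your phrasing about the element having ``no $\sigma_z$ component'' gestures at this but would be better stated as the restriction to $a\otimes\unit_\cct$.
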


\begin{proof}
(i)
Given the cyclic vector $\Omega_\omega$, $\langle \Omega_\omega, \iota^{-1}(x)\Omega_\omega\rangle$,
$x\in \caR_{1,\caK}$,
defines a normal state on $\caR_{1,\caK}$. 
Let
$\tilde  D$ be a density operator on $\caK\otimes\cct$
such that $\Tr_{{\caK\otimes\cct}}(\tilde D x) = \langle \Omega_\omega, \iota^{-1}(x)\Omega_\omega \rangle$ for $x\in \caR_{1,\caK}$. 
Because $\caR_{1,\caK}=\bk\otimes \Clf$ and recalling the notation 
$\bbP_\varepsilon$ from \eqref{eq:pm_espace_proj}, we may assume that 
$\tilde D$ is of the form $\tilde D=D_0\otimes \bbP_0(\sigma_x)+D_1\otimes \bbP_1(\sigma_x)$.
Because $\omega\circ \Theta = \omega$ and 
$\iota\circ \pi_\omega \circ \Theta\vert_{\al_R} = \mathrm{Ad}_{\Gamma_\caK} \circ \iota \circ \pi_\omega\vert_{\al_R}$, it follows 
that $\Tr_{{\caK\otimes\cct}}( \mathrm{Ad}_{\Gamma_\caK} (\tilde D) (\iota\circ \pi_\omega)(A)) 
  = \Tr_{{\caK\otimes\cct}} ( \tilde  D(\iota\circ \pi_\omega)(A))$ 
for all $A\in \al_R$. 
Therefore, we have $\mathrm{Ad}_{\Gamma_\caK} (\tilde D)= \tilde  D$, which implies
$D_0=D_1$. We set $D:=2 D_0$, and see that  $D$ is a density operator
on $\caK$ satisfying \eqref{irg}.

Let $\pi_0$ be the irreducible representation of $\al_R^{(0)}$ on $\caK$
given by
\begin{align}
\iota\circ\pi_\omega(a)=\pi_0(a)\otimes\unit_{\cct},\quad a\in \al_R^{(0)}.
\end{align}
 Let $\calH_0 = \bigcap\limits_{N=1}^\infty  \lmk \pi_0(Q_N)
 \caK\rmk$. Because $\pi_0$ is an irreducible representation of $\al_R^{(0)}$, 
 $\caH_0$ is finite-dimensional by Lemma \ref{lem:intersection_of_kernel_fin_dim}.
Because $\omega(\unit-Q_N)=0$, we have
$$
\Tr_{{\caK\otimes\cct}}\big( 
 (D\otimes \tfrac{1}{2} \unit_{\cct} ) ( \pi_0(\unit-Q_N) \otimes \unit_{\cct}) \big) 
= \omega(\unit-Q_N)=0.
$$
This means $P_{\mathrm{Supp}(D)}$ satisfies
  $P_{\mathrm{Supp}(D)}\le \pi_0(Q_N)$
for all $N\in\nan$.
Hence we have $P_{\mathrm{Supp}(D)}\caK\subset\caH_0$ and $D$ is finite rank.

\vspace{0.1cm}

\noindent
(ii) Recall the endomorphism $\rho$ satisfying \eqref{ipr} from Lemma \ref{iorho}.
Because $\omega(A) = \omega(\beta_{S_1}(A))$ for all $A\in\al_R$, 
the set of isometries $\{S_\mu\}_{\mu \in \calP}$ given in Theorem \ref{cuntzthm1} 
and $\sigma_0$,  \eqref{bimpn} gives that 
\begin{align}
 \Tr_{\caK\otimes\bbC^2}\lmk \lmk
 D\otimes \tfrac 12\unit_{\cct}\rmk(\iota\circ\pi_\omega)(A)\rmk
 &=\Tr_{\caK\otimes\bbC^2}\lmk \lmk
 D\otimes\tfrac 12\unit_{\cct}\rmk(\rho\circ \iota\circ\pi_\omega)(A)\rmk\nonumber\\
  & = \sum_{\mu} \Tr_{\caK\otimes\bbC^2}\lmk 
  \Ad_{( S_\mu^* \otimes \sigma_z^{\sigma_0+|\mu|} )}
\lmk
 D\otimes\tfrac 12\unit_{\cct}\rmk (\iota\circ \pi_\omega)(A) \rmk,
\end{align}
which implies that  $D=\sum_{\mu}\Ad_{S_\mu^*}(D)$.
We then obtain \eqref{vbvn} by the same 
proof as in Lemma \ref{lem:even_fmps_from_split_main}.

\vspace{0.1cm}

\noindent (iii)
By the same argument as in the proof of  Lemma \ref{lem:even_fmps_from_split_main}, 
we obtain
 $(V_g^{(0)})^* D V_g^{(0)} = D$ and so $P_{\mathrm{Supp}(D)}V_g^{(0)} = V_g^{(0)} P_{\mathrm{Supp}(D)}$.
\end{proof}
\begin{proofof}[Theorem \ref{thmoddmps}]
We use the notation of Theorem \ref{cuntzthm1}, and Lemma \ref{lem:odd_fmps_from_split_main}.
Let $m\in\bbN$ be the rank of $D$ from  Lemma \ref{lem:odd_fmps_from_split_main}.
We naturally identify $P_{\mathrm{Supp}(D)} \caB(\caK)P_{\mathrm{Supp}(D)} $
and $\Mat_m$.
Then we may regard $D$ as a faithful density matrix in $\Mat_m$, and $\{v_\mu\}_{\mu\in\caP}$ matrices
in $\Mat_m$.
Because of part (iii) of Lemma \ref{lem:odd_fmps_from_split_main},
$W_g:=V_g^{(0)}P_{\mathrm{Supp}(D)}$ defines a projective unitary/anti-unitary 
representation of
 $G$ on $P_{\mathrm{Supp}(D)} \caK$ relative to $\mpp$ 
whose cohomology class is the same as $V^{(0)}$, i.e. $[\upsilon]$.
Now we check the properties (i)-(iii) of Theorem \ref{thmoddmps}.

Part (iii) is immediate from the definition of
$v_\mu$, $W_g$, and the corresponding properties of $S_\mu$ and 
$V_g^{(0)}$.

For part (i), using \eqref{satn}, \eqref{vbvn} and that $P_{\mathrm{Supp}(D)}$ is finite rank, we have
\begin{align*}
T_{\bf \hat v}^N(x)&=
P_{\mathrm{Supp}(D)}T_{\bf \hat S}^N(x) P_{\mathrm{Supp}(D)}
\xrightarrow[N\to\infty]{}  \braket{\Oo}{ \iota^{-1}(x)\Oo} P_{\mathrm{Supp}(D)}=
\Tr \lmk \lmk D\otimes \tfrac 12\unit_{\cct} \rmk x\rmk P_{\mathrm{Supp}(D)}
\end{align*}
for 
$x\in \lmk P_{\mathrm{Supp}(D)}\otimes\unit\rmk
\caR_{1, \caK}\lmk P_{\mathrm{Supp}(D)}\otimes\unit\rmk=\Mat_m\otimes\Clf$
and convergence in the norm topology.

Part (ii) follows from \eqref{bbbon} and \eqref{vbvn},
as in the proof of Theorem  \ref{thmevenmps}.
\end{proofof}
\section*{Acknowledgements}
Y.O. would like to thank Y. Kubota and T. Matsui for discussion.
The present work was supported by JSPS Grants-in-Aid for Scientific Research no.~16K05171 and 19K03534 (Y.O.) 
and 19K14548 (C.B.). 
It was also supported by  JST CREST Grant Number JPMJCR19T2 (Y.O.).
%
%
%
\appendix
\section{Graded von Neumann algebras} \label{sec:GradedvN}

For convenience, we collect some facts about graded von Neumann algebras 
and linear/anti-linear group actions. See Section \ref{Subsec:graded_prelims} 
and \ref{subsec:graded_product_def}
for basic definitions.

\begin{lem}\label{gradei}
Let $(\caM, \theta)$ be a balanced graded von Neumann algebra. 
Assume that $\caM$ is of type $\mu$ and $\caM^{(0)}$ is of type $\lambda$,
with some $\mu,\lambda=\mathrm{I,II,III}$, and that
both of $\caM$ and $\caM^{(0)}$ have finite-dimensional centers.
Then $\lambda=\mu$.
\end{lem}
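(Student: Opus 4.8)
\textbf{Proof plan for Lemma \ref{gradei}.}

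The plan is to analyze the relationship between $\caM$ and $\caM^{(0)}$ via the odd self-adjoint unitary that exists by the balanced hypothesis. Let $U \in \caM^{(1)}$ be an odd self-adjoint unitary. Then $\Ad_U$ restricts to an automorphism of $\caM^{(0)}$, since $\Ad_U$ preserves the grading (as $U$ is homogeneous). Moreover, $\caM = \caM^{(0)} \oplus \caM^{(0)} U$, so $\caM$ is generated by $\caM^{(0)}$ together with the single unitary $U$ normalizing it, with $U^2 = \unit$. This is precisely a crossed-product-type extension of $\caM^{(0)}$ by a $\bbZ_2$-action, and the point is that such an extension cannot change the type. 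I would make this precise by passing to the centers: the center $Z(\caM^{(0)})$ is $\Ad_U$-invariant and finite-dimensional, so it decomposes into finitely many minimal central projections which are permuted by $\Ad_U$; likewise $Z(\caM)$ is finite-dimensional. Using the central decomposition of $\caM^{(0)}$ into a finite direct sum of factors of type $\lambda$ (all the same type since $\caM^{(0)}$ is assumed to be of type $\lambda$), I can reduce to understanding how $U$ acts on these factors.

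The key step is the following local analysis. Fix a minimal central projection $p$ of $\caM^{(0)}$. Either $\Ad_U(p) = p$, or $\Ad_U(p) = q$ for a distinct minimal central projection $q$ orthogonal to $p$. In the first case, $pU \in \caM^{(0)} U$ is an odd self-adjoint unitary in $p\caM p$ normalizing $p\caM^{(0)}p$, which is a factor of type $\lambda$; then $p\caM p$ is generated by a type $\lambda$ factor and a normalizing self-adjoint unitary, hence is itself of type $\lambda$ (a factor built this way has the same type, since the type is determined by the Murray--von Neumann comparison of projections, and $p\caM^{(0)}p \subset p\caM p$ with $p\caM p = p\caM^{(0)}p + p\caM^{(0)}p \cdot pU$ forces the projection lattices to have the same type semigroup structure). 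In the second case, $p\caM p + q \caM q$ contains the two factors $p\caM^{(0)}p$ and $q\caM^{(0)}q$ of type $\lambda$ swapped by $\Ad_U$, and $(p+q)\caM(p+q)$ is isomorphic to $p\caM^{(0)}p \otimes \Mat_2$, again of type $\lambda$. Running over all minimal central projections of $\caM^{(0)}$ and using that $\caM$ has finite-dimensional center, $\caM$ is a finite direct sum of factors, each of type $\lambda$, so $\caM$ is of type $\lambda$; since $\caM$ is assumed to be of type $\mu$, we get $\mu = \lambda$.

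The main obstacle I anticipate is making the ``extension by a normalizing self-adjoint unitary does not change the type'' argument clean and rigorous without circular reasoning, particularly in the type II and type III cases where one cannot simply count dimensions. The cleanest route is probably: in the $\Ad_U(p) = p$ case, observe that $p\caM p$ and the type I$_2$-twisted version in the swap case are each spatially isomorphic to either $p\caM^{(0)}p$ itself (when $U$ can be ``absorbed'', i.e. $pU \in \caM^{(0)}p$ up to a central unitary) or to an algebra Morita-equivalent to $p\caM^{(0)}p$; in all cases the type is preserved because type is a Morita invariant (equivalently, it is read off from $Z(\caM)$ and the isomorphism class of a generic corner, and corners of $\caM$ are corners of $\caM^{(0)}$ up to the $2\times 2$ matrix amplification). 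Alternatively, one can cite the standard structure theory for $\bbZ_2$-graded von Neumann algebras (e.g.\ the discussion in the appendix references on graded von Neumann algebras) which records exactly that a balanced graded von Neumann algebra and its even part have the same type. I would present the projection-by-projection argument above as the self-contained proof, deferring the finitely-many-factors bookkeeping to a remark since it is routine once the factor case is settled.
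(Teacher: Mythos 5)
Your decomposition-by-minimal-central-projections strategy is genuinely different from the paper's proof, but it has a gap exactly where you anticipate one, and that gap is the crux of the lemma. After reducing to a minimal central projection $p$ of $\caM^{(0)}$ with $\Ad_U(p)=p$, you are left needing to show: if $\caN=p\caM^{(0)}p$ is a factor of type $\lambda$ and $p\caM p=\caN+\caN\cdot pU$ for an odd self-adjoint unitary $pU$ normalizing $\caN$, then $p\caM p$ is again of type $\lambda$. This \emph{is} the lemma in the factor case, and your justification (``forces the projection lattices to have the same type semigroup structure,'' or alternatively ``type is a Morita invariant read off from corners'') is an assertion, not an argument. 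The Morita-via-corners remark does not close the gap because a corner $r\caM r$ with $r\in\caM^{(0)}$ is generically still a balanced graded algebra with even part $r\caM^{(0)}r$, so you are back where you started; and the ``cite the structure theory of graded von Neumann algebras'' fallback is circular, since the lemma you are proving is precisely a piece of that structure theory. (Your swap case $\Ad_U(p)=q\neq p$ is fine: there $p\caM p=p\caM^{(0)}p$ because $p\caM^{(1)}p=p\caM^{(0)}Up=p\caM^{(0)}qU=0$, and $(p+q)\caM(p+q)\cong \Mat_2(p\caM^{(0)}p)$.)

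The paper sidesteps the factor case entirely and never decomposes $\caM^{(0)}$. It uses the normal conditional expectation $\bbE(x)=\tfrac12(x+\theta(x))$ onto $\caM^{(0)}$ and the odd self-adjoint unitary $U$ to transfer a faithful normal semifinite trace $\tau_0$ on $\caM^{(0)}$ to the trace $\tau=(\tau_0+\tau_0\circ\Ad_U)\circ\bbE$ on $\caM$, and conversely restricts a trace on $\caM$ to $\caM^{(0)}$; this settles $\mu=\mathrm{III}\Leftrightarrow\lambda=\mathrm{III}$ at a stroke. For $\lambda=\mathrm{II}$ it observes that $\tau$ has continuous range on $\caP(\caM^{(0)})\subset\caP(\caM)$, which rules out $\mu=\mathrm{I}$ since $\caM$ is a finite direct sum of type-$\mu$ factors. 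For $\lambda=\mathrm{I}$ it produces an abelian projection of $\caM$ sitting under an abelian projection $p$ of $\caM^{(0)}$, by analyzing $p\caM^{(1)}p$: either it vanishes (so $p$ is already abelian in $\caM$), or $p\caM^{(1)}p=\bbC\,pbp$ for a self-adjoint unitary $pbp\in p\caM p$, whose spectral projections are abelian in $\caM$. If you want to salvage your approach, the cleanest fix is to replace the hand-waving in the fix-point case with exactly this trace/abelian-projection argument (or to invoke finite-index inclusion preservation of type, with a citation) rather than the unproved ``same type semigroup'' claim.
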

\begin{proof}
Let $U\in \caM^{(1)}$ be a self-adjoint unitary.
Let $\bbE:\caM\to\caM^{(0)}$ be the conditional expectation
\begin{align}\label{defe}
\bbE(x):=\frac 12(x+\theta(x)),\quad x\in \caM.
\end{align}
If $\caM^{(0)}$ has a faithful normal semifinite trace $\tau_0$ (i.e., $\caM^{(0)}$ is semifinite),
then $\tau:=(\tau_0+\tau_0\circ\Ad_{U})\circ\bbE$ defines a faithful normal semifinite trace on $\caM$.
Hence if $\caM^{(0)}$ is semifinite, then $\caM$ is semifinite.

Let us denote by $\caP(\caM), \caP(\caM^{(0)})$, the set of all orthogonal projections in
$\caM, \caM^{(0)}$.
As $\tau\vert_{\caM^{(0)}}$ is a faithful normal semifinite trace on $\caM^{(0)}$,
if $\lambda=\mathrm{II}$, then we have $\tau \lmk \caP(\caM^{(0)})\rmk=[0,\tau(1)]$.
As $\tau(\caP(\caM))$ contains $\tau\lmk \caP(\caM^{(0)})\rmk$ and 
$\caM$ is a finite direct sum of type $\mu$-factors, this means that 
$\mu=\mathrm{II}$.

If $\lambda=\mathrm{I}$, then there is a non-zero abelian projection $p$ of $\caM^{(0)}$.
We claim there is a non-zero abelian projection $r$ in $\caM$ such that
 $r\le p$. If $p\caM^{(1)}p=\{0\}$,
 then $p\caM p=\bbC p$ and $p$ itself is abelian in $\caM$.
 If  $p\caM^{(1)}p\neq\{0\}$, then there is a self-adjoint odd element
 $b\in \caM^{(1)}$ such that $pbp\neq 0$.
 Because
 $(pbp)^2=pbpbp\in p \caM^{(0)}p=\bbC p$,
 we may assume that $pbp$ is a non-zero self-adjoint unitary in  $p\caM p$.
 For any $x\in \caM^{(1)}$, we also have 
 $pxppbp\in p \caM^{(0)}p=\bbC p$. By the unitarity of $pbp$, we have
 $pxp\in \bbC pbp$, and 
 $p\caM^{(1)}p=\bbC pbp$.
 As $pbp$ is self-adjoint unitary,
 we have a spectral decomposition $pbp=r_+-r_-$, with mutually orthogonal projections
 $r_\pm$ in $\caM$  and at least one of $r_\pm$ is non-zero.
 From $p\caM^{(1)}p=\bbC pbp=\bbC(r_+-r_-)$ and $p\caM^{(0)}p=\bbC p$,
 $r_\pm$ are abelian in $\caM$ and $r_\pm\le p$, proving the claim.
 Hence $\caM$ is type I as well, $\mu=\mathrm{I}$.
 
Conversely,  if $\caM$ has a faithful normal semifinite trace $\tau$ (i.e., if $\caM$ is semifinite),
then $\tau\vert_{\caM^{(0)}}$ is a faithful normal semfinite trace on $\caM^{(0)}$.
Therefore, $\mu=\mathrm{III}$ if and only if $\lambda=\mathrm{III}$.

If $\mu=\mathrm{I}$, then $\lambda$ cannot be II or III and so is type I.
If $\mu=\mathrm{II}$, then $\lambda$ cannot be I or III and so is type II. 
\end{proof}
\begin{lem}\label{jh}
Let 
$(\caM,\theta)$ be a central graded  von Neumann algebra.
Then either $Z(\caM)=\bbC\unit$ 
or $Z(\caM)$
has a self-adjoint unitary $b\in Z(\caM)\cap \caM^{(1)}$ such that
\begin{align}\label{bba}
Z(\caM)\cap \caM^{(1)}=\bbC b.
\end{align}
\end{lem}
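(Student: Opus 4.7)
The plan is to exploit the fact that centrality forces $Z(\caM)^{(0)} = \bbC\unit$, so the grading reduces the whole statement to understanding $Z(\caM)^{(1)} = Z(\caM)\cap \caM^{(1)}$. Since $\theta$ is an involutive automorphism, we have the direct-sum decomposition $Z(\caM) = Z(\caM)^{(0)} \oplus Z(\caM)^{(1)}$. If $Z(\caM)^{(1)} = \{0\}$ then $Z(\caM) = \bbC\unit$ and the first alternative holds, so throughout the rest I will assume $Z(\caM)^{(1)} \neq \{0\}$.

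First I would produce a self-adjoint unitary in $Z(\caM)^{(1)}$. Pick any nonzero $a\in Z(\caM)^{(1)}$. Both $a^*a$ and $aa^*$ are even (the grading is compatible with the $\ast$-operation) and central, hence lie in $Z(\caM)^{(0)} = \bbC\unit$. Writing $a^*a = c\unit$, $c>0$, the element $u := c^{-1/2}a$ is a unitary still in $Z(\caM)^{(1)}$. Now $u^2$ is even and central, so $u^2 = \lambda\unit$ with $|\lambda|=1$; choosing $\mu\in \bbT$ with $\mu^2 = \lambda^{-1}$ and setting $b := \mu u$ yields $b^2 = \unit$, and combined with $b^*b = \unit$ this forces $b^* = b$. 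Thus $b$ is a self-adjoint unitary in $Z(\caM)^{(1)}$.

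For the uniqueness clause $Z(\caM)^{(1)} = \bbC b$, take an arbitrary $c\in Z(\caM)^{(1)}$. Then $cb\in Z(\caM)^{(0)} = \bbC\unit$, say $cb = \nu\unit$; multiplying on the right by $b$ and using $b^2 = \unit$ gives $c = \nu b \in \bbC b$, which completes the argument.

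There is essentially no hard step here; the only mildly delicate point is the passage from a unitary $u\in Z(\caM)^{(1)}$ to a \emph{self-adjoint} unitary via a scalar phase, which is forced by the observation that $u^2$ is a unit-modulus scalar. Everything else is a direct unpacking of the definitions of centrality and of a graded von Neumann algebra.
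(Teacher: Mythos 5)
Your proof is correct and follows essentially the same route as the paper: both exploit the graded decomposition of $Z(\caM)$, use centrality to reduce the even part to $\bbC\unit$, and build a self-adjoint unitary in $Z(\caM)^{(1)}$ that spans it. The only (cosmetic) difference is in how the self-adjoint unitary is produced: the paper starts from a nonzero self-adjoint element of $Z(\caM)^{(1)}$ (which exists since that space is $*$-stable) and rescales using $b^2\in\bbC\unit$, whereas you start from an arbitrary element, normalize to a unitary via $a^*a\in\bbC_{>0}\unit$, and then twist by a square-root phase — both are equally valid.
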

\begin{proof}
Let us assume that $\caM$ is not a factor.
By the condition of centrality, $Z(\caM)\cap \caM^{(0)}=\bbC \unit$,
there is a non-zero self-adjoint element $b\in  Z(\caM)\cap \caM^{(1)}$.
Because $b^2\in Z(\caM)\cap \caM^{(0)}=\bbC\unit$, we may assume that 
$b$ is unitary.
For any $x\in Z(\caM)\cap \caM^{(1)}$, $xb$ also belongs to $Z(\caM)\cap \caM^{(0)}=\bbC\unit$,
and by the unitarity of $b$, we obtain \eqref{bba}. 
\end{proof}
When $(\caM,\theta)$ is spatially graded, an analogous result holds for $\caM\cap \caM'\Gamma$.
\begin{lem}\label{jh2}
Let 
$(\caM,\Ad_{\Gamma})$ be a central graded  von Neumann algebra on $\caH$, spatially graded by
a self-adjoint untiary $\Gamma$.
Then the following holds.
\begin{enumerate}
\item[(i)]
If $\caM$ is not a factor, $\caM\cap \caM'\Gamma=\{0\}$.
\item[(ii)]
If $\caM\cap \caM'\Gamma\neq \{0\}$, then
there is a self-adjoint unitary $b\in \caM\cap \caM'\Gamma$ such that
$\caM\cap \caM'\Gamma=\bbC b$.
In particular, if $\Gamma\in \caM$, then
$\caM\cap \caM'\Gamma=\bbC \Gamma$.
\end{enumerate}
\end{lem}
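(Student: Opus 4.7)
The plan is to treat the two parts via the key structural fact that both $\caM\cap\caM'=Z(\caM)$ and $\caM'\Gamma\cdot \caM'\Gamma\subset \caM'$ (using that $\Gamma\caM'\Gamma=\caM'$), combined with Lemma \ref{jh}.

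For part (i), I assume $\caM$ is not a factor. Then the hypothesis of centrality together with Lemma \ref{jh} produces a self-adjoint unitary $b\in Z(\caM)\cap\caM^{(1)}$. Being odd, $b$ satisfies $\Gamma b=-b\Gamma$. Now take any $x\in\caM\cap\caM'\Gamma$, so that $x\Gamma\in\caM'$. Commuting $x\Gamma$ past $b\in\caM$ and using the anticommutation of $b$ with $\Gamma$ yields $-xb\Gamma=bx\Gamma$, i.e.\ $-xb=bx$; but $b$ is central in $\caM$, so $xb=bx$, forcing $xb=0$ and hence $x=0$ by unitarity of $b$.

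For part (ii), by the contrapositive of (i), the assumption $\caM\cap\caM'\Gamma\neq\{0\}$ forces $\caM$ to be a factor, hence $\caM\cap\caM'=\bbC\unit$. The plan then has three steps. First I show $\caM\cap\caM'\Gamma$ is $*$-closed: for $x=y\Gamma$ with $y\in\caM'$, one has $x^*=\Gamma y^*=\Ad_\Gamma(y^*)\Gamma\in\caM'\Gamma$ since $\Ad_\Gamma$ preserves $\caM'$. Second, for any nonzero $x\in\caM\cap\caM'\Gamma$ with $x=y\Gamma$, the product $x^*x=\Gamma y^*y\Gamma$ lies in $\caM'$ and also in $\caM$, hence in $\bbC\unit$; the same goes for $xx^*$. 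Comparing operator norms gives $x^*x=xx^*=c\unit$ with $c>0$, so $x/\sqrt c$ is unitary. Decomposing this unitary into self-adjoint and anti-self-adjoint parts (both of which lie in the $*$-closed space $\caM\cap\caM'\Gamma$), at least one part is nonzero; squaring such a self-adjoint element again gives an element of $Z(\caM)=\bbC\unit$, so a suitable normalization produces a self-adjoint unitary $b\in\caM\cap\caM'\Gamma$.

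Third, for uniqueness, given any $x\in\caM\cap\caM'\Gamma$, write $x=y_1\Gamma$ and $b=y_2\Gamma$ with $y_i\in\caM'$; then $xb=y_1\Ad_\Gamma(y_2)\in\caM'$, and simultaneously $xb\in\caM$, so $xb=\lambda\unit$ for some $\lambda\in\bbC$. Using $b^2=\unit$ gives $x=\lambda b$, so $\caM\cap\caM'\Gamma=\bbC b$. The ``in particular'' clause is immediate: if $\Gamma\in\caM$, then $\Gamma=\unit\cdot\Gamma\in\caM\cap\caM'\Gamma$ is itself a self-adjoint unitary, so the uniqueness just proved gives $\caM\cap\caM'\Gamma=\bbC\Gamma$. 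No step here seems to present a serious obstacle; the main subtlety is remembering to rule out the degenerate case in which the real part of the unitary happens to vanish when extracting a self-adjoint unitary.
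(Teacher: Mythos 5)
Your proof is correct and follows essentially the same route as the paper: for (i) you derive $xb=-bx$ from the odd central unitary $b$ supplied by Lemma \ref{jh} and conclude $x=0$, exactly as the paper does; for (ii) you observe that products of elements of $\caM\cap\caM'\Gamma$ land in $Z(\caM)=\bbC\unit$ (once (i) forces $\caM$ to be a factor) and run the normalization argument of Lemma \ref{jh}, which is what the paper invokes with "the same proof as Lemma \ref{jh}." Your detour through a unitary element before extracting the self-adjoint part is a harmless extra step — one could instead use the $*$-closedness of $\caM\cap\caM'\Gamma$ directly to pick a nonzero self-adjoint element — but the argument is sound as written.
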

\begin{proof}
(i)
If $\caM$ is not a factor,
from Lemma \ref{jh},
$Z(\caM)$
has a self-adjoint unitary $b\in Z(\caM)\cap \caM^{(1)}$ such that
$
Z(\caM)\cap \caM^{(1)}=\bbC b
$.
For any $a\in \caM\cap \caM'\Gamma$, we have
\begin{align}
ba=ab=a\Gamma\Gamma b\Gamma\Gamma
=a\Gamma\lmk -b\rmk\Gamma
=-(a\Gamma) b\Gamma
=-b(a\Gamma) \Gamma
=-ba.
\end{align}
The first equality is because $b\in Z(\caM)$, and the fifth equality is because
$a\Gamma\in\caM'$.
As $b$ is unitary, this means $a=0$.

\noindent
(ii) Note that for any $a,b\in \caM\cap \caM'\Gamma$,
$ab\in Z(\caM)$. From this observation and (i), the same proof 
as Lemma \ref{jh} gives the claim.
If $\Gamma\in \caM$, as $\Gamma=\unit\Gamma$,  we have
$\Gamma\in \caM\cap \caM'\Gamma$.
\end{proof}
Recall the graded tensor product product defined in Section \ref{subsec:graded_product_def}.
\begin{lem}\label{comgra}
For $i=1,2$, let $(\caM_i,\Ad_{\Gamma_i})$ be a graded von Neumann algebra
on $\caH_i$ spatially graded by a self-adjoint unitary $\Gamma_i$ on $\caH_i$.
Let $\caM_1{\hox}\caM_2$ be the graded tensor product
 of
 $(\caM_1,\caH_1,\Gamma_1)$ and $(\caM_2,\caH_2,\Gamma_2)$.
Then commutant of the graded tensor product $(\caM_1{\hox} \caM_2)'$ 
is generated by
\begin{align}\label{mmfc}
(\caM_1')^{(0)}\odot \caM_2',\qquad
(\caM_1')^{(1)}\odot \caM_2'\Gamma_2.
\end{align}
\end{lem}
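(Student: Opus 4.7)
The plan is a direct computation of the commutant. First observe that $\caR := \caM_1 \hox \caM_2 \subseteq \caB(\caH_1\otimes\caH_2)$ is generated, as a von Neumann algebra, by the three families $\caM_1\otimes\unit_{\caH_2}$, $\unit_{\caH_1}\otimes\caM_2^{(0)}$, and $\Gamma_1\otimes\caM_2^{(1)}$, since $\pi(a\hox\unit)=a\otimes\unit$ and $\pi(\unit\hox b)=\Gamma_1^{\partial b}\otimes b$ for homogeneous $b\in\caM_2$. The inclusion $\supseteq$ of \eqref{mmfc} in the commutant is then a direct computation: for $x\in(\caM_1')^{(0)}$ and $y\in\caM_2'$, the operator $x\otimes y$ commutes with each generator (evenness of $x$ gives $x\Gamma_1=\Gamma_1 x$); for $x\in(\caM_1')^{(1)}$ and $y\in\caM_2'$, the operator $x\otimes y\Gamma_2$ commutes as well, the only delicate case being against $\Gamma_1\otimes b$ with $b\in\caM_2^{(1)}$, where $x\Gamma_1=-\Gamma_1 x$ and $\Gamma_2 b=-b\Gamma_2$ cancel.

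For the reverse inclusion, let $x\in\caR'$. Commutation with $\caM_1\otimes\unit$ places $x$ in $\caM_1'\,\overline{\otimes}\,\caB(\caH_2)$ by the standard commutation theorem. The automorphism $\vartheta:=\Ad_{\Gamma_1\otimes\unit}$ preserves this algebra and fixes each of the three generating families of $\caR$, so the components $x_i:=\tfrac12(x+(-1)^i\vartheta(x))$, $i=0,1$, again lie in $\caR'$. A conditional-expectation plus $\sigma$-weak density argument (applying $\tfrac12(\id+(-1)^i\vartheta)$ to algebraic elementary tensors) identifies the fixed-point subalgebras as $x_i\in (\caM_1')^{(i)}\,\overline{\otimes}\,\caB(\caH_2)$. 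For $x_0$, commutation with $\Gamma_1\otimes\unit$ allows the factoring $\Gamma_1\otimes b=(\Gamma_1\otimes\unit)(\unit\otimes b)$ to reduce commutation with $\Gamma_1\otimes\caM_2^{(1)}$ to commutation with $\unit\otimes\caM_2^{(1)}$; combined with commutation with $\unit\otimes\caM_2^{(0)}$, this gives $x_0\in(\caM_1')^{(0)}\,\overline{\otimes}\,\caM_2'$. For $x_1$, anti-commutation with $\Gamma_1\otimes\unit$ flips the sign in the same factoring, so $x_1$ commutes with $\unit\otimes\caM_2^{(0)}$ and \emph{anti}-commutes with $\unit\otimes\caM_2^{(1)}$. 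Hence $x_1(\unit\otimes\Gamma_2)$ commutes with $\unit\otimes\caM_2$, and remains odd for $\vartheta$ (since $\unit\otimes\Gamma_2$ is $\vartheta$-even), so it lies in $(\caM_1')^{(1)}\,\overline{\otimes}\,\caM_2'$; therefore $x_1\in(\caM_1')^{(1)}\,\overline{\otimes}\,\caM_2'\Gamma_2$. Summing, $x$ is in the algebra generated by \eqref{mmfc}.

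The main point requiring care is the identification $(\caM_1'\,\overline{\otimes}\,\caB(\caH_2))^{(i)}=(\caM_1')^{(i)}\,\overline{\otimes}\,\caB(\caH_2)$, which is where the $\sigma$-weak topology enters: one checks that $\vartheta$ acts as $\Ad_{\Gamma_1}\otimes\id$ on elementary tensors, so the conditional expectation $\tfrac12(\id+(-1)^i\vartheta)$ maps $\caM_1'\,\odot\,\caB(\caH_2)$ onto $(\caM_1')^{(i)}\,\odot\,\caB(\caH_2)$, and then $\sigma$-weak continuity together with $\sigma$-weak density of algebraic tensors extends this to the von Neumann tensor products. Aside from that bookkeeping, the argument is purely algebraic; the asymmetric form $\caM_2'$ versus $\caM_2'\Gamma_2$ appearing in \eqref{mmfc} is precisely what is forced by the sign rule of the graded multiplication, which is also what makes the direct verification of the $\supseteq$ inclusion work.
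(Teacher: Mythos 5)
Your proof is correct but follows a genuinely different strategy from the paper's. You take the ungraded commutation theorem $(\caM_1\bar\otimes\caM_2)'=\caM_1'\bar\otimes\caM_2'$ as a black box, reduce to it via the two elementary identifications $(\caM_1\otimes\bbC\unit)'=\caM_1'\bar\otimes\caB(\caH_2)$ and $(\bbC\unit\otimes\caM_2)'=\caB(\caH_1)\bar\otimes\caM_2'$, and organize the whole argument around the decomposition $x=x_0+x_1$ into $\Ad_{\Gamma_1\otimes\unit}$-homogeneous parts. The even component $x_0$ then commutes with $\unit\otimes\caM_2$ outright, and the odd component $x_1$ becomes an honest element of $\caM_1'\bar\otimes\caM_2'$ only after the twist by $\unit\otimes\Gamma_2$, which is exactly the algebraic source of the asymmetric $\caM_2'\Gamma_2$ in \eqref{mmfc}. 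The paper instead re-runs the proof of the commutation theorem in the graded setting: it works at the Hilbert-space level, first under the assumption that each $\caM_j$ has a homogeneous cyclic vector (using the real-subspace density criterion of~\cite[Chapter IV, Lemmas~5.7, 5.8]{Takesaki1}) and then removes the cyclicity hypothesis with a partition into even projections together with the central-support argument from~\cite[Lemmas~11.2.14, 11.2.15]{KR}. Your route is shorter and more transparent once the commutation theorem is granted, and it yields a slightly sharper conclusion — the commutant is already the $\sigma$-weakly closed linear span of \eqref{mmfc}, not merely the von Neumann algebra generated — whereas the paper's route is self-contained at the level of the underlying density lemmas. The one step you should state more explicitly is the one you flagged at the end: passing from commutation with $\caM_1\otimes\unit$ and with $\unit\otimes\caM_2$ to membership in $\caM_1'\bar\otimes\caM_2'$ is precisely Tomita's commutation theorem, not just the elementary one-sided versions; it is worth citing explicitly since that is the only nontrivial input.
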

\begin{proof}
The proof is given by a modification of corresponding result for ungraded tensor products.
Let $\caM:=\caM_1{\hox}\caM_2$ and
$\caN$ be a von Neumann algebra generated by \eqref{mmfc}.
We would like to show $\caN=\caM'$. 
A brief computation gives the inclusion $\caM\subset \caN'$.

We let $\sigma \in \{0,1\}$ and denote by $\caR^{h,(\sigma)}$ the set of all self-adjoint
elements with grading $\sigma$ in a graded von Neumann algebra $\caR$.
For a complex Hilbert space $\caK$ and
its real  subspace $\caV$,  $\caV_\bbR^\perp$ is the orthogonal complement
of $\caV$ in $\caK$ regarding $\caK$ as a real Hilbert space,
with respect to the inner product 
$\langle \cdot ,\cdot \rangle _{\bbR}:=\Re \langle \cdot ,\cdot \rangle$.

First we assume that $\caM_j$, $j=1,2$, has a cyclic vector $\Omega_j$ which is homogeneous in the sense that
$\Gamma_j\Omega_j=(-1)^{\epsilon_j}\Omega_j$
for some $\epsilon_j\in \{0,1\}$.

As $\Omega:=\Omega_1\otimes \Omega_2$
is cyclic for $\caM$ in $\caH_1\otimes \caH_2$,
to show $\caM'=\caN$, it suffices to show that
$\caM^h\Omega+i\caN^h\Omega$ is dense in $\caH_1\otimes\caH_2$ by~\cite[Chapter IV, Lemma 5.7]{Takesaki1}.
For $\sigma_j=0,1$, $j=1,2$, set
$\caL_{\sigma_j}^{(j)}:= (\unit+(-1)^{\sigma_j}\Gamma_j)\caH_j$, $j=1,2$.
Then by the cyclicity of $\Omega_j$, and $\Gamma_j\Omega_j=(-1)^{\epsilon_j}\Omega_j$,
$\caM_j^{(\sigma_j)}\Omega_j$ is a dense subspace of
$\caL_{\sigma_j+\epsilon_j}^{(j)}$.
We also note $(\caM_j')^{(\sigma_j)}\Omega_j\subset \caL_{\sigma_j+\epsilon_j}^{(j)}$.
By~\cite[Chapter IV, Lemma 5.7]{Takesaki1}, $i (\caM_j')^h\Omega_j$ is dense in 
$(\caM_j^h\Omega_j)_\bbR^{\perp}$.
Therefore, $i(\caM_j')^{h, (\sigma_j+\epsilon_j)}\Omega_j$ is dense in
$((\caM_j)^{h, (\sigma_j+\epsilon_j)}\Omega_j)_\bbR^{\perp}\cap \caL_{\sigma_j}^{(j)}$.
Set $Y_{\sigma_1}:=(\caM_1)^{h,(\sigma_1+\epsilon_1)}\Omega_1$
and $Z_{\sigma_2}:=(\caM_2)^{h, (\sigma_2+\epsilon_2)}\Omega_2$.
By the above observation, $i(\caM_1')^{h, (\sigma_1+\epsilon_1)}\Omega_1$ is dense in
$(Y_{\sigma_1})_\bbR^\perp\cap \caL_{\sigma_1}^{(1)}$ and
$i(\caM_2')^{h, (\sigma_2+\epsilon_2)}\Omega_2$ is dense in
$(Z_{\sigma_2})_\bbR^\perp\cap \caL_{\sigma_2}^{(2)}$.
Because
$Y_{\sigma_1}+iY_{\sigma_1}$ and $Z_{\sigma_2}+iZ_{\sigma_2}$
are dense in $\caL_{\sigma_1}^{(1)}$ and $\caL_{\sigma_2}^{(2)}$ 
respectively by~\cite[Chapter IV, Lemma 5.8]{Takesaki1}, 
$Y_{\sigma_1}\odot Z_{\sigma_2}+i((Y_{\sigma_1})_\bbR^{\perp}\cap \caL_{\sigma_1}^{(1)})\odot ((Z_{\sigma_2})_\bbR^{\perp}\cap \caL_{\sigma_2}^{(2)})$
is dense in $\caL_{\sigma_1}^{(1)}\otimes \caL_{\sigma_2}^{(2)}$.
Hence we conclude that 
\begin{align}\label{vden}
(\caM_1)^{h, (\sigma_1+\epsilon_1)}\Omega_1\odot
(\caM_2)^{h, (\sigma_2+\epsilon_2)}\Omega_2
+i (\caM_1')^{h, (\sigma_1+\epsilon_1)}\Omega_1\odot
(\caM_2')^{h, (\sigma_2+\epsilon_2)}\Omega_2=:\caV_{\sigma_1,\sigma_2}
\end{align}
is dense in $\caL_{\sigma_1}^{(1)}\otimes \caL_{\sigma_2}^{(2)}$.
Using the homogeneity of $\Omega_j$, $\Gamma_j\Omega_j=(-1)^{\epsilon_j}\Omega_j$
we can prove that
$
\caM^h\Omega+i\caN^h\Omega
$ includes
\begin{align}
\sum_{\sigma_1,\sigma_2=0,1} i^{(\sigma_1+\epsilon_1)(\sigma_2+\epsilon_2)}
\caV_{\sigma_1,\sigma_2}.
\end{align}
By the density of $\caV_{\sigma_1,\sigma_2}$ in $\caL_{\sigma_1}^{(1)}\otimes \caL_{\sigma_2}^{(2)}$,
$\caM^h\Omega+i\caN^h\Omega$ is dense in $\caH_1\otimes\caH_2$ and
this completes the proof for the case with cyclic vectors.

Now we drop the assumption of the existence of the cyclic vectors.
Let $\{E_a'\}_a$ be a family of mutually orthogonal projections
in $\caM_1'$
such that each
$E_a'$ is an orthogonal projection onto $\overline{\caM_1\xi_a}$,
with a homogeneous $\xi_a\in\caH_1$, and $\sum_a E_a'=\unit_{\caH_1}$.
Let $\{F_b'\}_b$ be a family of mutually orthogonal projections
in $\caM_2'$
such that each
$F_b'$ is an orthogonal projection onto $\overline{\caM_2\eta_b}$,
with a homogeneous $\eta_b\in\caH_2$, and $\sum_b F_b'=\unit_{\caH_2}$.
Note that because $\xi_a$, $\eta_b$ are homogeneous,
$E_a'$ and $F_b'$ are even with respect to $\Ad_{\Gamma_1}$, $\Ad_{\Gamma_2}$
respectively.
Because $E_a'$ and $F_b'$ are even, the argument in~\cite[Lemma 11.2.14]{KR}
shows that 
the central support of $E_a'\otimes F_b'\in\caN\subset \caM'$
with respect to $\caN$ and the central support of $E_a'\otimes F_b'\in\caN\subset \caM'$
with respect to $\caM'$ coincides.
We denote the common central support by $P_{a,b}$.
By the first part of the proof, we know that
$\lmk E_a'\otimes F_b'\rmk\caN\lmk E_a'\otimes F_b'\rmk
=\lmk E_a'\otimes F_b'\rmk\caM'\lmk E_a'\otimes F_b'\rmk$.
We also have $\sum_{a,b} E_a'\otimes F_b'=\unit_{\caH_1\otimes\caH_2}$.
Therefore, applying~\cite[Lemma 11.2.15]{KR}, we get
$\caN=\caM'$.
\end{proof}
\begin{lem}\label{oisoiso}
Let $(\caM_i,\Ad_{\Gamma_i})$ , $(\caN_i,\Ad_{W_i})$, $i=1,2$,
be spatially graded von Neumann algebras on $\caH_i$ and $\caK_i$ respectively,
with grading operators $\Gamma_i$ and $W_i$.
Let $\alpha_i: \caM_i\to\caN_i$, $i=1,2$ be graded $*$-isomorphisms.
Suppose that $\caM_2$ (hence $\caN_2$ as well)
is either balanced or trivially graded.
Let $\caM_1{\hox}\caM_2$ be the graded tensor product 
of $(\caM_1,\caH_1,\Gamma_1)$ and $(\caM_2,\caH_2,\Gamma_2)$.
Let $\caN_1{\hox}\caN_2$ be the graded tensor product 
of $(\caN_1,\caK_1,W_1)$ and $(\caN_2,\caK_2,W_2)$.
Then there exists a unique $*$-isomorphism
$\alpha_1{\hox}\alpha_2:\caM_1{\hox}\caM_2\to
\caN_1{\hox}\caN_2$ such that
\begin{align}\label{abab}
\lmk
\alpha_1{\hox}\alpha_2
\rmk
(a{\hox} b)
=\alpha_1(a) \hox \alpha_2(b),
\end{align}
for all $a\in \caM_1$ and homogeneous $b\in\caM_2$.
\end{lem}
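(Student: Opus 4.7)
The plan is to construct $\alpha_1 \hox \alpha_2$ first on the algebraic graded tensor product $\caM_1\,\hat{\odot}\,\caM_2$ and then extend to a $*$-isomorphism of the von Neumann completions under the hypothesis on $\caM_2$. Uniqueness of such an extension is automatic: any $*$-isomorphism between von Neumann algebras is normal, so agreement on $\caM_1\,\hat{\odot}\,\caM_2$, which is $\sigma$-weakly dense in $\caM_1 \hox \caM_2$ via the faithful representation \eqref{pprep}, forces agreement everywhere. For the algebraic step I would define $\alpha_1 \hox \alpha_2$ by \eqref{abab} on homogeneous elementary tensors and extend linearly; since $\alpha_i$ are graded $*$-isomorphisms (so $\partial \alpha_i(x) = \partial x$ for homogeneous $x$), a direct check against the twisted multiplication and involution from Section \ref{subsec:graded_product_def} shows this is a bijective $*$-algebra homomorphism $\caM_1\,\hat{\odot}\,\caM_2 \to \caN_1\,\hat{\odot}\,\caN_2$.

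The main work is extending to a normal $*$-isomorphism of the graded tensor products, which I would handle by a case split on the hypothesis on $\caM_2$. If $\caM_2$ is trivially graded then so is $\caN_2$, the twist $\Gamma_1^{\partial b}$ in \eqref{pprep} is trivial, and both graded tensor products coincide with the ordinary $W^*$-tensor products $\caM_1 \,\bar\otimes\, \caM_2$ and $\caN_1 \,\bar\otimes\, \caN_2$; the desired extension is then the standard $\alpha_1 \,\bar\otimes\, \alpha_2$, which is a normal $*$-isomorphism because $\alpha_1$ and $\alpha_2$ are. If $\caM_2$ is balanced, I would fix an odd self-adjoint unitary $u \in \caM_2^{(1)}$ and set $v := \alpha_2(u) \in \caN_2^{(1)}$, itself an odd self-adjoint unitary. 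Since $u^2 = \unit$, the map $b_0 \mapsto b_0 u$ is a linear bijection $\caM_2^{(0)} \to \caM_2^{(1)}$. Under $\pi$, setting $X := \pi(\unit \hox u) = \Gamma_1 \otimes u$ (a self-adjoint unitary that is odd for $\Ad_{\Gamma_1 \otimes \Gamma_2}$), one obtains the decomposition $\pi(\caM_1 \hox \caM_2) = (\caM_1 \,\bar\otimes\, \caM_2^{(0)}) \oplus (\caM_1 \,\bar\otimes\, \caM_2^{(0)}) \cdot X$ as a direct sum of $\sigma$-weakly closed subspaces, corresponding to the even and odd parts of the spatial grading. An analogous decomposition holds on the target with $Y := \pi'(\unit \hox v) = W_1 \otimes v$. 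I would then define the extension by $A_0 + A_1 X \mapsto \gamma(A_0) + \gamma(A_1) Y$, where $\gamma := \alpha_1 \,\bar\otimes\, (\alpha_2|_{\caM_2^{(0)}})$ is the normal tensor-product $*$-isomorphism between the even $W^*$-tensor products.

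The hard part will be multiplicativity of the extension in the balanced case. Expanding gives $(A_0 + A_1 X)(B_0 + B_1 X) = (A_0 B_0 + A_1 \Ad_X(B_1)) + (A_0 B_1 + A_1 \Ad_X(B_0)) X$, with an analogous formula on the target side, so multiplicativity reduces to the intertwining $\gamma \circ \Ad_X = \Ad_Y \circ \gamma$ on $\caM_1 \,\bar\otimes\, \caM_2^{(0)}$. On a homogeneous elementary tensor $a \otimes b_0$ this identity follows from $\Ad_{\Gamma_1}(a) = (-1)^{\partial a} a$, from $\alpha_1$ being graded, and from $\alpha_2(u b_0 u) = v \alpha_2(b_0) v$, together with the parallel identities on $\caN_1 \,\bar\otimes\, \caN_2^{(0)}$; it then extends $\sigma$-weakly by normality of $\gamma$. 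Agreement with \eqref{abab} on the dense subalgebra $\caM_1\,\hat{\odot}\,\caM_2$ is immediate from the construction, which completes the argument.
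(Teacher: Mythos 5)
Your construction coincides with the paper's: both treat the trivially graded case via the ordinary $W^*$-tensor product $\alpha_1 \,\bar\otimes\,\alpha_2^{(0)}$, and in the balanced case both adjoin an odd self-adjoint unitary $u \in \caM_2^{(1)}$ and extend $\gamma = \alpha_1 \,\bar\otimes\,\alpha_2|_{\caM_2^{(0)}}$ across the decomposition $\caM_1\hox\caM_2 = (\caM_1\,\bar\otimes\,\caM_2^{(0)}) \oplus (\caM_1\,\bar\otimes\,\caM_2^{(0)})(\Gamma_1\otimes u)$ by sending $\Gamma_1\otimes u$ to $W_1\otimes\alpha_2(u)$. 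You are more explicit than the paper about multiplicativity, reducing it to the intertwining $\gamma\circ\Ad_X=\Ad_Y\circ\gamma$, which is a clean way to organize the check; the paper is terser there and instead emphasizes well-definedness via the direct sum and constructs the inverse explicitly.

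One minor but genuine slip: the two summands of your decomposition do \emph{not} coincide with the even and odd parts of the spatial grading $\Ad_{\Gamma_1\otimes\Gamma_2}$. Taking $a\in\caM_1^{(1)}$ and $b_0\in\caM_2^{(0)}$, the element $a\otimes b_0$ lies in your first summand $\caM_1\,\bar\otimes\,\caM_2^{(0)}$ yet is odd under $\Ad_{\Gamma_1\otimes\Gamma_2}$. The decomposition you want is instead the one induced by the auxiliary involutive automorphism $\Ad_{\unit_{\caH_1}\otimes\Gamma_2}$, which fixes $\caM_1\,\bar\otimes\,\caM_2^{(0)}$ pointwise and multiplies $\Gamma_1\otimes u$ by $-1$, and which preserves $\caM_1\hox\caM_2$ since it preserves $\pi(\caM_1\,\hat{\odot}\,\caM_2)$. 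The associated conditional expectation $\tfrac12(\mathrm{id}+\Ad_{\unit\otimes\Gamma_2})$ then gives the $\sigma$-weakly closed direct sum you need. This does not affect the rest of your argument, which otherwise goes through as stated.
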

\begin{proof}
As $\alpha_2^{(0)}:=\alpha_2\vert_{\caM_2^{(0)}}$ is a normal $*$-isomorphism
from $\caM_2^{(0)}$ onto $\caN_2^{(0)}$, by~\cite[Chapter IV, Corollary 5.3]{Takesaki1}
there is a unique $*$-isomorphism $\alpha^{(0)}$ from  
$\caM_1\otimes\caM_2^{(0)}$ onto $\caN_1\otimes\caN_2^{(0)}$
such that
\begin{align}
\alpha^{(0)}(a\otimes b)
=\alpha_1(a)\otimes\alpha_2(b),\quad a\in \caM_1,\quad b\in\caM_2^{(0)}.
\end{align}
If $\caM_2$ is trivially graded then we set $\alpha_1{\hox}\alpha_2:=\alpha^{(0)}$.
If $\caM_2$ is balanced,
let $U$ be a self-adjoint unitary element in $\caM_2^{(1)}$.
As we have $\caM
=\big( \caM_1\otimes \caM_2^{(0)}\big) \oplus \big(\caM_1\otimes \caM_2^{(0)} \big)
\lmk
\Gamma_1\otimes U
\rmk
$, we may define a linear map $\alpha_1{\hox}\alpha_2:\caM\to\caN$ by
\begin{align}
(\alpha_1{\hox}\alpha_2) (x+y(\Gamma_1\otimes U))
=\alpha^{(0)}(x)+\alpha^{(0)}(y)\lmk W_1\otimes \alpha_2(U)\rmk,\quad
x, y\in \caM_1\otimes \caM_2^{(0)}.
\end{align}
It is straightforward to check that $\alpha_1{\hox}\alpha_2$ is a normal $*$-homomorphism.
Similarly, we may define  a normal $*$-homomorphism 
$\lmk \alpha_1\rmk^{-1}{\hox}\lmk \alpha_2\rmk^{-1}:\caN\to\caM$, which turns out to be the inverse of  $\alpha_1{\hox}\alpha_2$.
Hence $\alpha_1{\hox}\alpha_2$ is a $*$-isomorphism
satisfying \eqref{abab}.
The uniqueness is trivial from \eqref{abab}.
\end{proof}
\begin{lem}\label{isoiso}
Let  $(\caM_i,\Ad_{\Gamma_i})$, $i=1,2$, 
be balanced and spatially graded von Neumann algebras on $\caH_i$
with a grading operator $\Gamma_i$.
Let $\caM_1{\hox}\caM_2$ be the graded tensor product
 of
 $(\caM_1,\caH_1,\Gamma_1)$ and $(\caM_2,\caH_2,\Gamma_2)$.
For any graded $*$-automorphism
$\beta_i$ on $\caM_i$ implemented by a unitary $V_i$ on $\caH_i$
satisfying
$V_i\Gamma_i=(-1)^{\nu_i}\Gamma_i V_i$, $\nu_i\in\{0,1\}$ for each $i=1,2$,  
the automorphism $\beta_1{\hox}\beta_2$ on $\caM_1{\hox}\caM_2$ defined in Lemma \ref{oisoiso}
satisfies
\begin{align}\label{aiai}
\lmk
\beta_1{\hox}\beta_2
\rmk
(a{\hox} b)
=\Ad_{(V_1\otimes V_2\Gamma_2^{\nu_1})}\lmk a{\hox} b\rmk,
\end{align}
for all $a\in \caM_1$ and homogeneous $b\in\caM_2$.
\end{lem}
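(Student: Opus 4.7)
The plan is to verify the identity \eqref{aiai} by direct computation on elementary homogeneous tensors $a \hox b$ with $a \in \caM_1$ homogeneous and $b \in \caM_2$ homogeneous, using the faithful representation $\pi$ from \eqref{pprep} to realize $a\hox b$ as the concrete operator $a\Gamma_1^{\partial b}\otimes b$ on $\caH_1 \otimes \caH_2$. Since both $\beta_1 \hox \beta_2$ (by its construction in Lemma \ref{oisoiso}) and $\Ad_{V_1 \otimes V_2\Gamma_2^{\nu_1}}$ are normal $\ast$-automorphisms of $\caM_1 \hox \caM_2$, and since linear combinations of elementary homogeneous tensors are $\sigma$-weakly dense, it suffices to match the two sides on such elements.

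First I would compute the left-hand side: by the defining property \eqref{abab} of $\beta_1 \hox \beta_2$ and the fact that $\beta_2$ is a graded automorphism (so $\partial \beta_2(b) = \partial b$),
\begin{equation*}
  (\beta_1 \hox \beta_2)(a \hox b) \;=\; \beta_1(a) \hox \beta_2(b) \;=\; \beta_1(a)\,\Gamma_1^{\partial b} \otimes \beta_2(b) \;=\; V_1 a V_1^* \, \Gamma_1^{\partial b}\otimes V_2 b V_2^* .
\end{equation*}
Next I would expand the right-hand side using $a\hox b = a\Gamma_1^{\partial b}\otimes b$:
\begin{equation*}
  \Ad_{V_1\otimes V_2\Gamma_2^{\nu_1}}\!\bigl(a\Gamma_1^{\partial b}\otimes b\bigr)
  = V_1 a V_1^* \,\cdot\, V_1 \Gamma_1^{\partial b} V_1^* \otimes V_2 \Gamma_2^{\nu_1} b\, \Gamma_2^{\nu_1} V_2^* .
\end{equation*}
Now the two commutation hypotheses are applied: $V_1 \Gamma_1 V_1^* = (-1)^{\nu_1}\Gamma_1$ gives $V_1 \Gamma_1^{\partial b} V_1^* = (-1)^{\nu_1\,\partial b}\Gamma_1^{\partial b}$, while the homogeneity of $b$ with respect to $\Ad_{\Gamma_2}$ gives $\Gamma_2^{\nu_1} b\, \Gamma_2^{\nu_1} = (-1)^{\nu_1\,\partial b} b$. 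Substituting, the two signs $(-1)^{\nu_1\,\partial b}$ cancel and the right-hand side collapses exactly to $V_1 a V_1^*\,\Gamma_1^{\partial b}\otimes V_2 b V_2^*$, matching the left-hand side.

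Finally I would extend from homogeneous $a$ to arbitrary $a \in \caM_1$ by the decomposition $a = \tfrac{1}{2}(a + \Ad_{\Gamma_1}(a)) + \tfrac{1}{2}(a - \Ad_{\Gamma_1}(a))$ and linearity of both automorphisms. Since there is no obstacle beyond sign bookkeeping (the only subtle point is that the two sign factors produced by the two commutation relations appear in opposite-parity terms and multiply to $+1$), I do not anticipate a difficult step: this is essentially a compatibility check for the spatial model of the graded tensor product.
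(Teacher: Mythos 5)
Your proof is correct and essentially the same as the paper's: both work in the spatial representation $a\hox b = a\Gamma_1^{\partial b}\otimes b$, expand via $\Ad_{V_1\otimes V_2\Gamma_2^{\nu_1}}$, and observe that the sign $(-1)^{\nu_1\partial b}$ coming from $V_1\Gamma_1^{\partial b}V_1^*$ cancels the one from $\Gamma_2^{\nu_1}b\Gamma_2^{\nu_1}$. Your final reduction to homogeneous $a$ is actually unnecessary, since the computation never uses the parity of $a$ — only that of $b$ — but it does no harm.
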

\begin{proof}
We compute that 
\begin{align*}
\lmk
\beta_1{\hox}\beta_2
\rmk
(a{\hox} b)
&=\beta_1(a)\Gamma_1^{\partial b}\otimes  \beta_2(b)
=\Ad_{(V_1\otimes V_2)} \lmk
a\Gamma_1^{\partial b}(-1)^{\partial b\cdot \nu_1} \otimes b
\rmk \\
&=\Ad_{(V_1\otimes V_2)} \Ad_{(\unit\otimes \Gamma_2^{\nu_1})} 
\big( a\Gamma_1^{\partial b}\otimes b \big)
\end{align*}
from which \eqref{aiai} follows.
\end{proof}
We also consider anti-linear $\ast$-automorphisms.
\begin{lem}\label{aisoaiso}
Let  $(\caM_i,\Ad_{\Gamma_i})$, $i=1,2$, 
be balanced and spatially graded von Neumann algebras on $\caH_i$
with a grading operator $\Gamma_i$. Let
 $\caM_1{\hox}\caM_2$ be the graded tensor product of
 $(\caM_1,\caH_1,\Gamma_1)$ and $(\caM_2,\caH_2,\Gamma_2)$
Suppose that $\caM_i$
has a faithful normal representation $(\caK_i,\pi_i)$
with a self-adjoint unitary $W_i$ on $\caK_i$ satisfying 
$\Ad_{W_i}\circ\pi_i(x)=\pi_i\circ \Ad_{\Gamma_i}(x)$, $x\in\caM_i$
and a complex conjugation $\caC_i$ on $\caK_i$ satisfying 
$\Ad_{\caC_i}\lmk\pi_i(\caM_i)\rmk=\pi_i(\caM_i)$ 
and $\caC_iW_i=W_i\caC_i$, for $i=1,2$.
Then for any graded anti-linear $*$-automorphism
$\beta_i$ on $\caM_i$, $i=1,2$,
there exists a unique anti-linear $*$-automorphism
$\beta_1{\hox}\beta_2$ on $\caM_1{\hox} \caM_2$
such that
\begin{align}\label{abab3}
\lmk
\beta_1{\hox}\beta_2
\rmk
\lmk a{\hox} b\rmk
=\beta_1(a){\hox} \beta_2(b),
\end{align}
for all $a\in \caM_1$ and homogeneous $b\in\caM_2$.

If $\beta_i$ is  implemented by an anti-unitary $V_i$  on
$\caH_i$ satisfying
$V_i\Gamma_i=(-1)^{\nu_i}\Gamma_i V_i$, $\nu_i\in\{0,1\}$ for each $i=1,2$,  then
\begin{align}
\lmk
\beta_1{\hox}\beta_2
\rmk
\lmk a \hox b\rmk
=\Ad_{(V_1\otimes V_2\Gamma_2^{\nu_1})}\lmk a{\hox} b\rmk.
\end{align}
\end{lem}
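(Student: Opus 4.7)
The plan is to reduce the anti-linear statement to the linear case treated in Lemma~\ref{oisoiso} by using the complex conjugations $\caC_i$ as ``twisters.'' First I would define a reference anti-linear graded $*$-automorphism $\gamma_i$ of $\caM_i$ via $\gamma_i := \pi_i^{-1} \circ \Ad_{\caC_i}|_{\pi_i(\caM_i)} \circ \pi_i$. The hypotheses $\caC_i W_i = W_i \caC_i$ and $\Ad_{W_i} \circ \pi_i = \pi_i \circ \Ad_{\Gamma_i}$ ensure $\gamma_i$ is graded, so $\delta_i := \beta_i \circ \gamma_i^{-1}$ is a \emph{linear} graded $*$-automorphism of $\caM_i$. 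Lemma~\ref{oisoiso} then yields a linear graded $*$-automorphism $\delta_1 \hox \delta_2$ of $\caM_1 \hox \caM_2$ satisfying the expected formula on elementary tensors.

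Next I would build the anti-linear companion $\gamma_1 \hox \gamma_2$ spatially. Applying Lemma~\ref{oisoiso} to the faithful $*$-isomorphisms $\pi_i: \caM_i \to \pi_i(\caM_i)$ gives an isomorphism $\pi_1 \hox \pi_2 : \caM_1 \hox \caM_2 \to \pi_1(\caM_1) \hox \pi_2(\caM_2) \subset \caB(\caK_1 \otimes \caK_2)$, where the right-hand side is realized by $\pi_1(a) W_1^{\partial b} \otimes \pi_2(b)$ for homogeneous $b$. The anti-unitary $\caC_1 \otimes \caC_2$ then satisfies
\begin{align*}
  \Ad_{\caC_1 \otimes \caC_2}\bigl(\pi_1(a) W_1^{\partial b} \otimes \pi_2(b)\bigr) = \pi_1(\gamma_1(a)) W_1^{\partial b} \otimes \pi_2(\gamma_2(b)),
\end{align*}
where $\caC_1 W_1 \caC_1 = W_1$ handles the grading factor. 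Hence $\Ad_{\caC_1 \otimes \caC_2}$ preserves $\pi_1(\caM_1) \hox \pi_2(\caM_2)$ (by $\sigma$-weak continuity and the fact that these generators generate the algebra), and pulling back via $\pi_1 \hox \pi_2$ defines $\gamma_1 \hox \gamma_2$ with the correct formula on elementary tensors. Setting $\beta_1 \hox \beta_2 := (\delta_1 \hox \delta_2)\circ(\gamma_1 \hox \gamma_2)$ gives an anti-linear $*$-automorphism obeying \eqref{abab3}, and uniqueness follows because normal (anti-linear) maps are determined by their values on the $\sigma$-weakly dense algebraic tensor product.

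For the second assertion, suppose $\beta_i = \Ad_{V_i}$ with $V_i$ anti-unitary and $V_i \Gamma_i = (-1)^{\nu_i} \Gamma_i V_i$. I would compute $\Ad_{V_1 \otimes V_2 \Gamma_2^{\nu_1}}$ on the representative $a \Gamma_1^{\partial b} \otimes b$ of $a \hox b$: the factor $V_1 \Gamma_1^{\partial b} V_1^{-1}$ produces a sign $(-1)^{\nu_1 \partial b}$, and the factor $\Gamma_2^{\nu_1} b \Gamma_2^{\nu_1}$ produces a second $(-1)^{\nu_1 \partial b}$, so the two cancel to leave $\beta_1(a)\Gamma_1^{\partial b} \otimes \beta_2(b)$, which is the representative of $\beta_1(a) \hox \beta_2(b)$. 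Uniqueness from the first part then forces $\Ad_{V_1 \otimes V_2 \Gamma_2^{\nu_1}} = \beta_1 \hox \beta_2$. The main obstacle is really the bookkeeping in the second step—verifying that $\Ad_{\caC_1 \otimes \caC_2}$ preserves the spatial realization of the graded tensor product and acts with the correct twist—after which everything else reduces to routine checks on elementary tensors.
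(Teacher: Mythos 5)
Your proposal is correct and follows essentially the same strategy as the paper: use the complex conjugations $\caC_i$ to twist the anti-linear $\beta_i$ into linear graded automorphisms, invoke Lemma~\ref{oisoiso} for the linear piece, realize the anti-linear piece spatially via $\Ad_{\caC_1\otimes\caC_2}$ on $\pi_1(\caM_1)\hox\pi_2(\caM_2)$, and compose; the paper merely factors on the opposite side (defining $\alpha_i=\Ad_{\caC_i}\circ\pi_i\circ\beta_i\circ\pi_i^{-1}$ and setting $\beta_1\hox\beta_2=\pi^{-1}\circ\Ad_{\caC}\circ(\alpha_1\hox\alpha_2)\circ\pi$), which is a cosmetic difference. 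The elementary-tensor computation for the second assertion also matches the paper's argument (which defers to the computation in Lemma~\ref{isoiso}).
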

\begin{proof}
Let $\pi_1(\caM_1){\hox}\pi_2(\caM_2)$ be the graded tensor product of the 
$(\pi_1(\caM_1),\caK_1,{W_1})$ and  $(\pi_2(\caM_2),\caK_2,{W_2})$.
By Lemma \ref{oisoiso}, there is a $*$-isomorphism 
$\pi:=\pi_1{\hox} \pi_2$ from $\caM_1{\hox}\caM_2$
onto $\pi_1(\caM_1){\hox}\pi_2(\caM_2)$
satisfying $\lmk \pi_1{\hox}\pi_2\rmk (a{\hox} b)=\pi_1(a){\hox} \pi_2(b)$
for $a\in\caM_1$ and homogeneous $b\in \caM_2$.
Because $\beta_i$, $\Ad_{\caC_i}$ and $\pi_i$ preserve the grading,
$\alpha_i:=\Ad_{\caC_i}\circ\pi_i\circ\beta_i\circ\pi_i^{-1}$ is a graded (linear) $*$-automorphism
on $\pi_i(\caM_i)$.
By Lemma \ref{oisoiso}, there is a $*$-automorphism 
$\alpha:=\alpha_1{\hox}\alpha_2$ on $\pi_1(\caM_1){\hox}\pi_2(\caM_2)$
such that $\lmk \alpha_1{\hox}\alpha_2\rmk (a{\hox} b)=\alpha_1(a){\hox} \alpha_2(b)$
for $a\in\pi_1(\caM_1)$ and homogeneous $b\in \pi_2(\caM_2)$.
Furthermore, for $\caC:=\caC_1\otimes\caC_2$,
$\Ad_{\caC}$ preserves $\pi_1(\caM_1){\hox}\pi_2(\caM_2)$.
Therefore, $\beta_1{\hox}\beta_1:=\pi^{-1}\circ\Ad_{\caC}\circ\alpha\circ\pi$
defines an anti-linear $*$-automorphism on $\caM_1{\hox} \caM_2$
and it satisfies \eqref{abab3}.

The proof for the second half of the lemma is the same as in 
Lemma \ref{isoiso}.
\end{proof}
\begin{lem}\label{ga8}
Let $G$ be a finite group and $\mpp: G\to\bbZ_2$ be a group homomorphism.
Let $(\caM_1,\Ad_{\Gamma_1},\alpha_1)$,  $(\caM_2,\Ad_{\Gamma_2},\alpha_2)$ 
be graded $W^*$-$(G,\mpp)$-dynamical systems such that, for $i=1,2$, $\calM_i$ is a balanced, central, 
spatially graded and type I von Neumann algebra with grading operator $\Gamma_i$.
Let $\caM_1{\hox}\caM_2$ be
the graded tensor product of $(\caM_1,\caH_1,\Gamma_1)$ and $(\caM_2,\caH_2,\Gamma_2)$.
Then for every $g\in G$, there exists a
linear $*$-automorphism (${\mpp(g)}=0$) or anti-linear automorphism (${\mpp(g)}=1$),
 $(\alpha_{1}{\hox}\alpha_{2})_g$ on $\caM_1{\hox}\caM_2$ such that
\begin{align}
\lmk \alpha_{1}{\hox}\alpha_{2}\rmk_g
(a{\hox} b)
=\alpha_{1,g}(a){\hox} \alpha_{2,g}(b),
\end{align}
for all  homogeneous $a\in \caM_1$ and $b\in\caM_2$.
\end{lem}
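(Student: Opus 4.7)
The proof splits naturally into two cases according to the value of $\mpp(g)$. When $\mpp(g)=0$, the maps $\alpha_{1,g}$ and $\alpha_{2,g}$ are linear graded $*$-automorphisms on $\caM_1$ and $\caM_2$. Since $\caM_2$ is balanced by hypothesis, Lemma \ref{oisoiso} applies directly with $\alpha_i=\alpha_{i,g}$, yielding the required linear graded $*$-automorphism $(\alpha_1\hox\alpha_2)_g := \alpha_{1,g}\hox\alpha_{2,g}$ on $\caM_1\hox\caM_2$ satisfying the prescribed formula on homogeneous elementary tensors. This half of the lemma is essentially a direct invocation of a result already proved in the appendix.

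The substantive case is $\mpp(g)=1$, where $\alpha_{1,g}$ and $\alpha_{2,g}$ are anti-linear graded $*$-automorphisms. My plan is to apply Lemma \ref{aisoaiso}, which requires that each $\caM_i$ admits a faithful normal representation $(\caK_i,\pi_i)$ equipped with a self-adjoint unitary $W_i$ implementing the grading and a complex conjugation $\caC_i$ preserving $\pi_i(\caM_i)$ and commuting with $W_i$. To produce these, I would invoke Proposition \ref{casebycase}: since $(\caM_i,\Ad_{\Gamma_i})$ is balanced, central and type I, there exist $\kappa_i\in\bbZ_2$, a Hilbert space $\tilde\caK_i$, and a $*$-isomorphism $\iota_i:\caM_i\to\caR_{\kappa_i,\tilde\caK_i}$ intertwining $\Ad_{\Gamma_i}$ with $\Ad_{\Gamma_{\tilde\caK_i}}$, where $\Gamma_{\tilde\caK_i}=\unit_{\tilde\caK_i}\otimes\sigma_z$. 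I would then set $(\caK_i,\pi_i)=(\tilde\caK_i\otimes\cct,\iota_i)$ and $W_i=\Gamma_{\tilde\caK_i}$, and choose $\caC_i$ to be the tensor product of any complex conjugation on $\tilde\caK_i$ with the standard complex conjugation on $\cct$.

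It remains to check that $\caC_i$ has the required properties, which is the key structural point. Since $\sigma_z$ has real entries, $\caC_i$ commutes with $W_i$. Since $\caB(\tilde\caK_i)$ is stable under any complex conjugation, and since both $\Mat_2$ and $\Clf=\spa_\bbC\{\unit,\sigma_x\}$ are preserved by the standard complex conjugation on $\cct$ (because $\sigma_x$ is also real), we obtain $\Ad_{\caC_i}(\caR_{\kappa_i,\tilde\caK_i})=\caR_{\kappa_i,\tilde\caK_i}=\pi_i(\caM_i)$. With all hypotheses verified, Lemma \ref{aisoaiso} furnishes the desired anti-linear $*$-automorphism $(\alpha_1\hox\alpha_2)_g$ on $\caM_1\hox\caM_2$. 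The main obstacle is precisely the production of suitable complex conjugations; this is not automatic for an abstract spatially graded von Neumann algebra, and is essentially forced by the structural classification of Proposition \ref{casebycase}, which is why the balanced, central, type I hypotheses enter the statement of the lemma.
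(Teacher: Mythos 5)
Your proof is correct and follows essentially the same route as the paper: invoking Proposition \ref{casebycase} to pass to the concrete models $\caR_{\kappa_i,\caK_i}$ on $\caK_i\otimes\cct$, taking $\caC_i\otimes C$ as the complex conjugation, and then applying Lemma \ref{oisoiso} (for $\mpp(g)=0$) and Lemma \ref{aisoaiso} (for $\mpp(g)=1$). The only difference is cosmetic: you make the case split by $\mpp(g)$ and the verification that $\caC_i\otimes C$ preserves $\caR_{\kappa_i,\caK_i}$ and commutes with $\Gamma_{\caK_i}$ explicit, whereas the paper states these compactly.
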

\begin{proof}
By Lemma \ref{casebycase},
there are graded $*$-isomorphisms $\iota_i: \caM_i\to \caR_{\kappa_i, \caK_i}$ 
with some\\
 $(\caR_{\kappa_i, \caK_i},\Ad_{\Gamma_{\caK_i}},\Ad_{V_{i,g}})\in\caS_{\kappa_i}$ for each
$i=1,2$.
Hence,  $(\caK_i\otimes \cct, \iota_i)$ is a faithful normal 
representation 
with a self-adjoint unitary $\Gamma_{\caK_i}$ implementing $\Ad_{\Gamma_i}$ on $\caK_i\otimes\cct$.
Let $C$ be a complex conjugation with respect to the standard basis of $\cct$
and $\caC_i$ be any complex conjugation on $\caK_i$.
Then $\caC_i\otimes C$ is a complex conjugation
on $\caK_i\otimes \cct$ commuting with
$\Gamma_{\caK_i}=\unit_{\caK_i}\otimes\sigma_z$,
preserving
$\caR_{\kappa_i, \caK_i} = \iota_i\lmk\caM_i\rmk$.
Hence we may apply Lemma \ref{oisoiso}  and Lemma \ref{aisoaiso}, which gives the result.
\end{proof}
%
%
%
%
\section{Lieb-Robinson bound for lattice fermion systems}\label{flr}
In this section, prove the Lieb-Robinson bound for one-dimensional lattice fermion systems. 
While this result is not new, see~\cite{BSP, NSY17}, our method of 
using an odd self-adjoint unitary to derive the Lieb-Robinson bound for 
odd elements from even elements is new.


The result holds for more general metric graphs, but to avoid the  
introduction of further notation, we restrict ourselves to one-dimensional case.
Let us recall the basic setting for Lieb-Robinson bound, 
see~\cite{BMNS,NSY17,NSY18} for details.
\begin{defn}\label{deff}
An $F$-function $F$ on $\bbZ$
is a non-increasing function $F:[0,\infty)\to (0,\infty)$
such that
\begin{enumerate}
\item[(i)]
$\lV F\rV:=\sup_{x\in\bbZ}\lmk \sum_{y\in\bbZ}F\lmk d(x,y)\rmk\rmk<\infty$,
and
\item[(ii)]
$C_{F}:=\sup_{x,y\in\bbZ}\lmk \sum_{z\in\bbZ}
\frac{F\lmk d(x,z)\rmk F\lmk d(z,y)\rmk}{F\lmk d(x,y)\rmk}\rmk<\infty$.
\end{enumerate}
\end{defn}

\begin{defn}
Let $F$ be an $F$-function on $\bbZ$, and $I$ an interval in $\bbR$.
We denote by $\caB_{F}^e(I)$ the set of all norm continuous paths of
even interactions on $\caA$ defined on an interval $I$
such that the function $\lV
\Phi
\rV_F: I\to \bbR$ defined by
\begin{align}\label{pnb}
\lV
\Phi
\rV_F(t):=
\sup_{x,y\in\bbZ}\frac{1}{F\lmk d(x,y)\rmk}\sum_{Z\in{{\mathfrak S}_{\bbZ}}, Z\ni x,y}
\lV\Phi(Z;t)\rV,\quad t\in I,
\end{align}
is uniformly bounded, i.e., 
$\sup_{t\in I}\lV \Phi \rV(t)<\infty$.
\end{defn}
For the rest of this Appendix, we fix some $\Phi\in \caB_{F}^e(I)$.
For each $s\in I$, we define a local Hamiltonian by \eqref{GenHamiltonian}.
We denote by $U_{\Lambda,\Phi}(t;s)$ the solution of 
\begin{align}
\frac{d}{dt} U_{\Lambda,\Phi}(t;s)=-iH_{\Lambda,\Phi}(t) U_{\Lambda,\Phi}(t;s),\quad t,s\in I,\quad
U_{\Lambda,\Phi}(s;s)=\unit.
\end{align}
We define the corresponding automorphisms $\tau_{t,s}^{(\Lambda),\Phi}$ on $\caA_{\bbZ}$ by
\begin{align}
\tau_{t,s}^{(\Lambda), \Phi}(A):=U_{\Lambda,\Phi}(t;s)^{*}AU_{\Lambda,\Phi}(t;s)
\end{align}
with $A \in \caA_\bbZ$. Note that ${\tau}_{s,t}^{(\Lambda), \Phi}$ is the inverse of
$\tau_{t,s}^{(\Lambda),\Phi}$.
Because $\Phi(s)$ is even, 
the proof of~\cite[Theorem 3.1]{NSY18}
gives the following.
\begin{lem}\label{lre}
Let $X,Y\in{\mathfrak S}_\bbZ$ with $X\cap Y=\emptyset$.
If either $A\in\al_{X}$ or $B\in\al_Y$ is even, then 
\begin{align}
\lV\left[
\tau_{t,s}^{(\Lambda), \Phi}(A), B
\right]
\rV
\le
\frac{2\lV A\rV\lV B\rV}{C_F} \lmk
e^{v|t-s|}-1
\rmk
D_0(X,Y),
\end{align}
where $v>0$ is some constant and 
\begin{align}
D_0(X,Y):=\sum_{x\in X}\sum_{y\in Y} F(|x-y|).
\end{align}
\end{lem}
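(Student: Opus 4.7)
The plan is to adapt the standard Lieb--Robinson argument from~\cite[Theorem 3.1]{NSY18} to the fermionic setting. The only place where the classical (quantum-spin) proof uses commutativity is in the identity $[\Phi(Z;t), A] = 0$ for $Z \cap X = \emptyset$. In the CAR algebra this fails when both $\Phi(Z;t)$ and $A$ are odd, but it is restored by the hypothesis that $\Phi$ is even, since even elements commute with anything of disjoint support in $\al$ regardless of parity. Moreover, the parity hypothesis on one of $A, B$ together with $X \cap Y = \emptyset$ ensures the initial condition $\lV[A,B]\rV = 0$. These two observations are the only fermion-specific modifications required.

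First I would differentiate $t \mapsto [\tau_{t,s}^{(\Lambda),\Phi}(A), B]$ using the Heisenberg equation, expand $[H_{\Lambda,\Phi}(t), A]$ as a sum over $Z \subset \Lambda$, and discard those with $Z \cap X = \emptyset$ by the observation above. Applying the Jacobi identity $[[\Phi, A], B] = [\Phi, [A, B]] - [A, [\Phi, B]]$ to the evolved commutators then yields the usual integral inequality with a Gr\"onwall piece, linear in $\lV[\tau_{u,s}^{(\Lambda),\Phi}(A), B]\rV$, which produces the exponential factor, plus a driving piece of the form $2\lV A\rV\int_s^t \sum_{Z : Z\cap X \neq \emptyset} \lV[\tau_{u,s}^{(\Lambda),\Phi}(\Phi(Z;u)), B]\rV\,du$ that feeds the iteration.

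The iteration itself proceeds as in~\cite{NSY18}: one iterates the same inequality with $\Phi(Z;u)$ in place of $A$, producing a geometric series in $\sup_s\lV\Phi(s)\rV_F$ which is summed using the $F$-function property $C_F < \infty$. Crucially, at every step of the iteration the probe is an (evolved) even element, so both the disjoint-support cancellation and the vanishing initial condition remain available with $B$ unchanged. Collecting terms and applying the standard $F$-function estimate reduces the nested sums to $D_0(X, Y)$, yielding the stated bound with $v = 2 \sup_s \lV\Phi(s)\rV_F\, C_F$ up to standard constants.

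The main obstacle, and the place where the paper's technique is genuinely new, is controlling the iteration when $B$ itself is odd: the bound $\lV[\tau_{u,s}^{(\Lambda),\Phi}(\Phi(Z;u)), B]\rV$ needs to be estimated uniformly in a way that is compatible with the already-proved even--even case. The device is to fix an odd self-adjoint unitary $U \in \al^{(1)}_{\{y\}}$ for some $y \in Y$ (for instance $U = a(f) + a(f)^*$ with $f$ a unit vector supported at site $y$), so that $UB$ is even of norm at most $\lV B\rV$; the identity $U[\tau_{t,s}^{(\Lambda),\Phi}(A), B] = [\tau_{t,s}^{(\Lambda),\Phi}(A), UB] - [\tau_{t,s}^{(\Lambda),\Phi}(A), U]\,B$, obtained from the Leibniz rule and $U^2 = \unit$, reduces the odd-probe estimate to the even-probe estimate for $UB$ plus a single-site commutator $[\tau_{t,s}^{(\Lambda),\Phi}(A), U]$ whose contribution is dominated by $\sum_{x \in X} F(|x - y|) \leq D_0(X, Y)$. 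Ensuring this substitution preserves both the constant and the $D_0(X, Y)$-structure of the bound, rather than doubling it, is the technical heart of the adaptation.
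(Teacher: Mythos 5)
Your first two paragraphs are essentially correct and match the paper's actual argument: the paper proves Lemma \ref{lre} by citing the proof of~\cite[Theorem 3.1]{NSY18} and noting that the evenness of $\Phi$ makes the fermionic case go through verbatim. The two facts you isolate---that even elements commute with everything of disjoint support, and that the parity hypothesis on $A$ or $B$ gives the vanishing initial condition $[A,B]=0$---are exactly the right observations.

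Your last paragraph, however, manufactures an obstacle that does not exist for this lemma. You worry about controlling the iterated commutator $[\tau_{u,s}^{(\Lambda),\Phi}(\Phi(Z;u)), B]$ when $B$ is odd, but $\Phi(Z;u)$ is even by hypothesis, so every commutator appearing after the first step of the iteration has an \emph{even} element on the left. Whether $B$ is even or odd is then irrelevant: even elements commute with anything of disjoint support, so each term in the iteration vanishes at time $s$ and the NSY18 recursion proceeds identically. The hypothesis ``either $A$ or $B$ is even'' is used once, to make $[A,B]=0$, and the evenness of $\Phi$ handles everything else. There is no ``even--odd'' case requiring a separate device inside this lemma.

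The odd self-adjoint unitary device you gesture at \emph{is} needed in the paper, but only for the subsequent Lemma \ref{lruni} where both $A$ and $B$ are odd, and there the quantity to be bounded is the \emph{anticommutator} $\{\tau^\Phi(A),B\}$, not the commutator. Your commutator identity $U[\tau(A),B]=[\tau(A),UB]-[\tau(A),U]B$ is algebraically fine, but it cannot yield a small bound when both $A$ and $B$ are odd, since $[\tau(A),B]$ has no reason to be small in that case (it equals $2\tau(A)B$ minus the anticommutator). The paper's device is also structurally different: one fixes an odd unitary $\beta_{S_m}(U_0)$ at a \emph{far-away} site $m$ (not a site in $Y$, as you propose), observes that $\|\{\tau^\Phi_{t,s}(A),\beta_{S_m}(U_0)\}\|=\|\tau^\Phi_{t,s}(A)-\tau^{\tilde\Phi_m}_{t,s}(A)\|$ for a modified even interaction $\tilde\Phi_m$, bounds this by a quantity $g(m)\to 0$, and then sends $m\to\infty$. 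The anticommutator bound, not a commutator identity anchored inside $Y$, is what makes the odd--odd case close.
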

Using this lemma and because $\Phi$ is even, the proof of~\cite[Theorem 3.4]{NSY18} 
guarantees the existence of the limit
\begin{align}
\tau_{t,s}^{\Phi}(A):=\lim_{\Lambda \nearrow\bbZ}\tau_{t,s}^{(\Lambda), \Phi}(A),\quad
A\in\caA, \quad t,s\in[0,1].
\end{align}
Clearly the limit dynamics $\tau_{t,s}^{\Phi}$ satisfy the 
same Lieb-Robinson bound as in Lemma \ref{lre}.
We would like to have an analogous bound as Lemma \ref{lre} 
for odd $A,B$.
To do this, fix an odd self-adjoint unitary $U_0\in\caA_{\{0\}}$.
For each $m\in\bbZ$, $\beta_{S_{m}}(U_0)$ is a self-adjoint unitary in $\caA_{\{m\}}$.
Define an interaction $\tilde\Phi_m(s)$ by
\begin{align}
\tilde\Phi_m(Z; s):=\Ad_{\beta_{S_{m}}(U_0)}\lmk
\Phi(Z; s)
\rmk,\quad Z\in{\mathfrak S}_{\bbZ},\quad s\in I,\quad m\in\bbN.
\end{align}
Note that $\tilde\Phi_m(Z; s)=\Phi(Z; s)$ if $Z$ does not include $m$.
Because $\tilde\Phi_m$ and $\Phi$ are even, Lemma \ref{lre} and
the proof of~\cite[Theorem 3.4]{NSY18}
implies the bound 
\begin{align}
\lV
\tau_{t,s}^{\Phi}(A)-\tau_{t,s}^{\tilde\Phi_m}(A)
\rV
&\le
\frac{4\lV A\rV}{C_F} \sum_{Z\ni m} 
\int_{[s,t]} dr \lV \Phi(Z; r)\rV D_0(X,Z)\lmk e^{v|t-r|}-1\rmk
\nonumber\\
&\le
{4\lV A\rV}
\int_{[s,t]} \lmk e^{v|t-r|}-1\rmk
\lV\Phi\rV_F(r)\sum_{x\in X} F(|x-m|)=:g(m), 
\end{align}
for any $A\in \al_X^{(1)}$, where the last inequality 
uses (i) and (ii) of Definition \ref{deff} as well as Equation \eqref{pnb}.
Note that $\lim_{m\to\infty} g(m)=0$.
Therefore, we have
\begin{align}\label{ulrb}
\lV
\left\{
\tau_{t,s}^{\Phi}(A), \beta_{S_{m}}(U_0)
\right\}
\rV
=\lV
\tau_{t,s}^{\Phi}(A)-\tau_{t,s}^{\tilde\Phi_m}(A)
\rV
\le g(m),
\end{align}
for any $A\in \al_X^{(1)}$ and $X\in{\mathfrak S}_{\bbZ}$ with $m \notin X$.
Let $X,Y\in{\mathfrak S}_\bbZ$ with $X\cap Y=\emptyset$, 
$A\in\al_X^{(1)}$, $B\in \al_Y^{(1)}$ and $m\notin X$.
Because $B\beta_{S_m}(U_0)\in \al_{Y\cup \{m\}}^{(0)}$,
Lemma \ref{lre} and \eqref{ulrb} implies
\begin{align}
\lV
\left\{
\tau_{t,s}^{\Phi}(A), B
\right\}
\rV
&=\lV
\left[
\tau_{t,s}^{\Phi}(A), B\beta_{S_m}(U_0)
\right] \beta_{S_m}(U_0)
+B\beta_{S_m}(U_0)
\left\{
\tau_{t,s}^{\Phi}(A), \beta_{S_{m}}(U_0)
\right\}
\rV\nonumber\\
&\le
\frac{2\lV A\rV\lV B\rV}{C_F} \lmk
e^{v|t-s|}-1
\rmk
D_0(X,Y\cup\{m\})
+g(m)\lV B\rV.
\end{align}
Taking the limit $m\to\infty$ and using 
Lemma \ref{lre}, we obtain the following.
\begin{lem}\label{lruni}
Let $X,Y\in{\mathfrak S}_\bbZ$ with $X\cap Y=\emptyset$.
For homogeneous $A\in\al_{X}$ and $B\in\al_Y$ we have
\begin{align}
\lV
\tau_{t,s}^{\Phi}(A) B-(-1)^{\partial A\partial B} B\tau_{t,s}^{\Phi}(A)
\rV
\le
\frac{2\lV A\rV\lV B\rV}{C_F} \lmk
e^{v|t-s|}-1
\rmk
D_0(X,Y).
\end{align}
\end{lem}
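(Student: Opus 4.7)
My plan is to reduce the fermionic Lieb--Robinson estimate to the even-case bound of Lemma~\ref{lre} by treating the three parity configurations of $(A,B)$ separately. When at least one of $A,B$ is even, we have $(-1)^{\partial A\partial B}=1$, so the quantity to bound is the ordinary commutator $[\tau^\Phi_{t,s}(A),B]$. Lemma~\ref{lre}, together with the thermodynamic-limit extension already invoked in the text, then gives the stated bound directly.

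The nontrivial case is when both $A$ and $B$ are odd, where the left-hand side becomes the anticommutator $\{\tau^\Phi_{t,s}(A),B\}$. My approach is to introduce a remote odd self-adjoint unitary that toggles the parity of $B$. Fix any $m\in\bbZ\setminus(X\cup Y)$, and set $u_m:=\beta_{S_m}(U_0)\in\al_{\{m\}}^{(1)}$. Since $u_m^2=\unit$ and $u_m$ anticommutes with $B$ (disjoint odd supports), a direct computation verifies the identity
\[
\{\tau^\Phi_{t,s}(A),B\} = \lcm \tau^\Phi_{t,s}(A),\, Bu_m \rcm u_m + Bu_m \ltm \tau^\Phi_{t,s}(A),\, u_m \rtm .
\]
Now $Bu_m\in\al_{Y\cup\{m\}}^{(0)}$ is even, so Lemma~\ref{lre} bounds the first summand by $\tfrac{2\lV A\rV\lV B\rV}{C_F}\lmk e^{v|t-s|}-1\rmk D_0(X,Y\cup\{m\})$, while the previously derived estimate~\eqref{ulrb} bounds the second summand by $g(m)\lV B\rV$, where $g(m)\to 0$ as $m\to\infty$.

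Letting $m\to\infty$, the identity
\[
D_0(X,Y\cup\{m\}) = D_0(X,Y)+\sum_{x\in X}F(|x-m|) \longrightarrow D_0(X,Y),
\]
which holds by the summability condition Definition~\ref{deff}(i), together with $g(m)\lV B\rV\to 0$, collapses the estimate to the advertised bound. The only genuine obstacle is identifying the correct parity-toggling identity above; once that algebraic manoeuvre is in hand, the conclusion follows by a clean assembly of Lemma~\ref{lre} and the estimate~\eqref{ulrb}, and no new analytic input is required beyond what is already set up in the appendix.
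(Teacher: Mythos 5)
Your proposal is correct and follows essentially the same route as the paper: split according to parity, and in the odd-odd case conjugate by a remote odd self-adjoint unitary $\beta_{S_m}(U_0)$ to convert $B$ into the even element $B\beta_{S_m}(U_0)$, then combine the even-case bound of Lemma~\ref{lre} with estimate~\eqref{ulrb} and let $m\to\infty$. One minor remark: the algebraic identity $\{\tau^\Phi_{t,s}(A),B\}=[\tau^\Phi_{t,s}(A),Bu_m]u_m+Bu_m\{\tau^\Phi_{t,s}(A),u_m\}$ only uses $u_m^2=\unit$; the anticommutation of $u_m$ with $B$ (and hence the restriction $m\notin Y$) is not actually needed, and indeed the paper only imposes $m\notin X$.
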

As in quantum spin systems, we can estimate the locality of the time evolved
observables from Lieb-Robinson bounds.
To do this, let $\{ \mathbb{E}_N : \mathcal{A} \to \mathcal{A}_{ \Lambda_N} ~|~ N\in \mathbb{N} \}$ 
be the  family of conditional expectations with respect to the trace on $\al$, 
see \cite{am}.
By the same argument as~\cite[Corollary 4.4]{NSY18},
if
 $A \in \mathcal{A}^{(0)}$ is such that 	
 \begin{equation}
		\begin{split}
		\big\| [A, B] \big\| \leq C \lVert B \rVert.
		\end{split}, 
	\end{equation}
	for all $B \in \bigcup_{\substack{X \in \mathfrak{S}_{\mathbb{Z}^\nu}\\ X \cap [-N,N]= \emptyset}} \mathcal{A}_X$,
then $\lVert A - \mathbb{E}_N (A) \rVert \leq C$. 
We extend this bound to odd elements.

Suppose that 
$A \in \mathcal{A}^{(1)}$ is such that 	\begin{equation}
		\begin{split}
		\big\| AB-(-1)^{\partial B} BA \big\| \leq C \lVert B \rVert.
		\end{split}
	\end{equation}
	for all homogeneous $B \in \bigcup_{\substack{X \in \mathfrak{S}_{\mathbb{Z}^\nu}\\ X \cap [-N,N]= \emptyset}} \mathcal{A}_X$.
Let $U_0\in \al_{\{0\}}^{(1)}$ be a self-adjoint unitary.
Then we have $AU_0\in \al^{(0)}$ and
\begin{align}
\big\| [AU_0, B] \big\|
=
\big\| \big( AB-(-1)^{\partial B} B A \big) U_0 \big\|
\le C \lV B\rV
\end{align}
for all homogeneous 
$B \in \bigcup_{\substack{X \in \mathfrak{S}_{\mathbb{Z}^\nu}\\ X \cap [-N,N]= \emptyset}} \mathcal{A}_X$.
Hence we have that
$\lV
[AU_0, B]
\rV\le 2C\lV B\rV$ for any 
 $B \in \bigcup_{\substack{X \in \mathfrak{S}_{\mathbb{Z}^\nu}\\ X \cap [-N,N]= \emptyset}} \mathcal{A}_X$.
Therefore, by the even case, 
we obtain that 
\begin{align}
\lV
A- \mathbb{E}_N (A)
\rV
=\lV\lmk
A- \mathbb{E}_N (A)
\rmk U_0
\rV
=
\lVert AU_0 - \mathbb{E}_N (AU_0) \rVert \leq 2C, 
\end{align}
where we used the fact that $U_0\in \al_{\Lambda_N}$.
From this and Lemma \ref{lruni},
we have shown the following.	
\begin{lem}\label{lruni2}
For any  $N\in\bbN$, $X\in{\mathfrak S}_\bbZ$ with $X\subset [-N,N]$ and
$A\in\al_{X}$, 
we have
\begin{align}
\big\|
\bbE_N\lmk
\tau_{t,s}^{\Phi}(A)
\rmk
-\tau_{t,s}^{\Phi}(A)
\big\|
\le
\frac{8\lV A\rV}{C_F} \lmk
e^{v|t-s|}-1
\rmk
D_0(X,[-N,N]^c).
\end{align}
\end{lem}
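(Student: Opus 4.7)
\textbf{Proof plan for Lemma \ref{lruni2}.}
The plan is to reduce the estimate to the even-element case of the NSY18 locality bound (recalled in the paragraph preceding the statement) and then handle odd parts via the odd self-adjoint unitary trick already sketched. First I would decompose $A = A_0 + A_1 \in \al_X^{(0)} \oplus \al_X^{(1)}$ with $\|A_\sigma\| \le \|A\|$ and treat the two summands separately, using linearity of $\bbE_N$ and of $\tau_{t,s}^\Phi$.

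For $A_0 \in \al_X^{(0)}$, Lemma \ref{lruni} specialized to $\partial A \partial B = 0$ gives for every homogeneous $B\in \al_Y$ with $Y\cap[-N,N]=\emptyset$ the commutator bound
\begin{equation*}
\lV [\tau_{t,s}^\Phi(A_0),B]\rV \le \frac{2\lV A_0\rV\lV B\rV}{C_F}(e^{v|t-s|}-1) D_0(X,Y).
\end{equation*}
Using additivity of $D_0(X,\cdot)$ over disjoint unions and the $F$-function properties, the constant $C_0 := \frac{2\|A_0\|}{C_F}(e^{v|t-s|}-1) D_0(X,[-N,N]^c)$ controls all such commutators uniformly in $B$. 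Applying the NSY18 conditional-expectation lemma recalled in the excerpt then yields $\lV \tau_{t,s}^\Phi(A_0) - \bbE_N(\tau_{t,s}^\Phi(A_0))\rV \le C_0$.

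For $A_1 \in \al_X^{(1)}$, I would fix any odd self-adjoint unitary $U_0 \in \al_{\{0\}}^{(1)}$; note $0\in[-N,N]$ so $U_0\in\al_{\Lambda_N}$. Applying Lemma \ref{lruni} in the $\partial A\partial B = 1$ case to $A_1$ against homogeneous odd $B$ gives the anticommutator version of the same bound; together with the trivial commutator bound for even $B$, this yields $\lV \tau_{t,s}^\Phi(A_1) B - (-1)^{\partial B} B \tau_{t,s}^\Phi(A_1)\rV \le C_1\lV B\rV$ with $C_1 := \frac{2\|A_1\|}{C_F}(e^{v|t-s|}-1) D_0(X,[-N,N]^c)$. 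Multiplying by $U_0$ converts $\tau_{t,s}^\Phi(A_1)$ into an even element whose ordinary commutator with $B$ is bounded by $2C_1\|B\|$ (the factor $2$ comes from the identity $AB U_0 - BU_0 A = (AB-(-1)^{\partial B} BA)U_0 + (-1)^{\partial B}B(AU_0 - (-1)^{\partial (BU_0)}U_0 A)$ applied to the odd/even split of $B$). The even case then gives $\lV \tau_{t,s}^\Phi(A_1)U_0 - \bbE_N(\tau_{t,s}^\Phi(A_1)U_0)\rV \le 2C_1$, and since $U_0\in\al_{\Lambda_N}$ the bimodule property of $\bbE_N$ yields $\bbE_N(\tau_{t,s}^\Phi(A_1)U_0) = \bbE_N(\tau_{t,s}^\Phi(A_1))U_0$; unitarity of $U_0$ then transfers the estimate back to $A_1$, giving $\lV \tau_{t,s}^\Phi(A_1) - \bbE_N(\tau_{t,s}^\Phi(A_1))\rV \le 2C_1$.

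Adding the two estimates, using $\|A_0\|,\|A_1\|\le \|A\|$, produces $C_0 + 2C_1 \le \frac{8\|A\|}{C_F}(e^{v|t-s|}-1) D_0(X,[-N,N]^c)$, which is the claimed bound. There is no real obstacle: the only delicate point is the bookkeeping of homogeneous parts of $B$ when passing from the graded commutator estimate of Lemma \ref{lruni} to an ordinary commutator estimate after multiplying by $U_0$, and this is exactly the computation already spelled out in the excerpt immediately before the lemma statement.
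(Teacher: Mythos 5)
Your proposal is correct and follows essentially the same approach as the paper: split $A$ into its even and odd homogeneous parts, apply the Lieb--Robinson bound of Lemma \ref{lruni} to obtain graded commutator estimates for $\tau_{t,s}^\Phi(A_\sigma)$, and then invoke the even/odd conditional-expectation bounds derived in the paragraph preceding the statement (the odd case via the $U_0$ trick). One small bookkeeping point: the conditional-expectation lemma for even elements, as quoted, requires the commutator bound against \emph{all} local $B$ supported outside $[-N,N]$, not merely homogeneous $B$; for even $A_0$ this is available with the same constant $C_0$ directly from the infinite-volume version of Lemma \ref{lre} (which permits arbitrary $B$ when one operator is even), whereas appealing only to Lemma \ref{lruni} and decomposing $B$ into parities costs an extra factor of $2$ on $C_0$ --- but either way the arithmetic still closes below the stated constant $8$, so the conclusion is unaffected.
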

Having Lemma \ref{lruni} and Lemma \ref{lruni2} as input,
we can carry out all the arguments in~\cite[Theorem 1.3]{mo} 
and~\cite[Proposition 3.5]{OgataTRI}.

\end{document}